 \newtheorem{thm}{Theorem}[section]
 \newtheorem{cor}[thm]{Corollary}
 \newtheorem{lem}[thm]{Lemma}
 \newtheorem{prop}[thm]{Proposition}
 \theoremstyle{definition}
 \newtheorem{defn}[thm]{Definition}
 \theoremstyle{remark}
 \newtheorem{rem}[thm]{Remark}
 \newtheorem*{ex}{Example}
 \numberwithin{equation}{section}
\def\H{{\mathcal H}}
\def\pst{\Psi^{{\mbox{\tiny trial}}}} 
\def\pgs{\Psi^{{\mbox{\tiny GS}}}}    
\def\gvar{\phi} 
\def\su{\!\!\uparrow} 
\def\zo{\!\!\begin{pmatrix} 1 \\0 \end{pmatrix}}
\def\G{G}
\def\F{F}
\def\L{L}
\def\cL{\mathcal{L}} 
\def\Fu{F_{1,\alpha}} 
\def\Gu{\Gamma_1}   
\def\Gd{\Gamma_2}   
\def\Guab{\Gamma_1^{(a,b)}}
\def\Gutab{\Gamma_1^{(\tilde a,\tilde b)}}
\def\Gdab{\Gamma_2^{(a,b)}}
\def\rest{\mathcal{R}} 
\def\mab{\begin{pmatrix}a\\b\end{pmatrix}}
\def\mtab{\begin{pmatrix}\tilde{a}\\ \tilde{b}\end{pmatrix}}
\def\mabx{\begin{pmatrix}a(x)\\b(x)\end{pmatrix}}
\def\Sgs{u}
\def\Sgsa{{\Sgs_\alpha}} 
\def\dvar{e}
\def\vac{\Omega_f}  
\def\gr{\Phi_\alpha} 
\def\tet{\Theta}
\def\gs{\theta_{\mbox{\tiny GS}}}     
\def\ups{\Upsilon_\alpha} 
\def\bra{\langle}       
\def\Bra{\Big\langle}   
\def\ket{\rangle}       
\def\Ket{\Big\rangle}   
\def\Aa{A^-} 
\def\Ac{A^+} 
\def\Ba{B^-} 
\def\Bc{B^+} 
\def\Proj{\Pi}
\def\ttH{K}  
\def\tH{\tilde H} 
\newcommand{\gH}{{\mathfrak H}}  
\newcommand{\gF}{{\mathfrak F}}  
\def\C{{\mathbb{C}}} 
\def\N{{\mathbb{N}}} 
\def\R{{\mathbb{R}}} 
\def\cno{c_{\mathrm{n.o.}}} 
\def\1{{\mathbf{1}}}
\renewcommand\d{\mathrm{d}}
\renewcommand\Re{\mathrm{Re}\,}
\begin{document}

\title[Hydrogen Binding Energy in QED. The spin case.]
 {Quantitative Estimates on the Binding \\Energy for Hydrogen in Non-Relativistic\\ QED. II. The spin case.}

\author[J.-M. Barbaroux]{Jean-Marie Barbaroux}

\address{Unit\'e Mixte de Recherche
(UMR 6207) du CNRS et de l'Universit\'e d'Aix-Marseille et
de l'Universit\'e du Sud Toulon-Var.\\ Unit\'e affili\'ee \`a la
FRUMAM F\'ed\'eration de Recherche 2291 du CNRS.}
\email{Jean-Marie.Barbaroux@cpt.univ-mrs.fr}

\author[S. Vugalter]{Semjon Vugalter}
\address{Karlsruhe Institute of Technology,
Kaiserstra\ss e 89-93,\\
76133, Karlsruhe, Germany\\}
\email{Semjon.Wugalter@kit.edu}
\subjclass{Primary 81Q10; Secondary 35P15, 46N50, 81T10}

\keywords{Pauli-Fierz Hamiltonian, binding energy, ground state energy}


\begin{abstract}

 The hydrogen binding energy in the Pauli-Fierz model with the spin Zeeman term  is determined up to the order $\alpha^3$, where $\alpha$ denotes the fine-structure constant.
\end{abstract}

\maketitle


\section{Introduction}

We continue the work started in \cite{BCVV2} by Th.Chen, V.Vougalter and the present authors,
with the study of the binding energy of a hydrogen-like atom in non-relativistic QED.
This time we consider the Pauli-Fierz model with the spin Zeeman term.

For an atom consisting of an electron interacting
with a static nucleus of charge $eZ$ described by the
Schr\"odinger-Coulomb Hamiltonian $-\Delta -\alpha Z /|x|$, the quantity
$$
 \inf\mathrm{spec}(-\Delta) - \inf\mathrm{spec}(-\Delta -\frac{\alpha Z }{|x|} ) \, = \,
 \frac{(Z \alpha)^2}{4}\ ,
$$
corresponds to the binding energy necessary to remove the electron
to spatial infinity.

The interaction of the electron with the quantized electromagnetic
field is accounted for by adding to $-\Delta -\alpha
Z/|x|$ the photon field energy operator $H_f$, and an operator
$I(\alpha)$ which describes the coupling of the electron to the
quantized electromagnetic field, yielding the so-called
Pauli-Fierz operator (see details in Section~\ref{model}).

In this case, the binding energy is given by
\begin{equation}\label{eq:binding-1}
 \Sigma_0 - \Sigma := \inf\mathrm{spec}\big(\, -\Delta + H_f + I(\alpha)\,\big)\
 - \ \inf\mathrm{spec}\big(\,-\Delta - \frac{\alpha Z}{|x|} + H_f + I(\alpha)\,\big)\
\end{equation}

The free infraparticle binds a larger quantity of low-energetic
photons than the confined particle and thus possesses a larger
effective mass. In order for the particle to leave the potential
well, an additional energetic effort is therefore necessitated
compared to the situation without coupling to the quantized
electromagnetic field.

It remains however a difficult task to determine the binding energy.
There are mainly two difficulties. The first is that the ground state energy
is not an isolated eigenvalue of the
Hamiltonian, and can not be determined with ordinary perturbation
theory. The second is due to the infamous infrared problem in quantum
electrodynamics whose origin is in the photon form factor in the quantized
electromagnetic vector potential occurring in the interaction term
$I(\alpha)$, that contains a critical frequency space singularity.

The systematic study of the Pauli-Fierz operator, in a more general case
involving more than one electron, was initiated by Bach,
Fr\"ohlich and Sigal \cite{Bachetal1995, Bachetal1999,
Bachetal1999bis}.

Later on, several rigorous results \cite{GLL, HVV, Chenetal2003, HS, CattoHainzl2004,
BCV, Hainzletal2005, Bachetal2006, BCVV1, BCVV2, BV3, HH, BV4}
have been obtained addressing both qualitative and quantitative estimates on
the binding energy and the ground state energies $\Sigma_0$ and $\Sigma$ occurring in \eqref{eq:binding-1}.

The case of a spinless particle attracted most of the attention since the additional spin-Zeeman term $\sqrt{\alpha}\sigma\!\cdot\!B(x)$ in the case of a particle with spin induces additional technical difficulties for quantitative estimates. There are three major problems, which substantially complicate the computation of the binding energy for the operator with the spin Zeeman term.

First, in the spinless case, the self-energy $\Sigma_0$, the Pauli-Fierz ground state energy $\Sigma$ and the Schr\"odinger ground state energy are of the same order $\alpha^2$. The contribution of the quantized radiation field into the binding energy in this model is of the order $\alpha^3.$ In contrast, in the model with the spin Zeeman term,  both $\Sigma_0$ and $\Sigma$ are of the order $\alpha$. At the same time the ground state energy of the corresponding Schr\"odinger operator and the contribution of the radiation field into the binding energy, which is the main subject of our interest, have the same orders as in the spinless case, namely $\alpha^2$ and $\alpha^3$ respectively.

Second, the expected  photon number for the ground state of the Pauli-Fierz operator in the spinless case is of order $\alpha^2$.  For the operator with spin Zeeman term, this quantity is of order $ \alpha$ (see section~\ref{section:S4.1}). Due to this difference, to find the infimum of the spectrum of the Pauli-Fierz operator for a particle with spin, we have to consider a much wider class of possible minimizer of the energy functional compared to the case of a spinless particle.

Finally, the field energy $H_f$  of the ground state of the Pauli-Fierz operator with spin has order $\alpha$ and not $\alpha^2$ as in the spinless case. Consequently the estimates of the energy on states orthogonal to the ground state of the Schr\"odinger operator (for detailed definition and results see \cite{BCVV2}), which played a crucial role in \cite{BCVV2}, are not valid any more for the Pauli-Fierz operator with spin.

In the work at hand we follow the same strategy as in \cite{BCVV2}. This strategy consists mainly in an iteration procedure based on variational estimates. At the same time due to the differences between the Pauli-Fierz operators with and without spin, which were mentioned above, the technical implementation of this strategy is very different from the one in \cite{BCVV2}. In particular we have to do several additional steps to compute the ground state energy $\Sigma$. As the first step we estimate $\Sigma$ up to the order $\alpha$ with the error of the order $\alpha^2$ (Recall that in \cite{BCVV2}, at the first iteration, the ground state energy was obtained up to the order $\alpha^3$ with an error of the order $\alpha^4$ ). Then we use the obtained information regarding the approximate ground state of the Pauli-Fierz operator with spin to improve estimate on $\Sigma$ up to the order $\alpha^2$. Only at the third step we are able to get the binding energy up to the order $\alpha^3$, which is the first order containing a contribution of the quantized radiation field.

The binding energy of a Hydrogen-like atom with nuclear charge $Z$ for a particle with spin was studied earlier in \cite{HS}. To avoid the difficulties mentioned above the authors of [18] made an important simplifying assumption that the product $\alpha Z$ remains constant as $\alpha$ tends to zero, which implies that their result is valid only for large values of the atomic number. The main achievement of our work is in the development of a technique which allows to work without this restriction.


The paper is organized as follows. In section 2 we give the main definitions and the statement of the result. In section 3 we construct a trial state to get the lower bound on the binding energy. Section 4 contains the statements of two auxiliary results: an estimate on the expected photon number and an energy estimate on states orthogonal to the ground state of the Schr\"odinger operator. Although the results are different from the spinless case, the proofs of these two estimates follow the same ideas and the same steps as the proofs of the corresponding estimates in \cite{BCVV2}. In section 5 we get the lower bound on the energy of the Pauli-Fierz operator up to the order $\alpha$ and show that the binding energy can not be greater then $C\alpha^2$. In section 6 we refine the photon number estimate using the same ideas as in \cite{BCVV2}. In section 7 we do the second iteration for the ground state energy of the Pauli-Fierz operator with spin. In section 8 we prove the main theorem. Finally, the Appendix contains the proofs of the statements given in section 4, short review of the results on the ground state of the translation invariant operator proved in \cite{BV4} and a large number of technical estimates.


\section{Model and main result}\label{model}

We study an electron, i.e., a spin $1/2$ particle, interacting with the quantized
electromagnetic field in the Coulomb gauge, and with the
electrostatic potential generated by a nucleus.

The Hilbert space accounting for the Schr\"odinger electron is given by
$\gH_{el}=L^2(\R^3)\otimes\C^2$. Here $\R^3$ is
the configuration space of the particle, while $\C^2$ accommodates its spin.

The Fock space of photon states is given by
 $$
   \gF \; = \; \bigoplus_{n \in \N} \gF_n ,
 $$
where the $0$-photon space is $\gF_0=\C$, and for $n\geq 1$ the $n$-photon space $\gF_n =
\bigotimes_s^n\left(L^2(\R^3)\otimes\C^2\right)$ is the symmetric
tensor product of $n$ copies of one-photon Hilbert space
$L^2(\R^3)\otimes\C^2$. The factor $\C^2$ accounts for the two
independent transversal polarizations of the photon.

On $\gF$, we
introduce creation and annihilation operators $a_\lambda^*(k)$,
$a_\lambda(k)$ satisfying the distributional commutation relations
 $$
  [ \, a_{\lambda}(k)  , \,  a^\ast_{\lambda'}(k') \, ]  =
  \delta_{\lambda, \lambda'} \, \delta (k-k')
  \; \;   ,
  \quad [ \, a_\lambda(k)  ,\,
   a_{\lambda'}(k') \, ]  =  [ \, a_\lambda^\ast(k)  ,\,
   a_{\lambda'}^\ast(k') \, ]  =  0 \, .
 $$
There exists a unique unit ray $\vac\in\gF$, the
Fock vacuum, which satisfies $a_\lambda(k) \, \vac=0$ for all
$k\in\R^3$ and $\lambda\in\{1,2\}$.

The Hilbert space of states of the system consisting of both the
electron and the radiation field is given by
$$
    \gH \; := \; \gH_{el} \, \otimes \, \gF .
$$
We use atomic units such that $\hbar = c = 1$, and where the mass of the
electron equals $m=1/2$. The electron charge is then given by
$e=\sqrt{\alpha}$, where the fine structure constant $\alpha$ has physical value about $1/137$ and will
here be considered as a small parameter.


The Pauli-Fierz Hamiltonian  for a charged particle with spin in the external electrostatic potential due to a pointwise static nucleus of atomic number $Z=1$,
coupled to the quantized electromagnetic radiation field is defined by
\begin{equation}\label{rpf}
\begin{split}
 \ : \left(- i\nabla_{x}\otimes I_f + \sqrt{\alpha}
A(x)\right)^2 :\ + \ \sqrt{\alpha}\,\sigma\!\cdot\! B(x)
+ V(x)\!\otimes\! I_f + I_{el}\!\otimes\! H_f\, .
\end{split}
\end{equation}

The operator that couples a particle to the quantized vector
potential is
\begin{equation}\nonumber
\begin{split}
 A(x) = & \sum_{\lambda = 1,2} \int_{\R^3}
  \frac{\zeta(|k|)}{2\pi|k|^{1/2}}
  \varepsilon_\lambda(k)\Big[ e^{ikx} \otimes a_\lambda(k)  +
  e^{-ikx} \otimes a_\lambda^\ast
  (k) \Big] \d k \\
  =: & \ \Aa(x) + \Ac(x)\, ,
\end{split}
\end{equation}
where ${\rm div}A =0$ by the Coulomb gauge condition.

The vectors $\varepsilon_\lambda(k)\in\R^3$ ($\lambda=1,\, 2$), are the two
orthonormal polarization vectors perpendicular to $k$,
\begin{equation}\nonumber
  \varepsilon_1(k) = \frac{(k_2, -k_1, 0)}{\sqrt{k_1^2 + k_2^2}}\qquad
 {\rm and} \qquad
   \varepsilon_2(k) = \frac{k}{|k|}\wedge \varepsilon_1(k).
\end{equation}

The function $\zeta(|k|)$ implements an {\it ultraviolet cutoff}, independent of $\alpha$, on the photon momentum $k$. We assume $\zeta$ to be of class $C^1$ and to have a compact support.

The symbol $: ... :$ denotes normal ordering and is applied to the operator $A(x){}^2$. It corresponds here to the subtraction of a constant operator $\cno\, \alpha$, with $\cno = [\Aa(x),\, \Ac(x)] = ({2}/{\pi})
\int_0^\infty r|\zeta(r)|^2 {\rm d}r$.

The operator that couples a particle to the magnetic field $B =
{\rm curl}A$ is given by
\begin{equation}\nonumber
\begin{split}
B (x)  = &\displaystyle\sum_{\lambda=1,2}\! \int_{\R^3}\!
\frac{\zeta(|k|)}{2\pi|k|^{1/2}} (k\times i\varepsilon_\lambda(k))
\Big[ e^{ikx}\otimes a_\lambda(k)  - e^{-ikx}\otimes
a_\lambda^\ast(k)\Big] \d k
\\
 =: &
\ \Ba(x) + \Bc(x).
\end{split}
\end{equation}
In Equation~\eqref{rpf}, $\sigma = (\sigma_1, \sigma_2, \sigma_3)$
is the 3-component vector of Pauli matrices
\begin{eqnarray*}
    \sigma_1 =
    \begin{pmatrix}
      0 & 1 \\
      1 & 0
    \end{pmatrix}\, , \ \
    \sigma_2 =
    \begin{pmatrix}
      0 & -i \\
      i & 0
    \end{pmatrix}\, , \ \
    \sigma_3 =
    \begin{pmatrix}
      1 & 0 \\
      0 & -1
    \end{pmatrix}\, .
\end{eqnarray*}

The Coulomb potential is the operator of multiplication by
$$
 V(x) = -\frac{\alpha}{|x|}\, .
$$

The photon field energy operator $H_f$ is given by
\begin{equation}\nonumber
H_f = \sum_{\lambda= 1,2} \int_{\R^3} |k| a_\lambda^\ast (k)
a_\lambda (k) \, \d k.
\end{equation}

In the sequel, instead of the operator \eqref{rpf}, we shall proceed to a change of variables, and study the unitarily equivalent
Hamiltonian
\begin{equation}\label{eq-H-Utrsf-1}
  H = U\Big( : \left(i\nabla_{\! x}\otimes I_f -
 \sqrt{\alpha} A(x)\right)^2\! : +  \sqrt{\alpha} \sigma\!\cdot\! B(x)\ +\ V(x) \otimes I_f\
 + \ I_{el}\otimes H_f  \Big)U^*\ ,
\end{equation}
where the unitary transform $U$ is defined by
\begin{equation}\label{def:unitary-transform}
  U = \mathrm{e}^{i P_f.x} ,
\end{equation}
and
\begin{equation}\nonumber
 P_f = \sum_{\lambda =1,2} \int  k \,
 a^\ast_\lambda(k) a_\lambda(k) \d k
\end{equation}
is the photon momentum operator.
We have
 $$
  U i\nabla_x U^* = i\nabla_x + P_f \, , \quad U A(x)U^* = A(0)\, , \quad\mbox{and}\quad
  U B(x)U^* = B(0)\ .
 $$
In addition, the Coulomb operator $V$, the photon field energy
$H_f$, and the photon momentum $P_f$ remain unchanged under the
action of $U$. Therefore, in this new system of variables, and omitting by abuse of notations the operators
$I_{el}$ and $I_f$, the Hamiltonian \eqref{eq-H-Utrsf-1} reads
\begin{equation}\label{ast-ast}
\begin{split}
  H = \ : \left( (i\nabla_x -  P_f)
 - \sqrt{\alpha} A(0)\right)^2 :
 \ +\   \sqrt{\alpha} \sigma\!\cdot\! B(0) \  -\  \frac{\alpha}{|x|} \ +\   H_f\, ,
\end{split}
\end{equation}
where $:...:$ denotes again the normal ordering.



The Hamiltonian for a free electron with spin coupled to the quantized radiation
field is given by the self-energy operator $T$,
\begin{equation}\nonumber
\begin{split}
 T & = H + \frac{\alpha}{|x|}
   \ =  \  :\left( (i\nabla_x -  P_f)
 - \sqrt{\alpha} A(0)\right)^2 : \
 + \ \sqrt{\alpha} \sigma\!\cdot\! B(0) \ +\  H_f\, ,
\end{split}
\end{equation}
where we omit again the operators $I_{el}$ and $I_f$.

This system is translationally invariant, that is,
$T$ commutes with the operator of total momentum
 $$
 P_{tot} = p_{el} + P_f ,
 $$
where $p_{el}$ and $P_f$ denote respectively the electron and the
photon momentum operators.

Therefore, for fixed value $p\in\R^3$ of the total momentum, the
restriction of $T$ to the fibre space $\C^2 \otimes\mathfrak{F}$ is
given by (see e.g. \cite{Chen2008, BV4})
\begin{equation}\label{def:T(P)}
  T(p) = \ \ :(p - P_f  - \sqrt{\alpha} A(0))^2: \ +\ \sqrt{\alpha}\sigma\!\cdot\!B(0)  + H_f\, .
\end{equation}
Henceforth,
we will write
 $$
 A^\pm:= A^\pm(0)\quad\mbox{and}\quad B^\pm:=B^\pm(0) .
 $$

The ground state energies of $T$ and $H$ are respectively denoted by
 $$
  \Sigma_0 :=\inf\mathrm{spec}(T) \quad\mbox{and}\quad\Sigma : = \inf\mathrm{spec}(H)\, ,
 $$
and the binding energy is defined by
$$
 \Sigma_0\ -\ \Sigma\, .
$$

It is proven in \cite{BCFS, Chen2008} that
 $$
 \Sigma_0 = \inf\mathrm{spec} (T(0))\, ,
 \mbox{ and $\Sigma_0$ is an eigenvalue of the operator $T(0)$}\, .
 $$

%

Our main result is the following,
\begin{thm}\label{thm:main}
The binding energy fulfills the following inequality
\begin{equation}\label{eq:main1}
 \begin{split}
   \Sigma_0 -\Sigma =  \frac14\alpha^2 \ +\  \left(\dvar^{(1)}
   +  {\dvar}_{\mbox{\scriptsize{Zeeman}}}^{(1)}\right) \alpha^3 \  + \  \mathcal{O}(\alpha^{3+\frac16})\, ,
 \end{split}
\end{equation}
where
 $$
  \dvar^{(1)} = \frac2{3\pi} \int_0^\infty
  \frac{\zeta^2(t)}{1+t} \, \mathrm{d}t
  \quad\mbox{and}\quad
  {\dvar}_{\mbox{\scriptsize{Zeeman}}}^{(1)} = \frac{2}{3\pi} \int_0^\infty
  \frac{t^2\, \zeta^2(t)}{(1+t)^3} \, \mathrm{d}t \, .
 $$
\end{thm}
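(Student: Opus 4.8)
\emph{Strategy.} The plan is to prove \eqref{eq:main1} by two matching variational estimates for $\Sigma$, one from above (yielding the lower bound on $\Sigma_0-\Sigma$) and one from below (yielding the upper bound), both sharp to order $\alpha^3$ with remainder $\mO(\alpha^{3+\frac16})$. I treat $\Sigma_0=\inf\mathrm{spec}(T(0))$ as available to the needed precision from the study of the translation invariant fibre operator in \cite{BV4}, so the entire task is to locate $\Sigma$ relative to $\Sigma_0$. Because in the spin case both energies are of order $\alpha$, the order-$\alpha$ self-energy of the dressed electron must cancel identically in the difference; the organizing principle is that what survives is the Coulomb binding of an effective Schr\"odinger operator $\tfrac{p^2}{2m_{\mathrm{eff}}}-\tfrac{\alpha}{|x|}$ whose effective mass already carries an order-$\alpha$ radiative shift, so that $\Sigma_0-\Sigma=\tfrac{m_{\mathrm{eff}}}{2}\alpha^2+\mO(\alpha^{3+\frac16})$ and the two $\alpha^3$ integrals are precisely this shift.

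For the lower bound on the binding energy I would evaluate the trial state of Section~3, built from the hydrogen ground state $\psi_0$ of $-\Delta-\alpha/|x|$ tensored with the photon cloud of the self-energy ground state of $T(0)$, corrected to first order for the fact that the confined electron carries momenta of size $\alpha$ rather than sitting at total momentum $p=0$. Expanding $\langle\pst,H\pst\rangle$, the diagonal terms reproduce the order-$\alpha$ self-energy together with the bare binding $-\tfrac14\alpha^2$, while the two couplings that are odd in the electron momentum, namely the paramagnetic term $-2\sqrt\alpha(i\nabla_x-P_f)\cdot A$ and the Zeeman term $\sqrt\alpha\,\sigma\cdot B$, contribute at second order through the one-photon energy denominator $|k|(1+|k|)$. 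Differentiating this denominator twice in $p$ brings down the isotropic average $\langle(p\cdot k)^2\rangle=\tfrac13|p|^2|k|^2$; the photon form factor $\zeta/|k|^{1/2}$ and the extra factor $k\times\varepsilon_\lambda$ in $B$ then produce exactly $\dvar^{(1)}$ with denominator $(1+t)$ and $\dvar_{\mathrm{Zeeman}}^{(1)}$ with denominator $(1+t)^3$, the common prefactor $\tfrac{2}{3\pi}$ coming from the angular average and the polarization sum over $\C^2$.

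The matching lower bound on $\Sigma$ is the harder half and is obtained by the three-step bootstrap of Sections~5--7. I would first split an arbitrary normalized $\Psi$ along $\psi_0$ and its orthogonal complement, bound the energy from below to order $\alpha$, and deduce that the binding energy cannot exceed $C\alpha^2$; this already forces any near-minimizer into the expected product structure. The refined photon-number estimate of Section~6 then pins the expected number at $\alpha+\mO(\alpha^2)$ and localizes the cloud in frequency, which is what lets the order-$\alpha$ field energy be absorbed. Feeding this structural information back into the functional in Section~7, I would re-expand $\langle\Psi,H\Psi\rangle$ to second order, match the trial computation term by term, and control the remainder by the adapted energy estimate for states orthogonal to $\psi_0$ from Section~4. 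Combining the two bounds with the expansion of $\Sigma_0$ closes the argument.

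The principal obstacle is this lower bound, precisely because $H_f$ and the photon number on the ground state are of order $\alpha$ rather than $\alpha^2$: the orthogonal-state energy estimate that drove the spinless treatment of \cite{BCVV2} is no longer valid as stated, and one cannot isolate the $\alpha^3$ radiative correction from the much larger order-$\alpha$ background in a single variational pass. The bootstrap circumvents this by using the ground-state structure extracted at each stage to sharpen the next denominator and the next frequency localization, and the errors accumulated along this scheme are what cap the precision at $\mO(\alpha^{3+\frac16})$.
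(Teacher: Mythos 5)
Your strategy coincides with the paper's own: the trial state $\pst=\tet+\alpha^{\frac12}\gr+\alpha^{\frac12}\ups$ of Section~\ref{S3} for the bound $\Sigma_0-\Sigma\geq\frac14\alpha^2+(\dvar^{(1)}+{\dvar}_{\mbox{\scriptsize{Zeeman}}}^{(1)})\alpha^3+\mO(\alpha^4|\log\alpha|)$, and the photon-number bound, orthogonal-state estimate and iteration scheme of Sections~\ref{S6}--\ref{S4} for the reverse inequality; your account of where the two integrals come from (polarization and angular averages giving $\frac{2}{3\pi}$, one power of $(1+t)^{-1}$ for the $P\cdot\Ac$ cloud, three for the Zeeman cloud) is precisely the content of Lemma~\ref{lem:appendix-2}. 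Two of your descriptions are off, though not fatally: $\sqrt\alpha\,\sigma\cdot B$ is \emph{not} odd in the electron momentum --- its order-$\alpha$ effect sits inside $\Sigma_0$ and cancels, and its $\alpha^3$ contribution to binding is produced by $-2P\cdot P_f$ acting on the Zeeman-generated cloud $\Gu u_\alpha$, which is exactly what $\gr$ encodes; and Section~\ref{S7} does not pin $\langle N_f\rangle$ at $\alpha+\mO(\alpha^2)$: Proposition~\ref{IPNE} gives $\|\Proj_{\geq2}\pgs\|,\|R_1\|,\|L_1\|=\mO(\alpha^{\frac56})$, i.e.\ a bound $\mO(\alpha^{\frac53})$ on the photon number of the \emph{residual} after the leading cloud is subtracted.

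The genuine gap is in your final step. ``Re-expand to second order and match the trial computation term by term'' underspecifies the decisive mechanism: the second iteration (Proposition~\ref{prop:main-cmain}) only reaches $\Sigma\geq\Sigma_0-\frac{\alpha^2}{4}+\mO(\alpha^{\frac83})$, and since $\alpha^{\frac83}\gg\alpha^3$ nothing of order $\alpha^3$ is visible at that stage; its role is solely to produce the a priori bounds of Corollary~\ref{Cmain} ($\|g\|=\mO(\alpha^{\frac13})$, $\|R_1\|_*,\|L_1\|_*=\mO(\alpha^{\frac43})$, $|\gamma_i-1|=\mO(\alpha^{\frac23})$). Extracting the $\alpha^3$ coefficient from below requires the further decomposition of Section~\ref{S4}, $\Proj_1\G=\Fu+\F_2+\ell_1$, in which infrared-regularized analogues of $\ups$ and $\gr$ appear with \emph{free} coefficients $\kappa_i,\mu_i$ and $\ell_1$ is $*$-orthogonal to them; the orthogonality identities of Lemmata~\ref{lem:appendix-3}--\ref{lem:appendix-4}, Proposition~\ref{prop:orthogonality} applied to the component $\cL$, and the Corollary~\ref{Cmain} bounds then control all cross terms to $\mO(\alpha^{3+\frac16})$, and it is the completion of the square in $\kappa_i,\mu_i$ --- a lower bound cannot assume the ground-state cloud equals the trial cloud, it must minimize over it --- that reproduces exactly $-(\dvar^{(1)}+{\dvar}_{\mbox{\scriptsize{Zeeman}}}^{(1)})\alpha^3$. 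Relatedly, your outline never confronts the infrared problem: $(H_f+P_f^2)^{-1}P\cdot\Ac\vac\,u_\alpha$ is not square-integrable at $k=0$, which is why $\ups$ and $\Fu$ carry the cutoff $\chi_{(\alpha,\infty)}(H_f)$ and why the $|\log\alpha|$ factors must be tracked in both halves; without this device neither your trial state nor your final decomposition is even well defined.
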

\begin{rem}
i) Recall that for the spinless Pauli-Fierz operator  it is known (see \cite{BCVV2} and references therein) that the binding energy is
$$
 \Sigma_0 - \Sigma = \frac14\alpha^2 + \dvar^{(1)} \alpha^3 +\mathcal{O}(\alpha^4)\, .
$$

The above Theorem~\ref{thm:main} thus shows that the spin-Zeeman term yields an additional contribution to the term of order $\alpha^3$.

ii) In \cite{BCVV2}, a factor $1/3$ was missing in the definition of $\dvar^{(1)}$.

iii) Since the self-energy $\Sigma_0$ for the operator $T(0)$ given by \eqref{def:T(P)} is of the order $\alpha$ and not of the order $\alpha^2$ as in the spinless case, it turns out that it would to be a much harder task to compute higher order terms of the binding energy, and in particular to recover a $\log\alpha$ divergent term in the expansion, as in \cite{BCVV2}.
\end{rem}

In the remainder, we will need the following notations. For $n\in\N$, let $\Proj_{n}$
be the orthogonal projection onto $\gH_{el}\otimes\gF_n$, $\Proj_{\geq n}$ the orthogonal projection onto $\gH_{el}\otimes\left(\bigoplus_{k\geq n}\gF_k\right)$, and $\Proj_{\leq n} = 1 -\Proj_{\geq n+1}$.

On $\gH_{el}\otimes\gF$, we define the positive bilinear form
\begin{equation}\label{bilinear}
 \langle v,\, w\rangle_* := \langle v,\, (H_f + P_f^2) w\rangle_{\gH_{el}\otimes\gF}\, ,
\end{equation}
and its associated semi-norm $\|v\|_* = \langle v,\, v\rangle_*^{1/2}$.

We also define the positive bilinear form on $\gF\times\gF$
$$
 \langle v,\, w\rangle_{*,\gF} := \langle v,\, (H_f + P_f^2) w\rangle_\gF\, ,
$$
and its associated semi-norm $\|v\|_{*,\gF} = \langle v,\, v\rangle_{*,\gF}^{1/2}$.

In the sequel we shall omit the index $\gF$ in $\|.\|_{*,\gF}$, $\bra.,\, .\ket_{\gH_{el}\otimes\gF}$ and $\bra.,\, .\ket_{\gF}$ if no possible confusion can occur.

\section{Lower bound on the binding energy}\label{S3}

In this section, we shall prove the inequality
\begin{equation}\label{eq:lower-bound-main}
 \begin{split}
   \Sigma_0 -\Sigma \geq  \frac14\alpha^2 \ +\  \left(\dvar^{(1)}
   +  {\dvar}_{\mbox{\scriptsize{Zeeman}}}^{(1)}\right) \alpha^3 \  + \  \mathcal{O}(\alpha^{4}\log\alpha)\, ,
 \end{split}
\end{equation}

To prove this inequality, we construct a trial function $\pst$ such that holds $\bra\pst,\, H\pst\ket / \|\pst\|^2 \leq \Sigma_0 -\alpha^2/4 -  (\dvar^{(1)}+ \ {\dvar}_{\mbox{\scriptsize{Zeeman}}}^{(1)}) \alpha^3 +\mathcal{O}(\alpha^4|\log\alpha|)$.

Let
$$
 P:= i\nabla_x\, .
$$

We denote by $\gs$ the ground state of $T(0)$ with the normalization condition $\Proj_0 \gs = \vac \zo$.
(see\eqref{eq:def-phi0-1}-\eqref{eq:def-phi0-2} in Theorem~\ref{thm:bv4} for detailed
definiton and properties of $\gs$), and let
 $$
  \tet := \gs \, \Sgsa\, ,
 $$
where $\Sgsa$ is the normalized ground state of the Schr\"odinger operator $-\Delta -\alpha/|x|$
\begin{equation}\label{eq:def-u-alpha}
  (-\Delta - \frac{\alpha}{|x|}) \Sgsa = -\frac{\alpha^2}{4} \Sgsa\, .
\end{equation}

For $\Gu$ defined as in \eqref{eq:def-gamma1} by
\begin{equation}\label{eq:3.0}
 \Gu := \ -(H_f + P_f^2)^{-1} \sigma.B^+ \vac\zo \, ,
\end{equation}
we set
\begin{equation}\label{def:gr}
 \gr := \ 2\, P . P_f\, (H_f + P_f^2)^{-1} \Gu\, \Sgsa \, ,
\end{equation}
and
\begin{equation}\label{def:ups}
 \ups := \  2\,  \chi_{(\alpha,\, \infty)}(H_f)(H_f + P_f^2)^{-1} \, P. \Ac  \vac \Sgsa \zo \, ,
\end{equation}
where  $\chi_{(\alpha,\, \infty)}(H_f)$ is an infrared cutoff and $\chi_{(\alpha,\, \infty)}$ is the
characteristic function of $(\alpha,\, \infty)$.

Let us define the following trial function
\begin{equation}\nonumber
 \pst = \tet +  \alpha^\frac12 \gr + \alpha^\frac12 \ups\, .
\end{equation}
In comparison with the trial function used in the spinless case \cite{BCVV2} to recover the estimate up to the order $\alpha^3$,  with error $\alpha^4$, the function $\pst$ differs in two points. First we pick now the state $\gs$ as the ground state of the translation invariant operator $T(0)$ with spin. Second, we have an additional vector $\gr$ which is responsible for the $ {\dvar}_{\mbox{\scriptsize{Zeeman}}}^{(1)}\, \alpha^3$ term in \eqref{eq:main1}.

From the definition of $H$, expanding \eqref{ast-ast} and taking into account the normal ordering, we obtain
\begin{equation}\label{expand-H}
\begin{split}
 H =
 & (-\Delta -\frac{\alpha}{|x|}) + (H_f + P_f^2) - 2\,P. P_f
 - 4\alpha^\frac12 \Re P . \Aa  \\
 & + 4 \alpha^\frac12 P_f . \Aa
 + 2\alpha\Ac\Aa + 2\alpha (\Re\Aa)^2 + 2\alpha^\frac12\Re \sigma. \Ba\, .
\end{split}
\end{equation}
We shall use this expression to estimate all terms occurring in
\begin{equation}\label{eq:aaa}
\begin{split}
& \bra \pst,\, H \, \pst\ket \\
& =  \left\bra \tet +  \alpha^\frac12 \gr + \alpha^\frac12\ups\, , \ H\ \, \left(\tet +  \alpha^\frac12 \gr + \alpha^\frac12 \ups\right)\right\ket\, .
\end{split}
\end{equation}

\underline{Step 1.} We first compute the three direct terms $\bra \tet, H\tet \ket$, $\bra \alpha^\frac12 \gr, H \alpha^\frac12 \gr \ket$ and $\bra \alpha^\frac12 \ups, H \alpha^\frac12 \ups \ket$.

Since $\gs$ is a ground state vector of $T(0)$, and using orthogonality between the components of $P \Sgsa$ and $\Sgsa$, we have
\begin{equation}\label{eq:trial1}
\begin{split}
\bra \tet,\, H\tet\ket & = \bra \Sgsa \gs,\, H\Sgsa\gs\ket \\ &
= \|\Sgsa\|^2 \bra \gs,\, T(0) \gs\ket + \|\gs\|^2 \bra\Sgsa,\, (-\Delta -\frac{\alpha}{|x|})\Sgsa\ket\\
& = (\Sigma_0 -\frac{\alpha^2}{4}) \|\tet\|^2 \, .
\end{split}
\end{equation}

Using that for $i,\, j,\, k\in\{1, \, 2\, , 3\}$, $\partial \Sgsa/\partial x_i$ and $\partial^2 \Sgsa/(\partial x_j\partial x_k)$ are orthogonal, and the fact that non particle conserving operators have mean value zero in the state $\gr$, yields
\begin{equation}\label{eq:trial2}
\begin{split}
 & \bra \alpha^\frac12\gr, H \alpha^\frac12\gr \ket  \\
 & = \left\bra \alpha^\frac12\gr, \left( -\Delta-\frac{\alpha}{|x|}\ +\ (H_f+P_f^2)\ +\ \alpha\Aa\!\cdot\!\Ac\right) \alpha^\frac12\gr\right\ket \\
 & = \alpha\|\gr\|_*^2\ +\ \mathcal{O}(\alpha^4)\, ,
\end{split}
\end{equation}
where the last inequality holds since $\|P\Sgsa\| =\mathcal{O}(\alpha)$, $\| (-\Delta-\frac{\alpha}{|x|}) P \Sgsa\| =\mathcal{O}(\alpha^3)$.

Using the same arguments as above, and the fact that
$$
 \|(H_f+P_f^2)^{-1}
 \chi_{(\alpha,\,\infty)}(H_f) ({\Ac})_{j} \vac\zo\|
 = \mathcal{O}(|\log\alpha|^\frac12)\, ,
$$
the last direct term can be estimated as
\begin{equation}\label{eq:trial3}
\begin{split}
    & \bra \alpha^\frac12 \ups,\, H \alpha^\frac12 \ups\ket  \\
    & = \left\bra \alpha^\frac12\ups, \left( -\Delta-\frac{\alpha}{|x|}
    + (H_f+P_f^2) + \alpha\Aa\!\cdot\!\Ac\right) \alpha^\frac12\ups \right\ket \\
    & = \mathcal{O}(\alpha^5\log\alpha) +
    \alpha \left\bra \ups,  (H_f+P_f^2)  \ups \right\ket + \mathcal{O}(\alpha^4)\, \\
    & =
    \alpha \| \ups\|_*^2\ +\ \mathcal{O}(\alpha^4)\, .
\end{split}
\end{equation}

\underline{Step 2.} We compute in \eqref{eq:aaa} the cross terms with $\gr$ and $\ups$. Using as above the estimates $\|(H_f+P_f^2)^{-1} \chi_{(\alpha,\,\infty)}(H_f) ({\Ac})_{j} \vac\zo\|
= \mathcal{O}(|\log\alpha|^\frac12)$, $\|P\Sgsa\| =\mathcal{O}(\alpha)$, and $\| (-\Delta-\frac{\alpha}{|x|}) P \Sgsa\|
=\mathcal{O}(\alpha^3)$ yields
\begin{equation}\label{eq:cross-1.1}
   \bra \alpha^\frac12\gr, (-\Delta -\frac{\alpha}{|x|}) \alpha^\frac12\ups\ket
 + \bra \alpha^\frac12\ups, (-\Delta -\frac{\alpha}{|x|}) \alpha^\frac12\gr\ket
 = \mathcal{O}(\alpha^5|\log\alpha|^\frac12)\, .
\end{equation}

Due to Lemma~\ref{lem:appendix-1} (see Appendix~\ref{appendix2}) holds
\begin{equation}\nonumber
\begin{split}
&  \bra \alpha^\frac12\gr, (H_f+ P_f^2) \alpha^\frac12\ups\ket
 + \bra  \alpha^\frac12\ups, (H_f+ P_f^2)  \alpha^\frac12\gr\ket = 0\, .
\end{split}
\end{equation}

Furthermore, $\|\Aa \gr\| \leq c \|\gr\| = \mathcal{O}(\alpha)$ and
$\|\Aa \ups\| \leq c \|H_f^\frac12 \ups\| = \mathcal{O}(\alpha)$ implies
\begin{equation}\label{eq:cross-1.2}
\begin{split}
&  \bra \alpha^\frac12\gr,\, 2\alpha\Ac . \Aa \alpha^\frac12\ups\ket
 + \bra \alpha^\frac12 \ups,\, 2\alpha\Ac . \Aa \alpha^\frac12\gr\ket \\
& \leq 4 \alpha^2 \|\Aa \gr\|\, \|\Aa \ \ups\| =\mathcal{O}(\alpha^4)\, .
\end{split}
\end{equation}

In addition, due either to the symmetry of $\Sgsa$ or the occurrence of non-particle conserving terms,
all other cross terms with $\gr$ and $\ups$
in \eqref{eq:aaa} are equal to zero. Therefore, collecting \eqref{eq:cross-1.1}-\eqref{eq:cross-1.2} we get
\begin{equation}\label{eq:trial4}
   \bra \alpha^\frac12\gr,\, H\, \alpha^\frac12\ups\ket
 + \bra\alpha^\frac12\ups,\, H\, \alpha^\frac12\gr\ket
 = \mathcal{O}(\alpha^4)\, .
\end{equation}


\underline{Step 3.}
We estimate in \eqref{eq:aaa} the cross terms involving $\gr$ and $\tet=\gs \Sgsa$. Only terms coming from
$-2\Re P . P_f$ and $-4\Re \alpha^\frac12 P.\Aa$ can a priori contribute since other terms are
zero due to the symmetry of $\Sgsa$.

The contribution of $-2\Re P . P_f$ is
$$
 -2 \Re \left\bra \Proj_1\tet,\, P.P_f \alpha^\frac12\gr \right\ket - 2\Re \left\bra \alpha^\frac12\gr ,\, P.P_f \Proj_1\tet \right\ket\, .
$$

We write $\Proj_1\tet$ as $(\alpha^\frac12 \gamma_1\Gu + \Proj_1 R)\Sgsa$, where $R$ and $\gamma_1$ are defined by \eqref{eq:def-phi0-1} and \eqref{eq:def-phi0-2} in Theorem~\ref{thm:bv4}. This implies
\begin{equation}\label{eq:cross-2.1}
\begin{split}
& -2 \Re \left\bra \Proj_1\tet,\, P.P_f \alpha^\frac12\gr \right\ket - 2\Re \left\bra \alpha^\frac12\gr ,\, P.P_f \Proj_1\tet \right\ket\\
& = -4 \Re \left\bra (\alpha^\frac12 \gamma_1\Gu + \Proj_1 R) \Sgsa,\, P.P_f 2\alpha^\frac12
(H_f+P_f^2)^{-1} P. P_f \Gu\Sgsa\right\ket \\
& = - 8 \alpha \Re \gamma_1 \bra P. P_f \Gu\Sgsa,\,
(H_f+P_f^2)^{-1}  P. P_f\Gamma_1 \Sgsa\ket \\
& - 8\alpha^\frac12\Re \bra P\Sgsa . P_f \Proj_1 R, \, P\Sgsa . (H_f+P_f^2)^{-1} P_f\Gu\ket \\
& \geq - 2  \alpha \Re \gamma_1 \| \gr\|_*^2 - c \alpha \|P\Sgsa\|^2 \|P_f \Proj_1\Gu\| \\
& = - 2\alpha \| \gr\|_*^2 + \mathcal{O}(\alpha^4)\, ,
\end{split}
\end{equation}
where in the last equality, we used $\|P\Sgsa\| = \mathcal{O}(\alpha)$ and from \eqref{eq:est-1} of Theorem~\ref{thm:bv4} that $|\gamma_1 -1| = \mathcal{O}(\alpha)$ and
$\| \Proj_1 R\|_* \leq \| R\|_* = \mathcal{O}(\alpha^\frac32)$.

The contribution of $-4\alpha^\frac12\Re P.\Aa$ is
\begin{equation}\label{eq:cross-2.2}
\begin{split}
&  -4 \alpha^\frac12\Re \left\bra \Proj_0 \tet,\,  P.\Aa \alpha^\frac12\gr \right\ket
   -4 \alpha^\frac12\Re \left\bra \alpha^\frac12\gr,\, P.\Aa \Proj_2 \tet \right\ket\\
& = -4 \alpha^\frac12 \Re \left\bra P.\Ac \Proj_0 \tet,\, \alpha^\frac12\gr \right\ket \\
&  - 4 \alpha^\frac12\Re \left\bra 2 \alpha^\frac12P.P_f (H_f+P_f^2)^{-1}\Gu\Sgsa,\, P.\Aa(\alpha \gamma_2 \Gd + \Proj_2 R)\Sgsa \right\ket\\
& = -4 \alpha^\frac12 \Re \left\bra P.\Ac \vac\Sgsa\su,\, \alpha^\frac12\gr \right\ket \\
&    - 8 \alpha^2 \Re\bar\gamma_2 \bra P\Sgsa . P_f (H_f+P_f^2)^{-1}\Gu,\, P\Sgsa . \Aa\Gd\ket\\
&   - 8\alpha  \Re \bra P\Sgsa . P_f (H_f+P_f^2)^{-1}\Gu,\, P\Sgsa . \Aa \Proj_2 R\ket   = \mathcal{O}(\alpha^4)\, ,
\end{split}
\end{equation}
where in the second inequality we used $\| P\Sgsa\|=\mathcal{O}(\alpha)$,
 $\|\Aa \Proj_2 R\| \leq c \|\Proj_2 R \|_* =
\mathcal{O}(\alpha^\frac32)$ from \eqref{eq:est-1} in Theorem~\ref{thm:bv4}, and $ \bra P.\Ac \vac\Sgsa\zo,\, \gr\ket = 0$ from Lemma~\ref{lem:appendix-1}.

The estimates \eqref{eq:cross-2.1} and \eqref{eq:cross-2.2} yields
\begin{equation}\label{eq:trial5}
 \bra \alpha^\frac12 \gr,\, H\tet \ket + \bra \tet,\, H\alpha^\frac12\gr\ket
 = -2\alpha\|\gr\|_*^2 + \mathcal{O}(\alpha^4)\, .
\end{equation}


\underline{Step 4.}
We next estimate in \eqref{eq:aaa} the cross terms involving $\ups$ and $\tet=\gs \Sgsa$. As in the previous step, only terms coming from
$-2\Re P . P_f$ and $-4\Re \alpha^\frac12 P.\Aa$ can a priori contribute since other terms are
zero due to the symmetry of $\Sgsa$.

The contribution of $-2\Re P . P_f$ is
\begin{equation}\label{eq:cross-3.1}
\begin{split}
& -2\Re \bra \Proj_1 \tet,\, P.P_f \alpha^\frac12\ups\ket -2\Re \bra \alpha^\frac12\ups,\, P.P_f \tet\ket  \\
& = -4\Re \bra (\alpha^\frac12\gamma_1 \Gu + \Proj_1 R)\Sgsa ,\, P.P_f \alpha^\frac12\ups\ket \\
& = - 2 \alpha \Re \gamma_1 \bra 2\, P.P_f (H_f + P_f^2)^{-1}\Gu\Sgsa,\, \ups\ket_*
- 4\alpha^\frac12\Re \bra P\Sgsa . P_f \Proj_1 R,\, \ups\ket \\
&  =\mathcal{O}(\alpha^4)
\end{split}
\end{equation}
where we used $ \bra 2\, P.P_f (H_f + P_f^2)^{-1}\Gu\Sgsa,\, \ups\ket_* =
\bra\gr,\, \ups\ket_* = 0$ due to Lemma~\ref{lem:appendix-1}, and $\|P\Sgsa\| =
\mathcal{O}(\alpha)$, $\|\ups\| = \mathcal{O}(\alpha)$ and $\| P_f \Proj_1 R\|\leq \|R\|_* = \mathcal{O}(\alpha^\frac32)$
(Theorem~\ref{thm:bv4}).

The contribution of $-4\alpha^\frac12 \Re P.\Aa$ is
\begin{equation}\label{eq:cross-3.2}
\begin{split}
 & - 4 \alpha^\frac12 \Re\bra \Proj_0 \tet,\, P.\Aa \alpha^\frac12 \ups\ket
   - 4 \alpha^\frac12 \Re \bra \alpha^\frac12 \ups,\, P.\Aa \Proj_2 \tet\ket \\
 & = - 2\alpha \Re \bra 2 P.\Ac \vac\Sgsa\zo,\,  \ups\ket - 4\alpha\Re\bra \ups,\, P.\Aa (\alpha\gamma_2\Gd + \Proj_2 R)\Sgsa\ket \\
 & = -2\alpha \|\ups\|_*^2 + \mathcal{O}(\alpha^4)\, ,
\end{split}
\end{equation}
where we applied in the last equality $\|P\Sgsa\| = \mathcal{O}(\alpha)$,  $\|\ups\|= \mathcal{O}(\alpha)$ and
$\|\Aa \Proj_2 R\| \leq c \|R\|_* = \mathcal{O}(\alpha^\frac32)$.

Equations \eqref{eq:cross-3.1} and \eqref{eq:cross-3.2} implies
\begin{equation}\label{eq:trial6}
 \bra \alpha^\frac12 \ups,\, H\, \tet \ket + \bra \tet,\, H \, \alpha^\frac12 \ups\ket =
 -2 \alpha \|\ups\|_*^2 + \mathcal{O}(\alpha^4)\, .
\end{equation}

\underline{Step 5.}
Collecting all above estimates \eqref{eq:trial1}, \eqref{eq:trial2}, \eqref{eq:trial3}, \eqref{eq:trial4},
\eqref{eq:trial5} and \eqref{eq:trial6} yields
\begin{equation}\nonumber
 \bra \pst,\, H\pst \ket = (\Sigma_0 -\frac{\alpha^2}{4}) \| \tet \|^2
  - \alpha \|\gr\|_*^2  -\alpha \|\ups\|_*^2
  + \mathcal{O}(\alpha^4\log\alpha)
\end{equation}

To conclude the proof, we need to normalize the above expression. First note that $\bra\tet,\, \alpha^\frac12(\gr+\ups)\ket = 0$ due to orthogonality of $\Sgsa$ and $\partial\Sgsa/\partial x_j$ ($j=1,\, 2,\, 3$). Therefore
\begin{equation}\nonumber
\begin{split}
 \|\pst\|^2
     & = \|\tet\|^2 + \alpha\|\gr+\ups\|^2 \\
     & = \|\tet\|^2 + \mathcal{O}(\alpha^3 |\log\alpha|)\, ,
\end{split}
\end{equation}
since $\|\gr + \ups\| = \mathcal{O}(\alpha|\log\alpha|^\frac12)$.

In addition, we have $\|\tet\|^2 = 1 + \mathcal{O}(\alpha)$ (see Theorem~\ref{thm:bv4}). Therefore, with
the estimates $\Sigma_0 = \mathcal{O}(\alpha)$, $\|\gr\|_* = \mathcal{O}(\alpha)$, and
$\|\ups\|_* = \mathcal{O}(\alpha)$, we obtain
$$
 \frac{\bra \pst,\, \, H \, \pst\ket}{\|\pst\|^2} = (\Sigma_0 -\frac{\alpha^2}{4}) - \alpha \|\gr\|_*^2
 - \alpha \|\ups\|_*^2 + \mathcal{O}(\alpha^4 |\log\alpha|)\, .
$$
To conclude the proof, it suffices to note that $\Sigma \leq  {\bra \pst,\, \, H \, \pst\ket}/{\|\pst\|^2}$ and
to replace $\|\gr\|_*$ and $\|\ups\|_*$ by their expressions in Lemma~\ref{lem:appendix-2}.


\section{Two auxiliary estimates}

In this section, following the strategy of \cite{BCVV2}, we derive two auxiliary estimates.
The first one is a photon number bound for the ground states
of $H$. The second one is an estimate on the energy of the states orthogonal to the ground state of the Schr\"odinger operator.

\subsection{Photon number bound}\label{section:S4.1}
The photon number bound is obtained by arguments similar to the one obtained
in the case of a spinless electron (see \cite{BCVV2}).
However, in the present case, due to the spin-Zeeman
term $\sqrt{\alpha}\sigma.B(x)$, it happens that the photon number estimate is one order larger
in powers of the fine structure constant, yielding a bound proportional to
$\alpha$ instead of a bound proportional to $\alpha^2|\log\alpha|$ as in \cite{BCVV2}.
As it can be seen from the estimates on the ground states on the translation invariant
operator found in \cite{BV4}, this estimate can not be improved.

\begin{prop}\label{Nf-exp-lemma-1}
Let
\begin{equation}\label{eq:def-op-K}
    \ttH \; = \; {(i\nabla-\sqrt\alpha A(x))^2}\, +\sqrt{\alpha}\sigma\cdot B(x)
    \, + \, H_f \, - \, \frac{\alpha}{|x|}\; ,
\end{equation}
be the Pauli-Fierz operator defined without normal ordering. Let $\pgs   \in \H$ be a
ground state of $K$,
\begin{equation}\nonumber
     \ttH \, \pgs   \; = \; E \, \pgs   \;,
\end{equation}
normalized by
\begin{equation}\nonumber
    \| \, \pgs   \, \| \; = \; 1 \;.
\end{equation}
Let
\begin{equation}\nonumber
    N_f \; := \; \sum_{\lambda=1,2}\int
    \, a_\lambda^*(k) \, a_\lambda(k) \, \d k
\end{equation}
denote the photon number operator. Then, there exists a constant
$c$ independent of $\alpha$,  such that for any sufficiently small
$\alpha>0$, the estimate
\begin{equation}\label{eq:photon-number-estimate-1}
     \bra\pgs   \, , \, N_f \,\pgs  \ket \; \leq \; c \, \alpha
\end{equation}
is satisfied.
\end{prop}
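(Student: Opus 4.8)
The plan is to bound the photon number via a commutator/pull-through identity for the annihilation operator applied to the ground state $\pgs$, which expresses $a_\lambda(k)\pgs$ in terms of the resolvent of $\ttH-E$ acting on the commutator $[\ttH, a_\lambda(k)]\pgs$. First I would compute this commutator explicitly. Writing $\ttH$ as in \eqref{eq:def-op-K}, the only terms that fail to commute with $a_\lambda(k)$ are $H_f$, the coupling term $-2\sqrt\alpha\, (i\nabla)\cdot A(x)+\alpha A(x)^2$ inside the squared momentum, and the spin-Zeeman term $\sqrt\alpha\,\sigma\cdot B(x)$. Since $[H_f,a_\lambda(k)]=-|k|\,a_\lambda(k)$ and the creation parts $\Ac,\Bc$ contribute delta-like terms through $[a_\lambda(k),\Ac],[a_\lambda(k),\Bc]$, the identity takes the schematic form
\begin{equation}\nonumber
 (\ttH-E+|k|)\, a_\lambda(k)\pgs \;=\; -\,w_\lambda(k)\,\pgs\, ,
\end{equation}
where $w_\lambda(k)$ collects the form factors from the $A$- and $B$-couplings, each carrying the prefactor $\zeta(|k|)/(2\pi|k|^{1/2})$ together with a factor $\sqrt\alpha$. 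Crucially, the Zeeman contribution to $w_\lambda(k)$ carries the extra factor $(k\times i\varepsilon_\lambda(k))$, so its form factor behaves like $|k|^{1/2}\zeta(|k|)$ near $k=0$, which is less singular than the $|k|^{-1/2}\zeta(|k|)$ behaviour of the $A$-coupling.

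Next I would invert the operator on the left. Because $E=\inf\mathrm{spec}(\ttH)$ and $|k|\geq 0$, the operator $\ttH-E+|k|$ is bounded below by $|k|$, so $\|(\ttH-E+|k|)^{-1}\|\le |k|^{-1}$; more usefully, one can also bound it using $\ttH-E\ge 0$ to absorb the dangerous low-$|k|$ region. This yields the pointwise estimate
\begin{equation}\nonumber
 \| a_\lambda(k)\pgs\| \;\le\; \frac{\|w_\lambda(k)\,\pgs\|}{|k|}\, ,
\end{equation}
and then
\begin{equation}\nonumber
 \bra\pgs,\, N_f\,\pgs\ket \;=\; \sum_{\lambda=1,2}\int \|a_\lambda(k)\pgs\|^2\,\d k
 \;\le\; \sum_{\lambda=1,2}\int \frac{\|w_\lambda(k)\pgs\|^2}{|k|^2}\,\d k\, .
\end{equation}
Inserting the explicit form factors, the $A$-coupling piece produces an integrand $\sim \alpha\,|k|^{-1}|\zeta(|k|)|^2$ near the origin (after one power of $|k|$ is used to control the resolvent), which is integrable and gives an $\mathcal{O}(\alpha)$ bound; the Zeeman piece produces $\sim \alpha\,|k|\,|\zeta(|k|)|^2$, which is even more benign. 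The ultraviolet region is controlled by the compact support of $\zeta$. Summing the contributions gives the claimed bound $\bra\pgs,N_f\pgs\ket\le c\,\alpha$.

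The main obstacle is making the pull-through estimate rigorous in the infrared, since the naive factor $|k|^{-2}$ coming from $\|(\ttH-E+|k|)^{-1}\|\le|k|^{-1}$ is, for the $A$-coupling, on the borderline of integrability when combined with $|k|^{-1}|\zeta|^2$. The careful point is to split the numerator: the piece of $w_\lambda(k)$ coming from $-2\sqrt\alpha\, P\cdot\Ac$ must be paired not with the full resolvent bound but with a commutator argument that exploits $\|(\ttH-E)^{1/2}a_\lambda(k)\pgs\|$ (a form-bound on the coupling), so that one effectively gains a half power of $|k|$ and restores integrability; this is exactly the spinless mechanism of \cite{BCVV2}. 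The genuinely new input in the spin case is verifying that the Zeeman term, despite entering at the same order $\sqrt\alpha$, is infrared-subcritical because of the extra factor $k\times i\varepsilon_\lambda(k)$, so it does not worsen the infrared behaviour and the overall bound remains $\mathcal{O}(\alpha)$ rather than $\mathcal{O}(\alpha^2|\log\alpha|)$; the upgrade from $\alpha^2$ to $\alpha$ relative to \cite{BCVV2} comes precisely from the fact that $w_\lambda(k)$ now contains an $\mathcal O(\sqrt\alpha)$ contribution surviving at $x$-independent order, whereas in the spinless case the leading coupling was effectively one order smaller.
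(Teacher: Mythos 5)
Your overall architecture matches the paper's: pull-through identity for $a_\lambda(k)\pgs$, inversion of $\ttH-E+|k|$ with $\|(\ttH-E+|k|)^{-1}\|\le|k|^{-1}$, and the observation that the Zeeman form factor carries an extra factor $k\times i\varepsilon_\lambda(k)$ and is therefore infrared-subcritical, contributing the integrable piece $\alpha\zeta^2(|k|)/|k|$ that produces the $\mathcal{O}(\alpha)$ total. But there is a genuine gap at exactly the step you flag as "the main obstacle". Your proposed fix --- pairing the $-2\sqrt\alpha\,P\cdot A$ piece with a form bound exploiting $\|(\ttH-E)^{1/2}a_\lambda(k)\pgs\|$ to gain half a power of $|k|$ --- is not carried out, is not the mechanism of \cite{BCVV2}, and would not go through as sketched: taking the scalar product of the pull-through identity with $a_\lambda(k)\pgs$ gives
\begin{equation}\nonumber
\|(\ttH-E)^{1/2}a_\lambda(k)\pgs\|^2+|k|\,\|a_\lambda(k)\pgs\|^2
=-\Re\bra a_\lambda(k)\pgs,\,w_\lambda(k)\pgs\ket\, ,
\end{equation}
which after Cauchy--Schwarz merely reproduces the same $|k|^{-1}$ bound; and the coupling $(i\nabla-\sqrt\alpha A(x))\cdot\varepsilon_\lambda(k)\,e^{ikx}$ is not relatively form-bounded by $\ttH-E$ with a constant uniform in $\alpha$, because $(i\nabla-\sqrt\alpha A(x))^2=\ttH-E+\alpha/|x|-H_f-\sqrt\alpha\,\sigma\cdot B(x)+E$ carries additive terms of order $\alpha$ (here $E=\mathcal{O}(\alpha)$, not $\mathcal{O}(\alpha^2)$, precisely because of the spin-Zeeman term), so no half power of $|k|$ is gained this way.

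What the paper (following \cite{BCVV2} and \cite{GLL}) actually does at this step is entirely different: it exploits the \emph{spatial localization} of the bound state. Equation (58) of \cite{GLL} gives the improved infrared bound $\|a_\lambda(k)\pgs\|\le c\sqrt\alpha\,\zeta(|k|)\,|k|^{-1/2}\,\||x|\pgs\|$ (the Coulomb binding weakens the singularity by one power of $|k|$), and Lemma~6.2 of \cite{GLL}, with $\beta=\mathcal{O}(\alpha)$ admissible since $\Sigma_0-E=\mathcal{O}(\alpha^2)$, yields $\||x|\pgs\|\le c\,\alpha^{-5/4}$. Splitting the $k$-integration at $\delta=\alpha^{5/4}$ and using your (correct) bound $c(\alpha|k|^{-3/2}+\sqrt\alpha\,|k|^{-1/2})\zeta(|k|)$ for $|k|\ge\delta$ gives $c\alpha^{-3/2}\delta^2+c\alpha^2\log\delta+c\alpha\le c\alpha$; without the localization input, the $A$-coupling term alone yields only $\mathcal{O}(\alpha^2|\log\alpha|)$ plus a divergent boundary contribution at $k=0$, so the argument cannot close. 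A second omission: you use $\|(i\nabla-\sqrt\alpha A(x))\pgs\|^2\le c\alpha$ without proof; in the paper this is itself a nontrivial preliminary step, obtained by first showing $\bra\pgs,\frac{\alpha}{|x|}\pgs\ket\le c\alpha^2$ via normal ordering and Lemma~A.4 of \cite{GLL} to control both $\alpha :\!A(x)^2\!:$ and the Zeeman term against $H_f$.
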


The proof of this proposition is given in Appendix~\ref{appendix-photon-number-bound}.

\subsection{A priori estimates for states orthogonal to $u_\alpha$}

As in \cite{BCVV2}, we shall need a specific notion of orthogonality to the ground state of the Schr\"odinger operator, defined as follows.
\begin{defn}\label{def:def-orthogonality}
We say that a state $\varphi$ in $\gH$ is orthogonal to the ground state $u_\alpha$ of the Schr\"odinger operator $(-\Delta-\frac{\alpha}{|x|})$ if for all $n\geq 0$,
for all $j\in\{1,\ldots ,n\}$, for all $\lambda_j\in\{1,2\}$,
for a.e.  $k=(k_1,\ldots,k_n)\in\R^{3n}$, for all $\mtab\in\C^2$
\begin{equation}\label{eq:def-orthogonality}
\begin{split}
 & \quad \bra \Proj_n \varphi(k_1,\lambda_1;\, k_2,\lambda_2;\cdots;k_n,\lambda_n;\, .\, ) ,\ u_\alpha(\,.\,)\mtab \ket_{L^2(\R^3, \C^2)} = 0\, .
\end{split}
\end{equation}
\end{defn}

For states orthogonal to the ground state of the Schr\"odinger-Coulomb operator, we will show that the energy is larger than the sum of Coulomb energy $-\alpha^2/4$ and self-energy $\Sigma_0$. Even more, due to this orthogonality, even if we subtract from this energy term a part of the Schr\"odinger-Coulomb energy, we can prove that this property remains correct. However, unlike the case of spinless electron (\cite{BCVV2}), we can not subtract a part of the field energy, since in the spin case, the field energy of the ground state is of the order $\alpha$ whereas the difference between the first and the second level for the Schr\"odinger-Coulomb operator is of order $\alpha^2$. This difference between the present case and the spinless case \cite{BCVV2} together with the weaker photon number bound, force us to add two more iterations to estimate the radiation field corrections to the binding energy.

\begin{prop}\label{prop:orthogonality}
Assume that $\varphi\in\gH$ fulfils \eqref{eq:def-orthogonality}

Then there exists $\alpha_0$ such that for all $0<\alpha<\alpha_0$ we have
\begin{equation}\label{eq:orth-u-alpha-0}
 \bra \varphi,\, H\varphi\ket \geq (\Sigma_0 - \frac{\alpha^2}{4}) \|\varphi\|^2 + \frac{3}{32}\alpha^2 \| \varphi \|^2\, .
\end{equation}
\end{prop}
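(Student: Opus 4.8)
The plan is to decompose $H$ into a translation-invariant part and the Coulomb term, and to exploit the orthogonality condition \eqref{eq:def-orthogonality} fiberwise. Writing $H = T + V = T - \frac{\alpha}{|x|}$, where $T$ is the self-energy operator, the natural first move is to estimate $\bra\varphi,\, T\varphi\ket$ from below using the fact that $\Sigma_0 = \inf\mathrm{spec}(T(0))$ and that $T$ is translationally invariant with total momentum $P_{tot}=P+P_f$. In the fiber decomposition one has $T = \int^\oplus T(p)\,dp$, and each fiber satisfies $T(p) \geq \Sigma_0$ on $\C^2\otimes\gF$; hence $\bra\varphi,\, T\varphi\ket \geq \Sigma_0\|\varphi\|^2$ with no use of orthogonality yet. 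The entire content of the proposition is therefore to control the remaining piece $\bra\varphi,\, (-\frac{\alpha}{|x|})\varphi\ket$ together with a slice of the kinetic energy, and to show that the orthogonality to $u_\alpha$ pushes the combined Schrödinger-type contribution above $-\tfrac14\alpha^2 + \tfrac{3}{32}\alpha^2$ rather than merely above $-\tfrac14\alpha^2$.

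The key idea I would use is to project onto each $n$-photon sector and, within $\Proj_n\varphi$, to isolate the electronic factor. For fixed photon variables $(k_1,\lambda_1;\dots;k_n,\lambda_n)$ the function $\Proj_n\varphi(\cdots;\,x)$ is an element of $L^2(\R^3,\C^2)$ that, by \eqref{eq:def-orthogonality}, is orthogonal to $u_\alpha(\cdot)\mtab$ for every spinor $\mtab$. The plan is then to invoke the spectral gap of the Schrödinger–Coulomb operator: on the orthogonal complement of the ground state $u_\alpha$, one has
\begin{equation}\nonumber
 -\Delta - \frac{\alpha}{|x|} \geq -\frac{\alpha^2}{16}
\end{equation}
(the second eigenvalue of $-\Delta-\alpha/|x|$ being $-\alpha^2/16$), so that
\begin{equation}\nonumber
 \bra \psi,\, \big(-\Delta-\tfrac{\alpha}{|x|}\big)\psi\ket \geq -\tfrac{\alpha^2}{16}\|\psi\|^2
\end{equation}
for every $\psi\perp u_\alpha$. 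Comparing $-\alpha^2/16$ with the ground-state level $-\alpha^2/4$ gives exactly the gain $(-\tfrac{1}{16}+\tfrac14)\alpha^2 = \tfrac{3}{16}\alpha^2$, and allocating half of the available kinetic energy to this estimate yields the stated $\tfrac{3}{32}\alpha^2$. Concretely I would write the kinetic operator appearing in the expansion \eqref{expand-H} as $(-\Delta) + (H_f+P_f^2) - 2P\cdot P_f + (\text{field terms})$, split off a fraction (say one half) of $-\Delta$ to combine with the Coulomb term into $\tfrac12(-\Delta-\tfrac{2\alpha}{|x|})$ acting fiberwise, and keep the complementary half of the kinetic energy bundled back into a shifted self-energy operator.

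The main obstacle I anticipate is the bookkeeping that shows the \emph{remaining} half of the kinetic energy, together with the field and cross terms $-2P\cdot P_f$, $4\sqrt{\alpha}P_f\cdot\Aa$, etc., still bounds a translation-invariant operator whose fiber infimum is $\Sigma_0$ up to an error smaller than $\alpha^2$. In other words, the difficulty is not the spectral-gap argument itself but verifying that splitting off a positive fraction of $-\Delta$ does not damage the lower bound $\bra\varphi,\,T'\varphi\ket \geq \Sigma_0\|\varphi\|^2$ for the reduced operator $T'$ by more than $o(\alpha^2)$. Since $\Sigma_0 = \mathcal{O}(\alpha)$ and the field energy of the relevant states is of order $\alpha$ (Proposition~\ref{Nf-exp-lemma-1}), I would expect that retaining any fixed fraction of $-\Delta$ costs only a quantity controlled by $\|P\Sgs\|^2$-type terms that are subleading; making this rigorous, presumably by a fiberwise IMS-type localization in the electron momentum and a careful use of the translation invariance of $T$, is where the real work lies. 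The final step is then purely additive: combine $\Sigma_0\|\varphi\|^2$ from the reduced self-energy, $-\tfrac{\alpha^2}{16}\cdot\tfrac12\cdot\|\varphi\|^2$-type contributions from the gap estimate on the halved Coulomb problem, and reassemble to reach the claimed $(\Sigma_0-\tfrac{\alpha^2}{4})\|\varphi\|^2 + \tfrac{3}{32}\alpha^2\|\varphi\|^2$.
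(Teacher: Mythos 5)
There is a genuine gap, and it sits exactly where you placed your ``main obstacle'': the reduced operator $T' := T - \tfrac12(-\Delta)$ admits \emph{no} fiber bound of the form $\inf\mathrm{spec}\,T'(p) \geq \Sigma_0 - o(\alpha^2)$, so your splitting fails before the bookkeeping starts. Chen's estimate $|\inf\mathrm{spec}\,T(p) - p^2 - \Sigma_0| \leq c\,\alpha\,p^2$ holds only for $|p|$ below a fixed threshold $p_c$ \cite{Chen2008}; for large total momentum the dispersion grows at most \emph{linearly}, since momentum can be shed into photons at energy cost $H_f \geq |P_f|$ (trial states with $n$ nearly collinear photons of total momentum $p$ give $\inf\mathrm{spec}\,T(p) \leq |p| + \mathcal{O}(\sqrt{\alpha|p|} + \alpha)$), so $\inf\mathrm{spec}\,T(p) - \tfrac12 p^2 \to -\infty$ as $|p|\to\infty$. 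The paper's proof is organized around precisely this failure: it splits $\varphi = \varphi_1 + \varphi_2$ with $\varphi_1 = \chi(|P|<p_c/2)\varphi$, absorbs the Coulomb cross terms by an $\epsilon$-Cauchy--Schwarz, and on the high-momentum piece uses no gap at all but the dichotomy that either $\bra \varphi_2,\, P_f\varphi_2\ket$ is small, whence $\bra (P-P_f)^2\varphi_2,\,\varphi_2\ket \gtrsim p_c^2\|\varphi_2\|^2$, or it is large, whence $H_f \geq |P_f|$ gives $\bra H_f\varphi_2,\,\varphi_2\ket \gtrsim p_c\|\varphi_2\|^2$; either way one obtains an $\alpha$-independent lower bound of order $p_c^2$ that dominates every $\alpha^2$ term. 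Only on $\varphi_1$ is the quadratic bound $T(p)\geq \Sigma_0 + (1-o_\alpha(1))p^2$ invoked, and there it permits keeping the full $(1-o_\alpha(1))P^2$ --- not one half --- for the Coulomb problem.

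Two further defects. First, once you localize in momentum (you concede an ``IMS-type localization'' is needed), the fiberwise orthogonality \eqref{eq:def-orthogonality} is destroyed: $\chi(|P|<p_c/2)$ does not commute with the rank-one projection onto $\Sgsa$, so $\varphi_1$ is no longer orthogonal to $\Sgsa$ and the spectral gap cannot be applied to it directly. The paper quantifies the leakage via $\sum_k \|\bra \Sgsa,\, \Proj_k\varphi_1\ket\|^2 = \sum_k \|\bra \Sgsa,\, \Proj_k\varphi_2\ket\|^2 \leq \|\varphi_2\|^2\,\|\chi(|P|\geq p_c/2)\Sgsa\|^2 \to 0$ (see \eqref{eq:eq31}) and closes the argument through the case distinction $\|\varphi_2\|^2 \gtrless 8\|\varphi_1\|^2$; nothing in your outline plays this role. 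Second, the arithmetic ``half the kinetic energy $\Rightarrow$ half the gap'' is incorrect: $\tfrac12(-\Delta) - \tfrac{\alpha}{|x|} = \tfrac12\bigl(-\Delta - \tfrac{2\alpha}{|x|}\bigr)$ has ground state $\Sgs_{2\alpha}$, a dilation of $\Sgsa$ with the fixed overlap $\bra \Sgsa,\, \Sgs_{2\alpha}\ket = 16\sqrt{2}/27 \approx 0.84$. A state orthogonal to $\Sgsa$ can therefore carry mass $\tfrac{217}{729}\approx 0.30$ on $\Sgs_{2\alpha}$, and the gap argument then only guarantees $\bra \psi,\, (\tfrac12(-\Delta)-\tfrac{\alpha}{|x|})\psi\ket \geq -\bigl(\tfrac18 + \tfrac38\cdot\tfrac{217}{729}\bigr)\alpha^2\|\psi\|^2 = -\bigl(\tfrac14 - \tfrac{13}{972}\bigr)\alpha^2\|\psi\|^2$, a gain of roughly $\alpha^2/75$ --- far short of $\tfrac{3}{32}\alpha^2$. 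The paper reaches $\tfrac34(e_0-e_1) = \tfrac{9}{64}\alpha^2 \geq \tfrac{3}{32}\alpha^2$ precisely because, on the low-momentum piece, essentially all of $P^2$ stays with the Coulomb term and the unperturbed gap of $-\Delta - \tfrac{\alpha}{|x|}$ above $u_\alpha$ is used.
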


The proof of this result is postponed to Appendix~\ref{appendix-proof-state-orthogonal-u-alpha}


\section{First estimates on the binding energy and the ground states of Hydrogen atom}\label{S6}


Using the photon number bound (Proposition~\ref{Nf-exp-lemma-1}), we prove that although the self-energy and the ground state energy are of the order of $\alpha$ (see respectively \cite{BV4} and section~\ref{S3}), the binding energy is of the order $\alpha^2$. This result is stated in Proposition~\ref{proposition:4}. In addition, we obtain some estimates on the approximate ground states which will be used in Sections~\ref{S-new-7} and \ref{S4} for further improvement of the binding energy estimate.


Given a ground state $\pgs$ of $H$ such that $\|\Proj_0 \pgs\|=1$, we decompose it into a part parallel to $u_\alpha$ and a part orthogonal to $u_\alpha$, where $u_\alpha$ is defined in \eqref{eq:def-u-alpha} as the ground state of the Schr\"odinger operator $-\Delta -\alpha/|x|$  . Namely, we define $\phi\in\C^2\otimes\gF$ and $\G\in\gH$ by
\begin{equation}\label{*-0}
 \pgs = u_\alpha\phi + \G \, ,
\end{equation}
with $\G\in\gH$ orthogonal to the ground state $u_\alpha$  in the sense of \eqref{eq:def-orthogonality} of Definition~\ref{def:def-orthogonality}.


Next we define a splitting for the state $\phi$.

Given $\mtab\in\C^2$, let
\begin{equation}\label{def:Guab}
\begin{split}
 \Gutab:= -(H_f+P_f^2)^{-1}\sigma.\Bc \vac\mtab\, .
\end{split}
\end{equation}
With this definition, $\Gu^{(1,0)}$ equals the one photon component $\Gu$ of an approximate ground state of $T(0)$ as defined by \eqref{eq:def-gamma1}.

Then we consider the following decomposition for $\phi$:

\noindent Let $a$, $b$ and $\gamma_0$ be defined by
\begin{equation}\label{def:Guab-0}
 \Proj_0 \phi = \gamma_0 \mab \vac , \quad\mbox{with}\quad |a|^2 + |b|^2  =1\, ,
\end{equation}
and let $\gamma_1$ and $R_1$ de defined by
\begin{equation}\label{def:Guab-2}
 \Proj_1\phi = (\sqrt{\alpha} \gamma_1 \Guab + R_1),\quad \bra R_1,\, \Guab\ket_*=0\, .
\end{equation}
Here the bilinear form  $\bra.,.\ket_*$ acts on $(\C^2\otimes\gF)^2$.

For $\G$ given by \eqref{*-0}, we define the following decomposition:
\begin{equation}\label{def:g}
 \Proj_0 \G =:g\, ,
\end{equation}
and for
\begin{equation}\label{def:Gamma-1-g}
 \Gu(g) := - (H_f+P_f^2)^{-1} \sigma.\Bc g\, ,
\end{equation}
similarly to \eqref{def:Guab}, we split $\Proj_1 \G$ as
\begin{equation}\label{eq:decomposition-again-5-bis}
 \Proj_1 \G=: \sqrt{\alpha}\beta_1 \Gu(g) + L_1
\end{equation}
where $\beta_1$ and $L_1$ are uniquely defined by the condition
\begin{equation}\label{eq:decomposition-again-6}
 \bra \Gu(g),\, L_1\ket_*=0\, .
\end{equation}

For
$$
 \Lambda : = \sup_{\zeta(r)\neq 0} |r|\,
$$
where $\zeta(.)$ is the ultraviolet cutoff, we define
\begin{equation}\label{eq:def-M}
  M(\Proj_1 \G ) =\left\{
   \begin{array}{lll}
     \|P \L_1 \|^2 & \mbox{ if } & |\beta_1|< 8(1+\Lambda) \\
     \|(P-P_f)\Proj_1 \G\|^2 & \mbox{ if } & |\beta_1| \geq 8(1+\Lambda)
   \end{array}
   \right. ,
\end{equation}

\begin{prop}\label{proposition:4} $\ $

i) For some $c>0$ we have
\begin{equation}\label{eq:estimate-gs-energy-1}
\begin{split}
 \Sigma_0 - \Sigma \leq c\alpha^2\, .
\end{split}
\end{equation}

ii) For $\pgs$ a ground state of $H$ such that $\|\Proj_0\pgs\|=1$ we have

\begin{equation}\label{eq:estimate-gs-energy-2}
\begin{split}
 & \| H_f^\frac12 \Proj_{\geq 2}\pgs\| = \mathcal{O}(\alpha)\, ,\\
 & \| \nabla g\| =  \mathcal{O}(\alpha)\, , \\
 & \|R_1\|_*=\mathcal{O}(\alpha),
 \quad \| L_1 \|_* = \mathcal{O}(\alpha)\, ,
\end{split}
\end{equation}
\begin{equation}\label{eq:estimate-gs-energy-3}
\begin{split}
 & (\gamma_1-\gamma_0) = \mathcal{O}(\alpha^\frac12)\, ,\\
 & (\beta_1-1) \| g\| =  \mathcal{O}(\alpha^\frac12)\, ,\\
 & \|(P-P_f) \Proj_{\geq 2}\pgs\| = \mathcal{O}(\alpha)\, .
\end{split}
\end{equation}
and
\begin{equation}\label{eq:estimate-gs-energy-4}
 M(\Proj_1 \G) = \mathcal{O}(\alpha^2)\, .
\end{equation}
\end{prop}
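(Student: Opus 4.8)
The plan is to derive a variational lower bound for $\langle\varphi,H\varphi\rangle$ sharp enough to give i) for an arbitrary $\varphi\in\gH$, and then, specializing to the true ground state and exploiting its near-optimality, to read off the component bounds ii). Throughout I would use the two auxiliary results of the previous section, the photon number bound of Proposition~\ref{Nf-exp-lemma-1} and the a priori estimate of Proposition~\ref{prop:orthogonality}, together with the decomposition $\varphi=\Sgsa\gvar+\G$ with $\G$ orthogonal to $\Sgsa$ in the sense of \eqref{eq:def-orthogonality}. The key structural observation is that, rewriting $H$ as in \eqref{ast-ast}, the operator $(P-P_f)^2+H_f$ (with $P=i\nabla_x$) supplies the positivity carrying the quadratic forms $\|(P-P_f)\Proj_{\geq 2}\pgs\|^2$ and $\|H_f^{1/2}\Proj_{\geq2}\pgs\|^2$ that occur in \eqref{eq:estimate-gs-energy-2}--\eqref{eq:estimate-gs-energy-3}.

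For i) I would first show $\langle\varphi,H\varphi\rangle\geq(\Sigma_0-\tfrac{\alpha^2}{4}-c\alpha^2)\|\varphi\|^2$ for all $\varphi$; taking $\varphi=\pgs$ then yields \eqref{eq:estimate-gs-energy-1}. Expanding $\langle\varphi,H\varphi\rangle$ along $\varphi=\Sgsa\gvar+\G$, the diagonal term in $\G$ is bounded below by $(\Sigma_0-\tfrac{\alpha^2}{4}+\tfrac{3}{32}\alpha^2)\|\G\|^2$ directly from Proposition~\ref{prop:orthogonality}. The diagonal term in $\Sgsa\gvar$ is clean: writing the field-spin part of \eqref{expand-H} as the fibre operator $T(0)$ acting on $\gvar$, all contributions of the momentum couplings $-2P.P_f$ and $-4\alpha^{1/2}\Re P.\Aa$ to this diagonal vanish by the spherical symmetry of $\Sgsa$ (as in \eqref{eq:trial1}), so that $\langle\Sgsa\gvar,H\Sgsa\gvar\rangle=-\tfrac{\alpha^2}{4}\|\Sgsa\gvar\|^2+\langle\gvar,T(0)\gvar\rangle\geq(\Sigma_0-\tfrac{\alpha^2}{4})\|\Sgsa\gvar\|^2$, using $T(0)\geq\Sigma_0$ and $\|\Sgsa\|=1$. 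The remaining work, and the first obstacle, is the cross term $2\Re\langle\Sgsa\gvar,H\G\rangle$: by Definition~\ref{def:def-orthogonality} every purely field-spin or Coulomb coupling between $\Sgsa\gvar$ and $\G$ vanishes sector by sector, and only the two couplings containing $P=i\nabla_x$ survive, since $P$ turns $\Sgsa$ into $P\Sgsa$, which is no longer orthogonal to $\G$. As $\|P\Sgsa\|=\mathcal O(\alpha)$, these are $\mathcal O(\alpha)\,\|\varphi\|_*\|\varphi\|$; the point is that the genuine field energy $\|\cdot\|_*^2$ retained in the diagonal blocks (not the meagre $\tfrac{3}{32}\alpha^2$ of Proposition~\ref{prop:orthogonality}) must be used, through a completion of squares in the $(P-P_f)$-variable, to absorb them down to $\mathcal O(\alpha^2)\|\varphi\|^2$.

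For ii) I would specialize to a ground state $\pgs$ normalized by $\|\Proj_0\pgs\|=1$, so that Proposition~\ref{Nf-exp-lemma-1} gives $\|\Proj_{\geq1}\pgs\|=\mathcal O(\alpha^{1/2})$ and $\|\pgs\|^2=1+\mathcal O(\alpha)$. Combining i) with the upper bound $\Sigma\leq\Sigma_0-\tfrac{\alpha^2}{4}+\mathcal O(\alpha^3)$ proved in Section~\ref{S3}, the energy gap $\langle\pgs,H\pgs\rangle-(\Sigma_0-\tfrac{\alpha^2}{4})\|\pgs\|^2=(\Sigma-\Sigma_0+\tfrac{\alpha^2}{4})\|\pgs\|^2$ is $\mathcal O(\alpha^2)$. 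The idea is to refine the lower bound of i) so that its right-hand side is $(\Sigma_0-\tfrac{\alpha^2}{4})\|\pgs\|^2$ plus an explicit sum of nonnegative quadratic forms---$\|H_f^{1/2}\Proj_{\geq2}\pgs\|^2$ and $\|(P-P_f)\Proj_{\geq2}\pgs\|^2$ from the $H_f+(P-P_f)^2$ part on the high photon sectors, $\|\nabla g\|^2=\|Pg\|^2$ from the kinetic form on $g=\Proj_0\G$, the $\|\cdot\|_*$-forms $\|R_1\|_*^2$ and $\|L_1\|_*^2$, and the kinetic form $M(\Proj_1\G)$ on the one-photon residuals orthogonal to $\Guab$ and $\Gu(g)$---minus errors of order $\mathcal O(\alpha^2)$. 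Since the total gap is $\mathcal O(\alpha^2)$ and every displayed form is nonnegative, each is individually $\mathcal O(\alpha^2)$; this yields \eqref{eq:estimate-gs-energy-2}, the last line of \eqref{eq:estimate-gs-energy-3}, and \eqref{eq:estimate-gs-energy-4}.

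The two remaining relations $(\gamma_1-\gamma_0)=\mathcal O(\alpha^{1/2})$ and $(\beta_1-1)\|g\|=\mathcal O(\alpha^{1/2})$ do not follow from the energy bound but from the eigenvalue equation $H\pgs=\Sigma\pgs$ projected onto $\Proj_1$ and tested against the distinguished directions $\Guab$ and $\Gu(g)$. The leading one-photon balance $(H_f+P_f^2)\Proj_1\pgs=-\sqrt\alpha\,\sigma\!\cdot\!\Bc\,\Proj_0\pgs+(\text{corrections})$, together with the definitions \eqref{def:Guab} and \eqref{def:Gamma-1-g}, identifies the $\Guab$-component of $\Proj_1(\Sgsa\gvar)$ with $\sqrt\alpha\,\gamma_0\Guab$ and the $\Gu(g)$-component of $\Proj_1\G$ with $\sqrt\alpha\,\Gu(g)$, up to corrections coming from $\Sigma\,\Proj_1\pgs$, the momentum couplings $-2P.P_f$ and $-4\alpha^{1/2}\Re P.\Aa$ of \eqref{expand-H}, and the residuals $R_1,L_1$. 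By the $\mathcal O(\alpha)$ estimates just obtained and $\|P\Sgsa\|=\mathcal O(\alpha)$, these corrections are $\mathcal O(\alpha)$ in $\|\cdot\|_*$, i.e. $\mathcal O(\alpha^{1/2})$ relative to the leading $\mathcal O(\alpha^{1/2})$ amplitude, which gives the stated bounds. I expect the principal difficulty throughout to be the bookkeeping of the cross terms in the refined bound of ii): because the spin case leaves no field-energy reserve to spare (as emphasized before Proposition~\ref{prop:orthogonality}), one must verify that every error produced by pairing $P$ acting on $\Sgsa$ (size $\mathcal O(\alpha)$) with the $\mathcal O(\alpha^{1/2})$ field amplitudes, and by the infrared cutoff entering the one-photon residuals, remains at or below $\mathcal O(\alpha^2)$.
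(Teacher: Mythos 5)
Your high-level scheme — a variational lower bound carrying nonnegative quadratic forms, compared against the trial-state upper bound of Section~\ref{S3}, with the photon number bound of Proposition~\ref{Nf-exp-lemma-1} controlling errors — is indeed the paper's strategy, and your list of forms ($\|H_f^{1/2}\Proj_{\geq2}\pgs\|^2$, $\|(P-P_f)\Proj_{\geq2}\pgs\|^2$, $\|\nabla g\|^2$, $\|R_1\|_*^2$, $\|L_1\|_*^2$, $M(\Proj_1\G)$) matches Lemmata~\ref{lem:1}--\ref{lem:3}. But your claim that $(\gamma_1-\gamma_0)=\mathcal{O}(\alpha^{1/2})$ and $(\beta_1-1)\|g\|=\mathcal{O}(\alpha^{1/2})$ ``do not follow from the energy bound'' is wrong, and the eigenvalue-equation detour you substitute is both unnecessary and unelaborated. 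In the paper these two bounds come from the energy comparison itself, by completing squares inside the quadratic form: in \eqref{eq:star-11} the spin-Zeeman cross term between $\sqrt{\alpha}\gamma_1\Guab$ and $\gamma_0\mab\vac$ produces $(|\gamma_1|^2-2\Re\gamma_1\bar\gamma_0)\,\alpha\|\Guab\|_*^2=\alpha|\gamma_1-\gamma_0|^2\|\Guab\|_*^2-\alpha|\gamma_0|^2\|\Guab\|_*^2$, and the analogous manipulation in \eqref{eq:star-12.1-geq}--\eqref{eq:star-12.1-<} yields $c_0\alpha|1-\beta_1|^2\|g\|^2$; these nonnegative squares sit in the final bound \eqref{eq:kti-1} and are forced to be $\mathcal{O}(\alpha^2)$ by the same single comparison that gives everything else in ii). Relatedly, you never say how the piecewise-defined $M(\Proj_1\G)$ of \eqref{eq:def-M} is extracted: this requires the case analysis of Lemma~\ref{lem:2} (the threshold $|\beta_1|\gtrless 8(1+\Lambda)$, and within the small-$\beta_1$ case the sub-dichotomy $\|P L_1\|\gtrless\tilde c\,\|P T_1\|$ with $\tilde c$ tied to the ultraviolet cutoff), which is the genuinely delicate step of the proof and is absent from your proposal.

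There is also a gap in your route to part i). You want a lower bound valid for \emph{arbitrary} $\varphi$, using Proposition~\ref{prop:orthogonality} on the $\G$-block and absorbing the surviving $P$-cross terms into ``the genuine field energy $\|\cdot\|_*^2$ retained in the diagonal blocks''. Proposition~\ref{prop:orthogonality} retains no field energy, and in the spin case none can be retained at this order: since the minimizer of $T(0)$ itself carries field energy of order $\alpha$, keeping $\epsilon\|H_f^{1/2}\G\|^2$ above $\Sigma_0\|\G\|^2$ costs an error $-c\,\epsilon\,\alpha\|\G\|^2$, while the Schwarz absorption of the cross term $\alpha\|\Proj_1\phi\|\,\|P_f\Proj_1\G\|$ costs $c\,\alpha^2\epsilon^{-1}\|\Proj_1\phi\|^2$; optimizing $\epsilon$ leaves an error of order $\alpha^{3/2}\|\varphi\|^2$, not $\alpha^2$ (this is exactly the obstruction the paper emphasizes before Proposition~\ref{prop:orthogonality}). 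The paper avoids it: Proposition~\ref{prop:orthogonality} is not used in this proof at all; the bound is established only along the ground state, where Proposition~\ref{Nf-exp-lemma-1} and the corollary of Lemma~\ref{lem:1} give $\|\Proj_1\pgs\|=\mathcal{O}(\alpha^{1/2})$ and $\|P_f\Proj_1\G\|=\mathcal{O}(\alpha^{1/2})$, so the cross terms \eqref{eq:star-14}--\eqref{eq:star-15} are $\mathcal{O}(\alpha^2)$ outright, and the ``surplus'' that absorbs the remaining interactions in the one-photon block is supplied not by raw field energy but by the residual forms $\|R_1\|_*^2$, $\|L_1\|_*^2$ of the explicit splitting \eqref{def:Guab-0}--\eqref{eq:decomposition-again-6}; part i) then follows by dropping the nonnegative terms in \eqref{eq:kti-1} and comparing with $\Sigma\leq-\alpha\|\Gu\|_*^2+\mathcal{O}(\alpha^2)$.
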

The proof of Proposition~\ref{proposition:4} requires several lemmata given below, and is postponed to the end of this section.

\begin{lem}\label{lem:1}
There exists a constant $c>0$ such that for all $\alpha$ and all $\psi\in\mathfrak{D}(H)$ we have
$$
 \bra \psi,\, H\psi\ket \geq -c\alpha \|\psi\|^2 +\frac12 \|H_f^\frac12\psi\|^2 + \frac14\|(P-P_f)\psi\|^2\, .
$$
\end{lem}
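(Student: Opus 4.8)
The plan is to establish a lower bound on $\langle\psi,H\psi\rangle$ that absorbs the $H_f$ and $(P-P_f)^2$ terms while controlling all coupling terms by a small multiple of these plus an $\mathcal{O}(\alpha)$ error. I would start from the expanded form \eqref{expand-H} of $H$, namely
\begin{equation}\nonumber
\begin{split}
 H = & (-\Delta -\tfrac{\alpha}{|x|}) + (H_f + P_f^2) - 2\,P.P_f
 - 4\alpha^\frac12 \Re P.\Aa \\
 & + 4\alpha^\frac12 P_f.\Aa + 2\alpha\Ac\Aa + 2\alpha(\Re\Aa)^2 + 2\alpha^\frac12\Re\sigma.\Ba\, .
\end{split}
\end{equation}
The first observation is that $(P-P_f)^2 = P^2 - 2P.P_f + P_f^2$, so the terms $-\Delta + P_f^2 - 2P.P_f$ combine into $(P-P_f)^2$, which is manifestly nonnegative; the Coulomb term $-\alpha/|x|$ will contribute an $\mathcal{O}(\alpha)$ loss after a standard estimate. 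Thus up to the Coulomb piece, $H$ equals $(P-P_f)^2 + H_f$ plus the $\sqrt\alpha$- and $\alpha$-order coupling terms, and my task reduces to showing these couplings can be dominated by $\tfrac12\|(P-P_f)\psi\|^2 + \tfrac12\|H_f^\frac12\psi\|^2$ up to $\mathcal{O}(\alpha)\|\psi\|^2$.

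The main technical work is the term-by-term control of the coupling operators. For each creation/annihilation operator I would use the standard bounds $\|\Aa\psi\|\le c\|H_f^\frac12\psi\|$ and $\|\sigma.\Ba\psi\|\le c\|H_f^\frac12\psi\|$ (finite because the form factor $\zeta(|k|)/|k|^{1/2}$ and $|k|^{1/2}\zeta(|k|)/|k|^{1/2}$ are square-integrable against the relevant weights, using compact support and $C^1$ regularity of $\zeta$). The Zeeman term $2\alpha^\frac12\Re\sigma.\Ba$ is estimated by $2\alpha^\frac12 c\|H_f^\frac12\psi\|\|\psi\|\le \epsilon\|H_f^\frac12\psi\|^2 + \tfrac{c^2\alpha}{\epsilon}\|\psi\|^2$ via Cauchy--Schwarz and Young's inequality; choosing $\epsilon$ small gives an $\mathcal{O}(\alpha)$ loss. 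The diamagnetic-type terms $2\alpha\Ac\Aa + 2\alpha(\Re\Aa)^2$ are nonnegative or bounded below by $-c\alpha\|H_f^\frac12\psi\|^2$, again absorbable. The genuinely delicate couplings are $-4\alpha^\frac12\Re P.\Aa$ and $4\alpha^\frac12 P_f.\Aa$, since these pair the momentum operators with the field; I would bound them by $c\alpha^\frac12\|(P-P_f)\psi\|\|H_f^\frac12\psi\|$ after writing $P.\Aa$ and $P_f.\Aa$ in terms of $(P-P_f)$ and handling the leftover $P_f.\Aa$ piece with $\|P_f.\Aa\psi\|\le c\|H_f\psi\|$-type estimates, then using Young's inequality with a small parameter to split into $\tfrac14\|(P-P_f)\psi\|^2$, a small multiple of $\|H_f^\frac12\psi\|^2$, and an $\mathcal{O}(\alpha)\|\psi\|^2$ remainder.

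The hardest step will be controlling the $P_f.\Aa$ coupling cleanly, because $P_f$ is not bounded by $H_f^\frac12$ alone but rather by $H_f$, so a naive Cauchy--Schwarz costs a full power of the field energy rather than a half. I expect to resolve this by exploiting that $\|P_f\psi\|\le\|H_f\psi\|$ on the support of the compactly-supported cutoff (where $|k|\le\Lambda$, so $|P_f|\le\Lambda^\frac12 H_f^\frac12$ effectively), which trades the extra $P_f$ for a cutoff-dependent constant and keeps everything at the level of $\|H_f^\frac12\psi\|$; the $\alpha^\frac12$ prefactor then guarantees that the cross term is subleading and, after optimizing the split, leaves the coefficients $\tfrac12$ in front of $\|H_f^\frac12\psi\|^2$ and $\tfrac14$ in front of $\|(P-P_f)\psi\|^2$ with only an $\mathcal{O}(\alpha)\|\psi\|^2$ deficit, as claimed.
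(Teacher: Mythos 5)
Your overall strategy (expand $H$ as in \eqref{expand-H} and absorb the coupling terms into $\frac12\|H_f^{\frac12}\psi\|^2+\frac14\|(P-P_f)\psi\|^2$ plus $\mathcal{O}(\alpha)\|\psi\|^2$) is viable, but the step you yourself flag as the hardest one is resolved incorrectly. The operator bound you invoke, $|P_f|\le \Lambda^{\frac12}H_f^{\frac12}$ ``on the support of the cutoff'', is false on Fock space: the ultraviolet cutoff $\zeta$ restricts the form factor of $A$ and $B$, not the photon content of $\psi\in\mathfrak{D}(H)$. Concretely, on a state of $n$ photons with parallel momenta of modulus $|k_0|\le\Lambda$ one has $|P_f|=n|k_0|=H_f$, while $\Lambda^{\frac12}H_f^{\frac12}=(\Lambda\, n|k_0|)^{\frac12}$, so the claimed inequality fails as soon as $H_f>\Lambda$; in general only $|P_f|\le H_f$ holds, and a Cauchy--Schwarz using $\|P_f\psi\|\le\|H_f\psi\|$ costs a full power of the field energy that cannot be absorbed by $\|H_f^{\frac12}\psi\|^2$.

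The repair is contained in the other half of your own sentence: there is no ``leftover $P_f.\Aa$ piece''. In the quadratic form the two gradient couplings combine exactly, $-4\sqrt{\alpha}\,\Re P.\Aa + 4\sqrt{\alpha}\,P_f.\Aa = -4\sqrt{\alpha}\,\Re\,(P-P_f).\Aa$ (note $P_f.\Aa=\Aa.P_f$ by transversality, $k\cdot\varepsilon_\lambda(k)=0$), and then $4\sqrt{\alpha}\,\|(P-P_f)\psi\|\,\|\Aa\psi\|\le \frac12\|(P-P_f)\psi\|^2+c\alpha\|H_f^{\frac12}\psi\|^2$ together with $\|\Aa\psi\|\le c\|H_f^{\frac12}\psi\|$ finishes the argument --- no bound on $\|P_f\psi\|$ is ever needed. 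The paper achieves the same thing even more economically: it never expands the square, using instead $(P-P_f-\sqrt{\alpha}A(0))^2\ge (1-\epsilon)(P-P_f)^2+(1-\epsilon^{-1})\alpha A(0)^2$, which disposes of \emph{all} linear couplings at once; $A(0)^2$ and $\sigma.B(0)$ are then controlled via \cite[Lemma~A.4]{GLL}, and half of $(1-\epsilon)(P-P_f)^2$ is paired with $-\alpha/|x|$ to yield $-c\alpha^2\|\psi\|^2$. On that last point your write-up is also loose: $-\alpha/|x|$ alone is not bounded below by $-c\alpha\|\psi\|^2$; you must sacrifice a fraction of the kinetic term, exactly as in the paper.
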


Since $\Sigma\leq 0$, there is an immediate consequence of this result for any ground state of $H$.
\begin{cor}
Let $\pgs$ be a ground state of $H$, then
$$
 \|H_f^\frac12\pgs\|^2 \leq c\alpha \|\pgs\|^2\, ,\quad\mbox{and}\quad \|(P-P_f)^2\pgs\|^2 \leq c\alpha\|\pgs\|^2\, .
$$
\end{cor}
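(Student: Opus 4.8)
The plan is to start from the expansion \eqref{expand-H} and notice that the three kinetic contributions combine into a perfect square, namely $-\Delta + (H_f+P_f^2) - 2P\cdot P_f = (P-P_f)^2 + H_f$. I would then split this positive part as $(P-P_f)^2 = \tfrac14(P-P_f)^2 + \tfrac12(P-P_f)^2 + \tfrac14(P-P_f)^2$ and $H_f = \tfrac12 H_f + \tfrac12 H_f$, keeping the block $\tfrac14(P-P_f)^2 + \tfrac12 H_f$ for the final bound, devoting the middle $\tfrac12(P-P_f)^2$ to the Coulomb attraction, and reserving the remaining $\tfrac14(P-P_f)^2 + \tfrac12 H_f$ as slack to dominate the interaction terms.

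For the Coulomb term I would use the observation that $(P-P_f)^2 - \tfrac{\beta}{|x|}$ is unitarily equivalent, through $U = \mathrm{e}^{iP_f\cdot x}$ of \eqref{def:unitary-transform}, to the hydrogen operator $-\Delta - \tfrac{\beta}{|x|}$: indeed $U^\ast P U = P - P_f$ while $U$ commutes with $1/|x|$, so that $(P-P_f)^2 - \tfrac{\beta}{|x|} = U^\ast(-\Delta - \tfrac{\beta}{|x|})U \geq -\beta^2/4$ for every $\beta>0$. Applying this with $\beta = 2\alpha$ to the portion $\tfrac12(P-P_f)^2 - \tfrac{\alpha}{|x|} = \tfrac12\big[(P-P_f)^2 - \tfrac{2\alpha}{|x|}\big]$ yields the lower bound $-\tfrac{\alpha^2}{2}$, which is of order $\alpha^2$ and hence harmless.

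Next I would bound the interaction expectations. The quadratic terms $2\alpha\Ac\Aa$ and $2\alpha(\Re\Aa)^2$ have nonnegative expectation and are simply discarded. After taking real parts (legitimate since $\bra\psi,H\psi\ket\in\R$ for $\psi\in\mathfrak D(H)$), the linear contributions $-4\alpha^{1/2}\Re P\cdot\Aa$ and $4\alpha^{1/2}P_f\cdot\Aa$ combine, exactly as in the square $((P-P_f)-\sqrt\alpha A)^2$, into $-4\alpha^{1/2}\Re\bra(P-P_f)\psi,\Aa\psi\ket$; using the standard relative bound $\|\Aa\psi\|\leq c\|H_f^{1/2}\psi\|$ (whose constant is finite and $\alpha$-independent thanks to the cutoff $\zeta$) together with Young's inequality, this is dominated, for $\alpha$ small, by $\tfrac14\|(P-P_f)\psi\|^2 + \tfrac14\|H_f^{1/2}\psi\|^2$. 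The Zeeman term is treated by $2\alpha^{1/2}|\bra\psi,\sigma\cdot\Ba\psi\ket|\leq 2\alpha^{1/2}c\|\psi\|\,\|H_f^{1/2}\psi\|\leq\tfrac14\|H_f^{1/2}\psi\|^2 + c'\alpha\|\psi\|^2$, again using $\|\Ba\psi\|\leq c\|H_f^{1/2}\psi\|$. The two Young bounds consume precisely the reserved slack $\tfrac14(P-P_f)^2 + \tfrac12 H_f$ and leave the claimed $\tfrac14\|(P-P_f)\psi\|^2 + \tfrac12\|H_f^{1/2}\psi\|^2$ together with an error $-(\tfrac{\alpha^2}{2}+c'\alpha)\|\psi\|^2\geq -c\alpha\|\psi\|^2$.

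The only genuinely delicate points are, first, recognising that the Coulomb singularity must be absorbed against the full square $(P-P_f)^2$, since absorbing it against $-\Delta$ alone would leave an uncontrollable $-P_f^2$ after completing the square; the unitary equivalence to hydrogen is the clean device for this. Second, it is the Zeeman coupling, pairing $\sigma\cdot\Ba$ directly against $\psi$, that forces the error to be of order $\alpha$ rather than $\alpha^2$, in agreement with the discussion preceding Proposition~\ref{Nf-exp-lemma-1}. Everything else is a routine use of the $H_f^{1/2}$-relative bounds for $\Aa$ and $\Ba$ and of Young's inequality.
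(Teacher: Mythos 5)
Your argument is correct and is essentially the paper's own route: what you derive is exactly the lower bound of Lemma~\ref{lem:1}, $\bra \psi,\, H\psi\ket \geq -c\alpha\|\psi\|^2 + \frac12\|H_f^\frac12\psi\|^2 + \frac14\|(P-P_f)\psi\|^2$, obtained there by the same hydrogen-type bound on $\frac{1-\epsilon}{2}(P-P_f)^2 - \frac{\alpha}{|x|}$ and the GLL relative bounds for $\Aa$ and $\Ba$, your bookkeeping (discarding the nonnegative normal-ordered quadratic terms and absorbing the cross term directly by Cauchy--Schwarz, rather than the paper's $\epsilon$-split of the square) being only a cosmetic variant. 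The one step you leave implicit should be stated: since $\Sigma \leq 0$ (test $H$ on $u_\alpha \vac \zo$), a ground state satisfies $\bra \pgs,\, H\pgs\ket = \Sigma\|\pgs\|^2 \leq 0$, and inserting this into your lower bound with $\psi = \pgs$ yields the two claimed inequalities.
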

%
%
\begin{proof} We first prove Lemma~\ref{lem:1} for the Hamiltonian $H$ without normal ordering. Reintroducing the normal ordering will shift the estimates by $\cno\alpha$, and will therefore not change the result.

For $\psi\in\mathfrak{D}(H)$, we have
\begin{equation}\label{eq:87}
\begin{split}
 & \bra \left[ (P-P_f -\sqrt{\alpha} A(0))^2 -\frac{\alpha}{|x|} +\sqrt{\alpha}\sigma.B(0) + H_f\right] \psi,\, \psi\ket \\
 & \geq \bra (\frac{1-\epsilon}{2} (P-P_f)^2 - \frac{\alpha}{|x|})\psi,\, \psi\ket
       +\bra \frac{1-\epsilon}{2} (P-P_f)^2\psi,\,\psi\ket \\
 &      + \alpha(1-\epsilon^{-1}) \bra A(0)^2\psi,\, \psi\ket + \sqrt{\alpha}\bra \sigma . B(0)\psi,\, \psi\ket + \bra H_f \psi,\, \psi\ket \\
 & \geq - c\alpha^2 \|\psi\|^2 +\frac{1-\epsilon}{2} \bra (P-P_f)^2\psi,\, \psi\ket
        - c \alpha \epsilon^{-1} \|\psi\|^2 - c\epsilon^{-1}\alpha \|H_f^\frac12\psi\|^2 \\
 &      - c\sqrt{\alpha} \|H_f^\frac12\psi\| - c\sqrt{\alpha} \|H_f^\frac12\psi\| + \|H_f^\frac12\psi\| \\
 & \geq - c\alpha \|\psi\|^2 + \frac12 \|H_f^\frac12 \psi\|^2 +\frac14 \|(P-P_f)\psi\|^2 \, ,
\end{split}
\end{equation}
for $\epsilon<1/2$ and $\alpha$ small enough, and where in the second inequality we used \cite[Lemma~A.4]{GLL} to bound from below the terms $A(0)^2$ and $\sigma.B(0)$.
\end{proof}

\begin{lem}\label{lem:2}
There exist $c>0$ such that
\begin{equation}\label{eq:itc}
\begin{split}
 & \bra [(P-P_f^2) - \frac{\alpha}{|x|} + H_f] \Proj_1\G,\, \Proj_1\G\ket \\
 & \geq
 \alpha \frac{1}{2(1 + \Lambda)}  |\beta_1|^2 \|\Gu (g)\|_*^2 + \frac{1}{2(1 + \Lambda)}
 \|\L_1\|_*^2 +
 \frac12 \|(P-P_f)\Proj_1 \G\|^2 -c\alpha^3\, .
\end{split}
\end{equation}
In addition, if $|\beta_1| < 8(1+\Lambda)$, then for any $\xi>0$,  there exist
$\frac{1}{4(1 + \Lambda)}>\delta_3>0$, $c>0$ and  $\frac12> \delta_4>0$ such that
\begin{equation}\label{eq:itc-2}
\begin{split}
 & \bra [(P-P_f^2) - \frac{\alpha}{|x|} + H_f] \Proj_1\G,\, \Proj_1\G\ket \\
 & \geq \alpha |\beta_1|^2 \|\Gu (g)\|_*^2 + \delta_3\|\L_1\|_*^2
 - \frac{\xi}{8} \|\nabla g\|^2 + \delta_4 \| P L_1\|^2 -c\alpha^3\, .
\end{split}
\end{equation}

\end{lem}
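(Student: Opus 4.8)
The plan is to exploit that $(P-P_f)^2$, $H_f$, $P_f^2$ and $\alpha/|x|$ all conserve the photon number, so the form on the left is genuinely evaluated on the one-photon vector $\Proj_1\G$, together with two structural inputs. First, the photons of a ground state are confined to the support of the ultraviolet cutoff, so $\Proj_1\G$ is supported in $\{|k|\le\Lambda\}$ (pull-through formula, see \cite{BV4}); hence, as quadratic forms on $\Proj_1\G$ and on any of its components supported in $\{|k|\le\Lambda\}$, one has $H_f\ge\frac{1}{1+\Lambda}(H_f+P_f^2)$ and $\bra P_f^2\,\cdot\,,\,\cdot\,\ket\le\Lambda\,\|\,\cdot\,\|_*^2$, since $|k|^2\le\Lambda|k|$ there. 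Second, Proposition~\ref{Nf-exp-lemma-1} gives $\|\Proj_1\G\|^2=\mathcal{O}(\alpha)$, and when $|\beta_1|<8(1+\Lambda)$ the same holds for $\|\L_1\|^2$. These turn every stray factor $\alpha^2/|x|$ into an $\mathcal{O}(\alpha^3)$ error.

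For the first inequality I would peel off the term $\frac12\|(P-P_f)\Proj_1\G\|^2$ that is to be retained and reduce the claim to $\frac12\|(P-P_f)\Proj_1\G\|^2+\bra(H_f-\frac{\alpha}{|x|})\Proj_1\G,\Proj_1\G\ket\ge\frac{1}{2(1+\Lambda)}\|\Proj_1\G\|_*^2-c\alpha^3$. The field term alone dominates the target: $H_f\ge\frac{1}{1+\Lambda}(H_f+P_f^2)$ leaves $\frac{1}{2(1+\Lambda)}(H_f+P_f^2)$ in reserve after covering $\frac{1}{2(1+\Lambda)}\|\,\cdot\,\|_*^2$. The Coulomb term is absorbed by writing $\frac{\alpha}{|x|}\le\epsilon P^2+\frac{\alpha^2}{4\epsilon}$ together with $\|P\,\cdot\,\|^2\le2\|(P-P_f)\,\cdot\,\|^2+2\bra P_f^2\,\cdot\,,\,\cdot\,\ket\le2\|(P-P_f)\,\cdot\,\|^2+2\Lambda\|\,\cdot\,\|_*^2$; choosing the fixed constant $\epsilon$ so small that $2\epsilon\le\frac12$ and $2\epsilon\Lambda\le\frac{1}{2(1+\Lambda)}$ makes the $\epsilon P^2$ part fit inside the retained $\frac12\|(P-P_f)\,\cdot\,\|^2$ and the reserve, while $\frac{\alpha^2}{4\epsilon}\|\Proj_1\G\|^2=\mathcal{O}(\alpha^3)$. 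Finally I would split $\|\Proj_1\G\|_*^2=\alpha|\beta_1|^2\|\Gu(g)\|_*^2+\|\L_1\|_*^2$ using $\bra\Gu(g),\L_1\ket_*=0$ from \eqref{eq:decomposition-again-6}, which distributes the coefficient $\frac{1}{2(1+\Lambda)}$ over the two pieces exactly as in \eqref{eq:itc}.

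For the refined inequality \eqref{eq:itc-2}, under $|\beta_1|<8(1+\Lambda)$, I would expand $\Proj_1\G=\sqrt{\alpha}\beta_1\Gu(g)+\L_1$ and treat the three resulting pieces separately. The crucial gain—upgrading the coefficient of $\alpha|\beta_1|^2\|\Gu(g)\|_*^2$ from $\frac{1}{2(1+\Lambda)}$ to $1$—comes from the diagonal $\Gu(g)$ term: since the reduction operator $-(H_f+P_f^2)^{-1}\sigma.\Bc$ in \eqref{def:Gamma-1-g} commutes with $\nabla_x$, one has $P\Gu(g)=\Gu(i\nabla g)$, so $\bra(P-P_f)^2\Gu(g),\Gu(g)\ket=\bra P_f^2\Gu(g),\Gu(g)\ket$ up to terms $\mathcal{O}(\|\nabla g\|)$, and adding $\bra H_f\Gu(g),\Gu(g)\ket$ reconstructs the full $\|\Gu(g)\|_*^2=\bra(H_f+P_f^2)\Gu(g),\Gu(g)\ket$. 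In the $\Gu(g)$–$\L_1$ cross term the $H_f+P_f^2$ part cancels by $*$-orthogonality, leaving $\bra P^2\Gu(g),\L_1\ket-2\bra P.P_f\Gu(g),\L_1\ket$, both $\mathcal{O}(\|\nabla g\|)$ after using $P\Gu(g)=\Gu(i\nabla g)$; with the prefactor $\sqrt{\alpha}|\beta_1|=\mathcal{O}(\sqrt\alpha)$ furnished by the hypothesis, Young's inequality pours these into $\delta_4\|P\L_1\|^2$ and $\frac{\xi}{8}\|\nabla g\|^2$. The diagonal $\L_1$ term is handled as in the first bound, but now \emph{retaining} a fraction of the electron kinetic energy: from $(P-P_f)^2+H_f\ge\frac{1}{2\Lambda+1}P^2+\frac12H_f$ on $\{|k|\le\Lambda\}$ and absorbing the Coulomb into a small multiple of $P^2$, one obtains $\ge\delta_4\|P\L_1\|^2+\frac{1}{2(1+\Lambda)}\|\L_1\|_*^2-c\alpha^3$, which leaves room for the required $\delta_3\|\L_1\|_*^2$ with $\delta_3<\frac{1}{4(1+\Lambda)}$.

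The main obstacle is the refined bound, specifically producing the full coefficient $1$ on $\alpha|\beta_1|^2\|\Gu(g)\|_*^2$ while keeping the price at only $-\frac{\xi}{8}\|\nabla g\|^2$: one must check that the $(P-P_f)^2$ term supplies precisely the missing $P_f^2$ needed to turn $H_f$ into $H_f+P_f^2$, and that every electron-momentum correction it generates is genuinely $\mathcal{O}(\|\nabla g\|)$ rather than merely $\mathcal{O}(\|P\Proj_1\G\|)$, which is exactly where $P\Gu(g)=\Gu(i\nabla g)$ is indispensable. The accompanying bookkeeping—coordinating the small constants $\epsilon,\delta_3,\delta_4,\xi$ so that all retained terms keep strictly positive coefficients while all discarded terms remain $\mathcal{O}(\alpha^3)$, and using $|\beta_1|<8(1+\Lambda)$ to ensure both $\|\L_1\|^2=\mathcal{O}(\alpha)$ and the subcriticality of the cross terms—is routine but must be carried out with care. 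The one genuinely structural input that should be flagged (and can be cited from \cite{BV4}) is the confinement of the ground-state photons to $\{|k|\le\Lambda\}$, on which the constants $\frac{1}{1+\Lambda}$ and $\frac{1}{2(1+\Lambda)}$ ultimately rest.
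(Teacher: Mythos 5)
Your handling of the first bound \eqref{eq:itc} is sound and essentially the paper's (the paper bounds $\tfrac12(P-P_f)^2-\tfrac{\alpha}{|x|}\ge -\tfrac{\alpha^2}{2}$ directly instead of your $\epsilon P^2$ absorption, a cosmetic difference), and your replacement of the paper's two-case argument ($\|P\L_1\|$ versus $\tilde c\,\|P(\sqrt{\alpha}\beta_1\Gu(g))\|$) by a diagonal/cross splitting with the operator inequality $(P-P_f)^2+H_f\ge\tfrac{1}{2\Lambda+1}P^2+\tfrac12 H_f$ is a legitimate variant. But there is a genuine gap at exactly the point you flag as the crux. To get the coefficient $1$ in front of $\alpha|\beta_1|^2\|\Gu(g)\|_*^2$ in \eqref{eq:itc-2} you must dispose of the diagonal term $-2\alpha|\beta_1|^2\,\Re\bra P\cdot P_f\,\Gu(g),\,\Gu(g)\ket$, and you do so as ``$\mathcal{O}(\|\nabla g\|)$'' on the strength of $P\Gu(g)=\Gu(i\nabla g)$. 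That identity only yields $|\bra P\cdot P_f\,\Gu(g),\Gu(g)\ket|\le c\,\|\nabla g\|\,\|g\|$ --- one derivative, not two. At this stage of the paper nothing better than $\|g\|\le\|\Proj_0\pgs\|=1$ is available: the bounds $\|\nabla g\|=\mathcal{O}(\alpha)$ and $\|g\|=\mathcal{O}(\alpha^{\frac13})$ are consequences of Proposition~\ref{proposition:4} and Corollary~\ref{Cmain}, which rest on this very lemma, so invoking them here would be circular. Hence Young's inequality turns this term into $\tfrac{\xi}{8}\|\nabla g\|^2+C_\xi\,\alpha^2\|g\|^2=\tfrac{\xi}{8}\|\nabla g\|^2+\mathcal{O}(\alpha^2)$, which overruns the $-c\alpha^3$ error budget; alternatively, absorbing it into $\eta\,\alpha|\beta_1|^2\|\Gu(g)\|_*^2$ degrades the coefficient to $1-\eta$, and the downstream completion of the square $|\beta_1|^2-2\Re\beta_1=|1-\beta_1|^2-1$ in Lemma~\ref{lem:3} needs the coefficient to be exactly $1$ (a loss $\eta$ costs $\eta\,\alpha\|g\|^2$, again $\mathcal{O}(\alpha^2)$ unless $\eta=\mathcal{O}(\alpha)$, which reinstates the problem).

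What saves the paper is that this term is not merely small but exactly zero: Lemma~\ref{lem:appendix-3} gives $\bra\Gutab,\,(P_f)^{(i)}\Guab\ket=0$ fiberwise, whence $\bra P\cdot P_f\,\Gu(g),\,\Gu(g)\ket=0$ --- the paper's identity $\bra P.P_f\,T_1,T_1\ket=0$, used in both \eqref{eq:star-2} and \eqref{eq:5.4-7.8}. This vanishing is a structural property of the explicit one-photon vector \eqref{eq:explicit-Guab}, established by direct computation, and it does \emph{not} follow from the commutation of $-(H_f+P_f^2)^{-1}\sigma\cdot\Bc$ with $\nabla_x$. Once you add this input, the rest of your bookkeeping goes through: the $\sqrt{\alpha}$-prefactored $\Gu(g)$--$\L_1$ cross terms can indeed be poured into $\delta_4\|P\L_1\|^2$, $\tfrac{\xi}{8}\|\nabla g\|^2$ and $c\sqrt{\alpha}\,\|\L_1\|_*^2$ as you describe, the Coulomb term is absorbed via $\tfrac{\alpha}{|x|}\le\epsilon P^2+\tfrac{\alpha^2}{4\epsilon}$ together with the photon number bound $\|\Proj_1\G\|^2=\mathcal{O}(\alpha)$, and your diagonal treatment of $\L_1$ in fact streamlines the paper's case analysis.
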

%
%
\begin{proof}
Let us denote
$$
 T_1:= \sqrt{\alpha}\beta_1 \Gu(g)\, ,
$$

With the photon number estimate \eqref{eq:photon-number-estimate-1} in Proposition~\ref{Nf-exp-lemma-1}, we have
\begin{equation*}
 \bra (\frac12 (P-P_f)^2 -\frac{\alpha}{|x|}) \Proj_1\G, \, \Proj_1\G\ket
 \geq -\frac12 \alpha^2 \|\Proj_1 \G\|^2 \geq -c\alpha^3\, .
\end{equation*}
Taking into account that
$$
 \|H_f^\frac12 \Proj_1 \G \|^2 \geq \frac{1}{2(1+\Lambda)} \|\Proj_1 \G \|_*^2
 = \frac{1}{2(1+\Lambda)} (\| T_1 \|_*^2 + \| \L_1 \|_*^2)\, ,
$$
yields \eqref{eq:itc}.

In the rest of the proof, we consider the case $|\beta_1|< 8(1+\Lambda)$.

$\bullet$ If $\|P\L_1\| > \tilde{c} \|P T_1\|$, for $\tilde{c}>1$, we have
\begin{equation}\label{eq:star-1}
\begin{split}
 & \bra (P-P_f)^2(T_1 + L_1),\, (T_1 + L_1)\ket + \|H_f^\frac12 (T_1 + L_1)\|^2 \\
 & = \bra P^2 (T_1 + L_1), T_1+L_1\ket + \|T_1\|_*^2 + \|L_1\|_*^2
 - 2\Re\bra P.P_f (T_1 + \L_1),\, (T_1+\L_1)\ket
\end{split}
\end{equation}
The last term in the right hand side of \eqref{eq:star-1} is estimated as
\begin{equation}\label{eq:star-2}
\begin{split}
 & - 2\Re \bra P.P_f (T_1+ \L_1), \, (T_1+ \L_1)\ket \\
 & = - 2\Re \bra P.P_f T_1,\, T_1\ket
 - 4\Re \bra P.P_f T_1,\, \L_1\ket
 - 2\Re \bra P. P_f L_1,\, \L_1\ket \\
 & \geq - 4\|P_f \L_1 \|\, \|P T_1\| - 2 \| P_f \L_1\|\, \|P \L_1\| \\
 & \geq -c\sqrt{\alpha} \|\nabla g\|^2 - \sqrt{\alpha} \|\L_1\|_*^2
 - (1-\epsilon)^{-1} \|P_f \L_1\|^2 - (1-\epsilon) \|P \L_1\|^2\, .
\end{split}
\end{equation}
In the first inequality, we used from Lemma~\ref{lem:appendix-3} that $\bra P.P_f T_1,\, T_1\ket=0$. In the second inequality we used $\|P_f \L_1\| \leq \|\L_1\|_*$. In addition, in the second inequality, we used Lemma~\ref{lem:appendix-6} stating that $\|\nabla \Gu(g)\| \leq c\|\nabla g\|$ which yields, since $|\beta_1|$ is bounded by $8(1+\Lambda)$, that  $\|P T_1\| \leq c \sqrt{\alpha} \|\nabla g\|$.

Since we have imposed a fixed ultraviolet cutoff, we can find $\epsilon>0$ in \eqref{eq:star-2} depending only on $\Lambda$, such that for some $\delta_3>0$ we have
\begin{equation}\label{eq:star-3}
 (1-2\delta_3)\|L_1\|_*^2 > (1-\epsilon)^{-1} \|P_f \L_1\|^2\, .
\end{equation}
Inequalities \eqref{eq:star-2} and \eqref{eq:star-3} in \eqref{eq:star-1} yield
\begin{equation}\label{eq:star-4}
\begin{split}
 & \bra (P-P_f)^2 (T_1 + \L_1),\, (T_1 + \L_1)\ket + \|H_f^\frac12 (T_1 + \L_1)\|^2 \\
 & \geq \bra P^2 (T_1 + L_1),\, (T_1 + L_1)\ket + \|T_1\|_*^2 +\delta_3 \|L_1\|_*^2
 -\frac{\xi}{8} \|\nabla g\|^2 - (1-\epsilon) \|P \L_1\|^2\, .
\end{split}
\end{equation}
The first term in the right hand side of \eqref{eq:star-4} can be estimated as
\begin{equation*}
\begin{split}
 & \bra P^2 (T_1 + \L_1),\, (T_1 + \L_1)\ket \\
 & \geq (\|P \L_1\| - \|P T_1\|)^2 \geq (1-\frac{1}{\tilde c})^2 \|P\L_1\|^2 \geq (1 -\frac{\epsilon}{2})^2 \|P \L_1\|^2\, ,
\end{split}
\end{equation*}
for $\tilde{c} = 4/\epsilon$. Remind that since $\epsilon$ depends only on $\Lambda$, then $\tilde{c}$ also depends only on $\Lambda$. Therefore, for the sum of the first term and the last term in the
right hand side of \eqref{eq:star-4} holds
\begin{equation*}
\begin{split}
 & \bra P^2 (T_1 + \L_1),\, (T_1 + \L_1)\ket - (1-\epsilon) \|P \L_1\|^2 \\
 & \geq \frac{\epsilon^2}{4} \| P \L_1\|^2 \geq \frac{\epsilon^2}{8} \| P \L_1\|^2 + \frac{\epsilon^2}{8} \| P T_1\|^2\, .\\
\end{split}
\end{equation*}
Since one half of the last two terms combined with $\bra -\frac{\alpha}{|x|} \Proj_1 \G,\, \Proj_1 \G\ket $ is bounded below by $-c\alpha^2\|\Proj_1\G\|^2 = -c\alpha^3$, it gives together with \eqref{eq:star-4} that there exists $\alpha_0$ depending on $\xi$ such that for all $\alpha_0>\alpha>0$ we have
\begin{equation*}
\begin{split}
 & \bra [(P-P_f)^2 -\frac{\alpha}{|x|} + H_f]\, (T_1 + \L_1),\,  (T_1 + \L_1 )\ket \\
 & \geq
 \|T_1\|_*^2 +\delta_3 \|\L_1 \|_*^2 -\frac{\xi}{8} \|\nabla g\|^2 + \frac{\epsilon^2}{16} \|P \L_1\|^2 - c\alpha^3 \, .
\end{split}
\end{equation*}
and thus proves \eqref{eq:itc-2} in the case $|\beta_1| < 8(1+\Lambda)$ and $\|P \L_1\| > \tilde{c} \|P T_1\|$.

$\bullet$ We now consider the case $\|P\L_1\| \leq \tilde{c} \|P T_1\|$.

Since $\bra P.P_f T_1,\, T_1\ket = 0$ (see Lemma~\ref{lem:appendix-3}) we obtain
\begin{equation}\label{eq:5.4-7.8}
\begin{split}
  & \Re \bra 2 P.P_f (T_1 + \L_1),\, (T_1 + \L_1)\ket = 4\Re \bra P.P_f T_1,\, \L_1\ket
  + 2\Re \bra P.P_f \L_1,\, \L_1\ket \\
  & \geq 4 \|P T_1\|\, \|P_f \L_1\| + 2 \| P_f \L_1\| \, \| P \L_1\|\, \\
  & \geq  4 \|P T_1\|\, \|P_f \L_1\| + 2 \| P_f \L_1\| \, \tilde{c} \| P T_1\|\, \\
  & \geq - c\sqrt{\alpha} \|\nabla g\|^2 - c\sqrt{\alpha} \|L_1\|_*^2\, ,
\end{split}
\end{equation}
where we used $\|P_f \L_1\| \leq \|L_1\|_*$ and $\|PT_1\| \leq c\sqrt{\alpha} \|\nabla g\|$ since $|\beta_1| < 8(1+\Lambda)$ and $\| \nabla \Gu(g) \| \leq c\| \nabla g\|$ from Lemma~\ref{lem:appendix-6}.

Therefore, we get
\begin{equation}\nonumber
\begin{split}
 & \bra [ (P-P_f)^2 -\frac{\alpha}{|x|} + H_f] (T_1 + \L_1),\, (T_1 + \L_1) \ket \\
 & \geq \bra [ (1-\delta_4) P^2 -\frac{\alpha}{|x|}] (T_1 + \L_1),\, (T_1 + \L_1) \ket
 + \|T_1\|_*^2 + \|L_1\|_*^2 \\
 & \ \ \ \ - c\sqrt{\alpha}\|\nabla g\|^2 - c\sqrt{\alpha} \|L_1\|_*^2 + \delta_4 \|P \L_1\|^2 \\
 &    \geq \|T_1\|_*^2 + \delta_3\| \L_1 \|_*^2 -\frac{\xi}{8} \|\nabla g\|^2 + \delta_4 \| P \L_1\|^2-c\alpha^3\, ,
\end{split}
\end{equation}
which proves \eqref{eq:itc-2} in the case $|\beta_1| < 8(1+\Lambda)$ and $\|P \L_1\| < \tilde{c} \|P T_1\|$, and thus concludes the proof.
\end{proof}

We have the following a priori estimate
\begin{lem}\label{lem:3} For $M(\Proj_1 \G )$ defined by \eqref{eq:def-M}, $\delta_3$ and $\delta_4$ defined as in Lemma~\ref{lem:2}, there exist $\delta_1>0$, $c_0>0$ and $c>0$ such that for all $\alpha>0$,
\begin{equation}\label{eq:star-9.5}
\begin{split}
 & \bra \Proj_{\leq 1} \pgs ,\, H\, \Proj_{\leq1} \pgs\ket \\
 & \geq
 \frac{\delta_1}{2} \|\nabla g\|^2 + \delta_4 M(\Proj_1 \G)
 -\alpha |\gamma_0|^2  \|\Guab \|_*^2 + \alpha|\gamma_0 - \gamma_1|^2  \|\Guab \|_*^2 + \| R_1\|_*^2\\
 &
 -\alpha\|\Gu (g)\|_*^2 + c_0\alpha |\beta_1-1|^2 \|g\|^2 + \delta_3 \|L_1\|_*^2 - c\alpha^2\, .
\end{split}
\end{equation}
\end{lem}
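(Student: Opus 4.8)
The plan is to insert the decomposition $\pgs=\Sgsa\phi+\G$ into $\bra\Proj_{\leq1}\pgs,\,H\Proj_{\leq1}\pgs\ket$ and to split it into a parallel part (built from $\Sgsa\phi$), an orthogonal part (built from $\G$), and parallel/orthogonal cross terms, estimating each via the expansion \eqref{expand-H} of $H$. The decisive structural remark is that every summand of \eqref{expand-H} which acts trivially in the electron variable $x$ — namely $H_f+P_f^2$, the Zeeman term $2\alpha^{1/2}\Re\sigma.\Ba$, the term $4\alpha^{1/2}P_f.\Aa$, and $2\alpha\Ac\Aa$, $2\alpha(\Re\Aa)^2$ — contributes nothing to the cross terms: such an operator commutes with multiplication by $\Sgsa$, so the associated cross term carries the factor $\int_{\R^3}\Sgsa(x)\,\overline{\Proj_n\G(x)}\,\d x$, which vanishes by the orthogonality \eqref{eq:def-orthogonality}. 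Hence only the two summands containing $P=i\nabla_x$, that is $-2P.P_f$ and $-4\alpha^{1/2}\Re P.\Aa$, can produce parallel/orthogonal cross terms. I would bound these by weighted Cauchy--Schwarz, using $\|\nabla\Sgsa\|=\mathcal{O}(\alpha)$ together with the a priori estimates $\|H_f^{1/2}\pgs\|=\mathcal{O}(\alpha^{1/2})$ and $\|(P-P_f)\pgs\|=\mathcal{O}(\alpha^{1/2})$ from the Corollary to Lemma~\ref{lem:1} (which control $\|P_f\Proj_1\G\|$ and $\|\Aa\Proj_1\G\|$); this makes the whole cross-term contribution of size $\mathcal{O}(\alpha^2)$.

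For the parallel part $\Sgsa\Proj_{\leq1}\phi=\Sgsa(\gamma_0\mab\vac+\sqrt\alpha\gamma_1\Guab+R_1)$ the Schr\"odinger operator gives $-\tfrac{\alpha^2}{4}\|\Proj_{\leq1}\phi\|^2\geq-c\alpha^2$, while the $-2P.P_f$ and $-4\alpha^{1/2}\Re P.\Aa$ contributions vanish because $\int_{\R^3}\Sgsa\,\partial_j\Sgsa\,\d x=0$. The surviving field-plus-Zeeman part is a completion of the square: using $\sigma.\Bc\,\mab\vac=-(H_f+P_f^2)\Guab$ from \eqref{def:Guab} and $\bra R_1,\Guab\ket_*=0$ from \eqref{def:Guab-2}, one computes $\|\Proj_1\phi\|_*^2+2\alpha^{1/2}\Re\,\gamma_0\bra\Proj_1\phi,\sigma.\Bc\mab\vac\ket=-\alpha|\gamma_0|^2\|\Guab\|_*^2+\alpha|\gamma_0-\gamma_1|^2\|\Guab\|_*^2+\|R_1\|_*^2$, which is exactly the $\gamma$-block of \eqref{eq:star-9.5}.

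The orthogonal part is the core of the argument. I would write $\bra\Proj_{\leq1}\G,H\Proj_{\leq1}\G\ket$ as the sum of the $0$-photon energy $\bra g,(-\Delta-\tfrac{\alpha}{|x|})g\ket$, the form $\bra\Proj_1\G,[(P-P_f)^2-\tfrac{\alpha}{|x|}+H_f]\Proj_1\G\ket$ to which Lemma~\ref{lem:2} applies, the Zeeman coupling $2\alpha^{1/2}\Re\bra\Proj_1\G,\sigma.\Bc g\ket$, and the residual $P.\Aa$ coupling between $g$ and $\Proj_1\G$. For the first, the operator inequality $-\Delta-\tfrac{\alpha}{|x|}\geq\delta_1(-\Delta)-\tfrac{\alpha^2}{4(1-\delta_1)}$ gives $\delta_1\|\nabla g\|^2-c\alpha^2\|g\|^2$. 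For the Zeeman coupling, substituting $\Proj_1\G=\sqrt\alpha\beta_1\Gu(g)+L_1$ with $\bra\Gu(g),L_1\ket_*=0$ and $\sigma.\Bc g=-(H_f+P_f^2)\Gu(g)$ (from \eqref{def:Gamma-1-g}, \eqref{eq:decomposition-again-5-bis}, \eqref{eq:decomposition-again-6}) reduces it to $-2\alpha\Re\beta_1\|\Gu(g)\|_*^2$; combined with the term $\alpha|\beta_1|^2\|\Gu(g)\|_*^2=\|\sqrt\alpha\beta_1\Gu(g)\|_*^2$ that Lemma~\ref{lem:2} retains \emph{in full}, this completes the square to $-\alpha\|\Gu(g)\|_*^2+\alpha|\beta_1-1|^2\|\Gu(g)\|_*^2$, and the identity $\|\Gu(g)\|_*^2=\kappa\|g\|^2$ (an explicit $\kappa>0$, from the rotational symmetry of the form factor) turns the positive remainder into $c_0\alpha|\beta_1-1|^2\|g\|^2$. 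The $M(\Proj_1\G)$ term is inherited directly from $\delta_4\|P L_1\|^2$ in \eqref{eq:itc-2} when $|\beta_1|<8(1+\Lambda)$ and from $\tfrac12\|(P-P_f)\Proj_1\G\|^2$ in \eqref{eq:itc} when $|\beta_1|\geq8(1+\Lambda)$, the latter regime making the self-energy bound automatic through the large coefficient $\alpha|\beta_1|^2$. Finally the spare negative terms — $-\tfrac{\xi}{8}\|\nabla g\|^2$ from \eqref{eq:itc-2} and the $P.\Aa$ coupling, estimated by $\tfrac{\xi'}{8}\|\nabla g\|^2+c\alpha\|\Proj_1\G\|_*^2$ — are absorbed into $\delta_1\|\nabla g\|^2$ and $\delta_3\|L_1\|_*^2$ by choosing $\xi=4\delta_1$ and $\alpha$ small. \emph{The main obstacle is precisely this last bookkeeping}: one must fuse the output of Lemma~\ref{lem:2} with the Zeeman cross term so that $-\alpha\|\Gu(g)\|_*^2$ appears with coefficient exactly $1$ (no field energy may be double-counted, which is possible only because \eqref{eq:itc-2} keeps the full $\|\sqrt\alpha\beta_1\Gu(g)\|_*^2$), treat the two $\beta_1$-regimes uniformly, and check that each discarded remainder is genuinely $\mathcal{O}(\alpha^2)$ rather than $\mathcal{O}(\alpha^{3/2})$ — which is exactly where the $\mathcal{O}(\alpha^{1/2})$ a priori bounds on $\|H_f^{1/2}\pgs\|$ and $\|(P-P_f)\pgs\|$ are indispensable.
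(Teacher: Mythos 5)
Your proposal follows essentially the same route as the paper's own proof: the same three-way split of $\bra\Proj_{\leq1}\pgs, H\Proj_{\leq1}\pgs\ket$ into the $u_\alpha\phi$ block, the $\G$ block and cross terms; the same exact square completion in the parallel block using $\bra R_1,\Guab\ket_*=0$, yielding the $\gamma$-terms of \eqref{eq:star-9.5}; the same use of Lemma~\ref{lem:2} in the two $\beta_1$-regimes combined with the Zeeman coupling $2\sqrt{\alpha}\Re\bra\sigma.\Ba\Proj_1\G,\Proj_0\G\ket=-2\alpha\Re\beta_1\|\Gu(g)\|_*^2$ and the identity $\|\Gu(g)\|_*=\|\Gu^{(1,0)}\|_*\,\|g\|$ to produce $-\alpha\|\Gu(g)\|_*^2+c_0\alpha|\beta_1-1|^2\|g\|^2$. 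Your structural observation that every $x$-trivial summand of \eqref{expand-H} has vanishing parallel/orthogonal cross terms (by the spin-resolved orthogonality \eqref{eq:def-orthogonality}, and, for the $-\Delta-\alpha/|x|$ block, the eigenvalue equation for $u_\alpha$) is correct and is exactly why the paper only treats the $P.P_f$ and $P.\Aa$ cross terms. Your handling of the $P.\Aa$ coupling of $\Proj_1\G$ to $g$ by brute Cauchy--Schwarz with absorption of the $|\beta_1|$-dependent piece is a slight variant of the paper's, which instead kills the $\Gu(g)$ part outright via $\bra \Gu(g),\, P.\Ac\,\Proj_0\G\ket=0$ (Lemma~\ref{lem:appendix-4}); both close.

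There is, however, one quantitative slip in your cross-term bookkeeping. For the term $-4\Re\bra P.P_f\,\Proj_1 u_\alpha\phi,\,\Proj_1\G\ket$, the tools you cite --- $\|\nabla u_\alpha\|=\mathcal{O}(\alpha)$ together with $\|H_f^{1/2}\pgs\|=\mathcal{O}(\alpha^{1/2})$ and $\|(P-P_f)\pgs\|=\mathcal{O}(\alpha^{1/2})$ from the Corollary to Lemma~\ref{lem:1} --- give at best
\begin{equation*}
  \|P u_\alpha\|\;\|\Proj_1\phi\|\;\|P_f\Proj_1\G\|
  \;=\;\mathcal{O}(\alpha)\cdot\mathcal{O}(1)\cdot\mathcal{O}(\alpha^{1/2})
  \;=\;\mathcal{O}(\alpha^{3/2})\,,
\end{equation*}
and the same $\alpha^{3/2}$ obstruction appears if you instead place $P_f$ on the $\phi$ side. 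The point is that field-energy bounds control only $H_f^{1/2}$- and $P_f$-weighted norms, not the plain norms $\|\Proj_1\phi\|$, $\|\Proj_1\G\|$ (soft photons carry negligible energy). Since the lemma's error term is $-c\alpha^2$, an $\mathcal{O}(\alpha^{3/2})$ remainder would not do: you need $\|\Proj_1\pgs\|=\mathcal{O}(\alpha^{1/2})$, which is precisely the photon number bound \eqref{eq:photon-number-estimate-1} of Proposition~\ref{Nf-exp-lemma-1}, and this is what the paper invokes at exactly this step, in \eqref{eq:star-14}. With this one additional citation your argument closes; nothing else in the proposal needs repair.
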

%
%
\begin{proof}
Substituting the expression for $\Proj_{\leq 1} \pgs$ according to the
decomposition \eqref{*-0}-\eqref{def:Guab-2} yields
\begin{equation}\label{eq:star-10}
\begin{split}
 \bra H \Proj_{\leq 1}\pgs,\, \Proj_{\leq 1}\pgs\ket =
 & \ \bra H \Proj_{\leq 1} u_\alpha\phi,\, \Proj_{\leq 1} u_\alpha\phi\ket
 + \bra H \Proj_{\leq 1} \G,\, \Proj_{\leq 1} \G\ket \\
 &
 + \bra H \Proj_{\leq 1} u_\alpha \phi,\, \Proj_{\leq 1} \G\ket
 + \bra H \Proj_{\leq 1} \G ,\, \Proj_{\leq 1} u_\alpha \phi\ket
\end{split}
\end{equation}
According to \eqref{expand-H} and the splitting \eqref{def:Guab-0}-\eqref{def:Guab-2} we have
\begin{equation}\label{eq:star-11}
\begin{split}
 & \ \bra H \Proj_{\leq 1} u_\alpha\phi,\, \Proj_{\leq 1} u_\alpha\phi\ket \\
 & \geq \alpha \|\Guab\|_*^2 |\gamma_1|^2 + \bra (-\Delta -\frac{\alpha}{|x|}) u_\alpha,\, u_\alpha\ket |\gamma_0|^2 \\
 & \ \ \ - 2 \alpha \Re \gamma_1\bar\gamma_0 \|\Guab\|_*^2 + \|R_1\|_*^2
 + 2 \sqrt{\alpha}\Re \bra \sigma.\Ba R_1, g_0\ket \\
 & \geq \|R_1\|_*^2 + (|\gamma_1|^2 -2\Re\gamma_1\bar\gamma_0) \alpha \|\Guab\|_*^2 - |\gamma_0|^2 \frac{\alpha^2}{4} + 2\sqrt{\alpha} \bra R_1,\, \Guab\ket_* \\
 & \geq \|R_1\|_*^2 - \alpha |\gamma_0|^2  \|\Guab\|_*^2 +\alpha |\gamma_0 - \gamma_1|^2 \|\Guab\|_*^2 - |\gamma_0|^2\frac{\alpha^2}{4}\,
\end{split}
\end{equation}
where we used $\bra R_1,\, \Guab\ket_*=0$ and $P_f.A(0)\vac=0$.

In the case $|\beta_1| \geq 8(1+\Lambda)$, using Inequality \eqref{eq:itc} of Lemma~\ref{lem:2} implies
\begin{equation}\label{eq:star-12-0}
\begin{split}
 & \bra H \Proj_{\leq 1} \G,\, \Proj_{\leq 1} \G\ket \\
 & \geq \alpha \frac{1}{2(1+\Lambda)} |\beta_1|^2 \|\Gu(g)\|_*^2 + 2 \delta_3 \|\L_1\|_*^2
   + \frac12 \|(P-P_f)\Proj_1 \G\|^2 -c\alpha^3 \\
 & + \bra (-\Delta -\frac{\alpha}{|x|})g,\, g\ket
 - 4\sqrt{\alpha} \Re \bra P.\Aa \Proj_1 \G, \Proj_0 \G\ket
   + 2\sqrt{\alpha}\Re \bra \sigma.\Ba \Proj_1 \G,\, \Proj_0 \G\ket.
\end{split}
\end{equation}
For the last three terms in the right hand side of \eqref{eq:star-12-0}, we use the following three inequalities \eqref{eq:ineq-1}-\eqref{eq:ineq-3}

 \begin{equation}\label{eq:ineq-1}
 \bra(-\Delta-\frac{\alpha}{|x|})g,\, g\ket \geq
-\frac{\alpha^2}{4} \|g\|^2 + \delta_1 \|\nabla g\|^2 \, .
 \end{equation}
Since $g\perp u_\alpha$ and since from Lemma~\ref{lem:appendix-4} we have $\bra\Gu(g),\, P.\Ac \Proj_0 \G \ket =0$, we get
 \begin{equation}\label{eq:diese-5}
 \begin{split}
  & -4 \sqrt{\alpha}\Re \bra P.\Aa \Proj_1 \G,\, \Proj_0 \G \ket \\
  & = - 4 \sqrt{\alpha} \Re \bra P.\Aa \sqrt{\alpha} \beta_1\Gu(g),\, \Proj_0 \G\ket
      - 4 \sqrt{\alpha}\bra P.\Aa \L_1,\, \Proj_0 \G\ket \\
  & \geq -c \sqrt{\alpha} \|\nabla g\|^2 - c\sqrt{\alpha} \|\L_1\|_*^2
  \geq -\frac{\delta_1}{4}\|\nabla g\|^2 - c\sqrt{\alpha} \|\L_1\|_*^2 \, .
  \end{split}
\end{equation}
And finally, we have
\begin{equation}\label{eq:ineq-3}
 \begin{split}
  & \Re\bra \sigma.\Ba \Proj_1 \G,\, \Proj_0 \G \ket = -\sqrt{\alpha}\Re \bra \beta_1 \Gu(g) ,\, \Gu(g)\ket_* +
  \bra \L_1,\, \Gu(g)\ket_* \\
  & = -\sqrt{\alpha}\Re \beta_1 \|\Gu(g)\|_*^2\, .
  \end{split}
\end{equation}
This three inequalities \eqref{eq:ineq-1}-\eqref{eq:ineq-3} together with \eqref{eq:star-12-0} gives
\begin{equation}\label{eq:star-12}
\begin{split}
 & \bra H \Proj_{\leq 1} \G,\, \Proj_{\leq 1} \G\ket \\
 &
 \geq (\frac{1}{2(1+\Lambda)}|\beta_1|^2 -2\Re\beta_1)\alpha \|\Gu(g)\|_*^2 + \delta_3 \|\L_1\|_*^2
 + \frac12 \|(P-P_f)\Proj_1 \G\|^2 \\
 &\ \ \  +\frac{3}{4}\delta_1 \|\nabla g\|^2 - c\alpha^2
 \\
 & \geq (\frac{1}{2(1+\Lambda)}|\beta_1|^2 -2\Re\beta_1)\alpha \|\Gu(g)\|_*^2 + \delta_3 \|\L_1\|_*^2
 + \delta_4 M(\Proj_1 \G)\\
 & +\frac{3}{4}\delta_1 \|\nabla g\|^2 - c\alpha^2   \, .
\end{split}
\end{equation}
In the last inequality we used $\frac12 \|(P-P_f) \Proj_1 \G\|^2 \geq \delta_4 M(\Proj_1 \G)$.

Similarly, in the case $|\beta_1| < 8(1+\Lambda)$, using this time Inequality\eqref{eq:itc-2} of Lemma~\ref{lem:2}, the fact that $\delta_4 \|P\L_1\|^2 \geq \delta_4 M(\Proj_1 \G)$ and again the three inequalities \eqref{eq:ineq-1}-\eqref{eq:ineq-3}, we obtain
\begin{equation}\label{eq:star-12-bis}
\begin{split}
 & \bra H \Proj_{\leq 1} \G,\, \Proj_{\leq 1} \G\ket \\
 & \geq (|\beta_1|^2 -2\Re\beta_1)\alpha \|\Gu(g)\|_*^2 + \delta_3 \|\L_1\|_*^2
   -\frac{\xi}{2}\|\nabla g\|^2
 + \delta_4 \|P \L_1\|^2 \\
 & +\frac34 \delta_1 \|\nabla g\|^2 - c\alpha^2   \\
 & \geq (|\beta_1|^2 -2\Re\beta_1)\alpha \|\Gu(g)\|_*^2 + \delta_3 \|\L_1\|_*^2
 + \delta_4 M(\Proj_1 \G) +\frac12 \delta_1 \|\nabla g\|^2 - c\alpha^2   \, .
\end{split}
\end{equation}

To pursue the estimate, we consider two cases.

If $|\beta_1| \geq 8(1+\Lambda)$, we have $\frac{1}{2(1+\Lambda)}|\beta_1|^2 - 2\Re\beta_1 \geq
\frac{1}{4(1+\Lambda)}|\beta_1|^2$, and thus, for some $c_0>0$,
\begin{equation}\label{eq:star-12.1-geq}
\begin{split}
& (\frac{1}{2(1+\Lambda)}|\beta_1|^2 - 2\Re\beta_1)\alpha \|\Gu(g)\|_*^2 \\
& \geq \alpha c_0 |\beta_1|^2 \|\Gu(g)\|_*^2
\geq - \alpha \|\Gu(g)\|_*^2 +\alpha c_0|\beta_1 -1|^2 \, \|g\|^2\, .
\end{split}
\end{equation}
Therefore, using \eqref{eq:star-12} yields
\begin{equation}\label{eq:star-13}
\begin{split}
 \bra H \Proj_{\leq 1} \G,\, \Proj_{\leq 1} \G \ket \geq
 & -\alpha \|\Gu(g)\|_*^2 + c_0\alpha |\beta_1-1|^2 \|g\|^2 + \delta_3 \|L_1\|_*^2 \\
 & + \frac{\delta_1}{2} \|\nabla g\|^2 + \delta_4 M(\Proj_1 \G ) - c\alpha^2\, .
\end{split}
\end{equation}

If $|\beta_1| < 8(1+\Lambda)$, then we have
\begin{equation}\label{eq:star-12.1-<}
 (|\beta_1|^2 - 2\Re\beta_1)\alpha \|\Gu(g)\|_*^2 \geq -\alpha \|\Gu(g)\|_*^2 + \alpha|1-\beta_1|^2 \|\Gu(g)\|_*^2\, .
\end{equation}
Therefore, using \eqref{eq:star-12-bis}, it also yields \eqref{eq:star-13}

Eventually, wee need to estimate the cross terms between $\Proj_{\leq 1} \G$
and $\Proj_{\leq 1} u_\alpha \phi$. The photon number bound \eqref{eq:photon-number-estimate-1} in Proposition~\ref{Nf-exp-lemma-1} implies
\begin{equation}\label{eq:star-14}
\begin{split}
 & - 4\Re \bra P.P_f \Proj_1 u_\alpha\phi,\, \Proj_1 \G\ket \geq - c\alpha \|\Proj_1 \phi\|\,
 \|P_f\Proj_1 \G\| \geq -c\alpha^2\, .
\end{split}
\end{equation}
Similarly we have
\begin{equation}\label{eq:star-15}
 -2\sqrt{\alpha} \Re \bra P.\Aa \Proj_1 u_\alpha\phi,\, g\ket - 2 \sqrt{\alpha}\Re\bra P.\Aa \Proj_1 \G,\, \Proj_0 u_\alpha\phi\ket \geq -c\alpha^2\, .
\end{equation}
Collecting \eqref{eq:star-10}, \eqref{eq:star-11}, \eqref{eq:star-13}, \eqref{eq:star-14} and \eqref{eq:star-15} proves \eqref{eq:star-9.5}.
\end{proof}

We are now ready to give the proof of Proposition~\ref{proposition:4}.

%
%
\noindent\textit{Proof of Proposition~\ref{proposition:4}.}
We first compute the terms involving only more than one-photon sectors. Lemma~\ref{lem:1} and Proposition~\ref{Nf-exp-lemma-1} yield
\begin{equation}\label{eq:five-prime}
\begin{split}
 \bra H\,  \Proj_{\geq 2}\pgs,\,  \Proj_{\geq 2} \pgs \ket \geq -c\alpha^2 +\frac12 \|H_f^\frac12 \Proj_{\geq 2} \pgs\|^2
 + \frac14 \| (P-P_f)^2 \Proj_{\geq 2} \pgs\|^2
\end{split}
\end{equation}

We next estimate cross terms between more than one-photon sector and less than one-photon sector
\begin{equation}\label{eq:five-star}
\begin{split}
 & \bra H \Proj_{\geq 2}\pgs,\, \Proj_{\leq 1} \pgs \ket + \bra H \Proj_{\leq 1}\pgs,\, \Proj_{\geq 2} \pgs \ket\\
 & = 2\sqrt{\alpha} \Re\bra \sigma.\Ba \Proj_2 \pgs,\Proj_1\pgs\ket
 - 4\sqrt{\alpha} \Re \bra (P-P_f).\Aa\Proj_2 \pgs,\, \Proj_1\pgs\ket \\
 & + 2\sqrt{\alpha} \Re \bra \Aa.\Aa \Proj_2 \pgs,\, \Proj_0\pgs\ket +
 2\alpha \Re \bra \Aa.\Aa \Proj_3 \pgs,\, \Proj_1 \pgs\ket\, .
\end{split}
\end{equation}
The first term in the right hand side of \eqref{eq:five-star} is bounded by
\begin{equation}\label{eq:six-star}
\begin{split}
 2\sqrt{\alpha}\Re \bra \sigma.\Ba \Proj_2 \pgs,\, \Proj_1\pgs\ket &\geq -c\alpha\|\Proj_1 \pgs\|^2 - \epsilon \|\Ba \Proj_2 \pgs\|^2 \\
 & \geq -c\alpha^2- \epsilon\|H_f^\frac12 \Proj_2 \pgs\|^2\, ,
\end{split}
\end{equation}
using from \cite[Lemma~A4]{GLL} that $\|\Ba \pgs\| \leq c\|H_f^\frac12 \pgs\|$.

The second term in the right hand side of \eqref{eq:five-star} is bounded by
\begin{equation}\label{eq:seven-star}
\begin{split}
&  -4\sqrt{\alpha} \Re \bra (P-P_f).\Aa \Proj_2 \pgs,\, \Proj_1\pgs \ket \\
 & \geq -\epsilon \|(P-P_f)\Proj_2 \pgs\|^2 - c\alpha \|\Ac \Proj_1 \pgs\|^2 \\
 & \geq -\epsilon \|(P-P_f)\Proj_2 \pgs\|^2 - c\alpha^2\, ,
\end{split}
\end{equation}
using that $\|\Ac\Proj_1\pgs\| \leq c\|\Proj_1\pgs\|$, and $\|\Proj_1\pgs\| = \mathcal{O}(\alpha)$ from the photon number bound \eqref{eq:photon-number-estimate-1}.

The third term in the right hand side of \eqref{eq:five-star} is estimated as
\begin{equation}\label{eq:eight-star}
 2\alpha\Re \bra \Aa.\Aa\Proj_2 \pgs, \Proj_0\pgs\ket \geq - c\alpha^2 \|\Proj_0\pgs\|^2 - \epsilon\|H_f^\frac12 \Proj_2 \pgs\|^2\, .
\end{equation}

Similarly, the last term in the right hand side of \eqref{eq:five-star} is
\begin{equation}\label{eq:nine-star}
 2\alpha\Re\bra \Aa.\Aa \Proj_3\pgs,\Proj_1\pgs\ket \geq - c\alpha^2 \|\Proj_1 \pgs\|^2
 - \epsilon\|H_f^\frac12\Proj_3\pgs \|^2\, .
\end{equation}
Inequalities \eqref{eq:five-prime}-\eqref{eq:nine-star}, together with Lemma~\ref{lem:3} yield
\begin{equation}\nonumber
\begin{split}
 &\bra \pgs,\, H\, \pgs\ket  \\
 & \geq
 -\alpha |\gamma_0|^2 \|\Guab\|_*^2
 + \alpha|\gamma_1-\gamma_0 |^2 \|\Guab\|_*^2 + \|R_1\|_*^2 \\
 &
 -\alpha \|\Gu(g)\|_*^2 + c_0\alpha |\beta_1 -1|^2 \|g\|^2 +\delta_3 \|L_1\|_*^2
 + \frac{\delta_1}{2} \|\nabla g\|^2 + \delta_4 M(\Proj_1 \G ) \\
 &
 +\frac14 \|H_f^\frac12 \Proj_{\geq 2} \pgs\|^2
 + \frac18 \|(P - P_f)^2 \Proj_{\geq 2} \pgs\|^2 +\mathcal{O}(\alpha^2)\, \\
\end{split}
\end{equation}
Since from \eqref{eq:app-2} of Lemma~\ref{lem:appendix-6} we have
\begin{equation}\nonumber
\begin{split}
 & |\gamma_0|^2 \|\Guab\|_*^2 + \|\Gu(g)\|_*^2 = \|\Gu^{(1,0)}\|_*^2 (|\gamma_0|^2 + \|g\|^2) \\
 & = \|\Gu^{(1,0)}\|_*^2 \|\Proj_0 \pgs\|^2 = \|\Gu(\Proj_0 \pgs)\|_*^2\, ,
\end{split}
\end{equation}
we thus get
\begin{equation}\label{eq:kti-1}
\begin{split}
 &\bra \pgs,\, H\, \pgs\ket  \\
 &\geq -\alpha\|\Gu\|_*^2 + \alpha|\gamma_1-\gamma_0 |^2 \|\Gu\|_*^2 + \|R_1\|_*^2 \\
 & + c_0\alpha |\beta_1 -1|^2 \|g\|^2 +\delta_3 \|L_1\|_*^2
 + \frac{\delta_1}{2} \|\nabla g\|^2 + \delta_4 M(\Proj_1 \G) \\
 &
 +\frac14 \|H_f^\frac12 \Proj_{\geq 2} \pgs\|^2
 + \frac18 \|(P - P_f)^2 \Proj_{\geq 2} \pgs\|^2 +\mathcal{O}(\alpha^2)\, .
\end{split}
\end{equation}

On the other hand, according to \eqref{eq:lower-bound-main} and \eqref{eq:sigma-zero} holds
\begin{equation}\label{eq:eq:5.40}
 \Sigma \leq -\alpha\|\Gu\|_*^2 + \mathcal{O}(\alpha^2)\, .
\end{equation}
Since $\| \pgs\|^2 = 1+\mathcal{O}(\alpha)$,
we have with \eqref{eq:eq:5.40}
\begin{equation}\nonumber
\begin{split}
 \bra \pgs,\, H\, \pgs\ket =\Sigma \|\pgs\|^2 \leq -\alpha\|\Gu\|_*^2
 +\mathcal{O}(\alpha^2)\, .
\end{split}
\end{equation}
Together with \eqref{eq:kti-1}, this yields \eqref{eq:estimate-gs-energy-1}-\eqref{eq:estimate-gs-energy-4}.



\section{Refined photon number estimates}\label{S7}

The photon number estimate obtained in subsection~\ref{section:S4.1} enabled us to reduce the set of states among which we looked for $\pgs$, a minimizer of the functional $\bra \psi,\, H\psi\ket/\|\psi\|^2$.

As a next step we shall use the result of Proposition~\ref{proposition:4} to get refined estimates on the expected photon number for different parts of $\pgs$, which in its turn will be used in Section~\ref{S-new-7} for further reduction of the set of possible minimizers, and as a result for improving the estimate of $\Sigma$.


\begin{prop}\label{IPNE}[Refined photon number bound]
We have
$$
 \| \Proj_{\geq 2} \pgs\| ,\, \| R_1\| ,\, \|L_1\| =\mathcal{O}(\alpha^\frac56)\, .
$$
\end{prop}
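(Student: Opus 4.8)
The plan is to combine the lower bound on $\bra\pgs,H\pgs\ket$ collected in the proof of Proposition~\ref{proposition:4} (namely \eqref{eq:kti-1}) with the matching upper bound \eqref{eq:eq:5.40} coming from the trial-function construction, and then read off stronger bounds on the individual positive terms than were needed for Proposition~\ref{proposition:4}. The improvement engine is the refined information now available: Proposition~\ref{proposition:4} gives $\|R_1\|_*,\|L_1\|_*=\mathcal{O}(\alpha)$, $\|\nabla g\|=\mathcal{O}(\alpha)$, $M(\Proj_1\G)=\mathcal{O}(\alpha^2)$, $(\gamma_1-\gamma_0)=\mathcal{O}(\alpha^{1/2})$ and $(\beta_1-1)\|g\|=\mathcal{O}(\alpha^{1/2})$, whereas at this stage we only need the \emph{norms} $\|R_1\|,\|L_1\|,\|\Proj_{\geq2}\pgs\|$ rather than the stronger $\|\cdot\|_*$-norms, and the target exponent is the weaker $\alpha^{5/6}$.

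The first step is to produce a sharper trial-state upper bound for $\Sigma$, accurate to order $\alpha^2$ with an error of order $\alpha^{8/3}$ rather than the crude $\mathcal{O}(\alpha^2)$ in \eqref{eq:eq:5.40}; this is where the exponent $5/6$ will originate, since squaring $\alpha^{5/6}$ gives $\alpha^{5/3}$ and one expects a budget of the form $\alpha^{2+\delta}$ feeding the squared quantities. Concretely, I would insert the approximate ground state built from $u_\alpha$, the one-photon piece $\sqrt\alpha\Gu u_\alpha$, and the Zeeman/$P\cdot\Ac$ corrections $\gr,\ups$ from Section~\ref{S3}, and push the expansion one order further so that $\Sigma \le (\Sigma_0-\tfrac{\alpha^2}4)-\alpha\|\Gu\|_*^2 + c_2\alpha^2 + \mathcal{O}(\alpha^{17/6})$ for the appropriate constant $c_2$ capturing the order-$\alpha^2$ correction. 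Subtracting this from \eqref{eq:kti-1} and using $\|\pgs\|^2=1+\mathcal{O}(\alpha)$, all the manifestly nonnegative terms on the right of \eqref{eq:kti-1}—in particular $\|R_1\|_*^2$, $\delta_3\|L_1\|_*^2$, and $\tfrac14\|H_f^{1/2}\Proj_{\geq2}\pgs\|^2$—must be bounded by the residual, which I expect to be $\mathcal{O}(\alpha^{5/3})$ after the refined cancellations. Since $\|R_1\|^2 \le (1+\Lambda)\|R_1\|_*^2$ and similarly for $L_1$, and $\|\Proj_{\geq2}\pgs\|^2\le (1+\Lambda)\|H_f^{1/2}\Proj_{\geq2}\pgs\|^2$ by the spectral gap of $H_f$ on the $\ge2$-photon sector, this immediately converts the $\|\cdot\|_*$-bounds into the claimed $\mathcal{O}(\alpha^{5/6})$ norm bounds.

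The genuine obstacle is the bookkeeping of the order-$\alpha^2$ coefficient. To get a residual that is $o(\alpha^2)$ (rather than merely $\mathcal{O}(\alpha^2)$, which would be useless since the positive terms are themselves expected to be $\mathcal{O}(\alpha^2)$), the order-$\alpha^2$ contributions to the lower bound \eqref{eq:kti-1} and to the upper bound \eqref{eq:eq:5.40} must be shown to agree to leading order and cancel, leaving only a strictly subleading remainder. This requires using the refined estimates $(\gamma_1-\gamma_0)=\mathcal{O}(\alpha^{1/2})$ and $(\beta_1-1)\|g\|=\mathcal{O}(\alpha^{1/2})$ to control the cross terms $\alpha|\gamma_1-\gamma_0|^2\|\Guab\|_*^2$ and $c_0\alpha|\beta_1-1|^2\|g\|^2$, which are $\mathcal{O}(\alpha^2)$ and must be tracked (or discarded as they have the favorable sign on the lower-bound side). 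I would handle this by keeping only the sign-definite terms on the lower-bound side, discarding the nonnegative cross terms, and matching the $-\alpha\|\Gu\|_*^2 + \mathcal{O}(\alpha^2)$ structure against the sharpened trial bound; the delicate point is verifying that the constant in front of $\alpha^2$ in the two bounds is the same, so that their difference is controlled by the $\alpha^{5/3}$-order error terms rather than by an uncancelled $\alpha^2$.

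Finally I would collect the three resulting estimates and invoke $\|\Proj_{\geq2}\pgs\|\le c\,\|H_f^{1/2}\Proj_{\geq2}\pgs\|$ together with $\|\Proj_1\G\|\le\|L_1\|+\sqrt\alpha|\beta_1|\,\|\Gu(g)\|$ and Lemma~\ref{lem:appendix-6} to fold the $\Gu(g)$ contribution (which is $\mathcal{O}(\sqrt\alpha\|\nabla g\|)=\mathcal{O}(\alpha^{3/2})$, well below $\alpha^{5/6}$) into the bound, yielding $\|\Proj_{\geq2}\pgs\|,\|R_1\|,\|L_1\|=\mathcal{O}(\alpha^{5/6})$ as claimed.
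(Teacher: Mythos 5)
The decisive step in your argument is invalid: the conversion of $\|\cdot\|_*$-bounds into $L^2$-norm bounds via ``$\|R_1\|^2\le(1+\Lambda)\|R_1\|_*^2$'' and ``the spectral gap of $H_f$ on the $\geq 2$-photon sector''. There is no such gap: on every $n$-photon sector with $n\geq 1$ the spectrum of $H_f$ (and of $H_f+P_f^2$) is all of $[0,\infty)$, because photons can be arbitrarily soft; for a one-photon state $v$ one has $\|v\|_*^2=\int(|k|+|k|^2)|v(k)|^2\,\d k$, and $1\le C(|k|+|k|^2)$ fails as $|k|\to 0$, so the inequality you need holds only in the reverse direction ($\|v\|_*^2\le(\Lambda+\Lambda^2)\|v\|^2$ under the ultraviolet cutoff). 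The infrared region is precisely where the $*$-norm is blind, and the soft-photon content is exactly what Proposition~\ref{IPNE} must control: no amount of sharpening the energy sandwich can do this, since the energy functional does not see soft photons --- this is the infrared problem emphasized in the introduction. (Indeed, the later Corollary~\ref{Cmain} improves the bound to $\|R_1\|_*=\mathcal{O}(\alpha^{4/3})$, and even that would not imply any $L^2$ bound by your mechanism.) The operator with a gap on $\Proj_{\geq 1}$ is $N_f$, not $H_f$, and bounding $\bra\cdot,N_f\,\cdot\ket$ requires pointwise-in-$k$ control of $a_\lambda(k)\pgs$. A secondary problem is circularity: the $\mathcal{O}(\alpha^{8/3})$-accurate energy comparison you propose is essentially the second iteration of Section~\ref{S-new-7} (Proposition~\ref{prop:main-cmain}), whose proof already uses Proposition~\ref{IPNE}, e.g.\ $\|\Proj_{\geq 2}\pgs\|=\mathcal{O}(\alpha^{5/6})$ in Lemmata~\ref{lem:5} and~\ref{lem:6}; it is not available at this stage of the bootstrap.

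The paper's proof is non-variational. It sets $\rest=\pgs-\sqrt{\alpha}\,\gamma_1\Guab u_\alpha-\sqrt{\alpha}\,\beta_1\Gu(g)$ and estimates $\bra\rest,N_f\,\rest\ket$ directly, splitting the $k$-integration at $|k|=\alpha^{1/3}$. For $|k|\le\alpha^{1/3}$ it uses the pull-through bounds \eqref{a-psi-fund-1} and \eqref{a-psi-fund-2} on $\|a_\lambda(k)\pgs\|$ --- the second resting on the exponential localization of the ground state from Griesemer--Lieb--Loss, which is the genuine infrared input --- together with Lemma~\ref{lem:IPNE} for the subtracted profiles, yielding $c\alpha^{5/3}$. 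For $|k|>\alpha^{1/3}$ it uses $\int_{|k|>\alpha^{1/3}}\|a_\lambda(k)\rest\|^2\,\d k\le\alpha^{-1/3}\|H_f^{1/2}\rest\|^2\le c\alpha^{5/3}$, where the $*$-norm bounds of Proposition~\ref{proposition:4} legitimately apply; the exponent $5/6$ comes from balancing these two regimes at $\alpha^{1/3}$, not from an energy budget. Since $\Proj_{\geq2}\pgs$, $u_\alpha R_1$ and $L_1$ are orthogonal components of $\rest$ living in sectors with at least one photon, $N_f\geq 1$ there gives $\|\Proj_{\geq2}\pgs\|^2+\|R_1\|^2+\|L_1\|^2\le\bra\rest,N_f\,\rest\ket\le c\alpha^{5/3}$. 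Your high-frequency step is essentially the second half of this argument; what is missing from your proposal is any mechanism at all for the soft photons.
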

%
%
\begin{proof}
The assumption $\|\Proj_0\pgs\|=1$ and the decomposition \eqref{*-0}-\eqref{def:Guab-0} imply that $|\gamma_0|$ is bounded by $1$. Therefore, from \eqref{eq:estimate-gs-energy-3} of Proposition~\ref{proposition:4} we obtain that $|\gamma_1|$ is bounded. Similarly, we have $\beta_1\|g\|$ bounded.

For $\rest:= \pgs -\sqrt{\alpha}\gamma_1\Guab u_\alpha - \sqrt{\alpha}\beta_1 \Gu(g)$, we get
\begin{equation}\label{eq:estimate-ipne-1}
\begin{split}
& \int_{|k|\leq \alpha^\frac13} \| a_\lambda(k)\rest\|^2 \mathrm{d} k
  \leq 3 \int_{|k|\leq\alpha^\frac13} \|a_\lambda(k) \pgs\|^2  \mathrm{d} k \\
& + 3 \alpha|\gamma_1|^2 \int_{|k|\leq\alpha^\frac13} \|a_\lambda(k) \Guab\|^2  \mathrm{d} k
  + 3 \alpha |\beta_1|^2 \int_{|k|\leq\alpha^\frac13} \|a_\lambda(k) \Gu(g) \|^2  \mathrm{d} k \mathrm{d} x
\end{split}
\end{equation}
Moreover, from Lemma~\ref{lem:IPNE} we have
$$
 \sum_{\lambda=1,2} \int_{|k|\leq \alpha^\frac13} \| a_\lambda(k) \Guab \|^2 \mathrm{d}k \leq c\alpha^\frac23\, ,
$$
and similarly
$$
 \sum_{\lambda=1,2} \int_{|k|\leq \alpha^\frac13} \| a_\lambda(k) \Gu(g) \|^2 \mathrm{d}k \, \mathrm{d} x\leq c \|g\|^2\alpha^\frac23\, .
$$
Therefore \eqref{eq:estimate-ipne-1} and boundedness of $|\gamma_1|$ and $|\beta_1|\, \|g\|$ yields
\begin{equation}\nonumber
\begin{split}
 & \int_{|k|\leq \alpha^\frac13} \| a_\lambda(k)\rest\|^2 \mathrm{d} k
 \leq 3 \left(\int_{|k|\leq\alpha^\frac13} \|a_\lambda(k) \pgs\|^2  \mathrm{d} k\right)
 + c\alpha^\frac53 \, .
\end{split}
\end{equation}
Thus, using the bounds \eqref{a-psi-fund-1} and \eqref{a-psi-fund-2} given in the proof of the photon number estimate \eqref{eq:photon-number-estimate-1}, we obtain
\begin{equation}\nonumber
\begin{split}
 & \int_{|k|\leq \alpha^\frac13} \| a_\lambda(k)\rest\|^2 \mathrm{d} k  \\
 & \leq 3 \left( \int_{|k|\leq\alpha^\frac74} \frac{c\alpha^{-\frac32}}{|k|} \, \mathrm{d} k
 + \int_{\alpha^\frac74 \leq |k|\leq \alpha^\frac13}  \frac{\alpha^2}{|k|^3} + \frac{\alpha}{|k|}\, \mathrm{d} k \right) + c\alpha^\frac53
 \leq c\alpha^\frac53 \, .
\end{split}
\end{equation}
This implies
\begin{equation}\label{eq:IPNE-1}
\begin{split}
 \bra \rest,\, N_f\, \rest\ket
 & = \sum_{\lambda=1,2}\int_{|k|\leq \alpha^\frac13} \|a_\lambda(k) \rest\|^2 \mathrm{d} k
 + \sum_{\lambda=1,2}\int_{|k| >\alpha^\frac13} \|a_\lambda(k) \rest\|^2 \mathrm{d} k \\
 & \leq c\alpha^\frac53 + \sum_{\lambda=1,2}\int_{|k| >\alpha^\frac13}
 |k|\, \alpha^{-\frac13}\|a_\lambda(k) \rest\|^2 \mathrm{d} k \\
 &\leq c\alpha^\frac53 + \alpha^{-\frac13} \|H_f^\frac12 \rest\|^2
  \leq c\alpha^\frac53\, ,
\end{split}
\end{equation}
where in the last inequality we used
\begin{equation}\nonumber
\begin{split}
 \|H_f^\frac12 \rest\|^2
 & = \|H_f^\frac12 \Proj_{\geq 2} \pgs\|^2 + \|H_f^\frac12 u_\alpha R_1\|^2 + \|\ H_f^\frac12 L_1 \|^2 \\
 & \leq \|H_f^\frac12 \Proj_{\geq 2} \pgs\|^2 + \| R_1\|_*^2 + \|\ L_1 \|_*^2 \, ,
\end{split}
\end{equation}
and the estimates \eqref{eq:estimate-gs-energy-2} of Proposition~\ref{proposition:4}.

The identity
 $$
  \bra \rest,\, N_f\, \rest\ket = \bra \Proj_{\geq 2}\pgs, \, N_f\, \Proj_{\geq 2}\pgs\ket
  + \bra u_\alpha R_1,\, N_f\, u_\alpha R_1\ket + \bra \L_1,\, N_f\, \L_1\ket\, ,
 $$
and the inequality \eqref{eq:IPNE-1} conclude the proof.
\end{proof}


\section{Upper bound on the binding energy up to the order $\alpha^2$ with error $\mathcal{O}(\alpha^{\frac83})$}\label{S-new-7}


In this section, we apply the results of Section~\ref{S7} to improve the upper bound on the binding energy (see Proposition~\ref{prop:main-cmain}) and to derive additional information regarding the ground state $\pgs$ (see Corollary~\ref{Cmain}). To this end, we introduce a splitting of $\pgs$ in the two-photon sector. We then state the main results, Proposition~\ref{prop:main-cmain} and Corollary~\ref{Cmain}. We give both the statement and the proof of Lemmata used to establish the result of Proposition~\ref{prop:main-cmain}, and then prove Proposition~\ref{prop:main-cmain}.

Recall the decomposition
$$
 \pgs = u_\alpha\phi + \G\,
$$
given by \eqref{*-0}, with the normalization $\|\Proj_0 \pgs\|=1$.

As in \eqref{def:Guab-0}, let $a$, $b$ and $\gamma_0$ be defined by
\begin{equation}\nonumber
 \Proj_0 \phi = \gamma_0 \mab \vac , \quad\mbox{with}\quad |a|^2 + |b|^2  =1\, ,
\end{equation}

We then decompose $\Proj_1\phi$ as in \eqref{def:Guab-0}-\eqref{def:Guab-2}
$$
 \Proj_1 \phi = \sqrt{\alpha}\gamma_1 \Guab +R_1, \quad \bra\Guab,\, R_1\ket_*=0\, .
$$

 Similarly to the two-photon component of an approximate ground state of $T(0)$ (see \eqref{eq:def-gamma2}), we define $\Gdab$ by
\begin{equation}\label{def:pi-2-phi-1}
\begin{split}
 \Gdab : = -(H_f+P_f^2)^{-1}\left(\sigma.\Bc\Guab + 2 \Ac.P_f \Guab + \Ac.\Ac\vac\mab\right)\, ,
\end{split}
\end{equation}
and set
\begin{equation}\label{def:pi-2-phi-2}
 \Proj_2 \phi =: \alpha\gamma_2\Gdab + R_2\, ,
\end{equation}
where $\gamma_2$ and $R_2$ are defined by the condition
\begin{equation}\label{def:pi-2-phi-3}
\bra R_2,\, \Gdab\ket_* = 0\, .
\end{equation}

The part $\Proj_1 \G$ is splitted as in \eqref{def:g}-\eqref{eq:decomposition-again-6}:
$$
 \Proj_1 \G= \sqrt{\alpha}\beta_1 \Gu(g) + L_1 \quad ,
 \bra \Gu(g),\, L_1\ket_*=0\, .
$$

In addition we define the following splitting of $\Proj_2 \G$.
Let $g:=\Proj_0 \G$, and let
\begin{equation}\label{def:Gamma-deux-g}
\begin{split}
 \Gd(g) : = -(H_f+P_f^2)^{-1}\left(\sigma.\Bc\Gu(g) + 2 \Ac.P_f \Gu(g) + \Ac.\Ac g \right)\, .
\end{split}
\end{equation}
We write
\begin{equation}\nonumber
 \Proj_2\G = \alpha \beta_2\Gd(g) + \L_2\, ,
\end{equation}
where $\beta_2$ and $\L_2$ are defined by
\begin{equation}\nonumber
 \bra \L_2,\, \Gd(g)\ket_* = 0\, .
\end{equation}

\begin{prop}\label{prop:main-cmain}
There exist $\alpha_0$, $\delta_1$, $\delta_2$, $\delta_3$, $\delta_5$, $c_0$, $c_1$ and $c$, strictly positive constants, such that for all $0<\alpha<\alpha_0$ we have
\begin{equation}\nonumber
\begin{split}
  & \bra \pgs,\, H\pgs\ket \geq (\Sigma_0 - \frac{\alpha^2}{4}) \|\pgs\|^2 + S[\pgs]\, ,
\end{split}
\end{equation}
where
\begin{equation*}
\begin{split}
  S[\pgs] =
  & \frac12(\|R_1\|_*^2 + \|R_2\|_*^2 + \frac{\delta_3}{2} \| \L_1 \|_*^2 + \frac{\delta_5}{2} \| \L_2 \|_*^2)
  + \frac14 \|H_f^\frac12 \Proj_{\geq 3} \pgs\|^2 \\
  &  + c_1 \alpha|\gamma_1 - \gamma_0|^2 + c_1\alpha^2 |\gamma_2-\gamma_1|^2  + c_0 \alpha |1-\beta_1|^2 \|g\|^2\\
  &  +c_0 \alpha^2 |1-\beta_2|^2 \|g\|^2
  + \frac{\delta_1}{2} \|\nabla g\|^2 + \frac{\delta_2}{2} \alpha^2 \|g\|^2 - c\alpha^\frac83 \, .
\end{split}
\end{equation*}
\end{prop}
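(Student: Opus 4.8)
The plan is to expand $\bra \pgs,\, H\pgs\ket$ sector by sector, following the same philosophy as the proof of Proposition~\ref{proposition:4} but now carrying the decomposition one photon-sector higher. Concretely, I would write $\pgs = u_\alpha\phi + \G$ and split the quadratic form into a diagonal piece on $\Proj_{\leq 1}\pgs$, a diagonal piece on $\Proj_{\geq 2}\pgs$, and the cross terms between them. For $\bra H\Proj_{\leq 1}\pgs,\, \Proj_{\leq 1}\pgs\ket$ I would reuse Lemma~\ref{lem:3}, which already produces the $\|R_1\|_*^2$, $\|L_1\|_*^2$, $\|\nabla g\|^2$, $M(\Proj_1\G)$, the $|\gamma_1-\gamma_0|^2$ and $|\beta_1-1|^2\|g\|^2$ contributions, together with the two dangerous negative terms $-\alpha|\gamma_0|^2\|\Guab\|_*^2$ and $-\alpha\|\Gu(g)\|_*^2$. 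The entirely new work is the analysis of the two-photon sector: using the splittings $\Proj_2\phi = \alpha\gamma_2\Gdab + R_2$ and $\Proj_2\G = \alpha\beta_2\Gd(g)+\L_2$ defined above, I would expand $\bra H\Proj_2\pgs,\, \Proj_2\pgs\ket$ and isolate the positive contributions $\tfrac12\|R_2\|_*^2$, $\tfrac{\delta_5}{2}\|\L_2\|_*^2$, $\alpha^2|\gamma_2-\gamma_1|^2$ and $\alpha^2|1-\beta_2|^2\|g\|^2$, in close analogy with how the one-photon sector was handled.

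The crucial mechanism, just as in Lemma~\ref{lem:3}, is that the two troublesome negative terms at order $\alpha$ must be absorbed. Here the refined photon number bound of Proposition~\ref{IPNE} is decisive: since $\|R_1\|,\,\|L_1\|,\,\|\Proj_{\geq 2}\pgs\| = \mathcal{O}(\alpha^{5/6})$, quantities that were previously only $\mathcal{O}(\alpha)$ are now controlled at order $\alpha^{5/6}$, so error terms that would have been $\mathcal{O}(\alpha^2)$ in Proposition~\ref{proposition:4} can be pushed down to $\mathcal{O}(\alpha^{8/3})$, which is exactly the error claimed in $S[\pgs]$. I would feed this improved smallness into the cross-term estimates $\bra H\Proj_{\geq 2}\pgs,\, \Proj_{\leq 1}\pgs\ket$, bounding each interaction term ($-4\sqrt\alpha(P-P_f).\Aa$, $2\sqrt\alpha\,\sigma.\Ba$, and the $\Aa.\Aa$ terms) by Cauchy--Schwarz combined with $\|\Ba v\|\le c\|H_f^{1/2}v\|$ from \cite[Lemma~A.4]{GLL}, splitting each product so that the larger factor carries the $\alpha^{5/6}$ gain and the surviving positive squares $\tfrac14\|H_f^{1/2}\Proj_{\geq 3}\pgs\|^2$ and the $(P-P_f)$ energies reappear on the right.

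The heart of the matter, and the step I expect to be the main obstacle, is the cancellation of the two negative $O(\alpha)$ terms $-\alpha|\gamma_0|^2\|\Guab\|_*^2$ and $-\alpha\|\Gu(g)\|_*^2$ that Lemma~\ref{lem:3} leaves behind. In Proposition~\ref{proposition:4} they were simply accepted as the leading $-\alpha\|\Gu\|_*^2$ behaviour of $\Sigma$; now they must be reconstructed from the two-photon sector and the $\gamma_0$--$\gamma_1$ and $\beta_1$--$\beta_2$ relations, so that after recombination only the manifestly nonnegative quadratic penalties listed in $S[\pgs]$ survive. Using \eqref{eq:app-2} of Lemma~\ref{lem:appendix-6} (the identity $|\gamma_0|^2\|\Guab\|_*^2 + \|\Gu(g)\|_*^2 = \|\Gu(\Proj_0\pgs)\|_*^2$) these two terms combine into a single $-\alpha\|\Gu\|_*^2$, which is precisely the leading order absorbed into $(\Sigma_0-\alpha^2/4)\|\pgs\|^2$ via the matching lower bound on $\Sigma$; the delicate bookkeeping is ensuring that every remainder generated in this regrouping is genuinely $\mathcal{O}(\alpha^{8/3})$ rather than $\mathcal{O}(\alpha^2)$.

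Finally I would collect all sector and cross-term estimates, use $\|\pgs\|^2 = 1+\mathcal{O}(\alpha)$ together with $\Sigma_0 = \mathcal{O}(\alpha)$ to convert the bare energy into the normalized form $(\Sigma_0-\tfrac{\alpha^2}{4})\|\pgs\|^2 + S[\pgs]$, choosing $\epsilon$ and the $\delta_j$ small but fixed (depending only on the ultraviolet cutoff $\Lambda$) so that every positive square retains a definite fraction of its coefficient, and verifying that the accumulated errors sum to $-c\alpha^{8/3}$.
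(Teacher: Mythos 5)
Your guiding mechanism --- using the refined bounds of Proposition~\ref{IPNE} to push error terms from $\mathcal{O}(\alpha^2)$ down to $\mathcal{O}(\alpha^{8/3})$ --- is indeed how the paper proceeds, but the architecture you propose cannot reach the claimed precision. You cannot reuse Lemma~\ref{lem:3} on $\Proj_{\leq 1}\pgs$ and relegate the $\Proj_{\leq 1}$--$\Proj_{\geq 2}$ couplings to Cauchy--Schwarz error terms: Lemma~\ref{lem:3} itself carries an irreducible $-c\alpha^2$ error (visible already in \eqref{eq:star-11}, \eqref{eq:star-14}, \eqref{eq:star-15}), and, worse, the couplings between the one- and two-photon sectors generated by $\sigma.\Bc$, $P_f.\Ac$ and $\Ac.\Ac$ --- the terms $Q_2$, $Q_4$, $Q_6$ in the paper's Lemma~\ref{lem:7} and their analogues for $\G$ in Lemma~\ref{lem:10} --- are of size exactly $\alpha^2$ with specific signs and coefficients, e.g.\ $2\alpha^2\Re\gamma_2\bar\gamma_1\bra \Gdab,\,(H_f+P_f^2)^{-1}\sigma.\Bc\,\Guab\ket_*$. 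These are not error terms: neither factor is a small remainder carrying an $\alpha^{5/6}$ gain (both are the structured vectors $\Guab$, $\Gdab$ with bounded $\gamma_1,\gamma_2$), so Cauchy--Schwarz yields only $-c\alpha^2$, which is fatal against the target $-c\alpha^{8/3}$. Their exact coefficients are needed so that, combined with the diagonal $\alpha^2|\gamma_2|^2\|\Gdab\|_*^2$, they recombine into $-\alpha^2\|\Gdab\|_*^2 + c_1\alpha^2|\gamma_2-\gamma_1|^2$. This is why the paper places the cut at $\Proj_{\leq 2}$ versus $\Proj_{\geq 3}$ and redoes the entire $\leq 2$-photon analysis from scratch (Lemmata~\ref{lem:7} through \ref{lem:12}), keeping all order-$\alpha^2$ structure exactly, rather than recycling Lemma~\ref{lem:3}.

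The second gap is your final absorption step, which you state only to leading order: matching the recombined $-\alpha\|\Gu(\Proj_0\pgs)\|_*^2$ against $\Sigma_0\|\pgs\|^2$ leaves a discrepancy of size $\alpha^2$, since $\Sigma_0 = -\alpha\|\Gu\|_*^2 + \mathcal{O}(\alpha^2)$; with that alone the proposition degrades back to the accuracy of Proposition~\ref{proposition:4} and nothing is gained. The paper instead invokes the second-order identity from \cite{BV4} recorded in \eqref{eq:diese-21-bis}, namely $-\alpha\|\Guab\|_*^2 - \alpha^2\|\Gdab\|_*^2 + 2\alpha^2\|\Aa\Guab\|^2 = \Sigma_0 + \mathcal{O}(\alpha^3)$, together with its $g$-analogue \eqref{eq:diese-22} obtained via Lemma~\ref{lem:appendix-6}. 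This is precisely why the negative pieces $-\alpha^2\|\Gdab\|_*^2$, $-\alpha^2\|\Gd(g)\|_*^2$ and the positive pieces $2\alpha^2\|\Aa\Guab\|^2$, $2\alpha^2\|\Aa\Gu(g)\|^2$ --- none of which appear in your proposal --- must be extracted with exact coefficients from the two-photon diagonal and cross terms. Only then does the normalization $\|\Proj_0\pgs\|^2 = |\gamma_0|^2 + \|g\|^2 = 1$ give $\Sigma_0(|\gamma_0|^2+\|g\|^2) = \Sigma_0 \geq \Sigma_0\|\pgs\|^2$ (using $\Sigma_0\leq 0$), and \eqref{eq:diese-23} handles the $-\alpha^2/4$ term. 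In short, the ``delicate bookkeeping'' you correctly identify as the main obstacle is exactly the content of the paper's proof, and your proposal leaves it unresolved.
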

The proof of the proposition is postponed to the end of this section.

\begin{rem}
Notice that for the trial state $\pst$ used in section~\ref{S3} to get a lower bound on the binding energy, a splitting similar to the one we are using here give $\gamma_1=1$, $\gamma_2=1$, $R_1=0$, $R_2=0$, $L_1=0$, $L_2=0$, $g=0$ and $\Proj_{\geq 2} \pst =0$. Our goal is to show that this trial state is close to the true ground state $\pgs$. Corollary~\ref{Cmain} gives estimate on the difference between $\pst$ and $\pgs$.
\end{rem}
\begin{cor}\label{Cmain}
We have following estimates
\begin{equation}\label{eq:nn1}
\begin{split}
 & \|H_f^\frac12 \Pi_{\geq 3} \pgs\| = \mathcal{O}(\alpha^\frac43)\, ,\\
 & \| g \| = \mathcal{O}(\alpha^\frac13)\, ,\quad
 \| \nabla g\| = \mathcal{O}(\alpha^\frac43),\quad \|\Proj_1 \G\| = \mathcal{O}(\alpha^\frac56)
 \, ,\\
 & \| R_1\|_*,\, \|L_1\|_*,\, \|R_2\|_*,\, \|L_2\|_*  = \mathcal{O}(\alpha^\frac43)\, ,\\
 & |1-\beta_1|\, \| g \|  = \mathcal{O}(\alpha^\frac56)\, ,\
 |\beta_1|\, \| g \|   = \mathcal{O}(\alpha^\frac13)\, ,\\
 & |1-\beta_2|\, \| g \|  = \mathcal{O}(\alpha^\frac13)\, ,\
 |\beta_2|\, \| g \|   = \mathcal{O}(\alpha^\frac13)\, ,\
\end{split}
\end{equation}
\begin{equation}\label{eq:nn2}
|\gamma_0 - 1| = \mathcal{O}(\alpha^\frac23),\, \ | \gamma_2 - 1 |=  \mathcal{O}(\alpha^\frac13)\, ,
| \gamma_1 - 1 |= \mathcal{O}(\alpha^\frac23)\, ,
\end{equation}
and
\begin{equation}\label{eq:nn3}
 \| H_f^\frac12 \Proj_2 \G \| = \mathcal{O}(\alpha^\frac43)\, .
\end{equation}
\end{cor}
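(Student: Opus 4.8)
The plan is to use the ground-state property of $\pgs$ to turn the lower bound of Proposition~\ref{prop:main-cmain} into an \emph{upper} bound on each non-negative summand of $S[\pgs]$, and then to read off the estimates \eqref{eq:nn1}--\eqref{eq:nn3} one at a time.

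Since $\pgs$ is a ground state of $H$ we have $\bra\pgs,\,H\pgs\ket=\Sigma\,\|\pgs\|^2$. Inserting this identity into Proposition~\ref{prop:main-cmain} gives
\begin{equation*}
 S[\pgs]\ \leq\ \Big(\Sigma-\Sigma_0+\tfrac{\alpha^2}{4}\Big)\,\|\pgs\|^2\, .
\end{equation*}
The variational lower bound \eqref{eq:lower-bound-main} on the binding energy yields $\Sigma-\Sigma_0+\tfrac{\alpha^2}{4}\leq-(\dvar^{(1)}+{\dvar}_{\mbox{\scriptsize{Zeeman}}}^{(1)})\alpha^3+\mathcal{O}(\alpha^4|\log\alpha|)$, which is non-positive for $\alpha$ small, while $\|\pgs\|^2\geq\|\Proj_0\pgs\|^2=1>0$. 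Hence $S[\pgs]\leq0$, and by the very definition of $S[\pgs]$ the sum of its explicitly non-negative terms is bounded by $c\alpha^{8/3}$.

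From this single inequality every term of \eqref{eq:nn1} carrying a $*$-norm or an $H_f^{1/2}$-norm follows immediately: $\|R_1\|_*,\|R_2\|_*,\|L_1\|_*,\|L_2\|_*$ and $\|H_f^{1/2}\Proj_{\geq3}\pgs\|$ are each $\mathcal{O}(\alpha^{4/3})$, while the summands $\tfrac{\delta_1}{2}\|\nabla g\|^2$ and $\tfrac{\delta_2}{2}\alpha^2\|g\|^2$ give $\|\nabla g\|=\mathcal{O}(\alpha^{4/3})$ and $\|g\|=\mathcal{O}(\alpha^{1/3})$. The $\gamma$- and $\beta$-differences are extracted from the weights $\alpha$ and $\alpha^2$, so that $|\gamma_1-\gamma_0|=\mathcal{O}(\alpha^{5/6})$, $|\gamma_2-\gamma_1|=\mathcal{O}(\alpha^{1/3})$, $|1-\beta_1|\,\|g\|=\mathcal{O}(\alpha^{5/6})$ and $|1-\beta_2|\,\|g\|=\mathcal{O}(\alpha^{1/3})$; the absolute bounds $|\beta_1|\,\|g\|,|\beta_2|\,\|g\|=\mathcal{O}(\alpha^{1/3})$ then follow by the triangle inequality together with $\|g\|=\mathcal{O}(\alpha^{1/3})$.

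The remaining statements require combining these bounds with the normalization and with earlier results. For \eqref{eq:nn2} I would use that $g\perp u_\alpha$ forces $|\gamma_0|^2+\|g\|^2=\|\Proj_0\pgs\|^2=1$, whence $|\gamma_0|=1-\mathcal{O}(\alpha^{2/3})$; fixing the phase of $\pgs$ so that $\gamma_0\geq0$ gives $|\gamma_0-1|=\mathcal{O}(\alpha^{2/3})$, and then $|\gamma_1-1|=\mathcal{O}(\alpha^{2/3})$ and $|\gamma_2-1|=\mathcal{O}(\alpha^{1/3})$ by the triangle inequality. The only delicate points, where $*$-norm control is insufficient, are the ordinary $L^2$-estimates $\|\Proj_1\G\|=\mathcal{O}(\alpha^{5/6})$ in \eqref{eq:nn1} and $\|H_f^{1/2}\Proj_2\G\|=\mathcal{O}(\alpha^{4/3})$ in \eqref{eq:nn3}. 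For the former, the semi-norm $\|\cdot\|_*$ degenerates in the infrared on the one-photon sector, so the bound on $\|L_1\|_*$ does not control $\|L_1\|$; here I would invoke the refined photon-number bound $\|L_1\|=\mathcal{O}(\alpha^{5/6})$ of Proposition~\ref{IPNE} and estimate $\sqrt{\alpha}\,|\beta_1|\,\|\Gu(g)\|\leq c\sqrt{\alpha}\,|\beta_1|\,\|g\|=\mathcal{O}(\alpha^{5/6})$ using $\|\Gu(g)\|\leq c\|g\|$ from Lemma~\ref{lem:appendix-6}. For the latter no infrared problem arises, since the $H_f^{1/2}$-norm is controlled by $\|\cdot\|_*$: writing $\Proj_2\G=\alpha\beta_2\Gd(g)+L_2$ one bounds $\|H_f^{1/2}L_2\|\leq\|L_2\|_*=\mathcal{O}(\alpha^{4/3})$ and $\alpha|\beta_2|\,\|H_f^{1/2}\Gd(g)\|\leq c\alpha|\beta_2|\,\|g\|=\mathcal{O}(\alpha^{4/3})$. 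The substantive work is thus conceptual rather than computational: it resides in Proposition~\ref{prop:main-cmain} itself and in checking that the Section~\ref{S3} upper bound on $\Sigma$ has the correct sign and order to force $S[\pgs]\leq0$.
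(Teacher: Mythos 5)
Your proposal is correct and follows essentially the same route as the paper: the ground-state identity $\bra\pgs,\,H\pgs\ket=\Sigma\,\|\pgs\|^2$ together with the Section~\ref{S3} lower bound on the binding energy turns Proposition~\ref{prop:main-cmain} into $S[\pgs]\leq 0$, each estimate in \eqref{eq:nn1} is then read off from the non-negative summands of $S[\pgs]$ (with Proposition~\ref{IPNE} supplying $\|L_1\|$ and hence $\|\Proj_1\G\|=\mathcal{O}(\alpha^{\frac56})$), and \eqref{eq:nn2}, \eqref{eq:nn3} follow from the normalization $|\gamma_0|^2+\|g\|^2=1$, the triangle inequality, and Lemma~\ref{lem:appendix-6} exactly as in the paper's proof. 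If anything, you are slightly more explicit than the printed proof on two points it leaves implicit: the need to invoke the refined photon-number bound because $\|\cdot\|_*$ does not control the $L^2$-norm on the one-photon sector in the infrared, and the phase convention $\gamma_0\geq 0$ required for $|\gamma_0-1|=\mathcal{O}(\alpha^{\frac23})$.
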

\begin{proof}
Proposition~\ref{prop:main-cmain} and the same arguments as in the proof of Proposition~\ref{proposition:4}~ii) yields
$$
 S[\pgs] \leq 0\, ,
$$
which proves \eqref{eq:nn1}. This also yields
\begin{equation}\label{eq:nn4}
 |\gamma_2 - \gamma_1| = \mathcal{O}(\alpha^\frac13)\quad\mbox{and}\quad |\gamma_1 - \gamma_0| = \mathcal{O}(\alpha^\frac56)\, .
\end{equation}
Since $\|\Proj_0 \pgs\| = 1 = |\gamma_0|^2 + \|g\|^2$, and $\|g\| = \mathcal{O}(\alpha^\frac13)$, we obtain
$$
 |\gamma_0 -1| = \mathcal{O}(\alpha^\frac23)\,
$$
which, together with \eqref{eq:nn4}, implies \eqref{eq:nn2}. To prove \eqref{eq:nn3}, it suffices to apply \eqref{eq:app-3} and the fact that $\|g\| = \mathcal{O}(\alpha^\frac13)$ and $\|\L_2\|_* = \mathcal{O}(\alpha^\frac43)$.
\end{proof}
\begin{rem}
The estimates of Corollary~\ref{Cmain} allow to improve the estimates on the norm of the functions, similarly to what is done in Proposition~\ref{IPNE}. However, we shall not need this improved estimates in the present article.
\end{rem}

In the rest of this section, we state and prove the Lemmata that enable us to prove Proposition~\ref{prop:main-cmain} and then prove the proposition itself. We follow the same strategy as in Section~\ref{S6}. We first estimate the quadratic form of $H$ on the part of $\pgs$ corresponding to $n$-photon sectors with $n\geq 3$ (Lemma~\ref{lem:5}). The contribution of this part is small because $\|\Proj_{\geq 3} \pgs\| = \mathcal{O}(\alpha^\frac56)$. Then in Lemma~\ref{lem:6} we estimate the cross term $\bra \Proj_{\geq 3}\pgs,\, H\, \Proj_{\leq 2}\pgs\ket$. The most involved part is the estimate of the quadratic form on the projection of $\pgs$ onto the sectors with $n\leq2$ photons. It is done in Lemmata~\ref{lem:7} to \ref{lem:12}. Here we use the splitting described at the beginning of the current section. For the convenience of the reader, we recall here the expression for $H$.
\begin{equation}\nonumber
\begin{split}
 H =
 & (-\Delta -\frac{\alpha}{|x|}) + (H_f + P_f^2) - 2\,P. P_f
 - 4\alpha^\frac12 \Re P . \Aa  \\
 & + 4 \alpha^\frac12 P_f . \Aa
 + 2\alpha\Ac\Aa + 2\alpha (\Re\Aa)^2 + 2\alpha^\frac12\Re \sigma. \Ba\, .
\end{split}
\end{equation}

\begin{lem}\label{lem:5}
 We have
 $$
 \bra \Proj_{\geq 3} \pgs,\, H\, \Proj_{\geq 3}\pgs \ket \geq \frac12 \|H_f^\frac12 \Proj_{\geq 3} \pgs\|^2
 + \frac14 \|(P-P_f)\Proj_{\geq 3} \pgs\|^2 - c\alpha^\frac83\, .
 $$
\end{lem}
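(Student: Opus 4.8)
The plan is to apply the general lower bound of Lemma~\ref{lem:1} directly to the single vector $\psi=\Proj_{\geq 3}\pgs$, and then to absorb the resulting zeroth-order error term using the refined photon number estimate of Proposition~\ref{IPNE}. First I would note that inserting $\Proj_{\geq 3}\pgs$ into Lemma~\ref{lem:1} is legitimate: the projection $\Proj_{\geq 3}$ commutes with $H_f$, $P_f$ and $P=i\nabla_x$ (it acts only on the photon-number grading), and it commutes with the Coulomb potential $\alpha/|x|$, which acts on the electron variable alone. Hence $\Proj_{\geq 3}$ preserves the form domain of $H$, so that $\Proj_{\geq 3}\pgs$ lies in it whenever $\pgs$ does, and the quadratic-form inequality of Lemma~\ref{lem:1} applies verbatim to it.

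With this in hand, Lemma~\ref{lem:1} applied to $\Proj_{\geq 3}\pgs$ gives
\begin{equation}\nonumber
\begin{split}
 \bra \Proj_{\geq 3}\pgs,\, H\,\Proj_{\geq 3}\pgs\ket
 & \geq -c\alpha\|\Proj_{\geq 3}\pgs\|^2 + \frac12\|H_f^\frac12\Proj_{\geq 3}\pgs\|^2 \\
 & \quad + \frac14\|(P-P_f)\Proj_{\geq 3}\pgs\|^2\, .
\end{split}
\end{equation}
It then only remains to show that the zeroth-order term $-c\alpha\|\Proj_{\geq 3}\pgs\|^2$ is actually of size $\mathcal{O}(\alpha^\frac83)$. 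This is precisely where the refinement of Section~\ref{S7} enters: by Proposition~\ref{IPNE} we have $\|\Proj_{\geq 2}\pgs\|=\mathcal{O}(\alpha^\frac56)$, and since $\|\Proj_{\geq 3}\pgs\|\leq\|\Proj_{\geq 2}\pgs\|$ this yields $\|\Proj_{\geq 3}\pgs\|^2=\mathcal{O}(\alpha^\frac53)$. Consequently $-c\alpha\|\Proj_{\geq 3}\pgs\|^2=\mathcal{O}(\alpha^\frac83)$, which is exactly the claimed error term and concludes the argument.

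Thus the proof is almost immediate once the preceding results are available; the crucial input is not any fresh computation but the improved photon number bound of Proposition~\ref{IPNE}, which upgrades the crude loss $\mathcal{O}(\alpha)\|\psi\|^2$ of Lemma~\ref{lem:1} into the much smaller $\mathcal{O}(\alpha^\frac83)$ error needed here. The only point requiring (routine) care is the form-domain justification for feeding $\Proj_{\geq 3}\pgs$ into Lemma~\ref{lem:1}, which is settled by the commutation properties of $\Proj_{\geq 3}$ recorded above; I do not expect any genuine analytical obstacle.
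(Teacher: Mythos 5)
Your proof is correct and follows exactly the paper's argument: the authors likewise obtain the bound by inserting $\psi=\Proj_{\geq 3}\pgs$ into the inequality \eqref{eq:87} from the proof of Lemma~\ref{lem:1} and then invoke $\|\Proj_{\geq 3}\pgs\|^2=\mathcal{O}(\alpha^{\frac53})$ from Proposition~\ref{IPNE} to turn the loss $-c\alpha\|\psi\|^2$ into $-c\alpha^{\frac83}$. Your extra remarks (the monotonicity $\|\Proj_{\geq 3}\pgs\|\leq\|\Proj_{\geq 2}\pgs\|$ and the form-domain justification via commutation of $\Proj_{\geq 3}$ with the relevant operators) are routine details the paper leaves implicit, and they are sound.
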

%
%
\begin{proof}
Picking $\psi= \Proj_{\geq 3} \pgs$  in \eqref{eq:87}
and using from Proposition~\ref{IPNE} that $\|\Proj_{\geq 3}\pgs\|^2 = \mathcal{O}(\alpha^\frac53)$ concludes the proof.
\end{proof}

\begin{lem}\label{lem:6}
 The following estimate holds
 $$
 \bra \Proj_{\geq 3} \pgs,\, H\, \Proj_{\leq 2}\pgs \ket \geq -\frac{1}{8} \|H_f^\frac12 \Proj_{\geq 3} \pgs\|^2
 - \frac{1}{16} \|(P-P_f)\Proj_{3} \pgs\|^2 - c\alpha^\frac83\, .
 $$
\end{lem}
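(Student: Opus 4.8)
The plan is to follow the scheme used for the cross terms in the proof of Proposition~\ref{proposition:4} (see \eqref{eq:five-star}--\eqref{eq:nine-star}), now sharpened by the norm bounds of Proposition~\ref{IPNE}. First I would determine which terms of $H$ in the expansion \eqref{expand-H} can have nonzero matrix elements between $\Proj_{\leq 2}\pgs$ and $\Proj_{\geq 3}\pgs$. Every photon-number conserving term, namely $-\Delta-\frac{\alpha}{|x|}$, $H_f+P_f^2$, $-2P\cdot P_f$ and $2\alpha\Ac\Aa$, acts diagonally in the photon number and hence contributes zero. The only surviving contributions come from the terms raising or lowering the photon number by one, $-4\alpha^{1/2}\Re P\cdot\Aa+4\alpha^{1/2}P_f\cdot\Aa$ and $2\alpha^{1/2}\Re\sigma\cdot\Ba$, which can only connect $\Proj_2$ with $\Proj_3$, and from the photon-number-raising-by-two part of $2\alpha(\Re\Aa)^2$, of the form $\alpha\,\Ac\cdot\Ac$ up to a constant factor, which connects $\Proj_1$ with $\Proj_3$ and $\Proj_2$ with $\Proj_4$. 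The lowering-by-two part $\Aa\cdot\Aa$ maps $\Proj_{\leq 2}$ into $\Proj_{\leq 0}$ and so gives no cross term. Thus the cross term reduces to a finite sum of explicit matrix elements.

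Next I would estimate the one-photon-changing contributions. For the momentum term I combine the two pieces into $(P-P_f)\cdot\Ac$ as in \eqref{eq:five-star}, move the self-adjoint factor $(P-P_f)$ onto the $\Proj_3$ side (which is legitimate componentwise regardless of the non-commutation of $P_f$ with $\Ac$), and bound $\|\Ac\Proj_2\pgs\|\le c\|\Proj_2\pgs\|=\mathcal{O}(\alpha^{5/6})$ by Proposition~\ref{IPNE}. Cauchy--Schwarz then gives a bound $c\alpha^{4/3}\|(P-P_f)\Proj_3\pgs\|$, and Young's inequality with a suitable constant yields $-\tfrac{1}{16}\|(P-P_f)\Proj_3\pgs\|^2-c\alpha^{8/3}$, which is exactly the stated $(P-P_f)$ term. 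For the Zeeman term I use the relative bound $\|\sigma\cdot\Ba\Proj_3\pgs\|\le c\|H_f^{1/2}\Proj_3\pgs\|$ from \cite[Lemma~A.4]{GLL} together with $\|\Proj_2\pgs\|=\mathcal{O}(\alpha^{5/6})$; Cauchy--Schwarz and Young then produce a small multiple of $\|H_f^{1/2}\Proj_3\pgs\|^2$ plus an $\mathcal{O}(\alpha^{8/3})$ remainder.

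The delicate contribution is the two-photon term $\alpha\,\Ac\cdot\Ac$. For the piece connecting $\Proj_2$ with $\Proj_4$ one may estimate directly, since $\|\Ac\cdot\Ac\Proj_2\pgs\|\le c\|\Proj_2\pgs\|=\mathcal{O}(\alpha^{5/6})$ and $\|\Proj_4\pgs\|=\mathcal{O}(\alpha^{5/6})$ by Proposition~\ref{IPNE}, giving at once $\mathcal{O}(\alpha^{8/3})$. The piece connecting $\Proj_1$ with $\Proj_3$ is the real obstacle: the naive bound $\alpha\|\Proj_3\pgs\|\,\|\Proj_1\pgs\|$ only gives $\alpha\cdot\alpha^{5/6}\cdot\alpha^{1/2}=\alpha^{7/3}$, which is too large. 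To fix this I would move one creation operator back onto the $\Proj_3$ side, writing $\alpha\bra\Proj_3\pgs,\Ac\cdot\Ac\Proj_1\pgs\ket=\alpha\sum_i\bra\Aa_i\Proj_3\pgs,\Ac_i\Proj_1\pgs\ket$, and then use the annihilation relative bound $\|\Aa\Proj_3\pgs\|\le c\|H_f^{1/2}\Proj_3\pgs\|$ together with $\|\Ac\Proj_1\pgs\|\le c\|\Proj_1\pgs\|=\mathcal{O}(\alpha^{1/2})$, the last estimate coming from the photon number bound \eqref{eq:photon-number-estimate-1} of Proposition~\ref{Nf-exp-lemma-1}. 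This yields a bound $c\alpha^{3/2}\|H_f^{1/2}\Proj_3\pgs\|$, hence by Young a small multiple of $\|H_f^{1/2}\Proj_3\pgs\|^2$ plus an $\mathcal{O}(\alpha^3)$ error.

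Finally I would collect the three groups of estimates, choosing each small constant produced by Young's inequality so that the coefficients of $\|H_f^{1/2}\Proj_3\pgs\|^2$ add up to at most $\tfrac18$, and use $\|H_f^{1/2}\Proj_3\pgs\|^2\le\|H_f^{1/2}\Proj_{\geq3}\pgs\|^2$. Together with the $-\tfrac{1}{16}\|(P-P_f)\Proj_3\pgs\|^2$ term and the $\mathcal{O}(\alpha^{8/3})$ remainders this gives the claim. The only genuinely non-routine step is the treatment of the $\Proj_1$--$\Proj_3$ part of $\Ac\cdot\Ac$, where one must trade a creation operator for an annihilation operator on the three-photon side in order to replace the loss-making factor $\|\Proj_3\pgs\|$ by $\|H_f^{1/2}\Proj_3\pgs\|$; a secondary bookkeeping point is the correct Hermitian recombination of the $P\cdot\Aa$ and $P_f\cdot\Aa$ terms into $(P-P_f)\cdot\Ac$, taking into account that $P_f$ does not commute with $A^{\pm}$ (the resulting commutator terms are creation operators controlled by $H_f^{1/2}$ and contribute only to the lower-order remainders).
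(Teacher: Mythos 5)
Your proposal is correct and follows essentially the same route as the paper's proof: the particle-conserving terms vanish, the $(P-P_f)\cdot\Aa$ term between $\Proj_3$ and $\Proj_2$ is handled by Young's inequality against $\|(P-P_f)\Proj_3\pgs\|^2$ with $\|\Proj_2\pgs\|=\mathcal{O}(\alpha^{5/6})$ from Proposition~\ref{IPNE}, the Zeeman term uses the relative bound $\|\Ba\psi\|\leq c\|H_f^{1/2}\psi\|$ from \cite[Lemma~A4]{GLL}, and the $\Proj_1$--$\Proj_3$ part of the quadratic field term is controlled exactly as in the paper by trading a creation operator for an annihilation operator on the three-photon side together with $\|\Proj_1\pgs\|^2=\mathcal{O}(\alpha)$. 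Your one point of extra caution is unnecessary but harmless: in the dot product $(P-P_f)\cdot A^{\pm}$ the commutator $[P_f\cdot\, ,A^{\pm}]$ vanishes identically by the Coulomb-gauge transversality $k\cdot\varepsilon_\lambda(k)=0$, so there are no remainder terms at all from that recombination.
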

%
%
\begin{proof}
The estimate $\|\Proj_{\geq 2} \pgs\| = \mathcal{O}(\alpha^\frac56)$ in Proposition~\ref{IPNE} yields
\begin{equation}\nonumber
\begin{split}
&  4\sqrt{\alpha} \Re\bra (P-P_f). \Aa \Proj_3\pgs,\, \Proj_2\pgs\ket \\
 & \geq -\frac{1}{16} \|(P-P_f) \Proj_3 \pgs\|^2 - c\alpha \|\Proj_2 \pgs\|^2 \\
 & \geq -\frac{1}{16} \|(P-P_f) \Proj_3 \pgs\|^2 -c\alpha^\frac83\, .
\end{split}
\end{equation}
Proposition~\ref{Nf-exp-lemma-1} implies $\|\Proj_1\pgs\|^2=\mathcal{O}(\alpha)$, therefore, with \cite[Lemma~A4]{GLL} we obtain
\begin{equation}\nonumber
\begin{split}
 & 2\alpha \Re \bra \Aa. \Aa \Proj_4 \pgs,\, \Proj_2 \pgs\ket
 + 2\alpha \Re \bra \Aa . \Aa \Proj_3 \pgs,\, \Proj_1\, \pgs\ket \\
 & \geq -\frac{1}{16} \|H_f^\frac12 \Proj_4 \pgs\|^2 -\frac{1}{16} \|H_f^\frac12 \Proj_3\pgs\|^2
 - c\alpha^2 \left( \|\Proj_2 \pgs\|^2 + \|\Proj_1 \pgs\|^2\right) \\
 & \geq  -\frac{1}{16} \|H_f^\frac12 \Proj_{\geq 3} \pgs\|^2 - c\alpha^3\, .
\end{split}
\end{equation}
Similarly we obtain
\begin{equation}\nonumber
\begin{split}
 2\sqrt{\alpha} \Re \bra\sigma .\Ba \Proj_3 \pgs,\, \Proj_2 \pgs\ket
 &  \geq -\frac{1}{16} \|H_f^\frac12 \Proj_3 \pgs\|^2 - c\alpha \|\Proj_2 \pgs\|^2 \\
 & \geq -\frac{1}{16} \|H_f^\frac12 \Proj_3 \pgs\|^2 -c\alpha^\frac83\, ,
\end{split}
\end{equation}

Since particle conserving terms $\bra \Proj_{\geq3} \pgs,\, (-\Delta -\frac{\alpha}{|x|} + H_f+P_f^2) \Proj_{\leq 2}\pgs\ket$ are zero, collecting the above inequalities yields the result.
\end{proof}

\begin{lem}\label{lem:7}
There exist $c$ and $c_1$ strictly positive such that
 \begin{equation*}
 \begin{split}
 & \bra \Proj_{\leq 2} u_\alpha\phi,\, H\, \Proj_{\leq 2}u_\alpha\phi \ket \\
 & \geq - \alpha |\gamma_0|^2\, \| \Guab \|_*^2 - \alpha^2 |\gamma_0|^2\, \|\Gdab\|_*^2
 + 2\alpha^2 |\gamma_0|^2 \|\Aa \Guab \|^2 -\frac{\alpha^2}{4} |\gamma_0|^2 \\
 & +\frac12 \| R_1\|_*^2 + \frac12\| R_2\|_*^2 + c_1 \alpha^2 |\gamma_2 - \gamma_1|^2
 + c_1 \alpha |\gamma_1 - \gamma_0|^2 - c\alpha^\frac83\, .
 \end{split}
 \end{equation*}
\end{lem}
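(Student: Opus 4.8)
The plan is to expand $H$ according to \eqref{expand-H} applied to the state $\Proj_{\leq 2} u_\alpha\phi$, using the decompositions \eqref{def:Guab-0}, \eqref{def:Guab-2}, \eqref{def:pi-2-phi-2} for $\Proj_0\phi$, $\Proj_1\phi$ and $\Proj_2\phi$ respectively. First I would separate the contribution of the particle-conserving part $(-\Delta-\frac{\alpha}{|x|})+(H_f+P_f^2)$ from the off-diagonal (in photon number) terms $-2P.P_f$, $-4\sqrt\alpha\,\Re P.\Aa$, $4\sqrt\alpha\,P_f.\Aa$, $2\alpha\Ac\Aa$, $2\alpha(\Re\Aa)^2$ and $2\sqrt\alpha\,\Re\sigma.\Ba$. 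The diagonal part yields the main structure: on the zero-photon piece $\gamma_0\mab\vac$ the Coulomb/kinetic term produces $-\frac{\alpha^2}{4}|\gamma_0|^2$; on the one-photon piece $\sqrt\alpha\gamma_1\Guab+R_1$ the field energy $(H_f+P_f^2)$ gives $\alpha|\gamma_1|^2\|\Guab\|_*^2+\|R_1\|_*^2$ (the cross term vanishing by $\bra R_1,\Guab\ket_*=0$); and on the two-photon piece $\alpha\gamma_2\Gdab+R_2$ it gives $\alpha^2|\gamma_2|^2\|\Gdab\|_*^2+\|R_2\|_*^2$ plus the $2\alpha^2|\gamma_0|^2\|\Aa\Guab\|^2$-type contribution coming from $2\alpha\Ac\Aa$.

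Next I would handle the cross terms between adjacent photon sectors, which are the source of the negative quadratic contributions. The coupling of $\Proj_0$ to $\Proj_1$ through $2\sqrt\alpha\,\Re\sigma.\Ba$ and through $-4\sqrt\alpha\,\Re P.\Aa$, together with the defining relation \eqref{eq:def-gamma1} of $\Guab$ as $-(H_f+P_f^2)^{-1}\sigma.\Bc\vac\mab$, is precisely what generates the term $-\alpha|\gamma_0|^2\|\Guab\|_*^2$; completing the square against the diagonal $\alpha|\gamma_1|^2\|\Guab\|_*^2$ will produce the gain $\alpha|\gamma_1-\gamma_0|^2\|\Guab\|_*^2$ and hence, after absorbing $\|\Guab\|_*^2$ into a constant, the term $c_1\alpha|\gamma_1-\gamma_0|^2$. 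Analogously, the coupling of $\Proj_1$ to $\Proj_2$ via the same interaction terms, combined with the definition \eqref{def:pi-2-phi-1} of $\Gdab$, yields $-\alpha^2|\gamma_0|^2\|\Gdab\|_*^2$ together with a perfect-square gain $c_1\alpha^2|\gamma_2-\gamma_1|^2$. Throughout, the orthogonality conditions $\bra R_1,\Guab\ket_*=0$ and $\bra R_2,\Gdab\ket_*=0$ are essential to make the cross terms with $\gamma_1,\gamma_2$ rather than with the full $\Proj_1\phi,\Proj_2\phi$, and the symmetry of $u_\alpha$ kills the terms with an odd number of $\partial u_\alpha/\partial x_j$ factors.

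The main obstacle will be controlling the error terms, namely showing that everything not displayed in the statement is $\mathcal{O}(\alpha^{8/3})$. The delicate points are: the factors $\|P u_\alpha\|=\mathcal{O}(\alpha)$ and the improved photon-number bounds $\|R_1\|=\mathcal{O}(\alpha^{5/6})$, $\|\Proj_{\geq 2}\pgs\|=\mathcal{O}(\alpha^{5/6})$ from Proposition~\ref{IPNE} must be combined carefully so that the residual cross terms (for instance those coupling $R_1$ to $R_2$, or those arising from $4\sqrt\alpha\,P_f.\Aa$ acting across sectors) are of the claimed order rather than merely $\mathcal{O}(\alpha^2)$. In particular the term $2\sqrt\alpha\,\Re\bra\sigma.\Ba\,\Proj_1 u_\alpha\phi,\Proj_0 u_\alpha\phi\ket$ must be split so that the $\gamma_1$-part completes the square and only the $R_1$-part is estimated by Cauchy–Schwarz, using $\|R_1\|_*$; a naive bound would lose the perfect square and destroy the gain $c_1\alpha|\gamma_1-\gamma_0|^2$. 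I would keep half of $\|R_1\|_*^2$ and half of $\|R_2\|_*^2$ in reserve precisely to absorb these Cauchy–Schwarz remainders, which explains the coefficients $\frac12$ in the statement. The infrared behaviour does not enter here because no infrared cutoff is imposed on $\Guab$ or $\Gdab$, so the norms $\|\Guab\|_*$, $\|\Gdab\|_*$, $\|\Aa\Guab\|$ are all $\mathcal{O}(1)$ and can be treated as fixed constants.
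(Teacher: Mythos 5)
Your proposal is correct and follows essentially the same route as the paper's proof of Lemma~\ref{lem:7}: expand $H$ via \eqref{expand-H}, let the definitions \eqref{def:Guab} and \eqref{def:pi-2-phi-1} convert the $0\!\to\!1$ and $1\!\to\!2$ photon couplings into $*$-inner products, complete the squares to produce the gains $c_1\alpha|\gamma_1-\gamma_0|^2$ and $c_1\alpha^2|\gamma_2-\gamma_1|^2$ (swapping $|\gamma_1|^2$ for $|\gamma_0|^2$ via $\gamma_1-\gamma_0=\mathcal{O}(\alpha^{\frac12})$ from Proposition~\ref{proposition:4}), use $\bra R_1,\Guab\ket_*=\bra R_2,\Gdab\ket_*=0$, and absorb the Cauchy--Schwarz remainders into halves of $\|R_1\|_*^2$, $\|R_2\|_*^2$, with the refined bound $\|R_1\|=\mathcal{O}(\alpha^{\frac56})$ of Proposition~\ref{IPNE} supplying the $\mathcal{O}(\alpha^{\frac83})$-size errors. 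Two small slips, harmless given your own (correct) symmetry remark: in this lemma the $-4\sqrt{\alpha}\,\Re P.\Aa$ and $-2P.P_f$ terms vanish identically, since $\phi$ is $x$-independent and $\bra u_\alpha,\,\nabla u_\alpha\ket=0$, so the main $0\!\to\!1$ coupling comes from $2\sqrt{\alpha}\,\Re\,\sigma.\Ba$ alone and the $1\!\to\!2$ coupling from $4\sqrt{\alpha}\,P_f.\Aa$, $2\sqrt{\alpha}\,\sigma.\Ba$ and the $\Aa\Aa$ part of $2\alpha(\Re\Aa)^2$ (matching the three summands in \eqref{def:pi-2-phi-1}); and the term $2\alpha^2|\gamma_0|^2\|\Aa\Guab\|^2$ arises from the particle-conserving $2\alpha\Ac\Aa$ acting within the one-photon piece $\sqrt{\alpha}\gamma_1\Guab$ (the paper's $Q_7$), not on the two-photon piece.
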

%
%
\begin{proof}
Let
$$
 \bra \Proj_{\leq 2} u_\alpha\phi,\, H\, \Proj_{\leq 2} u_\alpha\phi\ket =: \sum_{i=1}^{10} Q_i\, ,
$$
where
\begin{flalign*}
 Q_1
 & : = \bra \Proj_{\leq 2} u_\alpha\phi, (-\Delta -\frac{\alpha}{|x|} + H_f + P_f^2) \,
 \Proj_{\leq 2} u_\alpha\phi \ket \\
 & = -\frac{\alpha^2}{4} |\gamma_0|^2 + \|\Proj_1 \phi\|_*^2 + \|\Proj_2\phi\|_*^2
 -\frac{\alpha^2}{4} (\|\Proj_1 \phi\|^2 + \|\Proj_2 \phi\|^2 ) &\\
 & \geq  -\frac{\alpha^2}{4} |\gamma_0|^2 + \|R_1\|_*^2 +\| R_2\|_*^2 +\alpha |\gamma_1|^2 \|\Guab\|_*^2 + \alpha^2 |\gamma_2|^2 \|\Guab\|_*^2 - c\alpha^3\, ,&
\end{flalign*}
\begin{flalign*}
 Q_2
 : = & 4\Re \bra \alpha \gamma_2 \Gdab,\, \sqrt{\alpha} P_f.\Ac \sqrt{\alpha}\gamma_1 \Guab\ket & \\
 & + 2\Re \bra \alpha \gamma_2 \Gdab,\, \sqrt{\alpha} \sigma .\Bc \sqrt{\alpha}\gamma_1 \Guab\ket & \\
 & + 2\Re \bra \alpha \gamma_2 \Gdab,\, \alpha \Ac . \Ac \gamma_0 \mab \vac \ket & \\
 = & 2\alpha^2 \Re \gamma_2 \bar\gamma_1 \bra \Gdab,\, (H_f+P_f^2)^{-1} 2 P_f.\Ac \Guab\ket_* &\\
 & + 2\alpha^2 \Re \gamma_2 \bar\gamma_1 \bra \Gdab,\, (H_f+P_f^2)^{-1}  \sigma.\Bc  \Guab\ket_* & \\
 & + 2\alpha^2 \Re \gamma_2\bar\gamma_0 \bra \Gdab,\, (H_f+P_f^2)^{-1} \Ac . \Ac \mab\vac\ket_*\, ,&
\end{flalign*}
\begin{flalign*}
 Q_3
 & : =  2\Re \bra \sqrt{\alpha}\gamma_1\Guab,\, \sqrt{\alpha}\sigma . \Bc \gamma_0 \mab \vac \ket \\
 & = -2\alpha \Re \gamma_1 \bar\gamma_0 \|\Guab\|_*^2\, ,&
\end{flalign*}
\begin{flalign*}
 Q_4
 & :=
 4\Re \bra R_2,\, \sqrt{\alpha} P_f.\Ac \sqrt{\alpha}\gamma_1 \Guab\ket
 + 2\Re\bra R_2,\, \sqrt{\alpha}\sigma.\Bc \sqrt{\alpha}\gamma_1\Guab\ket \\
 & + 2\Re\bra R_2,\, \alpha \Ac.\Ac \gamma_0 \mab \vac\ket\\
 & = 2\alpha \Re \bar\gamma_1 \bra R_2,\, (H_f+P_f^2)^{-1} 2 P_f.\Ac \Guab\ket_* &\\
 & \ \ \ + 2\alpha\Re \bar\gamma_1 \bra R_2,\, (H_f+P_f^2)^{-1}  \sigma.\Bc \Guab\ket_* &\\
 & \ \ \ + 2\alpha\Re\bar\gamma_0 \bra R_2, \, (H_f+P_f^2)^{-1} \Ac .\Ac \mab\vac\ket_*\, ,&
\end{flalign*}
\begin{flalign*}
 & Q_5 : = 4\sqrt{\alpha}\Re \bra P_f.\Aa R_2,\, R_1\ket + 2\sqrt{\alpha} \Re \bra \sigma\cdot \Ba R_2,\, R_1\ket \, ,&
\end{flalign*}
\begin{flalign*}
 & Q_6 : = 4 \Re\bra\alpha \gamma_2 \Gdab,\, \sqrt{\alpha}P_f. \Ac R_1 \ket
 + 2 \Re \bra \alpha \gamma_2 \Gdab,\, \sqrt{\alpha} \sigma . \Bc R_1 \ket\, , &
\end{flalign*}
\begin{flalign*}
 Q_7 &: = 2 \bra \alpha \Ac. \Aa \sqrt{\alpha} \gamma_1 \Guab,\sqrt{\alpha} \gamma_1 \Guab\ket &\\
 & = 2\alpha^2 |\gamma_1|^2 \| \Aa\Guab\|^2\, &
\end{flalign*}
\begin{flalign*}
 Q_8 & := 2 \Re \bra R_1,\, \sqrt{\alpha}\sigma\cdot \Bc \gamma_0\mab\vac \ket \, ,&
\end{flalign*}
\begin{flalign*}
 Q_9 & := 2 \bra \alpha \Ac.\Aa R_1,\, R_1\ket \!+\! 2 \bra \alpha\Ac.\Aa R_2,\, R_2\ket
 \!+\! 2 \bra \alpha \Ac.\Aa \alpha\gamma_2\Gd,\, \alpha\gamma_2\Gd\ket\, , &
\end{flalign*}
and
\begin{flalign*}
 Q_{10}  & := 4 \Re \bra \alpha \Ac.\Aa R_1, \sqrt{\alpha} \gamma_1 \Guab\ket + 4 \Re \bra \alpha \Ac.\Aa R_2,\, \alpha \gamma_2 \Gdab\ket &\\
 & \geq  - c\alpha^{\frac52} - c\alpha^3 |\gamma_2| \geq -c\alpha^\frac52 - c\alpha^3 |\gamma_2 - \gamma_1|^2\, , &
\end{flalign*}
where in the last inequalities we used that $\gamma_1$ is bounded and $\| R_1 \|_* = \mathcal{O}(\alpha)$ from Proposition~\ref{proposition:4}.

We have
\begin{flalign*}
& Q_2 \geq -2\alpha^2 \Re \gamma_2\bar\gamma_1 \|\Gdab\|_*^2 - c\alpha^2 |\gamma_0 - \gamma_1|\, .&
\end{flalign*}
Using $|\gamma_0 - \gamma_1|^2=\mathcal{O}(\alpha)$ from Proposition~\ref{proposition:4} and
$\bra R_2,\, \Gdab\ket_* =0$ yields
\begin{flalign*}
 Q_4 & = -2 \alpha\Re \bar\gamma_1 \bra R_2,\, \Gdab\ket_*  + 2 \alpha\Re (\bar\gamma_0 -\gamma_1)
\bra \Aa R_2,\, \Ac \mab\vac \ket &\\
& \geq -\epsilon \|H_f^\frac12 R_2\|^2 - c\alpha^2 |\gamma_0 - \gamma_1|^2
\geq -\epsilon \|R_2\|_*^2 - c\alpha^3 \, .&
\end{flalign*}
Proposition~\ref{IPNE} implies
\begin{flalign*}
 Q_5
 & \geq -c\sqrt{\alpha} \|H_f^\frac12 R_2\|^2 - c\sqrt{\alpha} \|H_f^\frac12 R_1\|^2
 -\epsilon \|H_f^\frac12 R_2\|^2 - \alpha \|R_1\|^2 & \\
 & \geq -2\epsilon \|R_2\|_*^2 - \epsilon \|R_1\|_*^2 - c^{2+\frac23}\, . &
\end{flalign*}
We also have
\begin{flalign*}
 Q_6 & \geq -4\alpha^\frac32 |\gamma_2|\, \|\Aa \Gdab\|\, \|P_f R_1\| - 2\alpha^\frac32 |\gamma_2|
 \, \|H_f^{-\frac12} \sigma . \Ba \Gdab\|\, \| H_f^\frac12 R_1\| & \\
 & \geq -\epsilon \|R_1\|_*^2 - c\alpha^3 |\gamma_2|^2\, .&
\end{flalign*}
Since $\bra \Guab,\, R_1\ket_*=0$, we get
\begin{flalign*}
& Q_8 = 0\, . &
\end{flalign*}
Therefore, collecting all above estimates implies
\begin{equation}\label{eq:rhs-22}
\begin{split}
 & \bra \Proj_{\leq 2} u_\alpha\phi,\, H\, \Proj_{\leq 2} u_\alpha\phi\ket \\
 & \geq \alpha (|\gamma_1|^2 - 2\Re \gamma_1\bar\gamma_0) \|\Guab\|_*^2
 + \alpha^2(|\gamma_2|^2 - 2\Re \gamma_2\bar\gamma_1) \|\Gdab\|_*^2
  -c\alpha^3 |\gamma_2|^2\\
 &  + 2\alpha^2 |\gamma_1|^2
 \| \Aa \Guab\|^2 -\frac{\alpha^2}{4} |\gamma_0|^2
 +\frac12 \|R_1\|_*^2 +\frac12\|R_2\|_*^2 - c\alpha^2|\gamma_0 - \gamma_1|
 - c\alpha^\frac83 \\
 & \geq - \alpha |\gamma_0|^2 \|\Guab\|_*^2 -\alpha^2 |\gamma_1|^2 \|\Gdab\|_*^2
 + 2\alpha^2 |\gamma_1|^2 \|\Aa\Guab\|^2 +\frac12 \|R_1\|_*^2 + \frac12\| R_2\|_*^2 \\
 &  +c|\gamma_2 - \gamma_1|^2 \alpha^2 + c|\gamma_0 -\gamma_1|^2\alpha
 - c\alpha^2|\gamma_0 - \gamma_1| - c\alpha^3 |\gamma_2|^2 -\frac{\alpha^2}{4}|\gamma_0|^2- c\alpha^\frac83\, .
\end{split}
\end{equation}
Since $\| \Proj_0 \G\| \leq \|\Proj_0 \pgs\| =1$, we have $|\gamma_0|\leq 1$ and thus we can write
$$
 -c\alpha^3 |\gamma_2|^2 \geq -c\alpha^3 |\gamma_2\!-\!\gamma_1|^2 - c\alpha^3|\gamma_1|^2
 \geq -c\alpha^3 |\gamma_2\!-\!\gamma_1|^2 - c\alpha^3|\gamma_1\!-\!\gamma_0| -c\alpha^3|\gamma_0|^2\, ,
$$
In additon, using the estimate $\gamma_0-\gamma_1=\mathcal{O}(\alpha^\frac12)$ (see proposition~\ref{proposition:4}) and estimating from below the second term in the right hand side of \eqref{eq:rhs-22} by
$$
 -\alpha^2 |\gamma_0|^2 \|\Gdab\|_*^2 - c\alpha^2|\gamma_0-\gamma_1|\, ,
$$
yields that for some constant $c_1>0$ independent of $\alpha$ we have
\begin{equation}\nonumber
\begin{split}
 & \bra \Proj_{\leq 2} u_\alpha\phi,\, H\, \Proj_{\leq 2} u_\alpha\phi\ket \\
 & \geq - \alpha |\gamma_0|^2 \|\Guab\|_*^2 -\alpha^2 |\gamma_0|^2 \|\Gdab\|_*^2
 + 2\alpha^2 |\gamma_0|^2 \|\Aa\Guab\|^2 -\frac{\alpha^2}{4}|\gamma_0|^2 \\
 & \ \ \ +\frac12 \|R_1\|_*^2 + \frac12\| R_2\|_*^2
  \ \ \ +c_1 \alpha^2 |\gamma_2 - \gamma_1|^2  + c_1 \alpha |\gamma_0 -\gamma_1|^2
 - c\alpha^\frac83\, .
\end{split}
\end{equation}
which concludes the proof.
\end{proof}

\begin{lem}\label{lem:8}
There exist $c>0$ such that
 \begin{equation}\label{eq:statement-a-lem:8}
 \begin{split}
 &   \bra [(P-P_f)^2 -\frac{\alpha}{|x|} + H_f] \Proj_2 \G, \, \Proj_2 \G \ket \\
 &  \geq \frac{1}{2(1+\Lambda)} \alpha^2|\beta_2|^2\, \|\Gd (g)\|_*^2
   + \frac{1}{2(1 + \Lambda)} \|L_2\|_*^2
   +\frac12 \| (P-P_f) \Proj_2 \G \|^2 - c\alpha^3\, .
 \end{split}
 \end{equation}
For $\Lambda := \sup_{\zeta(r)\neq 0} |r|$, where $\zeta$ is the ultraviolet cutoff, if $|\beta_2| < 8(1+\Lambda)$, then for any $\upsilon>0$, there exists $\delta_5$,
$\frac{1}{2(1 + \Lambda)}>\delta_5>0$, $c>0$ and $\delta_6>0$ such that
\begin{equation}\label{eq:statement-b-lem:8}
 \begin{split}
 &  \bra [(P-P_f)^2 -\frac{\alpha}{|x|} + H_f] \Proj_2 \G, \,  \Proj_2 \G \ket \\
 &  \geq \alpha^2|\beta_2|^2\, \|\Gd (g)\|_*^2 + \delta_5 \|L_2\|_*^2 - \frac{\upsilon}{2} \|\nabla g\|^2
   +\delta_6 \|P \L_2\|^2  - c\alpha^3\, .
 \end{split}
 \end{equation}
\end{lem}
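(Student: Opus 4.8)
The proof follows that of Lemma~\ref{lem:2} step by step, with the one-photon objects replaced by their two-photon analogues. Accordingly I would set
$$
 T_2 := \alpha\beta_2\,\Gd(g)\, ,
$$
so that $\Proj_2\G = T_2 + \L_2$ is an $\|\cdot\|_*$-orthogonal splitting (by the defining relation $\bra\L_2,\,\Gd(g)\ket_*=0$) with $\|T_2\|_*^2 = \alpha^2|\beta_2|^2\|\Gd(g)\|_*^2$. The single structural difference from Lemma~\ref{lem:2} is that $T_2$ carries a factor $\alpha$ in place of $\sqrt\alpha$; in particular $\|P T_2\| = \alpha|\beta_2|\,\|\nabla\Gd(g)\| \leq c\,\alpha\|\nabla g\|$ once the gradient bound $\|\nabla\Gd(g)\|\leq c\|\nabla g\|$ (the two-photon counterpart of Lemma~\ref{lem:appendix-6}) is available. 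This extra power of $\alpha$ only makes every cross term smaller, so all error terms stay at $\mathcal{O}(\alpha^3)$.

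To obtain \eqref{eq:statement-a-lem:8} I would split $(P-P_f)^2 = \tfrac12(P-P_f)^2 + \tfrac12(P-P_f)^2$. The photon number bound \eqref{eq:photon-number-estimate-1}, which gives $\|\Proj_2\G\|^2 = \mathcal{O}(\alpha)$, combined with the hydrogenic lower bound $\tfrac12(P-P_f)^2 - \tfrac{\alpha}{|x|}\geq -\tfrac{\alpha^2}{2}$, yields
$$
 \bra(\tfrac12(P-P_f)^2 - \tfrac{\alpha}{|x|})\Proj_2\G,\,\Proj_2\G\ket \geq -\tfrac{\alpha^2}{2}\|\Proj_2\G\|^2 \geq -c\alpha^3\, ,
$$
while the remaining $\tfrac12(P-P_f)^2$ is kept as $\tfrac12\|(P-P_f)\Proj_2\G\|^2$. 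For the field energy, the fixed ultraviolet cutoff gives $H_f + P_f^2 \leq (1+2\Lambda)H_f \leq 2(1+\Lambda)H_f$ on the two-photon sector, hence $\|H_f^\frac12\Proj_2\G\|^2 \geq \tfrac{1}{2(1+\Lambda)}\|\Proj_2\G\|_*^2 = \tfrac{1}{2(1+\Lambda)}(\|T_2\|_*^2 + \|\L_2\|_*^2)$. Adding the three contributions gives \eqref{eq:statement-a-lem:8}.

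For the refined bound \eqref{eq:statement-b-lem:8}, valid when $|\beta_2| < 8(1+\Lambda)$, I would reproduce the two-case argument of Lemma~\ref{lem:2}, this time expanding $(P-P_f)^2 = P^2 - 2P.P_f + P_f^2$ and recombining $P_f^2 + H_f$ exactly into $\|T_2\|_*^2 + \|\L_2\|_*^2$ (which is what upgrades the coefficient of $\|T_2\|_*^2$ from $\tfrac{1}{2(1+\Lambda)}$ to $1$). The cross term $-2\Re\bra P.P_f(T_2+\L_2),(T_2+\L_2)\ket$ is treated using the orthogonality $\bra P.P_f\,T_2,\,T_2\ket = 0$ (the two-photon version of Lemma~\ref{lem:appendix-3}), the strict gap $\|P_f\L_2\|^2 \leq \tfrac{2\Lambda}{1+2\Lambda}\|\L_2\|_*^2 < \|\L_2\|_*^2$ from the cutoff, and $\|PT_2\|\leq c\alpha\|\nabla g\|$. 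If $\|P\L_2\| > \tilde c\|PT_2\|$, a choice of $\epsilon$ depending only on $\Lambda$ (with $\tilde c = 4/\epsilon$) lets the $P^2$-term return a full $\tfrac{\epsilon^2}{16}\|P\L_2\|^2$ at the cost of $-\tfrac{\upsilon}{2}\|\nabla g\|^2$ and absorbable multiples of $\alpha\|\L_2\|_*^2$. If instead $\|P\L_2\|\leq\tilde c\|PT_2\|\leq c\alpha\|\nabla g\|$, the target term $\delta_6\|P\L_2\|^2$ is itself $\mathcal{O}(\alpha^2\|\nabla g\|^2)$ and can be inserted for the negligible price of a further $\|\nabla g\|^2$-contribution, while $(1-\delta_6)P^2 - \tfrac{\alpha}{|x|} \geq -c\alpha^2$ keeps the Coulomb remainder at $-c\alpha^3$. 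Both cases yield \eqref{eq:statement-b-lem:8}.

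The routine variational bookkeeping is identical to Lemma~\ref{lem:2}; the genuine obstacle is to establish the two auxiliary facts for the more complicated two-photon vector $\Gd(g) = -(H_f+P_f^2)^{-1}\bigl(\sigma.\Bc\,\Gu(g) + 2\Ac.P_f\,\Gu(g) + \Ac.\Ac\,g\bigr)$: the reality/symmetry identity $\bra P.P_f\,\Gd(g),\,\Gd(g)\ket = 0$ and the gradient estimate $\|\nabla\Gd(g)\|\leq c\|\nabla g\|$. Both are harder than in the one-photon case because $\Gd(g)$ is assembled from three terms of distinct operator structure; I would prove them, as in the appendix, by noting that $P = i\nabla_x$ acts only through the $g$-dependence, that $\nabla\Gu(g)$ is already controlled by $\nabla g$ (Lemma~\ref{lem:appendix-6}), and that $(H_f+P_f^2)^{-1}$ together with the compactly supported form factors bounds the relevant creation and annihilation operators uniformly.
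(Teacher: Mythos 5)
Your handling of the first inequality \eqref{eq:statement-a-lem:8} coincides with the paper's (half of $(P-P_f)^2$ paired with the Coulomb term and the photon number bound, the other half retained, and $H_f+P_f^2\le(1+2\Lambda)H_f\le 2(1+\Lambda)H_f$ on the two-photon sector), and your overall plan for \eqref{eq:statement-b-lem:8} --- rerun Lemma~\ref{lem:2} with $T_2=\alpha\beta_2\Gd(g)$ in place of $T_1$ --- is also the paper's. But there is a genuine gap at exactly the point you yourself flag as the ``genuine obstacle'': the identity $\bra P.P_f\,\Gd(g),\,\Gd(g)\ket=0$ is \emph{not} a ``two-photon version of Lemma~\ref{lem:appendix-3}''. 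No such statement exists in the paper, and the parity computation that kills the $P_f^{(i)}$ matrix elements of the explicit one-photon vector $\Gu$ does not go through for $\Gd(g)$, which is assembled from the three structurally distinct pieces $\sigma.\Bc\,\Gu(g)$, $2\Ac.P_f\,\Gu(g)$ and $\Ac.\Ac\,g$; the cross terms between these pieces have no reason to vanish. The paper's proof of Lemma~\ref{lem:8} consists precisely in \emph{replacing} that vanishing by a quantitative bound: by Cauchy--Schwarz together with \eqref{eq:app-3} and \eqref{eq:app-6},
\begin{equation*}
 |\bra P.P_f\,\alpha\beta_2\Gd(g),\,\alpha\beta_2\Gd(g)\ket|
 \ \le\ \|P T_2\|\,\|P_f T_2\|\ \le\ c\,\alpha^2|\beta_2|^2\,\|g\|\,\|\nabla g\|\, ,
\end{equation*}
and then the a priori estimate $\|\nabla g\|\le c\alpha$ from Proposition~\ref{proposition:4}, together with $|\beta_2|<8(1+\Lambda)$, makes this $\le c\alpha^3$; this $\mathcal{O}(\alpha^3)$ error is then carried through the analogues of \eqref{eq:star-2} and \eqref{eq:5.4-7.8} in place of the exact zero. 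Note that this is where Proposition~\ref{proposition:4} enters the argument --- an input your write-up never invokes for this purpose, because the (unproven, presumably false) identity would have spared you.

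The repair is cheap and leaves your architecture intact: wherever you used $\bra P.P_f\,T_2,\,T_2\ket=0$, substitute the bound above (alternatively, absorb $c\alpha^2\|\nabla g\|\,\|g\|$ into $\frac{\upsilon}{4}\|\nabla g\|^2+c\alpha^4\|g\|^2$, using $\|g\|\le 1$). Your other two auxiliary facts need no new proof: the gradient bound $\|\nabla\Gd(g)\|\le c\|\nabla g\|$ is exactly \eqref{eq:app-6} of Lemma~\ref{lem:appendix-6}, and $\|\Gd(g)\|\le c\|g\|$, $\|\Gd(g)\|_*\le c\|g\|$ follow from \eqref{eq:app-3-bis} and \eqref{eq:app-3}. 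As written, however, your proof of \eqref{eq:statement-b-lem:8} rests on an identity the paper deliberately avoids, and the symmetry argument you sketch for it would fail on the cross terms.
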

%
%
\begin{proof}
According to \eqref{eq:app-3} and \eqref{eq:app-6}
we have
\begin{equation}\nonumber
\begin{split}
 & |\bra P.P_f \alpha \beta_2 \Gd(g),\, \alpha \beta_2 \Gd(g)\ket|
  \leq c \alpha^2 | \beta_2 |^2 \|g\|\, \|\nabla g\|\, .
\end{split}
\end{equation}
Since $\|\nabla g\| \leq c\alpha$ (Proposition~\ref{proposition:4}) we thus obtain in the case $|\beta_2| \leq 8(1+\Lambda)$
\begin{equation}\label{eq:eq:eq:eq}
 |\bra P.P_f \alpha \beta_2 \Gd(g),\, \alpha \beta_2 \Gd(g)\ket| \leq c \alpha^3\, .
\end{equation}
The rest of the proof follows exactly the proof of Lemma~\ref{lem:2}, with indices $2$ instead of $1$ everywhere, and using \eqref{eq:eq:eq:eq} instead of $\bra P.P_f T_1,\, T_1\ket=0$ in the inequalities \eqref{eq:star-2} and \eqref{eq:5.4-7.8}
\end{proof}


\begin{lem}\label{lem:9}
Pick $\delta_4$ as given in Lemma~\ref{lem:2},
then for all $\epsilon>0$, there exists $\alpha_0>0$ and $c>0$ such that for all $0<\alpha<\alpha_0$,
\begin{equation*}
\begin{split}
 & - 4\Re \sqrt{\alpha} \bra \Proj_{\leq 2} \G,\, P.\Aa \Proj_{\leq 2} \G \ket  \\
 & \geq -\epsilon\|\nabla g\|^2 - \epsilon\|\L_2 \|_*^2 - \epsilon\|L_1\|_*^2 - \epsilon \delta_4 M(\Proj_1 \G)\\
 & \ \ \ - c\alpha^3 |\beta_2|^2 \|g\|^2
 - c\alpha^\frac52 |\beta_2|\, \|g\| -c\alpha^3\, ,
\end{split}
\end{equation*}
where $M(\Proj_1 \G )$ is defined in \eqref{eq:def-M}.
\end{lem}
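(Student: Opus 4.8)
The plan is to expand $\Proj_{\le 2}\G = g + \Proj_1\G + \Proj_2\G$ and exploit that $P\cdot\Aa$ lowers the photon number by one, so that $\langle\Proj_{\le 2}\G,\, P\cdot\Aa\,\Proj_{\le 2}\G\rangle$ collapses to the two surviving pairings $\langle g,\, P\cdot\Aa\,\Proj_1\G\rangle$ and $\langle\Proj_1\G,\, P\cdot\Aa\,\Proj_2\G\rangle$. Into these I would insert the splittings $\Proj_1\G=\sqrt\alpha\beta_1\Gu(g)+L_1$ and $\Proj_2\G=\alpha\beta_2\Gd(g)+L_2$, producing a handful of scalar products whose prefactors carry explicit powers $\alpha^{1/2},\alpha,\alpha^{3/2},\alpha^2$ together with $\beta_1,\beta_2$.

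First I would dispose of the genuinely diagonal contribution: the term $\langle g,\, P\cdot\Aa\,\Gu(g)\rangle=\overline{\langle\Gu(g),\, P\cdot\Ac\, g\rangle}$ vanishes by the appendix orthogonality of Lemma~\ref{lem:appendix-4} (the relevant angular form factor is odd and integrates to zero), so the whole $\sqrt\alpha\beta_1$-part of the first pairing disappears regardless of the size of $\beta_1$. For every remaining scalar product I would move $P=i\nabla_x$ (self-adjoint, commuting with $\Aa$) onto the convenient factor, then estimate by Cauchy--Schwarz with the toolbox $\|\Aa\psi\|\le c\|H_f^{1/2}\psi\|\le c\|\psi\|_*$, together with the appendix bounds $\|\nabla\Gu(g)\|\le c\|\nabla g\|$ and $\|\Gd(g)\|_*,\,\|\Aa\Gd(g)\|\le c\|g\|$. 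A final Young's inequality splits each product into an $\epsilon$-fraction of one of the good norms $\|\nabla g\|^2,\|L_1\|_*^2,\|L_2\|_*^2,M(\Proj_1\G)$ and a remainder; the small prefactors $\sqrt\alpha$ or $\alpha$ ensure that for $\alpha$ small the companion norms themselves demote to $\epsilon$-terms (e.g. $c\sqrt\alpha\|\nabla g\|\|L_1\|_*\le\epsilon\|\nabla g\|^2+\epsilon\|L_1\|_*^2$).

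The crux, and the step I expect to be the main obstacle, is that neither $\beta_1$ nor $\beta_2$ is yet known to be bounded—Proposition~\ref{proposition:4} controls only $|\beta_1|\,\|g\|$—so one cannot Cauchy--Schwarz naively and must keep the coefficients explicit. I would therefore follow the case distinction already used in Lemmata~\ref{lem:2} and~\ref{lem:8}, namely $|\beta_1|<8(1+\Lambda)$ versus $|\beta_1|\ge 8(1+\Lambda)$, precisely because the quantity entering $M(\Proj_1\G)$ switches between $\|PL_1\|^2$ and $\|(P-P_f)\Proj_1\G\|^2$ in these two regimes. When $|\beta_1|$ is bounded the coefficient $\beta_1$ drops out of all estimates directly and $\|PL_1\|^2=M$ is the controlled gradient; when $|\beta_1|$ is large I would keep $\Proj_1\G$ intact and use $\|(P-P_f)\Proj_1\G\|^2=M=\mathcal O(\alpha^2)$ in place of any bound on $\sqrt\alpha\beta_1\nabla\Gu(g)$. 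The terms responsible for the $\beta_2$-dependence are exactly those coupling the one-photon part to the component $\alpha\beta_2\Gd(g)$ of $\Proj_2\G$; arranging the Young weights so that these collapse to $-c\alpha^3|\beta_2|^2\|g\|^2$ and $-c\alpha^{5/2}|\beta_2|\,\|g\|$ (the two allowed forms, one from a genuine Young split against $M$, one from using $\|PL_1\|=\mathcal O(\alpha)$) while the $\beta_2$-free residue stays $\mathcal O(\alpha^3)$ via $\|\nabla g\|,\|L_1\|_*=\mathcal O(\alpha)$ is the delicate bookkeeping that makes the lemma work.
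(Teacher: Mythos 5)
Your plan reproduces the paper's proof in all essentials: the same reduction to the two pairings $\bra \Proj_0\G,\, P.\Aa\,\Proj_1\G\ket$ and $\bra \Proj_1\G,\, P.\Aa\,\Proj_2\G\ket$, the same use of Lemma~\ref{lem:appendix-4} to kill the $\sqrt{\alpha}\beta_1\Gu(g)$ contribution against $g$, the same case split $|\beta_1|<8(1+\Lambda)$ versus $|\beta_1|\geq 8(1+\Lambda)$ matched to the two branches of $M(\Proj_1\G)$ (with $\|P L_1\|$ in the bounded case via $\|\nabla\Gu(g)\|\leq c\|\nabla g\|$, and $(P-P_f)\Proj_1\G$ plus $P_f$-estimates in the large case), and the same key input $\|H_f^{1/2}\Proj_2\G\|=\mathcal{O}(\alpha)$ together with the observation that $|\beta_1-1|\,\|g\|=\mathcal{O}(\alpha^{1/2})$ forces $|\beta_1|\,\|g\|\leq c\alpha^{1/2}$ when $\beta_1$ is large, which is exactly how the paper produces the $-c\alpha^{5/2}|\beta_2|\,\|g\|$ term. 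The only cosmetic deviation is your attribution of the $-c\alpha^3|\beta_2|^2\|g\|^2$ term to a Young split; the paper's own proof never generates it (it is slack in the statement), but this does not affect correctness.
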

%
%
\begin{proof}
We have two terms to estimate in $-4\Re\sqrt{\alpha} \bra P.\Aa \Proj_{\leq 2} \G,\, \Proj_{\leq 2} \G  \ket$. The term $-4\Re \sqrt{\alpha} \bra P.\Aa \Proj_1 \G,\, \Proj_0 \G\ket $ was already estimated in \eqref{eq:diese-5}, and we obtained
\begin{equation}\label{eq:diese-0}
 -4\Re \sqrt{\alpha} \bra P.\Aa \Proj_1 \G,\, \Proj_0 \G\ket
 \geq
 - c\sqrt{\alpha} \|\nabla g\|^2 - c\sqrt{\alpha} \|L_1\|_*^2\, .
\end{equation}

We next bound the term $- 4\Re \sqrt{\alpha} \bra P.\Aa \Proj_2 \G,\, \Proj_1 \G \ket $.

We consider two cases

$\bullet$ If $|\beta_1|\geq 8(1+\Lambda)$, we write
\begin{equation}\label{eq:lem7.9-0}
\begin{split}
 & - 4\Re \sqrt{\alpha} \bra P.\Aa \Proj_2 \G,\, \Proj_1 \G \ket \\
 & = - 4\Re \sqrt{\alpha} \bra (P-P_f).\Aa \Proj_2 \G,\, \Proj_1\G\ket
 - 4 \Re \sqrt{\alpha} \bra P_f.\Aa \Proj_2 G,\, \Proj_1 G\ket \\
 & = -4\Re \sqrt{\alpha} \bra \Aa \Proj_2 G,\, (P-P_f) \Proj_1 G\ket
  - 4\Re \sqrt{\alpha} \bra \Aa \Proj_2 G,\, P_f \L_1\ket \\
 & \ \ \
 - 4\Re\sqrt{\alpha} \bra \Aa \alpha \beta_2 \Gd(g),\, P_f \sqrt{\alpha} \beta_1 \Gu(g) \ket
 - 4\Re\sqrt{\alpha} \bra \Aa L_2,\, P_f\sqrt{\alpha} \beta_1 \Gu(g)\ket \\
 & \geq -c\alpha^\frac12 \|H_f^\frac12 \Proj_2 \G\|\, \|(P-P_f) \Proj_1 \G\|
        -c\alpha^\frac12 \|H_f^\frac12 \Proj_2 \G\|\, \| P_f \L_1\| \\
 & \ \ \ - c\alpha^2 |\beta_2|\, \|g\| \ |\beta_1|\, \|g\| - c\alpha \|\Aa L_2\| \ |\beta_1|\, \|g\|\, .
\end{split}
\end{equation}
First we have, according to Proposition~\ref{proposition:4}
\begin{equation}\label{eq:lem7.9-1}
 \| H_f^\frac12 \Proj_2 \G \| \leq \|H_f^\frac12 \Proj_{\geq 2} \pgs \| \leq c\alpha\, .
\end{equation}
Second, according also to Proposition~\ref{proposition:4}, we have $|\beta_1 -1| \|g\| \leq c \alpha^\frac12$, for $c$ independent of $\beta_1$ and $\alpha$. Since we are in the case where $|\beta_1|\geq 8(1+\Lambda)$, it implies that there exists $c>0$ independent of $\alpha$ and $\beta_1$ such that
\begin{equation}\label{eq:lem7.9-2}
 |\beta_1| \, \|g\| \leq c\alpha^\frac12\, .
\end{equation}
Thus, \eqref{eq:lem7.9-1} and \eqref{eq:lem7.9-2} give with \eqref{eq:lem7.9-0}, that for all $\epsilon>0$, there exists $c$ and $\alpha_0>0$ such that for all $0<\alpha<\alpha_0$ we have
\begin{equation}\label{eq:lem7.9-3}
\begin{split}
 & -4\Re\sqrt{\alpha} \bra P.\Aa \Proj_2\G,\, \Proj_1\G  \\
 & \geq - c\alpha^\frac32 M(\Proj_1 \G) - c\alpha^\frac32 \|L_1\|_*^2
        - c\alpha^\frac52 |\beta_2|\, \|g\| - c\alpha^2 \|L_2\|_*^2 \\
 & \geq - c\alpha^3 - \epsilon \delta_4 M(\Proj_1 \G) - \epsilon \|L_1\|_*^2 - \epsilon\|L_2\|_*^2 - c\alpha^\frac52 |\beta_2|\, \|g\|\, .
\end{split}
\end{equation}

$\bullet$ In the case $| \beta_1| < 8(1+\Lambda)$, we have with \eqref{eq:app-5} of Lemma~\ref{lem:appendix-6}
\begin{equation}\label{eq:lem7.9-4}
\begin{split}
 & -4\sqrt{\alpha}\Re \bra P. \Aa \Proj_2 \G,\, \Proj_1 \G\ket \\
 &
 = - 4\sqrt{\alpha} \Re \bra \Aa \Proj_2 \G, \, P L_1
 \ket
   - 4\sqrt{\alpha} \Re \bra \Aa \Proj_2 \G, \, P \sqrt{\alpha}\beta_1 \Gu(g)\ket \\
 & \geq - c \alpha^\frac12 \|H_f^\frac12 \Proj_2 \G \| \, \|P \L_1\|
        - c \alpha \|H_f^\frac12 \Proj_2 \G \| |\beta_1| \|\nabla g\| \\
\end{split}
\end{equation}
With \eqref{eq:lem7.9-1} and the fact that $|\beta_1|$ is bounded in this case, \eqref{eq:lem7.9-4} implies that for any $\epsilon>0$, there exists $c$ and $\alpha_0>0$ such that for all $0<\alpha<\alpha_0$, we have
\begin{equation}\label{eq:lem7.9-5}
\begin{split}
 & -4\sqrt{\alpha}\Re \bra P. \Aa \Proj_2 \G,\, \Proj_1 \G\ket
 \geq - c\alpha^\frac32 \|P L_1\| - c\alpha^2 \|\nabla g\| \\
 & \geq  - \epsilon \|P L_1\|^2 - c\alpha \|\nabla g\|^2 - c\alpha^3
 \geq  - \epsilon \delta_4 M(\Proj_1\G) - c\alpha \|\nabla g\|^2 - c\alpha^3 \, .
\end{split}
\end{equation}
Summarizing \eqref{eq:lem7.9-3} and \eqref{eq:lem7.9-5} concludes the proof.
\end{proof}

\begin{lem}\label{lem:10}
For all $\epsilon>0$, there exist $\alpha_0>0$ and $c>0$ such that for all $0<\alpha<\alpha_0$
\begin{equation*}
\begin{split}
 & \Re \bra  (4\sqrt{\alpha} P_f.\Aa + 2\alpha (\Aa)^2 + 2\sqrt{\alpha} \sigma.\Ba) \Proj_{\leq 2} \G, \,
  \Proj_{\leq 2} \G \ket \\
 & \geq -2\alpha\Re\beta_1 \|\Gu (g) \|_*^2 - 2\alpha^2\Re \beta_2\bar\beta_1 \|\Gd (g) \|_*^2
 - c\alpha^2 |\beta_2|\, |1-\beta_1|\, \| g \|^2\\
 & - c\alpha^2 |1-\beta_1|^2\, \|g\|^2 - c\alpha^3 |\beta_2|^2 \|g\|^2 -\epsilon \|H_f^\frac12 \L_2 \|^2
 - \epsilon \| \L_1 \|_*^2 - c\alpha^\frac83\, .
\end{split}
\end{equation*}
\end{lem}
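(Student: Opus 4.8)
The statement bounds the quadratic form of the three ``non-particle-conserving and quadratic'' terms of $H$, namely $4\sqrt{\alpha}P_f.\Aa$, $2\alpha(\Aa)^2$ and $2\sqrt{\alpha}\sigma.\Ba$, evaluated on $\Proj_{\leq 2}\G$. The natural strategy is to expand $\Proj_{\leq 2}\G = g + \Proj_1\G + \Proj_2\G$ according to the splittings $\Proj_1\G = \sqrt{\alpha}\beta_1\Gu(g)+L_1$ and $\Proj_2\G = \alpha\beta_2\Gd(g)+\L_2$, and to sort the resulting matrix elements by which photon sectors they connect. Since $P_f.\Aa$ and $\sigma.\Ba$ change the photon number by one while $(\Aa)^2$ changes it by $0$ or $2$, only specific pairs of sectors contribute; the leading contributions should be exactly those that reconstruct the definitions \eqref{def:Gamma-1-g} of $\Gu(g)$ and \eqref{def:Gamma-deux-g} of $\Gd(g)$, which is where the two explicit main terms $-2\alpha\Re\beta_1\|\Gu(g)\|_*^2$ and $-2\alpha^2\Re\beta_2\bar\beta_1\|\Gd(g)\|_*^2$ come from.

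First I would isolate the terms that produce the stated main contributions. The coupling $2\sqrt{\alpha}\Re\sigma.\Ba$ between $\Proj_1\G$ and $g=\Proj_0\G$ gives, exactly as in \eqref{eq:ineq-3}, the term $-2\alpha\Re\beta_1\|\Gu(g)\|_*^2$, using that $\bra L_1,\Gu(g)\ket_*=0$ by \eqref{eq:decomposition-again-6} together with $-(H_f+P_f^2)^{-1}\sigma.\Bc g = \Gu(g)$. Next, the terms $4\sqrt{\alpha}P_f.\Aa + 2\alpha(\Aa)^2 + 2\sqrt{\alpha}\sigma.\Ba$ acting between $\Proj_2\G$ and $\Proj_1\G$ (and between $\Proj_2\G$ and $g$ for the $(\Aa)^2$ piece) combine, after moving the annihilation parts onto $\Proj_2\G$ and invoking the definition \eqref{def:Gamma-deux-g} of $\Gd(g)$, into $-2\alpha^2\Re\beta_2\bar\beta_1\|\Gd(g)\|_*^2$. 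Concretely I would write these as $2\alpha^2\Re\beta_2\bar\beta_1\bra\Gd(g),(H_f+P_f^2)^{-1}(2P_f.\Ac+\sigma.\Bc)\Gu(g)+(H_f+P_f^2)^{-1}\Ac.\Ac\,g\ket_*$ and recognize the bracketed operator as $-(H_f+P_f^2)\Gd(g)$, which yields the claimed sign.

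Second I would control all remaining terms as error. These split into (a) terms where $\Gu(g),\Gd(g)$ are replaced by $L_1$ or $\L_2$, which I bound by Cauchy--Schwarz into $\epsilon\|\L_1\|_*^2$, $\epsilon\|H_f^{1/2}\L_2\|^2$ plus powers of $\alpha$, using the operator bounds $\|\Aa\psi\|,\|\Ba\psi\|\le c\|H_f^{1/2}\psi\|$ from \cite[Lemma~A.4]{GLL} and the field-energy estimate $\|H_f^{1/2}\Proj_2\G\|=\mathcal{O}(\alpha)$ of Proposition~\ref{proposition:4}; and (b) genuine quadratic corrections arising because the main terms are computed with $\beta_1$ and $\beta_2$ rather than with their reference value $1$. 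The latter produce the explicitly stated remainders $c\alpha^2|\beta_2|\,|1-\beta_1|\,\|g\|^2$, $c\alpha^2|1-\beta_1|^2\|g\|^2$ and $c\alpha^3|\beta_2|^2\|g\|^2$, obtained by writing $\beta_1 = 1+(\beta_1-1)$ inside the $\Gd(g)$ cross term and estimating the differences via Lemma~\ref{lem:appendix-6} ($\|\nabla\Gu(g)\|\le c\|\nabla g\|$ and the analogous bound for $\Gd(g)$) and $\|\nabla g\|\le c\alpha$.

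The main obstacle I anticipate is the diagonal self-coupling $2\alpha\Re\bra(\Aa)^2\Proj_1\G,\Proj_1\G\ket$ and the $P_f.\Aa$ term acting within $\Proj_2\G$ (the piece that does \emph{not} lower the photon number to a conserved level), since these are not captured by the $\Gu(g),\Gd(g)$ definitions and must be shown to be of order $\alpha^{8/3}$ or absorbable into the $\epsilon$-terms; here the refined photon-number bound $\|\Proj_2\G\|,\|L_1\|,\|\L_2\|=\mathcal{O}(\alpha^{5/6})$ from Proposition~\ref{IPNE} is essential, as a naive estimate would only give $\mathcal{O}(\alpha^2\cdot\alpha)=\mathcal{O}(\alpha^{5/2})$, too weak to fit the $-c\alpha^{8/3}$ budget without the extra $\alpha^{1/6}$ gained from the $\|g\|=\mathcal{O}(\alpha^{1/3})$ and $\|\Proj_{\geq2}\pgs\|=\mathcal{O}(\alpha^{5/6})$ estimates. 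Keeping careful track of which factor of $\|g\|$ versus $|\beta_i|$ accompanies each term, so that the final error genuinely matches the listed $\|g\|^2$-weighted remainders rather than a crude uniform $\mathcal{O}(\alpha^{8/3})$, will be the delicate bookkeeping.
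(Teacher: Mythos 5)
Your proposal is correct and follows essentially the same route as the paper: the paper's own proof consists of (i) the exact identity
$\Re \bra (4\sqrt{\alpha} P_f.\Aa + 2\alpha (\Aa)^2 + 2\sqrt{\alpha}\sigma.\Ba)\Proj_1\G,\, \Proj_{\leq 2}\G\ket = \Re\bra 2\sqrt{\alpha}\sigma.\Ba \Proj_1\G,\, \Proj_0\G\ket = -2\alpha\Re\beta_1\|\Gu(g)\|_*^2$,
using $\bra \L_1,\, \Gu(g)\ket_*=0$, and (ii) a repetition, for the pairings involving $\Proj_2\G$, of the estimates of the terms $Q_2$, $Q_4$, $Q_5$, $Q_6$ in Lemma~\ref{lem:7} — which is precisely your reconstruction of $-(H_f+P_f^2)\Gd(g)$ from $\sigma.\Bc\Gu(g)+2\Ac.P_f\Gu(g)+\Ac.\Ac g$, the $\beta_1=1+(\beta_1-1)$ bookkeeping producing the $c\alpha^2|\beta_2|\,|1-\beta_1|\,\|g\|^2$ and $c\alpha^2|1-\beta_1|^2\|g\|^2$ remainders, and the Cauchy--Schwarz absorptions into $\epsilon\|\L_1\|_*^2$ and $\epsilon\|H_f^{\frac12}\L_2\|^2$.

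Two caveats. First, the ``main obstacle'' you anticipate does not exist: $\Aa.\Aa\,\Proj_1\G$ lies in the $(-1)$-photon sector and hence vanishes, and the $P_f.\Aa$ pairings you worry about vanish by photon-number counting as well ($P_f.\Aa\,\Proj_1\G$ paired with $g$ is zero since $P_f$ annihilates the zero-photon sector, and $P_f.\Aa\,\Proj_2\G$ paired with $\Proj_2\G$ connects different sectors); the only $(\Aa)^2$ contribution inside $\Proj_{\leq 2}$ is the $\Proj_2$--$\Proj_0$ pairing feeding the $\Gd(g)$ main term. Second, and more importantly, you must not invoke $\|g\|=\mathcal{O}(\alpha^{\frac13})$ here: that bound belongs to Corollary~\ref{Cmain}, which is deduced from Proposition~\ref{prop:main-cmain}, whose proof uses the present lemma, so the citation would be circular. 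Fortunately it is also unnecessary: the admissible inputs are Proposition~\ref{proposition:4} (e.g.\ $\|\nabla g\|=\mathcal{O}(\alpha)$, $(\beta_1-1)\|g\|=\mathcal{O}(\alpha^{\frac12})$, $\|H_f^{\frac12}\Proj_{\geq 2}\pgs\|=\mathcal{O}(\alpha)$, boundedness of $|\beta_1|\,\|g\|$), the trivial bound $\|g\|\leq \|\Proj_0\pgs\|=1$, and Proposition~\ref{IPNE}; indeed the $-c\alpha^{\frac83}$ error arises exactly as in the $Q_5$ estimate from $\alpha\|\L_1\|^2=\mathcal{O}(\alpha^{\frac83})$ via $\|\L_1\|=\mathcal{O}(\alpha^{\frac56})$, so your identification of the refined photon-number bound as the essential input is right, while the extra $\alpha^{\frac16}$ you sought from $\|g\|$ is neither available at this stage nor needed.
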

%
%
\begin{proof}
The estimate of
$$
 \Re \bra (4\sqrt{\alpha} P_f.\Aa + 2\alpha (\Aa)^2 + 2\sqrt{\alpha} \sigma . \Ba) \Proj_2 \G,\, \Proj_{\leq 2} \G\ket
$$
is similar to the estimate of
$$
 \Re \bra (4\sqrt{\alpha} P_f.\Aa + 2\alpha (\Aa)^2 + 2\sqrt{\alpha} \sigma . \Ba) \Proj_2 u\alpha\phi,\,
 \Proj_{\leq 2} u_\alpha\phi\ket
$$
done in the proof of Lemma~\ref{lem:7} (see the terms $Q_2$, $Q_4$, $Q_5$ and $Q_6$ therein), and yields
\begin{equation}\label{eq:136-a}
\begin{split}
 & \Re \bra
 (4\sqrt{\alpha} P_f.\Aa + 2\alpha (\Aa)^2 + 2\sqrt{\alpha} \sigma.\Ba) \Proj_2 \G,\, \Proj_{\leq 2} \G \ket \\
 & \geq - 2\alpha^2\Re \beta_2 \bar\beta_1 \|\Gd (g) \|_*^2
 - c\alpha^2 |\beta_2|\, |1-\beta_1|\, \| g \|^2\\
 & - c\alpha^2 |1-\beta_1|^2\, \|g\|^2 - c\alpha^3 |\beta_2|^2 \|g\|^2 -\epsilon \|H_f^\frac12 \L_2 \|^2
 - \epsilon \| \L_1 \|_*^2 - c\alpha^\frac83\, .
\end{split}
\end{equation}

We also have
\begin{equation}\label{eq:137-b}
\begin{split}
 & \Re \bra (4\sqrt{\alpha} P_f.\Aa + 2\alpha (\Aa)^2 + 2\sqrt{\alpha} \sigma .\Ba)\Proj_1 \G,\, \Proj_{\leq 2} \G\ket \\
 & = \Re \bra 2\sqrt{\alpha} \sigma . \Ba \Proj_1 \G,\, \Proj_0 \G\ket
  = -2\alpha \Re\beta_1 \|\Gu(g)\|_*^2\, ,
\end{split}
\end{equation}
using for the last equality $\bra \L_1,\, \Gu(g)\ket_*=0$.

Inequalities \eqref{eq:136-a} and \eqref{eq:137-b} allow to conclude the proof.
\end{proof}

In Lemmata~\ref{lem:8}, \ref{lem:9} and \ref{lem:10}, we estimated all terms occurring in $\bra \Proj_{\leq 2}\G ,\, H \Proj_{\leq 2} \G \ket$. We can therefore write the following result
\begin{lem}\label{lem:11} Let $\delta_3$, $\delta_4$ be as in Lemma~\ref{lem:2} and $\delta_5$
be as in Lemma~\ref{lem:8}. There exist $\delta_1>0$, $\delta_2>0$, such that for all $\epsilon>0$ small enough,
there exist $\alpha_0>0$, $c_0>0$, $c>0$ such that for all $0<\alpha<\alpha_0$ we have
\begin{equation*}
\begin{split}
 & \bra \Proj_{ \leq 2} \G ,\, H \Proj_{ \leq 2} \G \ket \\
 & \geq -\alpha \|\Gu (g)\|_*^2 + 2\alpha^2 \|\Aa\Gu (g)\|^2 - \alpha^2 \| \Gd (g)\|_*^2 -\frac{\alpha^2}{4}
 \| g\|^2 \\
 & + \alpha^2 c_0 |1-\beta_2|^2 \|g\|^2 + \alpha c_0 |1 - \beta_1|^2 \|g\|^2 + \frac{\delta_3}{2} \|L_1\|_*^2 + \frac{\delta_5}{2} \|L_2\|_*^2 \\
 & + \frac{\delta_1}{2} \|\nabla g\|^2 + \frac{\delta_2}{2} \alpha^2 \|g\|^2
 + (1-\epsilon) \delta_4 M(\Proj_1 \G)
 -c\alpha^\frac83\, ,
\end{split}
\end{equation*}
where $M(\Proj_1 \G)$ is defined
in \eqref{eq:def-M}.
\end{lem}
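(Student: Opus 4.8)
The plan is to expand $\bra \Proj_{\leq 2}\G,\, H\,\Proj_{\leq 2}\G\ket$ by means of \eqref{expand-H} and to group the terms exactly as in the proof of Lemma~\ref{lem:3}, extending to the two-photon sector the estimate of $\bra H\Proj_{\leq 1}\G,\,\Proj_{\leq1}\G\ket$ carried out there. Writing $\Proj_{\leq 2}\G = g + \Proj_1\G + \Proj_2\G$ with $\Proj_1\G = \sqrt{\alpha}\beta_1\Gu(g)+L_1$ and $\Proj_2\G = \alpha\beta_2\Gd(g)+L_2$, I split $H$ into (i) the number-conserving free operator $(P-P_f)^2 - \frac{\alpha}{|x|}+H_f$; (ii) the number-conserving quartic piece $2\alpha\Ac\Aa$; (iii) the off-diagonal coupling $-4\alpha^\frac12\Re P.\Aa$; and (iv) the remaining coupling $4\alpha^\frac12 P_f.\Aa + 2\alpha(\Re\Aa)^2 + 2\alpha^\frac12\Re\sigma.\Ba$. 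Groups (iii) and (iv) are already estimated in Lemma~\ref{lem:9} and Lemma~\ref{lem:10}, so the task reduces to treating (i) and (ii) and then assembling all four with control of the cross-errors.

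For the free operator (i), I would use that it preserves the photon number to reduce its quadratic form to the sum over the three sectors. The sector $n=0$ is $\bra g,(-\Delta-\frac{\alpha}{|x|})g\ket$; since $g\perp u_\alpha$, I invoke the sharpened form of \eqref{eq:ineq-1}, which exploits the spectral gap of $-\Delta-\frac{\alpha}{|x|}$ above its ground state to give $\bra g,(-\Delta-\frac{\alpha}{|x|})g\ket \geq -\frac{\alpha^2}{4}\|g\|^2 + \delta_1\|\nabla g\|^2 + \frac{\delta_2}{2}\alpha^2\|g\|^2$. The sector $n=1$ is bounded below by Lemma~\ref{lem:2}, using \eqref{eq:itc} if $|\beta_1|\geq 8(1+\Lambda)$ and \eqref{eq:itc-2} if $|\beta_1|< 8(1+\Lambda)$, producing the positive $\|\Gu(g)\|_*^2$, together with $\delta_3\|L_1\|_*^2$ and the coercivity feeding $M(\Proj_1\G)$ through $\frac12\|(P-P_f)\Proj_1\G\|^2\geq\delta_4 M(\Proj_1\G)$ resp. $\delta_4\|P L_1\|^2$. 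The sector $n=2$ is bounded below by Lemma~\ref{lem:8} (via \eqref{eq:statement-a-lem:8} or \eqref{eq:statement-b-lem:8} according to $|\beta_2|$), producing the positive $\|\Gd(g)\|_*^2$, $\delta_5\|L_2\|_*^2$ and $\|P L_2\|^2$. For (ii), the operator $2\alpha\Ac\Aa\geq0$ is dropped on the sectors $n\geq2$ and kept on $\Proj_1\G$, where $2\alpha\|\Aa\Proj_1\G\|^2 \geq 2\alpha^2|\beta_1|^2\|\Aa\Gu(g)\|^2 + \mathcal{O}(\alpha^\frac83)$, exactly as the term $Q_7$ in Lemma~\ref{lem:7}.

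The decisive step is the combination of the positive free contributions $\frac{1}{2(1+\Lambda)}\alpha|\beta_1|^2\|\Gu(g)\|_*^2$ and $\frac{1}{2(1+\Lambda)}\alpha^2|\beta_2|^2\|\Gd(g)\|_*^2$ with the a priori large negative terms $-2\alpha\Re\beta_1\|\Gu(g)\|_*^2$ and $-2\alpha^2\Re\beta_2\bar\beta_1\|\Gd(g)\|_*^2$ supplied by Lemma~\ref{lem:10}. Using the elementary inequalities behind \eqref{eq:star-12.1-geq} and \eqref{eq:star-12.1-<}, namely $|\beta|^2 - 2\Re\beta = -1 + |1-\beta|^2$ in the regime $|\beta|< 8(1+\Lambda)$ and $\frac{1}{2(1+\Lambda)}|\beta|^2 - 2\Re\beta \geq \frac{1}{4(1+\Lambda)}|\beta|^2$ in the regime $|\beta|\geq 8(1+\Lambda)$, these collapse to the self-energy pieces $-\alpha\|\Gu(g)\|_*^2 - \alpha^2\|\Gd(g)\|_*^2$ plus coercive remainders $c_0\alpha|1-\beta_1|^2\|g\|^2$ and $c_0\alpha^2|1-\beta_2|^2\|g\|^2$, after replacing $\|\Gu(g)\|_*^2$ and $\|\Gd(g)\|_*^2$ by multiples of $\|g\|^2$ through \eqref{eq:app-2} of Lemma~\ref{lem:appendix-6}. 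The spurious factor $\bar\beta_1$ in the $\Gd(g)$ term is reduced to $1$ at the cost of $\mathcal{O}(\alpha^2|\beta_2|\,|1-\beta_1|\,\|g\|^2)$, which is already isolated in Lemma~\ref{lem:10}.

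The only genuinely delicate point, and the main obstacle, is to verify that every remainder collected along the way is bounded by $c\alpha^\frac83$. Concretely, the error $2\alpha^2(|\beta_1|^2-1)\|\Aa\Gu(g)\|^2$ needed to pass from $2\alpha^2|\beta_1|^2$ to the stated $2\alpha^2\|\Aa\Gu(g)\|^2$ is nonnegative, hence droppable, when $|\beta_1|\geq 8(1+\Lambda)$, and when $|\beta_1|< 8(1+\Lambda)$ it is absorbed by a Young inequality into the coercive term $c_0\alpha|1-\beta_1|^2\|g\|^2$ plus an $\mathcal{O}(\alpha^3)$ error; the cross-errors of Lemmata~\ref{lem:9} and~\ref{lem:10}, of the form $\alpha^\frac52|\beta_2|\,\|g\|$ and $\alpha^3|\beta_2|^2\|g\|^2$, are absorbed in the same way into $\frac{\delta_2}{2}\alpha^2\|g\|^2$, $c_0\alpha^2|1-\beta_2|^2\|g\|^2$ and $\mathcal{O}(\alpha^\frac83)$; and the $\epsilon$-small negative multiples of $\|\nabla g\|^2$, $\|L_1\|_*^2$, $\|L_2\|_*^2$ and $M(\Proj_1\G)$ emitted by Lemmata~\ref{lem:9} and~\ref{lem:10} are dominated, for $\epsilon$ small, by the fixed-coefficient positive copies coming from the free operator, leaving the coefficients $\frac{\delta_1}{2}$, $\frac{\delta_3}{2}$, $\frac{\delta_5}{2}$ and $(1-\epsilon)\delta_4$ of the statement. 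Collecting the surviving positive quantities then yields precisely the claimed inequality.
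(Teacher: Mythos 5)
Your proposal is correct and follows essentially the same route as the paper's proof: the same splitting of $H$ into the number-conserving free part (estimated sectorwise via the gap inequality for $g\perp u_\alpha$, Lemma~\ref{lem:2} for $\Proj_1\G$ and Lemma~\ref{lem:8} for $\Proj_2\G$), the quartic term treated as in $Q_7$, and Lemmata~\ref{lem:9} and~\ref{lem:10} for the couplings. Your case analysis in $|\beta_1|,|\beta_2|$ via the inequalities behind \eqref{eq:star-12.1-geq}--\eqref{eq:star-12.1-<}, the replacement of $\bar\beta_1$ at cost $\alpha^2|\beta_2|\,|1-\beta_1|\,\|g\|^2$, and the absorption of all remainders (using $|1-\beta_1|\,\|g\|\leq c\alpha^{1/2}$ and the $\epsilon$-domination of the small negative multiples by the fixed positive coefficients) reproduce exactly the paper's steps \eqref{eq:diese-7}--\eqref{eq:diese-14}.
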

%
%
\begin{proof}
We first write
\begin{equation}\label{eq:diese-7}
\begin{split}
 & \bra H \Proj_{\leq 2} \G,\, \Proj_{\leq 2}\G \ket \\
 & =  \bra H \Proj_0 \G,\, \Proj_0 \G\ket
 + \bra H \Proj_1 \G,\, \Proj_1 \G\ket + \bra H \Proj_2 \G,\, \Proj_2 \G\ket
  - 4\sqrt{\alpha}\Re \bra P.\Aa \Proj_{\leq 2} \G,\,\Proj_{\leq 2}\G \ket \\
 & + \Re \bra 4\sqrt{\alpha} P_f.\Aa + 2\alpha (\Aa)^2 +2\sqrt{\alpha} \sigma.\Ba) \Proj_{\leq 2} \G,\, \Proj_{\leq 2} \G\ket \, .
\end{split}
\end{equation}
Since $g\perp u_\alpha$, there exists $\delta_1>0$ and $\delta_2>0$ such that
\begin{equation}\label{eq:diese-8}
  \bra H \Proj_0 \G,\, \Proj_0 \G\ket = \bra (-\Delta -\frac{\alpha}{|x|}) g,\, g\ket
  \geq -\frac{\alpha^2}{4} \| g\|^2 + \delta_1 \|\nabla g\|^2 + \delta_2 \alpha^2 \| g\|^2 \, .
\end{equation}

Before we proceed further, we introduce two notations. Let $d$ and $\tilde{d}$ be defined by
$$
 d =
 \left\{
  \begin{array}{ll}
    1 & \mbox{if } |\beta_1| < 8(1+\Lambda) \\
    \frac{1}{2(1+\Lambda)} & \mbox{if } |\beta_1|\geq 8(1+\Lambda)
  \end{array}
 \right.\, ,
$$
and
$$
\tilde{d} =
 \left\{
  \begin{array}{ll}
    1 & \mbox{if } |\beta_2| < 8(1+\Lambda) \\
    \frac{1}{2(1+\Lambda)} & \mbox{if } |\beta_2|\geq 8(1+\Lambda)
  \end{array}
 \right.\, .
$$
With these definitions the statements \eqref{eq:itc}-\eqref{eq:itc-2} of Lemma~\ref{lem:2}
can be summarized as: for any $\xi>0$, there exists $\delta_3\in (0,\frac{1}{2(1+\Lambda)})$,
$\delta_4\in(0,\frac12)$ and $c>0$ such that
\begin{equation}\label{added:eq:statement-lem:1}
\begin{split}
 & \bra [(P-P_f)^2 -\frac{\alpha}{|x|} + H_f]\Proj_1 \G,\, \Proj_1\G \ket\\
 & \geq d\alpha |\beta_1|^2 \|\Gu(g)\|_*^2 + \delta_3\|\L_1\|_*^2
 - \frac{\xi}{4}\| \nabla g\|^2 + \delta_4 M(\Proj_1 \G) -c\alpha^3
\end{split}
\end{equation}
and the statements \eqref{eq:statement-a-lem:8}-\eqref{eq:statement-b-lem:8} of Lemma~\ref{lem:8}
can be summarized as: for any $\upsilon>0$, there exists $\delta_5\in (0,\frac{1}{2(1+\Lambda)})$ and $c>0$ such that
\begin{equation}\label{added:eq:statement-lem:2}
\begin{split}
 & \bra [(P-P_f)^2 -\frac{\alpha}{|x|} + H_f]\Proj_2 \G,\, \Proj_2\G \ket\\
 & \geq \tilde{d}\alpha^2 |\beta_2|^2 \|\Gd(g)\|_*^2 + \delta_5\|\L_2\|_*^2
 - \frac{\upsilon}{4}\| \nabla g\|^2 -c\alpha^3
\end{split}
\end{equation}

To estimate the second, third, fourth and fifth term in \eqref{eq:diese-7}, we use respectively \eqref{added:eq:statement-lem:1}, \eqref{added:eq:statement-lem:2}, Lemma~\ref{lem:9} and Lemma~\ref{lem:10}. Together with \eqref{eq:diese-8}, this implies, for $\xi$ and $\upsilon$ small enough with respect to $\delta_1$
\begin{equation}\label{eq:diese-9}
\begin{split}
 & \bra H \Proj_{\leq 2} \G,\, \Proj_{\leq 2}\G \ket \\
 & \geq
 \alpha^2 \tilde{d} |\beta_2|^2 \|\Gd(g)\|_*^2
 + \frac{\delta_3}{2} \|\L_1\|_*^2 +\frac{\delta_5}{2} \|\L_2 \|_*^2
 + (1-\epsilon) \delta_4 M(\Proj_1 \G) \\
 &  -\frac{\alpha^2}{4} \|g\|^2
 +\frac{\delta_1}{2} \|\nabla g\|^2 +\delta_2 \alpha^2 \|g\|^2
 + d \alpha|\beta_1|^2 \|\Gu(g)\|_*^2 - c\alpha^\frac52 |\beta_2| \|g\| \\
 & + 2\alpha^2 |\beta_1|^2 \|\Aa \Gu(g)\|^2 - 2\alpha\Re \beta_1 \|\Gu(g)\|_*^2 - 2\alpha^2 \Re \beta_2 \bar\beta_1 \|\Gd(g)\|_*^2 \\
 & - c\alpha^2 |\beta_2|\, |1-\beta_1|\, \|g\|^2 - c\alpha^2 |1-\beta_1|^2 \|g\|^2 - c\alpha^3 |\beta_2|^2 \|g\|^2 - c\alpha^\frac83\, .
\end{split}
\end{equation}
The sum of the terms $d \alpha|\beta_1|^2 \|\Gu(g)\|_*^2$ and $- 2\alpha\Re \beta_1 \|\Gu(g)\|_*^2$ in \eqref{eq:diese-9} are estimated as
\begin{equation}\label{eq:diese-10}
 d \alpha|\beta_1|^2 \|\Gu(g)\|_*^2 - 2\alpha\Re \beta_1 \|\Gu(g)\|_*^2 \geq -\alpha \|\Gu(g)\|_*^2
 + c_0 \alpha |1-\beta_1|^2 \|g\|^2
\end{equation}

The term $ 2\alpha^2 |\beta_1|^2 \|\Aa \Gu(g)\|^2$ in \eqref{eq:diese-9} equals
\begin{equation}\nonumber
 2\alpha^2 |\beta_1|^2 \|\Aa \Gu(g)\|^2 = 2\alpha^2 \|\Aa \Gu(g)\|^2 + 2\alpha^2 (|\beta_1|^2 -1) \|\Aa \Gu(g)\|^2\, .
\end{equation}
If $|\beta_1|>1$, this term is larger than $2\alpha^2 \|\Aa(g)\|^2$,
and if $|\beta_1| \leq 1$, it is bounded below by
\begin{equation}\nonumber
\begin{split}
 2\alpha^2 |\beta_1|^2 \|\Aa \Gu(g)\|^2
 & \geq 2\alpha^2 \|\Aa \Gu(g)\|^2 - c|1-\beta_1|\alpha^2 \|g\|^2 \\
 & \geq  2\alpha^2 \|\Aa \Gu(g)\|^2  -\epsilon\alpha^2\|g\|^2 - c\alpha^2 |1-\beta_1|^2 \|g\|^2 \\
 & \geq  2\alpha^2 \|\Aa \Gu(g)\|^2  -\epsilon\alpha^2\|g\|^2 - c\alpha^3\, ,
\end{split}
\end{equation}
using from Proposition~\ref{proposition:4} that $|1-\beta_1|\, \|g\| < c\alpha^\frac12$.

Thus one always have
\begin{equation}\label{eq:diese-11}
\begin{split}
 2\alpha^2 |\beta_1|^2 \|\Aa \Gu(g)\|^2
 & \geq  2\alpha^2 \|\Aa \Gu(g)\|^2  -\epsilon\alpha^2\|g\|^2 - c\alpha^3 \, .
\end{split}
\end{equation}

Since $|1-\beta_1|\, \|g\| < c\alpha^\frac12$, for the term $- c\alpha^2 |\beta_2|\, |1-\beta_1|\, \|g\|^2$ in \eqref{eq:diese-9} we have
\begin{equation}\label{eq:diese-12}
- c\alpha^2 |\beta_2|\, |1-\beta_1|\, \|g\|^2  \geq -c\alpha^\frac52 |\beta_2|\, \|g\|
\end{equation}

We next estimate the terms $\alpha^2 \tilde{d} |\beta_2|^2 \|\Gd(g)\|_*^2$
and  $- 2\alpha^2 \Re \beta_2 \bar\beta_1 \|\Gd(g)\|_*^2$ in
\eqref{eq:diese-9} with similar arguments as in \eqref{eq:star-12.1-geq}
for $|\beta_2|\geq 8(1+\Lambda)$ and as in \eqref{eq:star-12.1-<}
for $|\beta_2|\ < 8(1+\Lambda)$. We obtain

\begin{equation}\label{eq:diese-13}
\begin{split}
&  \alpha^2 \tilde{d} |\beta_2|^2 \|\Gd(g)\|_*^2 - 2\alpha^2 \Re \beta_2 \bar\beta_1 \|\Gd(g)\|_*^2\\
& = \alpha^2 \tilde{d} |\beta_2|^2 \|\Gd(g)\|_*^2
- 2\alpha^2 \Re\beta_2 \|\Gd(g)\|_*^2 + 2\alpha^2 \Re \beta_2 (\beta_1 -1) \|\Gd(g)\|_*^2
 \\
& \geq -\alpha^2 \|\Gd(g)\|_*^2 + c_0 \alpha^2  |1-\beta_2|^2 \|g\|^2 - c\alpha^2 |\beta_2|\, |1-\beta_1|\, \|g\|^2 \\
& \geq -\alpha^2 \|\Gd(g)\|_*^2 + c_0 \alpha^2  |1-\beta_2|^2 \|g\|^2 - c\alpha^\frac52 |\beta_2|\, \|g\|\, .
\end{split}
\end{equation}
Using Inequalities \eqref{eq:diese-12} and \eqref{eq:diese-13} we thus obtain
\begin{equation}\label{eq:diese-14}
\begin{split}
& \alpha^2 \tilde{d} |\beta_2|^2 \|\Gd(g)\|_*^2- 2\alpha^2 \Re \beta_2 \bar\beta_1 \|\Gd(g)\|_*^2 + \frac{\delta_2}{2} \|g\|^2
 \\
  & - c\alpha^2 |\beta_2| |1\!-\!\beta_1|\, \|g\|^2
-c\alpha^3 |\beta_2|^2 \|g\|^2 \\
& \geq -\alpha^2 \|\Gd(g)\|_*^2 + c_0\alpha^2 |1-\beta_2|^2 \|g\|^2 + \frac{\delta_2}{2} \|g\|^2 -c\alpha^\frac52 |\beta_2| \|g\| -c\alpha^3 |\beta_2|^2 \|g\|^2 \\
& \geq -\alpha^2 \|\Gd(g)\|_*^2 + \frac{c_0}{2} \alpha^2 |1-\beta_2|^2 \|g\|^2 \, .
\end{split}
\end{equation}

Collecting inequalities \eqref{eq:diese-9}, \eqref{eq:diese-10}, \eqref{eq:diese-11} and \eqref{eq:diese-14} concludes the proof.
\end{proof}

\begin{lem}\label{lem:12}
For all $\epsilon>0$, there exist $\alpha_0>0$ and $c>0$ such that for all $0 < \alpha < \alpha_0$,
 \begin{equation*}
  2\Re \bra \Proj_{\leq 2} \G,\, H \Proj_{\leq 2} u_\alpha\phi \ket
  \geq -\epsilon \| R_1 \|_*^2 - \epsilon \| \L_1 \|_*^2 -c\alpha^3 \, .
 \end{equation*}
\end{lem}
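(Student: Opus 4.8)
The plan is to substitute the decompositions \eqref{def:Guab-0}--\eqref{def:pi-2-phi-3} of $\Proj_{\leq 2}u_\alpha\phi$ together with the corresponding splitting of $\Proj_{\leq 2}\G$ into the expansion \eqref{expand-H} of $H$, and to exploit systematically the orthogonality of $\G$ to $u_\alpha$. Since $\phi$ carries no $x$-dependence, every term of $H$ that does not contain the electron momentum $P=i\nabla_x$ maps $u_\alpha\Proj_m\phi$ to $u_\alpha$ times a Fock vector, which is orthogonal to each $\Proj_n\G$ by Definition~\ref{def:def-orthogonality}. This annihilates the contributions of $(H_f+P_f^2)$, of $4\alpha^\frac12 P_f.\Aa$, of $2\alpha\Ac\Aa$, of $2\alpha(\Re\Aa)^2$ and of the Zeeman term $2\alpha^\frac12\Re\sigma.\Ba$; moreover $-\Delta-\tfrac{\alpha}{|x|}$ acts on $u_\alpha$ as multiplication by $-\alpha^2/4$ and is killed as well. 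Hence only the two momentum-coupling terms $-2P.P_f$ and $-4\alpha^\frac12\Re P.\Aa$ survive, precisely because they replace $u_\alpha$ by $\nabla u_\alpha$, which is not orthogonal to $\G$.

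Next I would show that the leading pairings vanish by the angular symmetry of the photon integrals. The dominant parts of the surviving terms couple the profiles $\sqrt\alpha\beta_1\Gu(g)$, $\alpha\beta_2\Gd(g)$ of $\Proj_{\leq 2}\G$ with $\gamma_0\mab\vac$, $\sqrt\alpha\gamma_1\Guab$, $\alpha\gamma_2\Gdab$ of $\Proj_{\leq 2}u_\alpha\phi$. All such pairings reduce to radial integrals multiplied by an odd factor $k_j$, hence vanish; concretely $\Aa\Guab=0$, $\bra\Gu(g),P_{f,j}\Guab\ket=0$ and $\sigma.\Ba(H_f+P_f^2)^{-1}\Ac\mab\vac=0$, which are exactly the identities recorded in Lemma~\ref{lem:appendix-1} and Lemma~\ref{lem:appendix-4} and already used in Lemmata~\ref{lem:7}, \ref{lem:9} and \ref{lem:10}. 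After this cancellation only cross terms containing at least one remainder among $R_1,L_1,R_2,L_2$ or a factor $\nabla g$ remain.

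These remainder terms I would estimate by the device of routing every photon derivative, creation or annihilation operator onto the remainder, so as to produce the strong norms $\|\cdot\|_*$ (or $\|H_f^\frac12\cdot\|$) while retaining the small factor $\|\nabla u_\alpha\|=\mathcal{O}(\alpha)$. Thus $\|P_f L_1\|\le\|L_1\|_*$, $\|\Aa\psi\|\le c\|H_f^\frac12\psi\|$ from \cite[Lemma~A.4]{GLL}, and the dual profiles $(H_f+P_f^2)^{-\frac12}(\Ac)_j\mab\vac$ have finite norm by the three-dimensional infrared convergence. Each one-photon remainder term then takes the form $c\alpha^\frac32\|R_1\|_*$ or $c\alpha^\frac32\|L_1\|_*$ and is absorbed by Young's inequality into $\epsilon\|R_1\|_*^2+\epsilon\|L_1\|_*^2+c\alpha^3$, while the two-photon remainder terms carry an extra power of $\alpha$ and are bounded directly by $c\alpha^3$ using $\|H_f^\frac12\Proj_2\G\|,\ \|H_f^\frac12\Proj_2 u_\alpha\phi\|=\mathcal{O}(\alpha)$ from Proposition~\ref{proposition:4}. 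For the terms in which $\nabla u_\alpha$ pairs with $\Gu(g)$ or $L_1$ one distinguishes, as in Lemma~\ref{lem:9}, the regimes $|\beta_1|<8(1+\Lambda)$ and $|\beta_1|\ge 8(1+\Lambda)$, invoking $|\beta_1|\,\|g\|=\mathcal{O}(1)$, $(\beta_1-1)\|g\|=\mathcal{O}(\alpha^\frac12)$ and $M(\Proj_1\G)=\mathcal{O}(\alpha^2)$ to control $\|P L_1\|$ or $\|(P-P_f)\Proj_1\G\|$.

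The main obstacle will be the bookkeeping: one must verify that every leading pairing truly cancels by the symmetry identities and that each surviving remainder is routed so that the derivative lands on the remainder, producing a $\|\cdot\|_*$-norm of $R_1$ or $L_1$ \emph{alone} --- the right-hand side tolerates only $\epsilon\|R_1\|_*^2$ and $\epsilon\|L_1\|_*^2$, and neither $\|R_2\|_*^2$, $\|L_2\|_*^2$ nor $\|\nabla g\|^2$. The genuinely delicate points are the coupling of $\nabla u_\alpha$ to $L_1$ in the large-$\beta_1$ regime, where only $\|(P-P_f)\Proj_1\G\|$ rather than $\|P L_1\|$ is available, and the two-photon remainders, where the refined bound $\mathcal{O}(\alpha^\frac56)$ of Proposition~\ref{IPNE} is too weak and one must fall back on the energy bound $\|H_f^\frac12\Proj_2\pgs\|=\mathcal{O}(\alpha)$.
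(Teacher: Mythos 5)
Your proposal is correct and follows essentially the same route as the paper's proof: the orthogonality of $\G$ to $u_\alpha$ leaves only the $-2P.P_f$ and $-4\sqrt{\alpha}\Re P.\Aa$ terms, the identities $\bra P.P_f\Guab,\,\Gu(g)\ket=0$ (Lemma~\ref{lem:appendix-3}) and $(\Aa)^{(i)}\Guab=0$ (equivalently Lemma~\ref{lem:appendix-4}) cancel the one-photon profile pairings, and everything else is Cauchy--Schwarz with $\|Pu_\alpha\|=\mathcal{O}(\alpha)$, the bounds $|\beta_1|\,\|g\|=\mathcal{O}(1)$ and $\|H_f^{\frac12}\Proj_2\pgs\|=\mathcal{O}(\alpha)$ from Proposition~\ref{proposition:4}, plus Young's inequality, exactly as in \eqref{eq:diese-15}--\eqref{eq:diese-20}. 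Only note that the paper never decomposes the two-photon sectors here, so no symmetry claim for the $\Gd(g)$- and $\Gdab$-pairings and no control of $\beta_2,\gamma_2$ (which is unavailable at this stage) is required --- your wholesale $H_f^{\frac12}$ fallback is the actual argument --- and no case distinction in $\beta_1$ or appeal to $M(\Proj_1\G)$ is needed, since after integrating $P$ by parts onto $u_\alpha$ only $P_f$ and $\Aa$ ever act on the $\G$-side.
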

%
%
\begin{proof}
From Proposition~\ref{proposition:4}, we get
\begin{equation}\label{eq:diese-15}
\Re \bra P.P_f \Proj_2 u_\alpha\phi,\, \Proj_2 \G\ket \geq -c \|P u_\alpha\|\, \|H_f^\frac12 \Proj_2\phi\|\, \|H_f^\frac12\Proj_2 \G\| \geq -c\alpha^3\, .
\end{equation}

Since $\bra P.P_f \Guab,\, \Gu(g)\ket = 0$ (see Lemma~\ref{lem:appendix-3}), and using again Proposition~\ref{proposition:4},  we have
\begin{equation}\label{eq:diese-16}
\begin{split}
 & 4 \Re \bra P.P_f \Proj_1 u_\alpha\phi,\, \Proj_1 \G\ket \\
 & =
 4\Re \bra P.P_f (\sqrt{\alpha}\gamma_1 \Guab u_\alpha + R_1 u_\alpha),\, \sqrt{\alpha} \beta_1 \Gu(g) +\L_1\ket \\
 & \geq -c\alpha \|H_f^\frac12 R_1\| \sqrt{\alpha} |\beta_1| \,\|g\| - c\alpha \|H_f^\frac12 R_1\|\, \|H_f^\frac12 \L_1\| - c\alpha^\frac32 |\gamma_1| \|H_f^\frac12 \L_1\| \\
 & \geq - \epsilon \|R_1\|_*^2 - c\alpha^3 |\beta_1|^2 \|g\|^2 - c\alpha^3 - \epsilon \|\L_1\|_*^2  \\
 & \geq -\epsilon\|R_1\|_*^2 - \epsilon\| \L_1\|_*^2 - c\alpha^3\, .
\end{split}
\end{equation}
With the photon number bound in Proposition~\ref{Nf-exp-lemma-1} and Proposition~\ref{proposition:4}, we obtain
\begin{equation}\label{eq:diese-17}
 -4\sqrt{\alpha}\Re \bra P.\Aa \Proj_2 u_\alpha \phi,\, \Proj_1 \G \ket \geq - c\sqrt{\alpha} \|P u_\alpha\|\, \|\Aa\Proj_2 \phi\|\, \|\Proj_1 \G\| \geq -c\alpha^3\, .
\end{equation}
Similarly we have
\begin{equation}\label{eq:diese-18}
 -4\sqrt{\alpha}\Re \bra P.\Aa \Proj_2 \G ,\, \Proj_1 u_\alpha \phi \ket \geq - c\sqrt{\alpha} \|P u_\alpha\|\, \|\Aa\Proj_2 \G \|\, \|\Proj_1 \phi\| \geq -c\alpha^3\, .
\end{equation}
Using $\bra \Guab, \, P.\Ac g\ket =0$ from Lemma~\ref{lem:appendix-4}, boundedness of $\gamma_1$ from Proposition~\ref{proposition:4}, and $\|g\|\leq 1$ implies
\begin{equation}\label{eq:diese-19}
\begin{split}
 &-4\sqrt{\alpha}\Re \bra P.\Aa (\sqrt{\alpha} \gamma_1 \Guab  + R_1)u_\alpha,\, g\ket \\
 & \geq -c \sqrt{\alpha} \|P u_\alpha\|\, \|H_f^\frac12 R_1\|\, \|g\|
 \geq -c\alpha^3 -\epsilon \|H_f^\frac12 R_1\|^2\, .
\end{split}
\end{equation}
Eventually, with similar arguments we have
\begin{equation}\label{eq:diese-20}
\begin{split}
 & -4\sqrt{\alpha}\Re \bra P.\Aa (\sqrt{\alpha} \beta_1 \Gu(g)  + \L_1)u_\alpha,\, \gamma_0 \mab \vac u_\alpha\ket \\
 & \geq -c\alpha^\frac32|\gamma_0|\, \|H_f^\frac12 \L_1\| \geq -\epsilon \|H_f^\frac12 \L_1\|^2 - c\alpha^3\, .
\end{split}
\end{equation}
Collecting\eqref{eq:diese-15}-\eqref{eq:diese-20} concludes the proof.
\end{proof}
We are now ready to write the proof of Proposition~\ref{prop:main-cmain}.
%
%

\textit{Proof of Proposition~\ref{prop:main-cmain}}.
Collecting the results of Lemmata~\ref{lem:5}, \ref{lem:6}, \ref{lem:7}, \ref{lem:11} and \ref{lem:12} yields
\begin{equation}\label{eq:diese-21}
\begin{split}
& \bra H \, \pgs,\, \pgs\ket \\
& \geq -\alpha |\gamma_0|^2 \|\Guab\|_*^2 - \alpha^2 |\gamma_0|^2 \|\Gdab\|_*^2 +2\alpha^2 |\gamma_0|^2 \|\Aa \Guab\|^2 -\frac{\alpha^2}{4} |\gamma_0|^2 \\
& \ \ \ -\alpha \|\Gu(g)\|_*^2 -\alpha^2 \|\Gd(g)\|_*^2 + 2\alpha^2 \|\Aa \Gu(g)\|^2 -\frac{\alpha^4}{4} \|g\|^2
+ \frac14 \|H_f^\frac12 \Proj_{\geq 3} \pgs\|^2 \\
& \ \ \  + \frac18 \|(P-P_f) \Proj_{\geq 3}\pgs\|^2 +\frac12 \| R_1\|_*^2 +\frac12 \| R_2\|_*^2 +\frac{\delta_3}{4} \| \L_1\|_*^2 + \frac{\delta_5}{4} \|\L_2\|_*^2 \\
&
+\frac{\delta_1}{2} \|\nabla g\|^2 + \frac{\delta_2}{2} \alpha^2 \| g\|^2 + c_1 \alpha^2 |\gamma_2 - \gamma_1|^2 + c_1\alpha |\gamma_1 - \gamma_0| + c_0 \alpha^2 |1-\beta_2|^2 \|g\|^2 \\
& + c_0 \alpha |1-\beta_1|^2 \|g\|^2  + (1-\epsilon)\delta_4 M(\Proj_1 \G) -c\alpha^\frac83.
\end{split}
\end{equation}
According to \cite[eq.14, eq.55]{BV4} we have
\begin{equation}\label{eq:diese-21-bis}
 -\alpha \|\Guab\|_*^2 -\alpha^2 \|\Gdab\|_*^2 + 2\alpha^2 \|\Aa \Guab\|^2\, = \Sigma_0 + \mathcal{O}(\alpha^3).
\end{equation}
Similarly, using Lemma~\ref{lem:appendix-6} yields
\begin{equation}\label{eq:diese-22}
\begin{split}
  -\alpha \|\Gu(g)\|_*^2 -\alpha^2 \|\Gd(g)\|_*^2 + 2\alpha^2 \|\Aa \Gu(g)\|^2
 =  \Sigma_0 \|g\|^2 +\mathcal{O}(\alpha^3)\, .
\end{split}
\end{equation}
By the normalization condition $\|\Proj_0 \pgs\|=|\gamma_0|^2 + \|g\|^2 = 1$, we get
\begin{equation}\label{eq:diese-23}
 -\frac{\alpha^2}{4} (|\gamma_0|^2 + \|g\|^2) = -\frac{\alpha^2}{4} = -\frac{\alpha^2}{4} \|\pgs\|^2 (1+\mathcal{O}(\alpha))
 = -\frac{\alpha^2}{4} \|\pgs\|^2 + \mathcal{O}(\alpha^3)\, ,
\end{equation}
since $\|\pgs\|^2 =1+\mathcal{O}(\alpha)$ from Proposition~\ref{Nf-exp-lemma-1}

Inserting \eqref{eq:diese-21-bis}-\eqref{eq:diese-23} in \eqref{eq:diese-21} concludes the proof of Proposition~\ref{prop:main-cmain}.



\section{Proof of the main theorem}\label{S4}

\subsection{Decomposition of the ground state}\label{S-decomposition}

Let as before $\pgs$ be a ground state of $H$ with the normalization condition $\|\Proj_0 \pgs\|=1$.

To prove the main theorem we will use a splitting of $\pgs$ slightly different from the one used in Section~\ref{S-new-7}.

We write $\pgs$ as in \eqref{*-0}, i.e.
\begin{equation}\label{eq:decomposition-again-1}
 \pgs = u_\alpha\phi + \G\, .
\end{equation}

As in \eqref{def:Guab-0}-\eqref{def:Guab-2}, we also define $\mab$, $\gamma_0$, $\gamma_1$, $\Guab$ and $R_1$, which yields the decomposition
\begin{equation}\label{eq:decomposition-again-3}
\begin{split}
 & \Proj_0 \phi = \gamma_0 \mab \vac\, ,\\
 & \Proj_1 \phi = \sqrt{\alpha}\gamma_1 \Guab +R_1, \quad \bra\Guab,\, R_1\ket_*=0\, .
\end{split}
\end{equation}

For $\Proj_2\phi$ we make a splitting as in \eqref{def:pi-2-phi-1}-\eqref{def:pi-2-phi-3}:
$$
\Proj_2 \phi = \alpha\gamma_2\Gdab + R_2,\quad \bra R_2,\, \Gdab\ket_* = 0\, .
$$

For the state $\Proj_1\G$, we then write the decomposition
\begin{equation}\nonumber
  \Proj_1\G = \Fu  + \F_2 + \ell_1\, ,
\end{equation}
which is defined by the conditions
\begin{equation}\label{eq:def:Fun-alpha}
\begin{split}
 \Fu  & := 2\sqrt{\alpha} \sum_{i=1}^3  \kappa_i \chi_{(\alpha,\infty)}(H_f)(H_f+P_f^2)^{-1} (\Ac)^{(i)} \mab \vac \frac{\partial u_\alpha}{\partial x_i} \, , \\
 \F_2 & := 2\sqrt{\alpha} \sum_{i=1}^3  \mu_i (H_f+P_f^2)^{-1}  (P_f)^{(i)} \Guab \frac{\partial u_\alpha}{\partial x_i} \, ,
\end{split}
\end{equation}
and $\kappa_i\neq0$, $\mu_i\neq 0$,
\begin{equation}\label{eq:**-1}
 \bra \ell_1,\, \Fu \ket_*=0, \quad \bra \ell_1,\, \F_2\ket_*=0\, .
\end{equation}
The latter conditions \eqref{eq:**-1} determine $\ell_1$ and the coefficients $\kappa_i$, $\mu_i$ that occur in \eqref{eq:def:Fun-alpha}.
Here  $\chi_{(\alpha,\, \infty)}(H_f)$ is an infrared cutoff and $\chi_{(\alpha,\, \infty)}$ is the
characteristic function of $(\alpha,\, \infty)$. Like in the definition \eqref{def:ups} of $\ups$, this cutoff is necessary to have a state in the Hilbert space $\gH$. In particular, $\|\Fu\|^2 = (\sum_i|\kappa_i|\times \mathcal{O}(\alpha^3|\log\alpha|))$. However, for the computation of the binding energy, this cutoff will be removed.

Note that due to Lemma~\ref{lem:appendix-3}, we also have
$$
 \bra \Fu ,\, \F_2\ket_*=0\, .
$$

For the state $\ell_1$ defined above we write
\begin{equation}\nonumber
\ell_1 = \sqrt{\alpha}\beta_1\Gu(g) + \tilde\ell_1\, ,
\end{equation}
where as already defined in \eqref{def:g} and \eqref{def:Gamma-1-g}
$g=\Proj_0 \G$ and $\Gamma_1(g) = - (H_f+P_f^2)^{-1} \sigma.\Bc g$,
and where $\tilde\ell_1$ and $\beta_1$ are determined by the condition
$$
 \bra \tilde\ell_1,\, \Gu(g)\ket_* =0\, .
$$

Due to Lemmata~\ref{lem:appendix-3} and \ref{lem:appendix-4} we obtain
$$
 \bra \Fu, \, \Gu(g)\ket_* =0,\, \quad \bra\F_2,\, \Gu(g)\ket_*=0\, ,
$$
and thus also
$$
 \bra \Fu, \, \tilde\ell_1\ket_* =0,\, \quad \bra\F_2,\, \tilde\ell_1\ket_*=0\, .
$$

We finally define the state $\cL\in\gH$ by
\begin{equation}\label{def:cL}
\begin{split}
 \cL  : = (\Proj_0 \G ,\,  \ell_1,\, \Proj_2 \G, \Proj_3 \G\, ...)
        =  (g ,\,  \ell_1,\, \Proj_2 \G, \Proj_3 \G\, ...)\, .
 \end{split}
\end{equation}

\subsection{Direct terms}
In this paragraph, we shall compute all direct terms occurring in
$$
 \bra\pgs ,\, H\pgs\ket = \bra u_\alpha \phi + \cL + (\Fu + \F_2), \, H (\, u_\alpha \phi + \cL + (\Fu + \F_2))\ket.
$$

Since $\bra u_\alpha,\, \frac{\partial}{\partial x_i}u_\alpha\ket = 0$, for $i=1,2,3$, a straightforward computation yields
\begin{equation}\label{eq:M1}
\bra u_\alpha\phi,\, H u_\alpha\phi\ket \geq (\Sigma_0 -\frac{\alpha^2}{4}) \|\phi\|^2\, .
\end{equation}
Using Proposition~\ref{prop:orthogonality}, yields
\begin{equation}\label{eq:M2}
 \bra \cL,\, H\cL\ket \geq (\Sigma_0 -\frac{\alpha^2}{4})\|\cL\|^2\, .
\end{equation}
Due to the symmetry of $\frac{\partial u_\alpha}{\partial x_i}$, $i=1,2,3$,
the last direct term to estimate can be written as
\begin{equation}\label{eq:M000}
\begin{split}
& \bra \Fu + \F_2,\, H (\Fu + \F_2) \ket \\
 & = \bra \Fu + \F_2,\, \left((-\Delta -\frac{\alpha}{|x|})  + 2\Re \alpha \Ac.\Aa + (H_f + P_f^2)\right) (\Fu + \F_2)\ket
\end{split}
\end{equation}

We have
\begin{equation}\label{bb-1}
\begin{split}
 & \bra (\Fu + \F_2),\, (-\Delta -\frac{\alpha}{|x|}) \, (\Fu + \F_2) \ket \\
 & \geq -\frac{\alpha^2}{4} \|(\Fu + \F_2)\|^2
 \geq - C\alpha^5 \sum_{i=1}^3 ( |\log\alpha|\, |\kappa_i|^2 + |\mu_i|^2 ) \, .
\end{split}
\end{equation}
We also have
\begin{equation}\label{bb-2}
\begin{split}
 \bra  (\Fu + \F_2),\, \Ac.\Aa (\Fu + \F_2)\ket \geq 0\, ,
\end{split}
\end{equation}
and since $\bra \Fu, \F_2\ket_*=0$ we get
\begin{equation}\label{bb-3}
\begin{split}
 & \bra (\Fu + \F_2),\, (H_f+P_f^2)\, (\Fu + \F_2)\ket = \| \Fu\|_*^2 + \| \F_2\|_*^2 \\
 & = 4\alpha \left\|\frac{\partial u_\alpha}{\partial x_1}\right\|^2
 \Big(\sum_{i=1}^3 |\kappa_i|^2 \| \chi_{(\alpha,\infty)}(H_f)(H_f + P_f^2)^{-1} (\Ac)^{(i)} \mab\vac\|_*^2 \\
 & + \sum_{i=1}^3 |\mu_i|^2 \| (H_f+P_f^2)^{-1} P_f^{(i)} \Guab \|_*^2 \Big)\, .
\end{split}
\end{equation}
Therefore, \eqref{eq:M000}-\eqref{bb-3} yields
\begin{equation}\label{eq:M3}
\begin{split}
 & \bra (\Fu + \F_2),\, H\, (\Fu + \F_2)\ket\\
 & \geq 4\alpha \|\frac{\partial u_\alpha}{\partial x_1}\|^2
 \sum_{i=1}^3 |\kappa_i|^2 \| \chi_{(\alpha,\infty)}(H_f)(H_f + P_f^2)^{-1} (\Ac)^{(i)} \mab\vac\|_*^2 \\
 & + 4\alpha \|\frac{\partial u_\alpha}{\partial x_1}\|^2
 \sum_{i=1}^3 |\mu_i|^2 \| (H_f+P_f^2)^{-1} P_f^{(i)} \Guab \|_*^2 \\
 & - C\alpha^5 \sum_{i=1}^3 ( |\log\alpha|\, |\kappa_i|^2 + |\mu_i|^2 )
\end{split}
\end{equation}

\subsection{Cross terms between $u_\alpha \phi$ and $\Fu + \F_2$}

Due to the orthogonality $\bra u_\alpha,\, \frac{\partial}{\partial x_i}u_\alpha\ket = 0$,
for $i=1,2,3$, only terms involving gradients, i.e. $2P.P_f$ and $-4\Re P.\Aa$, are a priori nonzero. We thus have
\begin{equation}\label{eq:M4}
\begin{split}
 & \bra u_\alpha\phi,\, H(\Fu + \F_2)\ket + \mathrm{h.c.} \\
 & =
 2\Re \bra \Proj_1 u_\alpha\phi,\, (-2P.P_f) (\Fu + \F_2)\ket
 + \Re\bra \Proj_0 u_\alpha\phi,\, (-4\sqrt{\alpha} P.\Aa)(\Fu + \F_2)\ket \\
 & + \Re\bra (\Fu + \F_2),\, (-4\sqrt{\alpha} P.\Aa)\Proj_2 u_\alpha\phi\ket\, ,
\end{split}
\end{equation}
where $\mathrm{h.c.}$ stands for hermitian conjugate.

The first term in the right hand side of \eqref{eq:M4} is
\begin{equation}\label{eq:M5}
\begin{split}
 & 2\Re \bra \Proj_1 u_\alpha\phi,\, (-2P.P_f) (\Fu + \F_2)\ket \\
 & = -4\Re \bra (\sqrt{\alpha} \gamma_1 \Guab + R_1) u_\alpha, P.P_f (\Fu + \F_2)\ket \\
 & \geq -8 \alpha\Re  \left\| \frac{\partial u_\alpha}{\partial x_1}\right\|^2\sum_i
 \gamma_1 \bar\mu_i \|(H_f + P_f^2)^{-1} P_f^{(i)} \Guab\|_*^2 \\
 & -8 \alpha \Re\!  \left\| \frac{\partial u_\alpha}{\partial x_1}\right\|^2\!
 \sum_i\gamma_1\bar\kappa_i\, \bra P_f^{(i)}\Guab\! , \chi_{(\alpha,\infty)}(H_f)(H_f+P_f^2)^{-1} (\Ac)^{(i)}\!\mab\vac\ket\\
 & - c \left\| \frac{\partial u_\alpha}{\partial x_1}\right\|^2
 \sum_i\| P_f^{(i)} R_1\| \sqrt{\alpha} \sum_i (|\log\alpha|^\frac12|\kappa_i| +|\mu_i|) \\
 & \geq  -8 \alpha\Re \sum_i \left\| \frac{\partial u_\alpha}{\partial x_1}\right\|^2
 \gamma_1 \bar\mu_i \|(H_f + P_f^2)^{-1} P_f^{(i)} \Guab\|_*^2 \\
 & - c \alpha^{3+\frac56}|\log\alpha|^\frac12 \sum_i( |\kappa_i| + |\mu_i|)\, ,
\end{split}
\end{equation}
where in the last inequality, we used
$\bra P_f^{(i)} (H_f+P_f^2)^{-1} \Guab,\, \chi_{(\alpha,\infty)}(H_f)
(H_f+P_f^2)^{-1}(\Ac)^{(i)} \mab\vac\ket_*=0$ (see Lemma~\ref{lem:appendix-3}),
and $\|P_f^{(i)} R_1\| = \mathcal{O}(\alpha^\frac43)$ (see Proposition~\ref{IPNE}).

The second term in \eqref{eq:M4} equals
\begin{equation}\nonumber
\begin{split}
 & -4\Re\sqrt{\alpha} \bra \Proj_0 u_\alpha\phi,\, P.\Aa (\Fu + \F_2)\ket \\
 & =
 -8\alpha \left\| \frac{\partial u_\alpha}{\partial x_1}\right\|^2
 \sum_i \gamma_0 \bar\kappa_i \|\chi_{(\alpha,\infty)}(H_f)  (H_f+P_f^2)^{-1}
 (\Ac)^{(i)} \mab\vac\|_*^2\,
\end{split}
\end{equation}
where we applied again Lemma~\ref{lem:appendix-3}.

For the last term in the right hand side of \eqref{eq:M4}, we have
\begin{equation}\label{eq:M7}
\begin{split}
 & -4\sqrt{\alpha} \Re \bra \Fu  + \F_2,\, P.\Aa \Proj_2 u_\alpha\phi\ket  \\
 & \geq -c \alpha \left\| \frac{\partial u_\alpha}{\partial x_1}\right\|
 \sum_i(|\log\alpha|^\frac12|\kappa_i| + |\mu_i|) \|Pu_\alpha\|\, \|\Aa\Proj_2\phi\| \\
 & \geq - c\alpha^4 |\log\alpha|^\frac12\sum_i (|\kappa_i| + |\mu_i|)\, ,
\end{split}
\end{equation}
where in the second inequality we inserted the estimate $\|\Aa\Proj_2\phi\| =
\mathcal{O}(\alpha)$, given by Proposition~\ref{proposition:4}.

Collecting \eqref{eq:M4}-\eqref{eq:M7} yields
\begin{equation}\label{eq:Main1}
\begin{split}
 & \bra u_\alpha\phi, H(\Fu + \F_2)\ket +\mathrm{h.c.} \\
 & \geq
 -8\alpha \left\| \frac{\partial u_\alpha}{\partial x_1}\right\|^2
 \Re\sum_i \gamma_1 \bar\mu_i \| (H_f+P_f^2)^{-1} P_f^{(i)} \Guab\|_*^2 \\
 & - 8\alpha \left\| \frac{\partial u_\alpha}{\partial x_1}\right\|^2
 \Re\sum_i \gamma_0 \bar\kappa_i \| \chi_{(\alpha,\infty)}(H_f)(H_f+P_f^2)^{-1} (\Ac)^{(i)} \mab\vac\|_*^2 \\
 & - c\alpha^{3+\frac56}|\log\alpha| \sum_i (|\mu_i| + |\kappa_i|)\, .
\end{split}
\end{equation}

\subsection{Cross terms between $(\Fu + \F_2)$ and $\cL$}

To compute these terms, we will estimate separately the contribution of each
operator occurring in $H$.

\subsubsection{$(-\Delta-\frac{\alpha}{|x|})$-term}\label{S4.5.1}$\ $
\begin{equation*}
\begin{split}
 & 2\Re \bra \cL, (-\Delta-\frac{\alpha}{|x|}) (\Fu + \F_2)\ket \\
 & \geq
 -c \sum_i\|(-\Delta-\frac{\alpha}{|x|}) \frac{\partial u_\alpha}{\partial x_i}\|
 \sqrt{\alpha} \sum_i(|\log\alpha|^\frac12|\kappa_i| + |\mu_i|) \| \Proj_1 \cL\| \\
 & \geq
 - c\alpha^{4+\frac13}|\log\alpha|^\frac12  \sum_i(|\kappa_i| + |\mu_i|)
 - c\alpha^5|\log\alpha|\sum_i(|\kappa_i|^2 + |\mu_i|^2) \, ,
\end{split}
\end{equation*}
where in the last inequality, we used $\|(-\Delta-\frac{\alpha}{|x|}) \frac{\partial u_\alpha}{\partial x_i}\|=\mathcal{O}(\alpha^3)$, $\|\Proj_1 \G\| = \mathcal{O}(\alpha^\frac56)$ due to Corollary~\ref{Cmain}, and $\|\Proj_1\cL\| =\|\Proj_1 \G -(\Fu + \F_2)\| \leq c\alpha^\frac56 + c\alpha^\frac32 \sum_i(|\log\alpha|^\frac12|\kappa_i| + |\mu_i|)$.

\subsubsection{$(H_f+P_f^2)$-term}
This term is zero since $\Proj_1\cL = \ell_1$ is by definition $*$-orthogonal to $\Fu $ and $\F_2$ (see \eqref{eq:**-1}).

\subsubsection{$(-2 P.P_f)$-term}
We first remark that
$$
\bra (\Fu + \F_2), \, P.P_f (\Fu + \F_2)\ket =0\, ,
$$
by orthogonality of $(\partial^2 u_\alpha)/(\partial x_i \partial x_j)$ and
$(\partial u_\alpha)/(\partial x_i)$.
Therefore, since
$$
\Proj_1 \G = \Proj_1\cL + \Fu  + \F_2 \, ,
$$
we obtain
\begin{equation*}
\begin{split}
   \bra \Proj_1\cL,\, -2P.P_f (\Fu + \F_2)\ket
 = \bra \Proj_1 \G,\, -2P.P_f (\Fu + \F_2) \ket \, .
\end{split}
\end{equation*}
Thus
\begin{equation*}
\begin{split}
  2\Re\bra \Proj_1 \cL,\, (-2P.P_f) (\Fu + \F_2)\ket
 & \geq  -c \|\Proj_1 \G \|\, \|P(\Fu + \F_2)\| \\
 & \geq - c \alpha^{3+\frac13} |\log\alpha|^\frac12\sum_i (|\kappa_i| + |\mu_i|)\, ,
\end{split}
\end{equation*}
using again $\|\Proj_1 \G\| = \mathcal{O}(\alpha^\frac56)$.

\subsubsection{$(-4\sqrt{\alpha}P.\Aa)$-term} From Corollary~\ref{Cmain} we have $\|\Proj_0 \cL\| = \|g\| = \mathcal{O}(\alpha^\frac13)$. Moreover, $\| P.\Aa(\Fu + \F_2) \|= \mathcal{O}(\alpha^\frac52)\sum_i(|\kappa_i| + |\mu_i|)$. Therefore
\begin{equation*}
\begin{split}
 \Re \bra \Proj_0\cL,\, -4\sqrt{\alpha}P.\Aa(\Fu + \F_2)\ket & \geq - 4 \sqrt{\alpha} \|\Proj_0 \cL\| \, \| P.\Aa(\Fu + \F_2) \| \\
 & \geq - c \alpha^{3+\frac13} \sum_i(|\kappa_i| + |\mu_i|)\, .
\end{split}
\end{equation*}
Similarly, using $\| P (\Fu + \F_2) \|= \mathcal{O}(\alpha^\frac52)|\log\alpha|^\frac12\sum_i(|\kappa_i| + |\mu_i|)$ and applying $\|H_f^\frac12\Proj_2\cL\| = \|H_f^\frac12\Proj_2 \G\| = \mathcal{O}(\alpha^\frac43)$ from Corollary~\ref{Cmain} yields
\begin{equation}\label{eq:term-3.4-1}
\begin{split}
 -4\sqrt{\alpha}\Re \bra (\Fu + \F_2),\, P.\Aa \Proj_2 \cL \ket & \geq - c \sqrt{\alpha} \|P(\Fu + \F_2)\| \, \| H_f^\frac12 \Proj_2\cL \| \\
 & \geq - c \alpha^{4+\frac13} |\log\alpha|^\frac12\sum_i(|\kappa_i| + |\mu_i|)\, .
\end{split}
\end{equation}

\subsubsection{$(\sqrt{\alpha}P_f.\Aa)$-term} We have
\begin{equation*}
 4\sqrt{\alpha} \Re\bra \Proj_0\cL,\, P_f.\Aa(\Fu + \F_2)\ket = 0\,
\end{equation*}
since $P_f\Proj_0 \cL=0$. Similarly as for the estimate of \eqref{eq:term-3.4-1} we get
\begin{equation*}
\begin{split}
 4\sqrt{\alpha}\Re \bra (\Fu + \F_2),\, P_f.\Aa \Proj_2\cL\ket
 & \geq - c \sqrt{\alpha} \|P_f(\Fu + \F_2)\| \, \| H_f^\frac12 \Proj_2\cL \| \\
 & \geq - c\alpha^{3+\frac13}  \sum_i (| \kappa_i| + |\mu_i|)\, .
\end{split}
\end{equation*}

\subsubsection{$(2\alpha\Ac.\Aa)$-term}
We have
\begin{equation*}
 \begin{split}
 &      2\Re \bra (\Fu + \F_2),\, 2\alpha \Ac.\Aa \Proj_1\cL\ket
  \geq   - c \alpha\, \| \Aa(\Fu  + \F_2) \| \, \|\Proj_1 \cL\| \\
 & \geq  - c \alpha^{3+\frac13} \sum_i(|\kappa_i| + |\mu_i|)
         - c \alpha^4 |\log\alpha|^\frac12\sum_i (|\kappa_i|^2 + |\mu_i|^2)\,
 \end{split}
\end{equation*}
where we used, as in subsection~\ref{S4.5.1}, $\|\Proj_1\cL\| \leq c\alpha^\frac56 + c\alpha^\frac32| \sum_i(|\log\alpha|^\frac12|\kappa_i| + |\mu_i|)$.

\subsubsection{$(2\alpha\Re(\Aa)^2)$-term}
From Proposition~\ref{IPNE} we have $\|H_f^\frac12 \Proj_3 \cL\| = \mathcal{O}(\alpha^{\frac43})$, thus
\begin{equation*}
\begin{split}
       \Re \bra (\Fu + \F_2) , \, 2\alpha \Aa.\Aa \Proj_3\cL\ket
 & \geq  -c\alpha \| \Fu  + \F_2\|\, \|H_f^\frac12 \Proj_3 \cL\| \\
 & \geq -c \alpha^{3+\frac56} |\log\alpha|^\frac12 \sum_i (|\kappa_i| + |\mu_i|)\, ,
\end{split}
\end{equation*}

\subsubsection{$(2\sqrt{\alpha} \Re\sigma.\Ba)$-term}\label{S4.5.8}
We have two terms to estimate. The first is
\begin{equation*}
\begin{split}
       \Re\bra (\Fu + \F_2) ,\, 2\sqrt{\alpha}\sigma.\Ba \Proj_2 \cL\ket
&  \geq - c \sqrt{\alpha} \| \Fu  + \F_2 \|\, \| H_f^\frac12 \Proj_2\cL\|\\
&  \geq - c \alpha^{3+\frac13} |\log\alpha|^\frac12\sum_i(|\kappa_i| + |\mu_i|)\, ,
\end{split}
\end{equation*}
and the second is
\begin{equation*}
 \Re \bra \Proj_0\cL,\, 2\sqrt{\alpha} \sigma.\Ba (\Fu + \F_2)\ket
 = - 2\sqrt{\alpha} \Re \bra \Gu(g),\, (\Fu + \F_2)\ket_* =0\, ,
\end{equation*}
since Lemma~\ref{lem:appendix-3} yields that $\Fu $ and $\F_2$ are $*$-orthogonal in $\gF$ to $\Gutab$, for all $\tilde a$ and $\tilde b$ in $\C$.

\subsubsection{Collecting all cross terms between $(\Fu + \F_2)$ and $\cL$} With the estimates of subsection~\ref{S4.5.1} to subsection~\ref{S4.5.8} we obtain
\begin{equation}\label{eq:Main2}
\begin{split}
       & \bra \Fu  + \F_2,\, H \cL\ket + \mathrm{h.c.}\\
 & \geq  -c \alpha^4 |\log\alpha|^\frac12 \sum_i (|\kappa_i|^2 + |\mu_i|^2)
       - c\alpha^{3+\frac13}|\log\alpha|^\frac12 \sum_i (|\kappa_i| + |\mu_i|)\, .
\end{split}
\end{equation}

\subsection{Cross terms between $u_\alpha \phi $ and $\cL$} Due to the orthogonalities
  $$
   \forall\mtab, \, \forall k, \quad \bra u_\alpha\mtab, \, \Proj_k \G\ket_{L^2(\R^3,\, \mathrm{d}x)\otimes\C^2} =0\,
  $$
 the only cross terms in $\bra u_\alpha \phi,\, H \, \cL\ket$ are due to the cross terms with $-2\Re P.P_f$ and the cross terms with $-4\sqrt{\alpha} \Re P.\Aa$.

\subsubsection{$(-2\Re P.P_f)$-term} We shall estimate separately the three terms in the right hand side of the equality
\begin{equation}\label{eq:M17}
\begin{split}
  & 2\Re \bra u_\alpha \phi,\, -2 P.P_f \cL\ket \\
  & \!=\! -4\Re \bra \Proj_1 u_\alpha\phi,\, P.P_f \Proj_1 \cL\ket
  - 4\Re \bra \Proj_2 u_\alpha\phi,\, P.P_f \Proj_2 \cL\ket \\
  & - 4\Re \bra \Proj_{\geq 3} u_\alpha\phi,\, P.P_f \Proj_{\geq 3} \cL\ket .
\end{split}
\end{equation}
The third term is
\begin{equation}\label{eq:M18}
     -4\Re \bra\Proj_{\geq 3} u_\alpha\phi,\, P.P_f \Proj_{\geq 3} \cL\ket
\geq - c\|P u_\alpha\|\, \|P_f^\frac12 \Proj_{\geq3} \phi\|\, \|P_f^\frac12 \Proj_{\geq 3} \cL\|
\geq - c\alpha^{3+\frac23}\, ,
\end{equation}
using Corollary~\ref{Cmain}.

For the second term in the right hand side of \eqref{eq:M17}, we write
\begin{equation}\label{eq:M19}
      -4\Re\bra \Proj_2 u_\alpha\phi,\, P.P_f \Proj_2 \cL\ket
 \geq -c \|P u_\alpha\| \, \|P_f^\frac12 \Proj_2 \phi\|\, \|P_f^\frac12 \Proj_2 \cL\|
 \geq - c\alpha^{3+\frac13}\, ,
\end{equation}
using from Proposition~\ref{proposition:4} that $\| H_f^{\frac12} \Proj_2 \phi\| = \mathcal{O}(\alpha)$ and from Corollary~\ref{Cmain} that $\| P_f^{\frac12} \Proj_2\cL\| = \mathcal{O}(\alpha^\frac43)$.

Eventually, the first term in the right hand side of \eqref{eq:M17} is estimated as
\begin{equation}\label{eq:M20}
\begin{split}
 &    - 4\Re \bra \Proj_1 u_\alpha\phi ,\, P.P_f \Proj_1 \cL\ket\\
 & =  - 4\Re \bra \sqrt{\alpha} \gamma_1\Guab u_\alpha,\, P.P_f \Proj_1\cL\ket
      - 4\Re \bra u_\alpha R_1,\, P.P_f \Proj_1\cL\ket \\
 & =  - 4\Re \bra u_\alpha R_1,\, P.P_f\Proj_1 \cL\ket \\
 & \geq - c \|P u_\alpha\|\, \|P_f^\frac12 R_1\|\, \|P_f^\frac12 \Proj_1\cL\| \geq - c\alpha^{3+\frac16}\,
\end{split}
\end{equation}
where in the second equality we used that
 $$
 \bra P.P_f\Guab,\, \Proj_1\cL\ket = \bra (H_f+P_f^2)^{-1} P.P_f \Guab,\, \ell_1\ket_* =0\, ,
 $$
since $\Fu $ is $*$-orthogonal to $\ell_1$, and in the last inequality we used $\| P_f^\frac12 \Proj_1 \cL \| \leq \| \ell_1 \|_* \leq \|\Proj_1 \G\|_* = \mathcal{O}(\alpha^\frac56)$ and $\|P_f^\frac12 R_1\| = \mathcal{O}(\alpha^\frac43)$ due to Corollary~\ref{Cmain}.

Collecting \eqref{eq:M17}-\eqref{eq:M20} yields
\begin{equation}\label{eq:M20.1}
 2\Re\bra u_\alpha\phi,\, -2P.P_f \cL\ket \geq - c\alpha^{3+\frac16}\, .
\end{equation}

\subsubsection{$(-4\sqrt{\alpha} P.\Aa)$-term} We write
\begin{equation}\label{eq:M21}
\begin{split}
 &   \Re \bra u_\alpha \phi,\, -4\sqrt{\alpha}P.\Aa\cL\ket + \Re\bra \cL,\, -4\sqrt{\alpha}P.\Aa u_\alpha\phi\ket \\
 & = \Re\bra \Proj_{\geq 2}\cL,\, -4\sqrt{\alpha} P.\Aa \Proj_{\geq 3}u_\alpha\phi\ket
   + \Re\bra \Proj_{\geq 2} u_\alpha\phi,\, -4\sqrt{\alpha} P.\Aa \Proj_{\geq 3}\cL\ket \\
 & + \Re\bra \Proj_1 \cL,\, -4\sqrt{\alpha} P.\Aa\Proj_2 u_\alpha\phi\ket
   + \Re\bra \Proj_1 u_\alpha\phi,\, -4\sqrt{\alpha} P.\Aa \Proj_2\cL\ket \\
 & + \Re\bra \Proj_0\cL,\, -4\sqrt{\alpha}P.\Aa \Proj_1u_\alpha\phi\ket
   + \Re\bra \Proj_0 u_\alpha\phi,\, -4\sqrt{\alpha}P.\Aa\Proj_1\cL\ket\, .
\end{split}
\end{equation}
For the first term in the right hand side of \eqref{eq:M21}, we write
\begin{equation}\label{eq:M22}
\begin{split}
  \Re \bra \Proj_{\geq 2} \cL,\, -4\sqrt{\alpha} P.\Aa \Proj_{\geq 3} u_\alpha\phi\ket
 & \geq -c\sqrt{\alpha} \|P u_\alpha\|\, \|\Aa \Proj_{\geq 3} \phi\|\, \|\Proj_{\geq 2}\cL\| \\
 & \geq -c\alpha^{3+\frac23}\, .
\end{split}
\end{equation}

For the second term in the right hand side of \eqref{eq:M21} we have
\begin{equation}\label{eq:M23}
\begin{split}
 &       \Re \bra \Proj_{\geq 2} u_\alpha\phi, \, -4\sqrt{\alpha} P.\Aa \Proj_{\geq 3} \cL\ket \\
 & \geq  - c\sqrt{\alpha} \|\Proj_{\geq2} \phi\|\, \|P u_\alpha\| \|\Aa\Proj_{\geq3} \cL\|
   \geq  - c \alpha^{3+\frac23}\, .
\end{split}
\end{equation}

For the third term we have
\begin{equation}\label{eq:M24}
\begin{split}
 &       \Re \bra \Proj_1\cL,\, -4\sqrt{\alpha} P.\Aa \Proj_2 u_\alpha\phi\ket \\
 &  \geq -c\sqrt{\alpha} \|P u_\alpha\|\, \|\Aa\Proj_2\phi\|\, \|\Proj_1 \cL \|
 \geq - c \alpha^{3+\frac13} - c\alpha^4 |\log\alpha|^\frac12\sum_i(|\kappa_i| + |\mu_i|)\, ,
\end{split}
\end{equation}
using the bound $ \|\Proj_1\cL\| \leq c\alpha^\frac56 + c \alpha^\frac32
|\log\alpha|^\frac12 \sum_i (|\kappa_i| + |\mu_i|)$ as in subsection ~\ref{S4.5.1}.

For the fourth term, we have the following estimate
\begin{equation}\label{eq:M25}
\begin{split}
 & \Re \bra \Proj_1 \phi u_\alpha,\, -4\Re \sqrt{\alpha} P.\Aa \Proj_2 \cL\ket \\
 & \geq -c\sqrt{\alpha} \|P u_\alpha\|\, \|\Aa \Proj_2 \cL\| \, \| \Proj_1\phi\|
   \geq - c\alpha^{3+\frac13}\, .
\end{split}
\end{equation}
using $\|H_f^\frac12 \Proj_2\cL\| = \mathcal{O}(\alpha^\frac43)$  from Corollary~\ref{Cmain} and $\|\Proj_1\phi\| = \mathcal{O}(\alpha^\frac12)$ from Proposition~\ref{Nf-exp-lemma-1}.

The fifth term in the right hand side of \eqref{eq:M21} is
\begin{equation}\label{eq:M26}
\begin{split}
 &      \Re \bra g,\, -4\sqrt{\alpha} P.\Aa\Proj_1 u_\alpha\phi\ket \\
 & =    -4\sqrt{\alpha} \Re \bra P.\Ac g, \sqrt{\alpha}\gamma_1\Guab u_\alpha\ket
        -4\sqrt{\alpha} \Re \bra g,\, P.\Aa R_1 u_\alpha\ket \\
 & \geq - c\sqrt{\alpha} \|P u_\alpha\|\, \|g\|\, \|\Aa R_1\| \geq -c\alpha^{3+\frac16}\, ,
\end{split}
\end{equation}
using $\bra P.\Ac g,\, \Guab u_\alpha\ket = 0$ from Lemma~\ref{lem:appendix-4} and $\|g\| = \mathcal{O}(\alpha^\frac13)$, $\|H_f^\frac12 R_1\| = \mathcal{O}(\alpha^\frac43)$ from Corollary~\ref{Cmain}.

The last term in the right hand side of \eqref{eq:M21} is given by
\begin{equation}\label{eq:M27}
\begin{split}
 &   -4 \sqrt{\alpha} \Re \bra \gamma_0 u_\alpha\mab\vac,\, P.\Aa \ell_1\ket \\
 & = -4 \sqrt{\alpha} \Re \gamma_0 \bra (H_f+P_f^2)^{-1} P.\Ac \mab\vac u_\alpha,\, \ell_1 \ket_* = 0\, ,
\end{split}
\end{equation}
since $\ell_1$ is $*$-orthogonal to $\Fu $.

Collecting \eqref{eq:M21}-\eqref{eq:M27} thus yields
\begin{equation*}
\begin{split}
 & \Re \bra u_\alpha \phi,\, -4\sqrt{\alpha}P.\Aa\cL\ket + \Re\bra \cL,\, -4\sqrt{\alpha}P.\Aa u_\alpha\phi\ket\\
 & \geq - c\alpha^{3+\frac16} - \epsilon\alpha^4 |\log\alpha|^\frac12\sum_i (|\kappa_i| + |\mu_i|)\, ,
\end{split}
\end{equation*}
which in turn gives, together with \eqref{eq:M20.1},
\begin{equation}\label{eq:Main3}
 \bra u_\alpha \phi,\, H\, \cL\ket + \mathrm{h.c.}
 \geq
  - c\alpha^{3+\frac16} - c\alpha^4 |\log\alpha|^\frac12\sum_i (|\kappa_i| + |\mu_i|)\, .
\end{equation}

\subsection{Normalization}
Since $\pgs$ is chosen so that $\|\Proj_0 \pgs\|=1$, then $\| \pgs \|\neq 1$. Thus, we need some estimate on its norm in order to conclude the proof of the upper bound for the binding energy. The required estimate is the following
\begin{lem}\label{lem:normalization} We have
\begin{equation*}
\begin{split}
& \left| \| \pgs \|^2 - (\|\phi\|^2 + \|\cL\|^2)\right| \\
& \leq   c \alpha^{2+\frac13}|\log\alpha|^\frac12 \sum_i(|\kappa_i| + |\mu_i|)
       + c \alpha^{3}|\log\alpha| \sum_i(|\kappa_i|^2 + |\mu_i|^2)\, .
\end{split}
\end{equation*}
\end{lem}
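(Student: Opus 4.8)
The plan is to expand $\|\pgs\|^2$ using the three–fold splitting $\pgs = u_\alpha\phi + \cL + (\Fu + \F_2)$ and to kill the dangerous cross terms by orthogonality, leaving only two controllably small contributions. First I would regroup the splitting as $\pgs = u_\alpha\phi + \G$ with $\G = \cL + (\Fu + \F_2)$, recalling from \eqref{*-0} that $\G$ is orthogonal to $u_\alpha$ in the sense of \eqref{eq:def-orthogonality}. Decomposing into photon–number sectors and writing $(\Proj_n\phi)(k_1,\ldots,k_n) =: \mtab \in \C^2$, one has for a.e.\ momenta that $\bra \Proj_n\G(k_1,\ldots,k_n;\cdot),\, u_\alpha(\cdot)\mtab\ket_{L^2(\R^3,\C^2)} = 0$ by \eqref{eq:def-orthogonality}; taking complex conjugates, summing over $n$ and integrating over the photon variables gives $\bra u_\alpha\phi,\, \G\ket = 0$. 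Since $\|u_\alpha\|_{L^2}=1$, this yields $\|\pgs\|^2 = \|\phi\|^2 + \|\G\|^2$.

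Next I would expand $\|\G\|^2 = \|\cL\|^2 + \|\Fu+\F_2\|^2 + 2\Re\bra \cL,\, \Fu+\F_2\ket$. Because $\Fu + \F_2$ lies entirely in the one–photon sector while $\Proj_1\cL = \ell_1$, the cross term reduces to $\bra \ell_1,\, \Fu+\F_2\ket$. Hence $\|\pgs\|^2 - (\|\phi\|^2 + \|\cL\|^2) = \|\Fu+\F_2\|^2 + 2\Re\bra \ell_1,\, \Fu+\F_2\ket$, and it remains only to bound these two terms. For the first, I would quote \eqref{bb-1}, which gives $\|\Fu+\F_2\|^2 \leq c\alpha^3\sum_i(|\log\alpha|\,|\kappa_i|^2 + |\mu_i|^2) \leq c\alpha^3|\log\alpha|\sum_i(|\kappa_i|^2 + |\mu_i|^2)$, matching the second error term, and in particular $\|\Fu+\F_2\| \leq c\alpha^{3/2}|\log\alpha|^{1/2}\sum_i(|\kappa_i|+|\mu_i|)$. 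For the cross term, Cauchy--Schwarz gives $|\bra \ell_1,\, \Fu+\F_2\ket| \leq \|\ell_1\|\,\|\Fu+\F_2\|$, and the triangle inequality $\|\ell_1\| \leq \|\Proj_1\G\| + \|\Fu+\F_2\|$ together with $\|\Proj_1\G\| = \mathcal{O}(\alpha^{5/6})$ from Corollary~\ref{Cmain} produces, using $5/6+3/2 = 2+\tfrac16$, the bound $c\,\alpha^{2+\frac13}|\log\alpha|^{1/2}\sum_i(|\kappa_i|+|\mu_i|) + c\,\alpha^3|\log\alpha|\sum_i(|\kappa_i|^2+|\mu_i|^2)$. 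Collecting the two estimates proves the lemma.

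The subtle point—and essentially the only place where care is genuinely required—is that the cross term $\bra \ell_1,\, \Fu+\F_2\ket$ must be \emph{estimated} rather than discarded: the defining relations \eqref{eq:**-1} assert only the $*$-orthogonality $\bra \ell_1,\, \Fu\ket_* = \bra \ell_1,\, \F_2\ket_* = 0$ in the $(H_f+P_f^2)$-weighted inner product, which does \emph{not} imply orthogonality in the ordinary inner product. This is precisely the contribution that generates the leading $\alpha^{2+\frac13}|\log\alpha|^{1/2}$ error. A secondary caveat is that $\|\ell_1\|$ in the ordinary norm is not controlled by the $*$-projection bound $\|\ell_1\|_* \leq \|\Proj_1\G\|_*$, because $H_f+P_f^2$ degenerates in the infrared; the triangle-inequality bound above circumvents this cleanly and is all that is needed.
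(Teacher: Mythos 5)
Your proof is correct and follows essentially the same route as the paper's: sectorwise orthogonality of $u_\alpha\phi$ and $\G$ reduces everything to the one-photon sector, where the paper writes the difference as $-\|\Fu+\F_2\|^2 + 2\Re\bra \Proj_1\G,\, \Fu+\F_2\ket$ — algebraically identical to your $\|\Fu+\F_2\|^2 + 2\Re\bra \ell_1,\, \Fu+\F_2\ket$ since $\Proj_1\G = \ell_1 + \Fu + \F_2$ — and then applies the same bounds $\|\Fu+\F_2\| \leq c\,\alpha^{\frac32}|\log\alpha|^{\frac12}\sum_i(|\kappa_i|+|\mu_i|)$ and $\|\Proj_1\G\| = \mathcal{O}(\alpha^{\frac56})$ from Corollary~\ref{Cmain}. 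Your observation that the $*$-orthogonality \eqref{eq:**-1} does not imply plain $L^2$-orthogonality, so the cross term must be estimated rather than discarded, is exactly the crux and is handled the same way in the paper.
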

\begin{proof}
Since $\pgs= u_\alpha\phi + \G$, $\bra\Proj_k u_\alpha\phi,\, \Proj_k \G\ket =0$ for all $k$, $G=\cL +(\Fu + \F_2)$, and $\|u_\alpha\|=1$,  we have
\begin{equation*}
\begin{split}
  \|\pgs\|^2 -(\|\phi\|^2 + \|\cL\|^2) & = \|\Proj_1\pgs\|^2 - (\|\Proj_1 \phi\|^2
  + \|\Proj_1 \cL\|^2)\\
  & = \|\Proj_1 \G\|^2 - \|\Proj_1\cL\|^2 = \|\Proj_1 \G\|^2 - \|\Proj_1G -\Fu - \F_2\|^2\\
  & = - \| \Fu  + \F_2 \|^2 + 2\Re\bra \Proj_1 \G,\, (\Fu + \F_2)\ket\, .
\end{split}
\end{equation*}
Since $\|\Fu \|^2 \leq c \alpha^3|\log\alpha|\sum_i|\kappa_i|^2$, $\| \F_2\|^2 \leq c \alpha^3\sum_i|\mu_i|^2$ and $\|\Proj_1 \G\| = \mathcal{O}(\alpha^\frac56)$ (see Corollary~\ref{Cmain}), we conclude the proof.
\end{proof}

\subsection{Proof of Theorem~\ref{thm:main}}
The inequality
\begin{equation}\label{eq:M28.1}
   \Sigma_0 -\Sigma \geq  \frac14\alpha^2 \ +\  \left(\dvar^{(1)}
   +  {\dvar}_{\mbox{\scriptsize{Zeeman}}}^{(1)}
   \right) \alpha^3 \  + \  \mathcal{O}(\alpha^{3+\frac16})\, ,
\end{equation}
is proved in Section~\ref{S3}.

To prove the upper bound on the binding energy, we collect the estimates \eqref{eq:M1}, \eqref{eq:M2}, \eqref{eq:M3}, \eqref{eq:Main1}, \eqref{eq:Main2} and \eqref{eq:Main3} of Section~\ref{S4}. This gives
\begin{equation*}
\begin{split}
 &       \bra \pgs,\, H\, \pgs\ket \\
 & \geq  (\Sigma_0-\frac{\alpha^2}{4})(\|\phi\|^2 + \|\cL\|^2)\\
 & + 4\alpha \left\| \frac{\partial u_\alpha}{\partial x_1} \right\|^2
 \sum_i (|\kappa_i|^2 - 2\gamma_0\bar\kappa_i) \|\chi_{(\alpha,\infty)}(H_f)(H_f+P_f^2)^{-1} (\Ac)^{(i)}\mab\vac\|_*^2 \\
 & + 4\alpha \left\| \frac{\partial u_\alpha}{\partial x_1} \right\|^2
 \sum_i (|\mu_i|^2 - 2\gamma_1 \bar\mu_i) \|(H_f+P_f^2)^{-1} (P_f)^{(i)} \Guab\|_*^2\\
 & - c\alpha^{3+\frac16} - c\alpha^{3+\frac13}
 |\log\alpha|^\frac12\sum_i(|\kappa_i| + |\mu_i|) - c\alpha^4|\log\alpha|^\frac12\sum_i(|\kappa_i|^2 + |\mu_i|^2)\, .
\end{split}
\end{equation*}
Replacing $\gamma_0$ by $(\gamma_0 -1) + 1$, $\gamma_1$ by $(\gamma_1-1) + 1$, and using from Corollary~\ref{Cmain} that $|\gamma_0-1| =\mathcal{O}(\alpha^\frac23)$ and $|\gamma_1-1| = \mathcal{O}(\alpha^\frac23)$ implies
\begin{equation}\nonumber
\begin{split}
 &       \bra \pgs,\, H\, \pgs\ket \\
 & \geq  (\Sigma_0\!-\!\frac{\alpha^2}{4})(\|\phi\|^2 \!+\! \|\cL\|^2) \\
 & -4\alpha
 \left\|\frac{\partial u_\alpha}{\partial x_1}\right\|^2\!
 \sum_i \!\|\chi_{(\alpha,\infty)}(H_f)(H_f+P_f^2)^{-1} (\Ac)^{(i)}\!\mab\vac\!\|_*^2 \\
 & -4 \alpha  \left\|\frac{\partial u_\alpha}{\partial x_1}\right\|^2\!
 \sum_i \|(H_f+P_f^2)^{-1} (P_f)^{(i)} \Guab\|_*^2
 + |c_0|^2 \alpha^3 \sum_i( |\kappa_i\!-\!1|^2 \!+\! |\mu_i\!-\!1|^2) \\
 & - c\alpha^{3+\frac16} - c\alpha^{3+\frac13}|\log\alpha|^\frac12
 \sum_i(|\kappa_i| + |\mu_i|) - c\alpha^4|\log\alpha|^\frac12\sum_i(|\kappa_i|^2 + |\mu_i|^2)\, .\\
\end{split}
\end{equation}

Moreover, with $\Sigma_0 = \mathcal{O}(\alpha)$ (see e.g. \cite{BV4} and references therein) and the estimate $\|\pgs\|^2-(\|\psi\|^2 + \|\cL\|^2) = \mathcal{O}(\alpha^{2+\frac13}|\log\alpha|^\frac12)$ from Lemma~\ref{lem:normalization}, we obtain
 \begin{equation}\label{eq:M29-2}
\begin{split}
 &       \bra \pgs,\, H\, \pgs\ket \\
 & \geq  (\Sigma_0\! - \! \frac{\alpha^2}{4})\|\pgs\|^2 \! - \! 4\alpha
 \left\|\frac{\partial u_\alpha}{\partial x_1}\right\|^2
 \!\sum_i \!\|\chi_{(\alpha,\infty)}(H_f)(H_f+P_f^2)^{-1} (\Ac)^{(i)}\!\mab\vac\!\|_*^2 \\
 & -4 \alpha \!\sum_i \! \left\|\frac{\partial u_\alpha}{\partial x_1}\right\|^2
 \|(H_f+P_f^2)^{-1} (P_f)^{(i)} \Guab\|_*^2
 + |c_0|^2 \alpha^3 \sum_i( |\kappa_i\!-\!1|^2 \!+\! |\mu_i\!-\!1|^2) \\
 & - c\alpha^{3+\frac16} - c\alpha^{3+\frac13}|\log\alpha|^\frac12
 \sum_i(|\kappa_i| + |\mu_i|) - c\alpha^4|\log\alpha|\sum_i(|\kappa_i|^2 + |\mu_i|^2)\, .\\
\end{split}
\end{equation}

Thus, if we next remark that according to \eqref{eq:M29-2} $\kappa_i$ and $\mu_i$ ($i=1,2,3$) are necessarily bounded with respect to $\alpha$, we obtain the inequality
\begin{equation}\label{eq:M30}
\begin{split}
  & \bra\pgs,\, H\, \pgs\ket \\
  & \geq (\Sigma_0-\frac{\alpha^2}{4})\|\pgs\|
  - 4\alpha
 \left\|\frac{\partial u_\alpha}{\partial x_1}\right\|^2
 \sum_i \|\chi_{(\alpha,\infty)}(H_f)(H_f+P_f^2)^{-1} (\Ac)^{(i)}\mab\vac\|_*^2 \\
  & -4 \alpha  \left\|\frac{\partial u_\alpha}{\partial x_1}\right\|^2
 \sum_i \|(H_f+P_f^2)^{-1} (P_f)^{(i)} \Guab\|_*^2 +\mathcal{O}(\alpha^{3+\frac16})\\
  & \geq \left( \Sigma_0 - \frac14\alpha^2 \ - \  \left(\dvar^{(1)}
   +  {\dvar}_{\mbox{\scriptsize{Zeeman}}}^{(1)}\right) \alpha^3 \  + \  \mathcal{O}(\alpha^{3+\frac16}) \right) \|\pgs\|^2\, ,
\end{split}
\end{equation}
where the second inequality is obtained with the estimates in Lemma~\ref{lem:appendix-2}.

Inequalities \eqref{eq:M28.1} and \eqref{eq:M30} conclude the proof of Theorem~\ref{thm:main}.

\appendix

\section{Ground state of $T(0)$}\label{appendix1}

For convenience of the reader, we remind here some properties
derived in \cite{BV4} for ground states of the self-energy operator with total momentum zero.
\begin{thm}\label{thm:bv4}
Let
\begin{equation}\label{eq:def-gamma1}
  \Gu := -(H_f + P_f^2)^{-1} \sigma\!\cdot\!B^+ \vac\zo
\end{equation}
and
\begin{equation}\label{eq:def-gamma2}
  \Gd = -(H_f+P_f^2)^{-1} \left(
  \sigma\!\cdot\! \Bc \Gu\  +\  2\Ac \!\cdot\! P_f \Gu\ +\  \Ac\!\cdot\!\Ac \vac\zo
 \right)\, .
\end{equation}
We have
\begin{equation}\label{eq:sigma-zero}
\begin{split}
 & \Sigma_0 = \inf\mathrm{spec} (T(0)) \\
 & =  -\alpha \|\Gu\|_*^2
  \ +\ \alpha^2\left(
 2 \|A^-\Gu\|^2 - \|\Gd\|_*^2 + \|\Gu\|_*^2\, \|\Gu\|^2\right)
 \ +\ \mathcal{O}(\alpha^3)\, .
\end{split}
\end{equation}

In addition, let $\gs$ be the ground state of $T(0)$ such that $\Proj_0\gs=\vac\zo$.
Taking the $\bra\, .\, , \, .\, \ket_*$-orthonormal projections of $\gs$ along the vectors $\Gu$ and $\Gd$, and denoting by $R$ the component in the $\bra\, .\, , \, .\, \ket_*$-orthogonal complement of their span, we get
\begin{equation}\label{eq:def-phi0-1}
 \gs = \vac\zo\ +\ \alpha^\frac12 \gamma_1 \Gu \ + \ \alpha \gamma_2\Gd\ +\ R
\end{equation}
where for $i=1,2$
\begin{equation}\label{eq:def-phi0-2}
 \bra \Gamma_i,\, R\ket_* =0\, \quad\mbox{and}\quad\bra \vac\zo, \, R\ket=0\, .
\end{equation}
Then,  we have
\begin{equation}\label{eq:est-1}
\begin{split}
 &  |\gamma_1-1| = \mathcal{O}(\alpha)\, ,\quad
 |\gamma_2-1| = \mathcal{O}(\alpha^\frac12)\, ,\\
 &  \|R\|_* = \mathcal{O}(\alpha^\frac32)\quad \mbox{and} \quad
 \|R\| = \mathcal{O}(\alpha)\, .
\end{split}
\end{equation}
\end{thm}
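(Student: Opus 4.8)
The plan is to read off both the energy expansion and the structure of $\gs$ directly from the eigenvalue equation $T(0)\gs=\Sigma_0\gs$, exploiting that $\Sigma_0$ is a genuine eigenvalue and that each vertex of the interaction changes the photon number by at most two. Expanding with normal ordering, $T(0)=(H_f+P_f^2)+2\sqrt{\alpha}\,P_f\!\cdot\!(\Ac+\Aa)+\alpha\big((\Ac)^2+2\Ac\Aa+(\Aa)^2\big)+\sqrt{\alpha}\,\sigma\!\cdot\!(\Bc+\Ba)$; the off-diagonal blocks raise or lower the photon number by one ($\sigma\!\cdot\!\Bc$, $P_f\!\cdot\!\Ac$ and their adjoints) or by two ($(\Ac)^2,(\Aa)^2$). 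Projecting onto the $n$-photon sectors $\gF_n$ therefore yields a finite-band linear system for the components $\Proj_n\gs$, which I will solve perturbatively starting from $\Proj_0\gs=\vac\zo$.

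First I would establish a localization estimate. The cheap trial vector $\vac\zo+\sqrt{\alpha}\,\Gu$ has Rayleigh quotient bounded above by $-c\alpha$, while completing the square in the quadratic part and bounding $\Ac\Aa$ and $\sigma\!\cdot\!\Ba$ below by a multiple of $H_f$ via \cite[Lemma~A.4]{GLL}, exactly as in Lemma~\ref{lem:1}, gives $T(0)\geq -c\alpha+\tfrac12\|H_f^{1/2}(\cdot)\|^2$; hence $\Sigma_0=\mO(\alpha)$ and $\|H_f^{1/2}\gs\|=\mO(\alpha^{1/2})$, so $\|\Proj_{\geq 1}\gs\|=\mO(\alpha^{1/2})$. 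In particular $\Proj_0\gs\neq 0$ for small $\alpha$, which legitimizes the normalization $\Proj_0\gs=\vac\zo$.

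For the energy I would use the exact identity obtained by pairing $T(0)\gs=\Sigma_0\gs$ with $\vac\zo$: since $(H_f+P_f^2)\vac\zo=0$ and $P_f\!\cdot\!\Ac\vac\zo=0$ by transversality, one has $T(0)\vac\zo=\sqrt{\alpha}\,\sigma\!\cdot\!\Bc\vac\zo+\alpha(\Ac)^2\vac\zo$, whence $\Sigma_0=\sqrt{\alpha}\bra\sigma\!\cdot\!\Bc\vac\zo,\Proj_1\gs\ket+\alpha\bra(\Ac)^2\vac\zo,\Proj_2\gs\ket$. Inverting the strictly positive diagonal $H_f+P_f^2$ in the $\gF_1$ and $\gF_2$ equations reproduces $\Gu$ and $\Gd$ of \eqref{eq:def-gamma1}-\eqref{eq:def-gamma2} as the leading components, and the decomposition \eqref{eq:def-phi0-1} is precisely the splitting of $\gs$ into its $\langle\cdot,\cdot\rangle_*$-orthogonal projections along $\Gu,\Gd$ and the remainder $R$. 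Using $\sigma\!\cdot\!\Bc\vac\zo=-(H_f+P_f^2)\Gu$ and $\bra\Gu,R\ket_*=0$ gives $\sqrt{\alpha}\bra\sigma\!\cdot\!\Bc\vac\zo,\Proj_1\gs\ket=-\alpha\gamma_1\|\Gu\|_*^2$, so the leading term $-\alpha\|\Gu\|_*^2$ is immediate, while the $\alpha^2$ coefficient demands $\gamma_1$ to first order together with the $\gF_2$ contribution.

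The main work, and the main obstacle, is the bootstrap of the remainder and coefficient estimates. Feeding the leading components back into the $\gF_1,\gF_2$ equations, and using that each vertex carries $\sqrt{\alpha}$ while $(H_f+P_f^2)^{-1}$ is $\|\cdot\|_*$-bounded on $\Proj_{\geq 1}$ with all form factors square-integrable (thanks to the fixed ultraviolet cutoff $\zeta$ and to the extra factor $|k|$ the curl supplies in $\Bc$, which removes any infrared singularity in $\Gu$), I expect $|\gamma_1-1|=\mO(\alpha)$, $|\gamma_2-1|=\mO(\alpha^{1/2})$, $\|R\|_*=\mO(\alpha^{3/2})$ and $\|R\|=\mO(\alpha)$. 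The delicate point is reaching $\|R\|_*=\mO(\alpha^{3/2})$ (rather than the naive $\mO(\alpha)$) and pinning $\gamma_1$ down to order $\alpha$: because $\Sigma_0$ is only of order $\alpha$, recovering its $\alpha^2$ term with error $\mO(\alpha^3)$ forces me to track the two-photon sector and the quadratic vertex $\alpha(\Aa)^2$ simultaneously, and to check that $\alpha\bra(\Ac)^2\vac\zo,\Proj_2\gs\ket$ together with the $-\alpha(\gamma_1-1)\|\Gu\|_*^2$ correction reproduces exactly $\alpha^2\big(2\|\Aa\Gu\|^2-\|\Gd\|_*^2+\|\Gu\|_*^2\|\Gu\|^2\big)$. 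The matching upper bound follows painlessly from the Rayleigh quotient of $\vac\zo+\sqrt{\alpha}\,\Gu+\alpha\,\Gd$, so the crux is the lower bound and the remainder control just described.
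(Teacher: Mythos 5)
First, a point of comparison: this paper does not prove Theorem~\ref{thm:bv4} at all. Appendix~A merely restates it ``for convenience of the reader'' from \cite{BV4}, where it is established by the same two-sided variational scheme that the present paper deploys for $H$ in Sections~5--7: a trial-state upper bound, then a lower bound on the quadratic form evaluated on the $\bra\,.\,,\,.\,\ket_*$-orthogonally decomposed ground state, with the remainder estimates \eqref{eq:est-1} forced by squeezing the positive quadratic terms ($\|R\|_*^2$, $\alpha|\gamma_1-1|^2\|\Gu\|_*^2$, etc.) against the upper bound. Your route through the eigenvalue equation is genuinely different, and several of your individual steps are sound: $T(0)\vac\zo=\sqrt{\alpha}\,\sigma\!\cdot\!\Bc\vac\zo+\alpha(\Ac)^2\vac\zo$ and hence the exact identity $\Sigma_0=\sqrt{\alpha}\bra\sigma\!\cdot\!\Bc\vac\zo,\Proj_1\gs\ket+\alpha\bra(\Ac)^2\vac\zo,\Proj_2\gs\ket$; the extraction of $-\alpha\gamma_1\|\Gu\|_*^2$ via $\bra\Gu,R\ket_*=0$; the a priori bound $\|H_f^{1/2}\gs\|=\mO(\alpha^{1/2})$ legitimizing the normalization $\Proj_0\gs=\vac\zo$; and the upper bound from the trial vector $\vac\zo+\sqrt{\alpha}\,\Gu+\alpha\,\Gd$.

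However, there is a genuine gap: the estimates \eqref{eq:est-1}, which are the actual content of the theorem and to which your energy identity reduces everything, are only asserted (``I expect \dots''), and the inversion mechanism you gesture at fails as stated. The sector equations read $(H_f+P_f^2-\Sigma_0)\Proj_n\gs=-\sqrt{\alpha}(\mbox{vertex terms})$, and on $\gF_{\geq 1}$ the operator $H_f+P_f^2$ has no spectral gap above $0$, so $(H_f+P_f^2)^{-1}$ is unbounded there; the shifted resolvent $(H_f+P_f^2-\Sigma_0)^{-1}$ is bounded only by $|\Sigma_0|^{-1}\sim\alpha^{-1}$, and this loss is precisely the infrared difficulty of the spin model, whose ground state carries $\mO(\alpha)$ soft photons (Proposition~\ref{Nf-exp-lemma-1}). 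In particular you silently drop the $\Sigma_0\Proj_n\gs$ terms: after inversion they are not obviously $\mO(\alpha^{3/2})$ in $\|\cdot\|_*$ without infrared control of $\Proj_1\gs$ that your scheme has not yet produced, a circularity that the variational bootstrap of \cite{BV4} is designed to break and that cannot be dismissed by square-integrability of the form factors alone (note that even in this paper, $(H_f+P_f^2)^{-1}\chi_{(\alpha,\infty)}(H_f)(\Ac)_j\vac\zo$ is only $\mO(|\log\alpha|^{1/2})$ in norm). Finally, for the $\alpha^2$ coefficient in \eqref{eq:sigma-zero}, your identity requires the explicit first-order value of $\gamma_1$ and the leading part of $\Proj_2\gs$ --- the pairing $\bra(\Ac)^2\vac\zo,\Proj_2\gs\ket$ sees only one of the three pieces entering $\Gd$ in \eqref{eq:def-gamma2}, so the coefficients $2\|\Aa\Gu\|^2-\|\Gd\|_*^2+\|\Gu\|_*^2\|\Gu\|^2$ do not drop out of the identity; you have reduced the theorem to unproven claims of the same strength, not proven it.
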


\section{Proof of Proposition~\ref{Nf-exp-lemma-1} on photon number bound}\label{appendix-photon-number-bound}


The strategy of the proof of Proposition~\ref{Nf-exp-lemma-1} is similar to the proof of
the photon number bound in \cite{BCVV2}. However, due to the occurrence of the spin-Zeeman term $\sqrt{\alpha} \sigma.B$, the photon number is one order larger in powers of the fine structure constant.

\begin{proof}
If we denote by $\tH\,:=\, :\ttH :$ the operator $K$ (see definition in \eqref{eq:def-op-K}) with normal ordering, we have
\begin{equation}\label{eq:pn-1}
\begin{split}
 & \| (i\nabla - \sqrt{\alpha}A(x))\pgs\|^2 \\
 & =
 \bra \pgs,\, K \pgs\ket + \bra\pgs,\, \frac{\alpha}{|x|}\pgs\ket - \bra\pgs,\, H_f\pgs\ket \\
 & \ \ \ - 2\Re \sqrt{\alpha}\bra\pgs,\, \sigma\cdot B^-(x)\pgs\ket \\
 & \leq \bra \pgs,\,\tH \pgs\ket + \cno\alpha  + \bra\pgs,\, \frac{\alpha}{|x|}\pgs\ket  ,
\end{split}
\end{equation}
since
$$
\Re \bra \phi,\, (i\nabla - \sqrt{\alpha}A(x))^2 - \alpha/|x| + H_f + 2 \sqrt{\alpha}\sigma\cdot
B^-(x) \phi\ket = \bra\phi,\, \tH \phi\ket + \cno\alpha \|\phi\|^2\, ,
$$
and since from \cite[Lemma~A4]{GLL} we have, for $\alpha$ small enough
\begin{equation}\label{eq:photon-number-bound}
\begin{split}
 -\bra\pgs,\, H_f \pgs\ket - 2\Re\sqrt{\alpha}\bra \pgs,\, \sigma\cdot B^-(x) \pgs\ket
 \leq 0\, .
\end{split}
\end{equation}

Now we have, with $P:=i\nabla$, and the inequality $- 2 \Re P.A^-(x) \geq -P^2 - (A^-(x))^2$,
\begin{equation}\label{eq:pn-2}
 \begin{split}
  & 0 \geq \bra \pgs,\, \tH\,\pgs\ket \\
  & = \Re\Bra \pgs, \Big[ -\Delta - 4\sqrt{\alpha} P\cdot A^-(x) +
  \alpha :A(x):^2 \\
  & + 2\sqrt{\alpha}\sigma\cdot B^-(x) + H_f -\frac{\alpha}{|x|} \Big]\pgs\Ket \\
  & \geq - 2 \bra \pgs,\, \frac{\alpha}{|x|} \pgs\ket + \bra \pgs,\, (1 - 8\sqrt{\alpha}) (-\Delta)\pgs\ket \\
  & + 8\sqrt{\alpha}\bra \pgs,\, -\Delta \pgs\ket - 2 \sqrt{\alpha} \bra \pgs,\, P^2 \pgs\ket
  - 2\sqrt{\alpha} \bra \pgs,\, (A^-(x))^2\pgs\ket \\
  & + \frac14 \bra \pgs,\, H_f\pgs\ket + \frac12 \bra \pgs,\, H_f\pgs\ket + 2 \bra \pgs, \alpha:(A^-(x))^2:\pgs\ket \\
  & + \frac14 \bra \pgs,\, H_f\pgs\ket + 2\sqrt{\alpha} \Re\bra\pgs,\, \sigma\cdot B^-(x) \pgs\ket
  + \bra\pgs,\, \frac{\alpha}{|x|}\pgs\ket\, .
\end{split}
\end{equation}
In the right hand side of \eqref{eq:pn-2}, the sum of the first and the second term is bounded below by
$-c\alpha^2$ for some constant $c>0$, the sum of the third and fourth term is positive, and the sum of the
fifth and sixth term is positive using \cite[Lemma~A.4]{GLL}. Again using \cite[Lemma~A.4]{GLL}, we obtain that
the sum of the seventh and eight term
is larger than $-c\alpha^2$ and the sum of the ninth and tenth term is positive. Therefore, we obtain
\begin{equation}\label{eq:pn-3}
 \bra \pgs,\, \frac{\alpha}{|x|} \pgs\ket \leq c\alpha^2\, .
\end{equation}
Now \eqref{eq:pn-3} and \eqref{eq:pn-1} give
\begin{equation}\label{eq:pn-4}
  \| (i\nabla - \sqrt{\alpha} A(x)) \pgs\|^2 \leq c\alpha\, .
\end{equation}
We set
$$
  v := i\nabla - \sqrt{\alpha} A(x)\, .
$$
Using
\begin{equation}\nonumber
 [a_\lambda(k), H_f] = |k|,\quad[a_\lambda(k),v]
 =\frac{\epsilon_\lambda(k)}{2\pi|k|^\frac12} \zeta(|k|) \mathrm{e}^{ik.x}
\end{equation}
and applying the pull-through formula yields
\begin{equation}\label{pull-through-1}
\begin{split}
    & a_\lambda(k) E \pgs  =a_\lambda(k) \ttH  \pgs
    \nonumber\\
    &= \Big[( H_f +|k|)a_\lambda(k) - \frac{1}{|x|}a_\lambda(k)
    + [a_\lambda(k), v]v
    +v [a_\lambda(k), v] + v^2 a_\lambda(k) \\
    & + \sqrt{\alpha}\sigma\cdot B(x)a_\lambda(k)
    - i\sqrt{\alpha} \frac{\zeta(|k|)\,
    \sigma\cdot(\epsilon_{\lambda}(k) \wedge k) }{|k|^\frac12 2\pi}\,\mathrm{e}^{i k.x}\Big] \pgs \;.
\end{split}
\end{equation}
Thus
\begin{equation}\label{alk-psi-GLL-1}
\begin{split}
    & a_\lambda(k) \, \pgs \\
    & =- \, \frac{\sqrt\alpha \zeta(|k|)}{2\pi |k|^\frac12} \,
     \frac{2}{\ttH +|k|-E}  \Bigg( \Big(  i\nabla\!-\!\sqrt\alpha A(x)
      \Big)\cdot
    \epsilon_\lambda(k) \mathrm{e}^{ik.x} \\
    & +  i\sigma\cdot \frac{\epsilon_\lambda(k)\wedge k}{2\pi}\mathrm{e}^{ik.x}
    \Bigg)\, \pgs  \;.
\end{split}
\end{equation}
From \eqref{alk-psi-GLL-1}, we obtain
\begin{equation}\label{a-psi-fund-1}
\begin{split}
    \left\| \, a_\lambda(k)\pgs  \, \right\|
    &\leq
    c\, \frac{ \sqrt{\alpha} \, \zeta(|k|)}{|k|^\frac32 }\,
    \left\| \, \left( \, i\nabla-\sqrt\alpha A(x) \, \right) \pgs   \, \right\|
    +  c\, \frac{ \sqrt{\alpha} \, \zeta(|k|)}{|k|^\frac12 }\,
    \left\|  \pgs   \, \right\| \\
    &\leq
    c \left(  \frac{ \alpha
    }{|k|^\frac32 } +  \frac{\sqrt{\alpha}}{|k|^\frac12 }
     \right)\zeta(|k|) \|\pgs\|\, ,
\end{split}
\end{equation}
where we applied \eqref{eq:pn-4} in the last inequality.

This a priori bound exhibits the $L^2$-critical singularity in
frequency space. It does not take into consideration the
exponential localization of the ground state due to the confining
Coulomb potential.

To account for the latter, following the proof of \cite[Equation (14)]{BCVV2}, we use two results from
the work of Griesemer, Lieb, and Loss, \cite{GLL}. Equation (58)
in \cite{GLL} provides the bound
\begin{equation}\nonumber
    \Big\| \, a_\lambda(k) \pgs   \, \Big\| \; < \;
    \frac{c \, \sqrt\alpha \, \zeta(|k|)}{|k|^{\frac12} }
    \Big\| \, |x|\pgs   \, \Big\|\;.
\end{equation}
Moreover, Lemma 6.2 in \cite{GLL} states that
\begin{equation}\nonumber
    \Big\| \, \exp[\beta|x|] \pgs   \, \Big\|^2 \;
    \leq \; c \, \Big[1+\frac{1}{\Sigma_0-E-\beta^2}\Big]
    \, \| \, \pgs   \, \|^2\;,
\end{equation}
for any
\begin{equation}\nonumber
    \beta^2 \; < \; \Sigma_0-E \; = \; O(\alpha^2) \; .
\end{equation}
For the 1-electron case, $\Sigma_0$ is the infimum of the
self-energy operator, and $E$ is the ground state energy of $\tH
$. Choosing $\beta=O(\alpha)$ yields,
\begin{eqnarray*}
    \|\, |x|\pgs   \, \|&\leq& \||x|^4 \pgs  \|^{\frac14}
    \, \|\pgs  \|^{\frac34}
    \leq \frac{(4!)^{\frac14}}{\beta} \, \Big\|
    \, \exp[\beta|x|]\pgs   \, \Big\|^\frac14 \|\pgs  \|^\frac34
    \nonumber\\
    &\leq&\frac{c}{\beta} \,
    \Big[1+\frac{1}{\Sigma_0-E-\beta^2}\Big]^{\frac18}
    \,\| \, \pgs   \, \|
    \nonumber\\
    &\leq& c_1\alpha^{-\frac54} \;.
\end{eqnarray*}

Thus,
\begin{equation}\label{a-psi-fund-2}
    \Big\| \, a_\lambda(k) \pgs   \, \Big\|<\frac{c \alpha^{-\frac34}
    \zeta(|k|)}{|k|^{\frac12} } \;.
\end{equation}
We see that binding to the Coulomb potential weakens the infrared
singularity by a factor $|k|$, but at the expense of a large
constant factor $\alpha^{-2}$.

Using \eqref{a-psi-fund-1} and \eqref{a-psi-fund-2}, we find
\begin{eqnarray*}
    \bra \, \pgs   \, , \, N_f \, \pgs  \, \ket&=&\int  \, \Big\| \, a_{\lambda}(k)\pgs  \, \Big\|^2 \d k
    \nonumber\\
    &\leq&\int_{|k|<\delta} \, \frac{c \, \alpha^{-\frac32}} {|k| } \d k
    \, + \,
    \int_{\delta\leq |k|\leq\Lambda} \, \left( \frac{c \, \alpha^2}{|k|^3} +  \frac{c \, \alpha}{|k|}\right) \d k
    \nonumber\\
    &\leq& c \alpha^{-\frac32}\delta^2 + c\alpha^2 \log\delta + c\alpha
    \nonumber\\
    &\leq& c\alpha\, ,
\end{eqnarray*}
for $\delta=\alpha^{\frac{5}{4}}$. This proves the result.
\end{proof}


\section{Proof of Proposition~\ref{prop:orthogonality}}\label{appendix-proof-state-orthogonal-u-alpha}

The proof of Proposition~\ref{prop:orthogonality} follows the lines of the proof of Theorem~4.1 in \cite{BCVV2}. The main difference is that here there is no positive term involving $H_f$ in the right hand side of \eqref{eq:orth-u-alpha-0}, since it is of the order $\alpha$, unlike in the spinless case, and is thus much larger than the difference of the first and second level of the Schr\"odinger-Coulomb operator.

\medskip

Let $\gvar:=\gvar_1 + \gvar_2 := \chi(|P| < \frac{p_c}2) \gvar +
\chi(|P| \geq \frac{p_c}2) \gvar $, where $P=i\nabla_x$ is the
total momentum operator (due to the transformation
\eqref{eq-H-Utrsf-1}-\eqref{def:unitary-transform}) and $p_c=\frac13$ is a lower bound on the
norm of the total momentum for which \cite[Theorem 3.2]{Chen2008}
holds.

Since $P$ commutes with the translation invariant operator $T=H
+\frac{\alpha}{|x|}$, we have for all $\epsilon\in (0,1)$,
 \begin{equation}
 \begin{split}\label{eq:eq}
  &\bra H \gvar , \gvar\ket  =  \bra H \gvar_1, \gvar_1 \ket + \bra H \gvar_2,\gvar_2\ket
  - 2\Re \bra \frac{\alpha}{|x|} \gvar_1, \gvar_2\ket \\
  & \geq  \bra H \gvar_1, \gvar_1 \ket + \bra H \gvar_2,\gvar_2\ket
  - \epsilon \bra \frac{\alpha}{|x|} \gvar_1, \gvar_1\ket
  -\epsilon^{-1} \bra \frac{\alpha}{|x|} \gvar_2, \gvar_2\ket .
 \end{split}
 \end{equation}
$\bullet$ First, we have the following estimate
\begin{equation}\nonumber
\begin{split}
 & :(P-P_f -\sqrt{\alpha} A(0))^2: + H_f + 2\Re\sqrt{\alpha}\sigma.\Ba\\
 & = (P-P_f)^2 - 2\Re (P-P_f) . \sqrt{\alpha} A(0) + \alpha :A(0)^2: +
 H_f  + 2\Re\sqrt{\alpha} \sigma .\Ba\\
 & \geq  (1-\sqrt{\alpha})(P-P_f)^2 + (\alpha
 -\sqrt{\alpha}):A(0)^2: + H_f - \cno \sqrt{\alpha} -c\sqrt{\alpha} H_f
   \\
 & \geq  (1-\sqrt{\alpha})(P-P_f)^2 +
 (1- \mathcal{O}(\sqrt{\alpha})) H_f - \mathcal{O}(\sqrt{\alpha})
\end{split}
\end{equation}
where in the last inequality we used \cite[Lemma~A4]{GLL}. Therefore
\begin{equation}\label{eq:lem1-4}
\begin{split}
 \bra (H - \epsilon^{-1} \frac{\alpha}{|x|})\gvar_2, \gvar_2\ket
 \geq \bra (\frac{1-\sqrt{\alpha}}{2} (P-P_f)^2 -
 (1+\epsilon^{-1})\frac{\alpha}{|x|}) \gvar_2, \gvar_2\ket\\
 + \Big\bra \left( \frac{1-\sqrt{\alpha}}{2} (P-P_f)^2
 + (1-\mathcal{O}(\sqrt{\alpha}))H_f - \mathcal{O}(\sqrt{\alpha})\right)
 \gvar_2,\gvar_2
 \Big\ket
\end{split}
\end{equation}
The lowest eigenvalue of the Schr\"odinger operator
$-(1-\mathcal{O}(\sqrt{\alpha})) \frac{\Delta}2 -
\frac{(1+\epsilon^{-1})\alpha}{|x|}$ is larger than
$-c_\epsilon\alpha^2$. Thus, using \eqref{eq:lem1-4} and denoting
 $$
  L:=\frac{1-\sqrt{\alpha}}{2} (P-P_f)^2 +
  (1-\mathcal{O}(\sqrt{\alpha}))H_f - \mathcal{O}(\sqrt{\alpha})
  -c_\epsilon\alpha^2\ ,
 $$
we get
\begin{equation}\label{eq:lem1-5}
 \bra H\gvar_2, \gvar_2 \ket -\epsilon^{-1} \bra \frac{\alpha}{|x|}\gvar_2,\gvar_2\ket
 \geq \bra L\gvar_2, \gvar_2\ket .
\end{equation}
Now we have the following alternative: Either $\bra \gvar_2,\, P_f\gvar_2\ket \leq \frac{p_c}{3}\|\gvar_2\|^2$, in which case, using $\gvar_2 = \chi(|P|>\frac{p_c}{2})\gvar_2$ we get
$(\gvar_2, (P-P_f)^2 \gvar_2) \ge \frac{(\gvar_2, (P-P_f)\gvar_2)^2}{(\gvar_2,\, \gvar_2)} \ge \frac{p_c}{36}\| \gvar_2 \|^2$
 and thus $\bra L\gvar_2,\, \gvar_2\ket \geq (\frac{1}{72} p_c^2 -\mathcal{O}(\alpha^\frac12))\|\gvar_2\|^2$, or $\bra\gvar_2,\, P_f\gvar_2\ket \geq \frac{p_c}{3} \|\gvar_2\|^2$, in which case, using $H_f\geq |P_f|$ we obtain $\bra L\gvar_2,\, \gvar_2\ket \geq (\frac{p_c}{6} - \mathcal{O}(\alpha^\frac12))\|\gvar_2\|^2$.

In both cases, for $\alpha$ small enough, this yields the bound
\begin{equation}\label{eq:lem1-6}
  \bra L\gvar_2, \gvar_2\ket \geq \frac{p_c^2}{144} \|\gvar_2\|^2\geq
  (\Sigma_0 - e_0 + \frac78 (e_0-e_1)) \|\gvar_2\|^2
\end{equation}
since, for $\alpha$ small enough, the right hand side tends to
zero, whereas $p_c$ is a constant independent of $\alpha$.
Inequalities~\eqref{eq:lem1-5} and \eqref{eq:lem1-6} yield
\begin{equation}\label{eq:added2}
 \bra H\gvar_2, \gvar_2\ket - \epsilon^{-1}\bra \frac{\alpha}{|x|}\gvar_2, \gvar_2\ket
 \geq (\Sigma_0 - e_0 + \frac78 (e_0-e_1)) \|\gvar_2\|^2 .
\end{equation}

$\bullet$ For $T(p)$ being the self-energy operator with fixed
total momentum $p\in\R^3$ defined in \eqref{def:T(P)}, we have
from \cite[Theorem~3.1~(B)]{Chen2008}
\begin{equation}\nonumber
 \left|\inf\sigma(T(p)) - {p^2} -\inf\sigma(T(0))\right|
 \leq {c_0 \alpha p^2} .
\end{equation}
Therefore
\begin{equation}\label{eq:lem1-1}
    T(p) \geq (1- o_\alpha(1))p^2 + \Sigma_0  .
\end{equation}
Case 1: If $\|\gvar_2\|^2 \geq 8\|\gvar_1\|^2$, we first do the
following estimate, using \eqref{eq:lem1-1},
\begin{equation}\label{eq:added1}
\begin{split}
& \bra H \gvar_1, \gvar_1\ket -\epsilon \bra \frac{\alpha}{|x|} \gvar_1, \gvar_1\ket \\
& \geq (1- o_\alpha(1))(P^2\gvar_1, \gvar_1\ket - \bra
(1+\epsilon)\frac{\alpha}{|x|}\gvar_1, \gvar_1\ket +  \Sigma_0 \|\gvar_1\|^2\\
& \geq \left(\Sigma_0- (1+ \mathcal{O}(\alpha) +
\mathcal{O}(\epsilon)) e_0\right) \|\gvar_1\|^2 .
\end{split}
\end{equation}
Therefore, together with $\|\gvar_2\|^2 \geq 8\|\gvar_1\|^2$ and
\eqref{eq:added2}, for $\alpha$ and $\epsilon$ small enough this
implies
\begin{equation}\label{eq:eq30}
  \bra H \gvar, \gvar\ket \geq (\Sigma_0-e_0)
  \| \gvar\| ^2 + \frac34 (e_0-e_1)\| \gvar\| ^2 .
\end{equation}

Case 2: If $\|\gvar_2\|^2 < 8\|\gvar_1\|^2$, we write the estimate
\begin{equation}\label{eq:added3}
\begin{split}
 & \bra H \gvar_1, \gvar_1\ket -\epsilon \bra \frac{\alpha}{|x|} \gvar_1, \gvar_1\ket
  \\
 \geq & (1- o_\alpha(1))(P^2\gvar_1, \gvar_1\ket - \bra
 (1+\epsilon)\frac{\alpha}{|x|}\gvar_1, \gvar_1\ket +  \Sigma_0 \|\gvar_1\|^2\\
 \geq & (1\!+\! o_\alpha(1)\!+\!\mathcal{O}(\epsilon))
 \Bigg(- e_0\! \sum_{k=0}^\infty \|\, \bra \Proj^k \gvar_1,
 \Sgsa \ket_{L^2(\R^3, \mathrm{d}x)}\,\|^2 \!-\! e_1(\|\gvar_1\|^2 \\
 & - \! \sum_{k=0}^\infty
 \|\, \bra \Proj^k \gvar_1, \Sgsa \ket_{L^2(\R^3, \mathrm{d}x)}
 \,\|^2) \Bigg)
 +  \Sigma_0 \|\gvar_1\|^2 .
\end{split}
\end{equation}
Now, by orthogonality of $\gvar $  and $\Sgsa $ in the sense that
for all $k$, $\bra \Proj^k \gvar, \Sgsa  \ket_{L^2(\R^3,
\mathrm{d}x)} =0$, we get
\begin{equation}\label{eq:eq31}
\begin{split}
  & \sum_{k=0}^\infty \|\bra \Sgsa , \Proj^k \gvar_1
  \ket_{L^2(\R^3, \mathrm{d}x)}\|^2
  = \sum_{k=0}^\infty
  \|\bra \Sgsa  , \Proj^k
  \gvar_2\ket_{L^2(\R^3, \mathrm{d}x)}\|^2 \\
  & \leq \|\gvar_2\|^2
  \|  \chi(|P| \geq \frac{p_c}2)\Sgsa\|^2
  \leq 8 \|\gvar_1\|^2  \| \chi(|P| \geq \frac{p_c}2)\Sgsa\|^2
   \displaystyle\rightarrow_{\alpha\rightarrow 0} 0\, .
\end{split}
\end{equation}
Thus, for $\alpha$ and $\epsilon$ small enough, \eqref{eq:added3},
\eqref{eq:eq31}, \eqref{eq:added2} and the fact that $\|\chi(|P|>\frac{p_c}{2})u_\alpha\|\to_{\alpha\to 0}0$ imply also \eqref{eq:eq30}
in that case.

This concludes the proof of Proposition~\ref{prop:orthogonality}.

\section{Technical results}\label{appendix2}

\begin{lem}\label{lem:appendix-1}
For $\gr$ defined by \eqref{def:gr} and $\Gu$ defined by \eqref{eq:3.0}, we have for all $\alpha>0$
 $$
  \bra P.\Ac \vac\Sgsa\su,\, \gr\ket = 0
  \quad \mbox{and}\quad
  \bra \gr,\, \ups\ket_* = 0\, .
 $$
\end{lem}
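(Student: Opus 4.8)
The plan is to evaluate both inner products directly, exploiting that every object involved lives in the one-photon sector. First I would note that $\Ac\vac$ and $B^+\vac$ (hence $\Gu$) are one-photon states, while $P_f$, $H_f$ and $(H_f+P_f^2)^{-1}$ preserve the photon number; consequently $P.\Ac\vac\Sgsa\su$, $\gr$ and $\ups$ are all supported on $\gH_{el}\otimes\gF_1$, and it suffices to compute the associated one-photon wave functions.

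Next I would separate the electron variable $x$ from the photon--spin variables. Since $P=i\nabla_x$ acts only on $\Sgsa$, whereas $\Ac$, $B^+$ and $P_f$ act only on the $\C^2\otimes\gF$ factor, both vectors decompose as finite sums $\sum_i (P_i\Sgsa)\otimes v_i$ with $v_i\in\C^2\otimes\gF$, so that
\begin{equation*}
 \bra P.\Ac\vac\Sgsa\su,\, \gr\ket
 = 2\sum_{i,j}\bra \partial_{x_i}\Sgsa,\, \partial_{x_j}\Sgsa\ket_{L^2}\,
 \bra (\Ac)_i\vac\su,\, (P_f)_j (H_f+P_f^2)^{-1}\Gu\ket .
\end{equation*}
Because $\Sgsa$ is spherically symmetric one has $\bra \partial_{x_i}\Sgsa,\partial_{x_j}\Sgsa\ket = \tfrac13\|\nabla\Sgsa\|^2\,\delta_{ij}$, which collapses the double sum to the single trace $\sum_i\bra (\Ac)_i\vac\su,\,(P_f)_i(H_f+P_f^2)^{-1}\Gu\ket$.

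The heart of the argument is then the evaluation of this trace. On the one-photon sector $P_f$ and $H_f$ act as multiplication by $k$ and $|k|$, so $(P_f)_i$ contributes the factor $k_i$, while $(\Ac)_i\vac\su$ has wave function proportional to $(\varepsilon_\lambda(k))_i\,\zo$; the spinor contraction $\bra \zo,\,\sigma\cdot(k\times i\varepsilon_\lambda(k))\,\zo\ket_{\C^2}$ reduces, via the explicit Pauli matrices, to a scalar proportional to $(k\times\varepsilon_\lambda(k))_3$ that is independent of the index $i$ and may be pulled out of the sum. The summation over $i$ thus produces exactly $\sum_i (\varepsilon_\lambda(k))_i\,k_i = k\cdot\varepsilon_\lambda(k)$, which vanishes identically by the Coulomb-gauge transversality $\varepsilon_\lambda(k)\perp k$. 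Hence the integrand is pointwise zero in $k$ and the first identity holds, with no estimate required.

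For $\bra\gr,\ups\ket_*$ I would first use $\bra v,w\ket_* = \bra v,(H_f+P_f^2)w\ket$ to cancel the resolvent in $\ups$, giving $(H_f+P_f^2)\ups = 2\chi_{(\alpha,\infty)}(H_f)\,P.\Ac\vac\Sgsa\zo$; since $\chi_{(\alpha,\infty)}(H_f)$ acts as the radial multiplier $\chi_{(\alpha,\infty)}(|k|)$ it commutes through and leaves the angular and polarization structure untouched. The identical three reductions --- one-photon support, rotational symmetry yielding $\delta_{ij}$, and the trace producing the transverse contraction $k\cdot\varepsilon_\lambda(k)=0$ --- then give $\bra\gr,\ups\ket_*=0$. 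I expect the only real difficulty to be bookkeeping: keeping straight which operator supplies the factor $k$ and which supplies $\varepsilon_\lambda(k)$, tracking the complex conjugations in the $*$-form and in the spinor bracket, and checking that the cutoff and all resolvent factors depend on $|k|$ alone, so that the transversality cancellation survives intact.
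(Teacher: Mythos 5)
Your proof is correct: the reduction to the one-photon sector, the $\delta_{ij}$ from the spherical symmetry of $u_\alpha$, the spinor contraction $\bra\zo,\sigma\cdot v\,\zo\ket=v_3$, and the pointwise vanishing of $\sum_i k_i(\varepsilon_\lambda(k))_i$ by Coulomb-gauge transversality (which survives the purely radial resolvent and cutoff factors) all check out, for both identities. This is exactly the ``straightforward computation'' the paper invokes without writing out, so your argument coincides with the intended one and simply supplies the omitted details.
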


\begin{proof}
This is a straightforward computation.
\end{proof}

\begin{lem}\label{lem:appendix-2}
For $|\mab|=1$, we have
$$
 \sum_{i=1}^3 \| (H_f+P_f^2)^{-1} (P_f)^{(i)} \Guab\|_*^2  = \frac{2}{\pi}\int_0^\infty \frac{t^2\, \zeta^2(t)}{(1+t)^3}\, \d t
$$
and
$$
 \sum_{i=1}^3 \|\chi_{(\alpha,\infty)}(H_f)(H_f+P_f^2)^{-1} (\Ac)^{(i)}\mab\vac\|_*^2 =
 \frac{2}{\pi} \int_0^\infty \frac{\zeta^2(t)}{1+t} \d t \ +\ \mathcal{O}(\alpha)\, .
$$

In particular, for $\gr$ defined by \eqref{def:gr} and $\ups$ defined by \eqref{def:ups}, we have
$$
 \| \gr\|_*^2  = \frac{2\alpha^2}{3\pi} \int_0^\infty \frac{t^2\, \zeta^2(t)}{(1+t)^3}\, \d t\, ,\quad\mbox{and}\quad
 \| \ups\|_*^2 = \frac{2\alpha^2}{3\pi} \int_0^\infty \frac{\zeta^2(t)}{1+t} \d t \ +\ \mathcal{O}(\alpha^3)\, .
$$
\end{lem}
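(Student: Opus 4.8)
The plan is to reduce every $\|\cdot\|_*$ appearing in the statement to an explicit integral over photon momenta. The crucial simplification is that all vectors involved live in the one-photon sector $\gH_{el}\otimes\gF_1$, on which $H_f+P_f^2$ acts as multiplication by $|k|(1+|k|)$. Consequently, for any one-photon state $w=\sum_\lambda\int f_\lambda(k)\,a_\lambda^*(k)\vac\,\d k$ one has $\|w\|_{*,\gF}^2=\sum_\lambda\int |k|(1+|k|)\,|f_\lambda(k)|^2\,\d k$, and likewise $(H_f+P_f^2)^{-1}$ and $P_f^{(i)}$ act by the scalar multipliers $[|k|(1+|k|)]^{-1}$ and $k_i$. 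Writing out $\Guab=(H_f+P_f^2)^{-1}\sum_\lambda\int \tfrac{\zeta(|k|)}{2\pi|k|^{1/2}}\,\sigma\!\cdot\!(k\times i\varepsilon_\lambda(k))\,a_\lambda^*(k)\vac\,\d k\,\mab$ and the analogous one-photon expression for $(\Ac)^{(i)}\mab\vac$ then turns each $*$-norm into a manifestly computable $k$-integral.

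For the first identity I would insert these multipliers and contract the spin indices. The key algebraic input is the pointwise identity $\langle\sigma\!\cdot\!(k\times i\varepsilon_\lambda(k))\mab,\,\sigma\!\cdot\!(k\times i\varepsilon_\lambda(k))\mab\rangle_{\C^2}=|k\times\varepsilon_\lambda(k)|^2=|k|^2$, valid for every unit spinor $\mab$; it follows from $(\sigma\!\cdot\!w)^2=|w|^2 I$ for the real vector $w=k\times\varepsilon_\lambda(k)$, together with $\varepsilon_\lambda(k)\perp k$ and $|\varepsilon_\lambda(k)|=1$. After summing over the two polarizations and over $i$ (using $\sum_i k_i^2=|k|^2$) and passing to spherical coordinates ($\d k=4\pi t^2\,\d t$), the value $\tfrac{2}{\pi}\int_0^\infty \tfrac{t^2\zeta^2(t)}{(1+t)^3}\,\d t$ drops out. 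The second identity is entirely parallel: here the polarization contraction is $\sum_{i,\lambda}|\varepsilon_\lambda^{(i)}(k)|^2=\sum_\lambda|\varepsilon_\lambda(k)|^2=2$, and the infrared cutoff $\chi_{(\alpha,\infty)}(H_f)$ restricts the radial integral to $t>\alpha$; since $\zeta^2(t)/(1+t)$ is bounded near $0$, replacing $\int_\alpha^\infty$ by $\int_0^\infty$ costs only $\mathcal{O}(\alpha)$.

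For the two remaining identities I would exploit the tensor structure. Both $\gr$ and $\ups$ are finite sums of products $(P^{(j)}\Sgsa)\otimes w_j$ with $w_j$ a one-photon field state, and since $P=i\nabla_x$ acts only on the electron variable while $H_f+P_f^2$ acts only on the field, the $*$-norm factorizes: $\|\gr\|_*^2=4\sum_{j,j'}\langle\partial_{x_j}\Sgsa,\partial_{x_{j'}}\Sgsa\rangle\,\langle w_j,w_{j'}\rangle_{*,\gF}$ and similarly for $\ups$. The electron factor is handled by the rotational symmetry of $\Sgsa$, which gives $\langle\partial_{x_j}\Sgsa,\partial_{x_{j'}}\Sgsa\rangle=\tfrac13\delta_{jj'}\|\nabla\Sgsa\|^2$, together with the virial identity for the Coulomb ground state, $\|\nabla\Sgsa\|^2=\alpha^2/4$ (obtained from $\langle T\rangle=-E=\alpha^2/4$). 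The field factor, after noting that $P_f^{(j)}$ commutes with $(H_f+P_f^2)^{-1}$, is precisely the sum computed in the first (resp. second) identity specialized to $\mab=\zo$. Multiplying the constants $4\cdot\tfrac13\cdot\tfrac{\alpha^2}{4}$ against those values yields $\|\gr\|_*^2=\tfrac{2\alpha^2}{3\pi}\int_0^\infty\tfrac{t^2\zeta^2(t)}{(1+t)^3}\,\d t$ and $\|\ups\|_*^2=\tfrac{2\alpha^2}{3\pi}\int_0^\infty\tfrac{\zeta^2(t)}{1+t}\,\d t+\mathcal{O}(\alpha^3)$.

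The computation is essentially mechanical, so there is no deep obstacle; the points demanding care are the spin contraction $|\sigma\!\cdot\!(k\times i\varepsilon_\lambda(k))\mab|^2=|k|^2$ (which is what makes the answer independent of the spinor $\mab$) and the faithful bookkeeping of the numerical prefactors — the $(2\pi)^{-2}$ from the form factor, the factor $2$ from the polarization sum, the angular factor $\tfrac13$, and the factor $4$ coming from the $2$'s in the definitions of $\gr$ and $\ups$ — which must conspire to produce the quoted constants; one also needs the elementary estimate $\int_0^\alpha \zeta^2(t)/(1+t)\,\d t=\mathcal{O}(\alpha)$ to control the infrared cutoff.
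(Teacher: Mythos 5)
Your proof is correct and takes essentially the same route as the paper's: both reduce every $*$-norm to an explicit one-photon momentum-space integral, using the multiplier $|k|(1+|k|)$ for $H_f+P_f^2$, the polarization sums, the spherical symmetry of $\Sgsa$, and the virial identity $\|\nabla\Sgsa\|^2=\alpha^2/4$ to extract the stated constants. The only cosmetic difference is that where the paper substitutes the explicit component formulas for $\sigma\!\cdot\!\Bc\vac\zo$ and \eqref{eq:explicit-Guab}, you invoke the Pauli identity $(\sigma\!\cdot\! w)^2=|w|^2 I$ for the real vector $w=k\times\varepsilon_\lambda(k)$, which has the small advantage of making the independence of the answer from the spinor $\mab$ immediate.
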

\begin{proof}
We denote by $\uparrow$ and $\downarrow$ the first and second component of a $\C^2$ vector, and by $\lambda_1$ and $\lambda_2$ the two polarizations of transverse photons. Then for $\varphi\in\Proj_1(\C^2\otimes\gF)$ we set
\begin{equation}\label{eq:def-four-components}
 \varphi = :
 \begin{pmatrix}
    \varphi(\uparrow,\, k, \lambda_1,) \\
    \varphi(\uparrow,\, k, \lambda_2,) \\
    \varphi(\downarrow,\, k, \lambda_1,) \\
    \varphi(\downarrow,\, k, \lambda_2)
 \end{pmatrix}\, .
\end{equation}
The proof is thus a straightforward computation using the definition of $\gr$ and $\ups$, the fact that (see e.g. \cite{BLV})
\begin{equation}\nonumber
 \sigma\!\cdot\!\Bc \vac\zo = \frac{-i\, \zeta(|k|)}{ 2\pi |k|^\frac12}
 \!\left(\!\!\begin{array}{l}
 -\sqrt{k_1^2 + k_2^2} \\
 0 \\
 \frac{(k_1+ ik_2)k_3}{\sqrt{k_1^2 +k_2^2}}\\
 \frac{|k|(-k_2 +ik_1)}{\sqrt{k_1^2 + k_2^2}}
 \end{array}\!\!\right)
 \, ,
\end{equation}
and
$$
 (H_f+P_f^2)^{-1}  \Ac \vac  = \frac{\zeta(|k|)}{2\pi|k|^\frac12 (|k|^2 + |k|) \sqrt{k_1^2 + k_2^2}}\left(\!\!
    \begin{array}{l}
        k_2 +\frac{k_1k_3}{|k|} \\
        -k_1 + \frac{k_2k_3}{|k|} \\
        \frac{-(k_1^2 + k_2^2)}{|k|}
    \end{array}\!\!\right)\, .
$$
\end{proof}

\begin{lem}\label{lem:IPNE}
$$
 \sum_{\lambda=1,2} \int_{|k|\leq\alpha^\frac13} \| a_\lambda(k) \Guab \|^2 \mathrm{d}k \leq c\alpha^\frac23\, .
$$
\end{lem}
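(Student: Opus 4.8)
The plan is to exploit the explicit one-photon structure of $\Guab$. Since $\sigma\cdot\Bc\vac\mab$ lies in the one-photon sector, and on that sector $H_f$ and $P_f^2$ act as multiplication by $|k|$ and $|k|^2$, the resolvent $(H_f+P_f^2)^{-1}$ acts as multiplication by $(|k|+|k|^2)^{-1}$. Reading off the creation part $\Bc$ of the magnetic field from Section~\ref{model}, the vector $\Guab$ is therefore the one-photon state whose component along polarization $\lambda$ and momentum $k$ is the $\C^2$-valued function (up to an overall sign, which is irrelevant for the norm)
\[
 (\Guab)_\lambda(k) = \frac{\zeta(|k|)}{2\pi|k|^{1/2}(|k|+|k|^2)}\,\sigma\cdot\bigl(k\times i\varepsilon_\lambda(k)\bigr)\mab\, .
\]

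First I would apply the annihilation operator. For a one-photon vector $\Phi$ with component functions $\Phi_\lambda(k)$, the canonical commutation relations give $a_\lambda(k)\Phi = \Phi_\lambda(k)\,\vac$, so that $\|a_\lambda(k)\Guab\|^2 = |(\Guab)_\lambda(k)|^2_{\C^2}$. The next step is the pointwise $\C^2$-norm: writing $k\times i\varepsilon_\lambda(k)=i\,u$ with $u=k\times\varepsilon_\lambda(k)\in\R^3$, the identity $(\sigma\cdot u)^2=|u|^2\,I$ yields $|\sigma\cdot(k\times i\varepsilon_\lambda(k))\mab|^2_{\C^2}=|u|^2|\mab|^2=|k|^2$, using $\varepsilon_\lambda(k)\perp k$, $|\varepsilon_\lambda(k)|=1$ and $|\mab|=1$. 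Hence
\[
 \sum_{\lambda=1,2}\|a_\lambda(k)\Guab\|^2 = \frac{2\,\zeta^2(|k|)\,|k|^2}{4\pi^2|k|\,(|k|+|k|^2)^2} = \frac{\zeta^2(|k|)}{2\pi^2|k|\,(1+|k|)^2}\, .
\]

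Finally, passing to spherical coordinates ($\d k = 4\pi r^2\,\d r$, $r=|k|$) gives
\[
 \sum_{\lambda=1,2}\int_{|k|\leq\alpha^{1/3}}\|a_\lambda(k)\Guab\|^2\,\d k = \frac{2}{\pi}\int_0^{\alpha^{1/3}}\frac{r\,\zeta^2(r)}{(1+r)^2}\,\d r \leq c\int_0^{\alpha^{1/3}} r\,\d r = \frac{c}{2}\,\alpha^{2/3}\, ,
\]
where I used that $\zeta$ is bounded and $r/(1+r)^2\leq r$, which is the claimed estimate.

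There is essentially no serious obstacle. The only point requiring care is the spin and polarization bookkeeping, which collapses cleanly thanks to $(\sigma\cdot u)^2=|u|^2 I$ and $|k\times\varepsilon_\lambda(k)|=|k|$. The decisive structural fact is that the magnetic form factor behaves like $|k|^{1/2}$ near the origin (it vanishes rather than diverges), so that after dividing by $(|k|+|k|^2)\sim|k|$ the summand $\sum_\lambda\|a_\lambda(k)\Guab\|^2$ is only of order $|k|^{-1}$; against the measure $|k|^2\,\d|k|$ this produces an integrand of order $|k|$, and integration over a ball of radius $\alpha^{1/3}$ yields exactly the factor $\alpha^{2/3}$.
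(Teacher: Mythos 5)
Your proof is correct and follows essentially the same route as the paper: both reduce the claim to the pointwise norm of the one-photon wavefunction of $\Guab$ and then integrate over the ball $|k|\leq\alpha^{\frac13}$, your identity $(\sigma\cdot u)^2=|u|^2 I$ merely replacing the paper's explicit four-component formula \eqref{eq:explicit-Guab} and yielding the exact value $\sum_{\lambda}\|a_\lambda(k)\Guab\|^2=\zeta^2(|k|)/\bigl(2\pi^2|k|(1+|k|)^2\bigr)$ rather than just an upper bound. Incidentally, your computation shows that the intermediate bound displayed in the paper's proof, $\|a_\lambda(k)\Guab\|\leq c|k|^{-1}$, should be read as $\|a_\lambda(k)\Guab\|\leq c|k|^{-\frac12}$ (equivalently $\|a_\lambda(k)\Guab\|^2\leq c|k|^{-1}$), since the bound as literally written would only produce $\mathcal{O}(\alpha^{\frac13})$ after integration over $|k|\leq\alpha^{\frac13}$.
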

\begin{proof}
Let $\mab\in\C^2$. With the notation \eqref{eq:def-four-components}, we have the following identity
\begin{equation}\label{eq:explicit-Guab}
\Guab (k,\lambda) =
 \frac{i\, \zeta(|k|)}{2\pi (|k|+|k|^2)}
 \begin{pmatrix}
  \frac{1}{|k|^\frac12} (-a \sqrt{k_1^2 + k_2^2} + b\,\frac{(k_1 - ik_2)k_3}{\sqrt{k_1^2 + k_2^2}}) \\
  b \,  |k|^\frac12 \frac{(-k_2 -i k_1) }{\sqrt{k_1^2 +k_2^2}} \\
  \frac{1}{|k|^\frac12} (b \sqrt{k_1^2 + k_2^2} + a\,\frac{(k_1 + ik_2)k_3}{\sqrt{k_1^2 + k_2^2}}) \\
  a \, |k|^\frac12 \frac{(-k_2 +i k_1) }{\sqrt{k_1^2 +k_2^2}}
 \end{pmatrix}
\end{equation}

A straightforward computation using \eqref{eq:explicit-Guab} yields
 $$
  \| a_\lambda(k) \Guab \| \leq c |k|^{-1}\, ,
 $$
 which implies the result.
\end{proof}

\begin{lem}\label{lem:appendix-3}
For all $\mab$ and $\mtab$ in $\C^2$, for all $i=1,2,3$, we have, for $\Guab$ defined by \eqref{def:Guab}
\begin{equation}\nonumber
\begin{split}
 \left\bra (H_f+P_f^2)^{-1} (\Ac)^{(i)}\vac \mab ,\,
 (H_f+P_f^2)^{-1}  (P_f)^{(i)} \Guab \right\ket_* \, =0 ,
\end{split}
\end{equation}
and
\begin{equation}\nonumber
\begin{split}
     \left\bra  \Gutab ,\, (P_f)^{(i)} \Guab \right\ket
   = \left\bra  \Gutab ,\, (P_f)^{(i)} \Guab \right\ket_* \, =0 \, .
\end{split}
\end{equation}
\end{lem}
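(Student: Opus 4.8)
The plan is to exploit that every vector occurring in the two identities belongs to the one–photon sector, where $H_f$ and $P_f$ act as multiplication by $|k|$ and $k$ respectively, so that $(H_f+P_f^2)^{-1}$ is multiplication by $(|k|+|k|^2)^{-1}$. Consequently each scalar product reduces to an integral over $k\in\R^3$, together with the sum over the two polarizations $\lambda$ and the trace over the spin index. First I would record the explicit one–photon wave functions: $(\Ac)^{(i)}\vac\mab$ has wave function $(2\pi|k|^{1/2})^{-1}\zeta(|k|)\,\varepsilon_\lambda^{(i)}(k)\,\mab$, while, as in \eqref{eq:explicit-Guab},
\[
 \Guab(k,\lambda)=\frac{\zeta(|k|)}{2\pi|k|^{1/2}(|k|+|k|^2)}\,\sigma\!\cdot\!\bigl(i\,k\times\varepsilon_\lambda(k)\bigr)\mab .
\]

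For the second identity I would first use that $(H_f+P_f^2)$ commutes with $(P_f)^{(i)}$ and is diagonal in the photon momentum, so it suffices to treat $\bra\Gutab,(P_f)^{(i)}\Guab\ket$ and its $*$–analogue simultaneously (the latter carries the extra even factor $|k|+|k|^2$). Since $k\times\varepsilon_\lambda(k)$ is a real vector of length $|k|$, one has $(\sigma\!\cdot\!(k\times\varepsilon_\lambda))^2=|k|^2 I$, so the spin–polarization contraction collapses to $|k|^2\,\bra\mtab,\mab\ket$ independently of $\lambda$. What remains is $\bra\mtab,\mab\ket$ times an integral of $k_i$ against an even function of $|k|$, which vanishes by oddness in $k_i$.

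The first identity is the more delicate one. Using $\bra v,w\ket_*=\bra v,(H_f+P_f^2)w\ket$ I would rewrite the left–hand side as the plain scalar product $\bra (\Ac)^{(i)}\vac\mab,(H_f+P_f^2)^{-1}(P_f)^{(i)}\Guab\ket$, again a single $k$–integral. The key algebraic step is the polarization identity $k\times\varepsilon_1=|k|\varepsilon_2$, $k\times\varepsilon_2=-|k|\varepsilon_1$, which gives $\sum_\lambda\varepsilon_\lambda^{(i)}(k)\,(k\times\varepsilon_\lambda(k))=k\times e_i$, with $e_i$ the $i$–th standard basis vector. Writing $s_j:=\bra\mab,\sigma_j\mab\ket$, the integrand becomes proportional to $\sum_{j,m}s_j\,\epsilon_{ijm}\,k_ik_m$ times an even function of $|k|$. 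Rotational invariance forces $\int g(|k|)\,k_ik_m\,\d k=\tfrac13\delta_{im}\int g(|k|)|k|^2\,\d k$, and the contraction $\sum_m\epsilon_{ijm}\delta_{im}=\epsilon_{iji}=0$ kills the integral for each fixed $i$.

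The main obstacle is this first identity: one must carry out the polarization algebra correctly and recognise that the cancellation is produced by the antisymmetry of $\epsilon_{ijm}$ against the symmetric angular average $\delta_{im}$, rather than by a naive parity argument in $k\to-k$ (which here leaves the integrand even). The remaining manipulations — the one–photon reduction, the identity $(\sigma\!\cdot\!(k\times\varepsilon_\lambda))^2=|k|^2$, and the oddness argument for the second identity — are routine.
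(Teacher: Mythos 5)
Your proof is correct. It is worth noting how it relates to the paper's own argument: the paper disposes of this lemma in one line, as a ``straightforward computation'' based on the explicit four-component formula \eqref{eq:explicit-Guab} for $\Guab$, i.e.\ a componentwise evaluation of the $k$-integrals. You instead execute the same one-photon reduction in a basis-free way: the polarization identities $k\times\varepsilon_1=|k|\varepsilon_2$, $k\times\varepsilon_2=-|k|\varepsilon_1$ (hence $\sum_\lambda\varepsilon_\lambda^{(i)}(k)\,(k\times\varepsilon_\lambda(k))=k\times e_i$, using $k\times\hat k=0$), the Pauli algebra $(\sigma\cdot(k\times\varepsilon_\lambda))^2=|k|^2$, and the angular average $\int g(|k|)k_ik_m\,\d k=\tfrac13\delta_{im}\int g(|k|)|k|^2\,\d k$ contracted against $\epsilon_{ijm}$. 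All of these steps check out (including the rewriting $\bra v,w\ket_*=\bra v,(H_f+P_f^2)w\ket$, which correctly leaves one factor $(H_f+P_f^2)^{-1}$ in the first identity, and the extra even radial factor $|k|+|k|^2$ in the $*$-version of the second). Your approach buys transparency — it shows the first cancellation comes from antisymmetry of $\epsilon$ against a symmetric angular tensor, not from accidental features of the components — at the cost of slightly more algebraic setup than just substituting \eqref{eq:explicit-Guab}. One small remark on your closing comment: while you are right that the global parity $k\to-k$ leaves the integrand $g(|k|)\,\epsilon_{ijm}s_j\,k_ik_m$ even and hence fails, a coordinatewise reflection argument does still work: since $\epsilon_{ijm}$ vanishes for $m=i$, every surviving monomial is $k_ik_m$ with $m\neq i$, odd under $k_m\to -k_m$ alone, so the full rotational average, though perfectly valid, is slightly more than is needed.
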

\begin{proof}
Straightforward computations using \eqref{eq:explicit-Guab}.
\end{proof}

\begin{lem}\label{lem:appendix-4}
For all $\mab$, $\mtab$ and $i=1,2,3$ we have
\begin{equation}\nonumber
 \bra (\Ac)^{(i)} \mtab\vac, \Guab\ket = 0\, .
\end{equation}
\end{lem}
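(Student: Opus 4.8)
The plan is to reduce the claim to a parity argument in the photon momentum variable. Both factors are one-photon states in $\C^2\otimes\gF$, so I would first record their momentum kernels. Since $\varepsilon_\lambda(k)\in\R^3$, the state $(\Ac)^{(i)}\mtab\vac$ has kernel $\tfrac{\zeta(|k|)}{2\pi|k|^{1/2}}\,\varepsilon_\lambda^{(i)}(k)\,\mtab$ at $(k,\lambda)$. For $\Guab$ I would use that $H_f+P_f^2$ acts as multiplication by $|k|+|k|^2$ on the one-photon sector, together with the fact that $\sigma\cdot\Bc\vac\mab$ has kernel $-\tfrac{\zeta(|k|)}{2\pi|k|^{1/2}}\,[\sigma\cdot(k\wedge i\varepsilon_\lambda(k))]\,\mab$, to obtain the kernel $\tfrac{\zeta(|k|)}{2\pi|k|^{1/2}(|k|+|k|^2)}\,[\sigma\cdot(k\wedge i\varepsilon_\lambda(k))]\,\mab$. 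The same conclusion can be read off directly from the explicit formula \eqref{eq:explicit-Guab}.

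Pairing the two kernels, summing over the spin index and the polarization $\lambda$, and using that $\zeta$ and $\varepsilon_\lambda^{(i)}$ are real, I would arrive at
\begin{equation}\nonumber
\bra (\Ac)^{(i)}\mtab\vac,\,\Guab\ket
= \int \frac{\zeta(|k|)^2}{(2\pi)^2\,|k|\,(|k|+|k|^2)}\,
\Big\langle \mtab,\ \Big[\sigma\cdot\sum_{\lambda}\varepsilon_\lambda^{(i)}(k)\,(k\wedge i\varepsilon_\lambda(k))\Big]\,\mab\Big\rangle_{\C^2}\,\d k\, ,
\end{equation}
so that the entire question reduces to the behaviour of the vector $\sum_\lambda \varepsilon_\lambda^{(i)}(k)\,(k\wedge i\varepsilon_\lambda(k))$ weighted by the radial function $g(|k|):=\zeta(|k|)^2\big/\big((2\pi)^2|k|(|k|+|k|^2)\big)$.

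The key step is then the polarization sum. Since $\{k/|k|,\varepsilon_1(k),\varepsilon_2(k)\}$ is a right-handed orthonormal frame, one has $k\wedge\varepsilon_1=|k|\varepsilon_2$, $k\wedge\varepsilon_2=-|k|\varepsilon_1$ and $\varepsilon_1\wedge\varepsilon_2=k/|k|$, whence
\[
\sum_\lambda \varepsilon_\lambda^{(i)}(k)\,(k\wedge\varepsilon_\lambda(k))^{(j)}
=|k|\big(\varepsilon_1^{(i)}\varepsilon_2^{(j)}-\varepsilon_2^{(i)}\varepsilon_1^{(j)}\big)
=|k|\sum_l \epsilon_{ijl}\,\frac{k^{(l)}}{|k|}
=\sum_l\epsilon_{ijl}\,k^{(l)}\, .
\]
Contracting with $i\sigma_j$ gives $\sigma\cdot\sum_\lambda\varepsilon_\lambda^{(i)}(k\wedge i\varepsilon_\lambda(k)) = i(\sigma\wedge k)^{(i)}$, a spin matrix that is \emph{linear} in $k$. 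Hence the integrand above is an odd function of $k$ times the even radial weight $g(|k|)$, and the integral over $\R^3$ vanishes. This yields $\bra (\Ac)^{(i)}\mtab\vac,\,\Guab\ket=0$ for every $i$ and every $\mab,\mtab\in\C^2$.

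The computation is elementary; the only point requiring care is the polarization bookkeeping, namely checking that $\sum_\lambda\varepsilon_\lambda^{(i)}(k\wedge\varepsilon_\lambda)^{(j)}$ collapses to $\sum_l\epsilon_{ijl}k^{(l)}$. This is where I expect the main (though modest) effort to lie. Once the integrand is identified as odd in $k$, the conclusion follows by parity and does not use any detailed property of $\zeta$ beyond its being radial.
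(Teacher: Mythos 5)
Your proof is correct. The paper disposes of this lemma with the single line ``straightforward computations using \eqref{eq:explicit-Guab}'', i.e.\ a componentwise verification: one pairs the explicit four-component kernel of $\Guab$ with the kernel $\tfrac{\zeta(|k|)}{2\pi|k|^{1/2}}\varepsilon_\lambda^{(i)}(k)\mtab$ of $(\Ac)^{(i)}\mtab\vac$ and checks that each resulting $k$-integral vanishes. Your route reaches the same conclusion but organizes it around the basis-free identity
$\sum_\lambda \varepsilon_\lambda^{(i)}(k)\,(k\wedge\varepsilon_\lambda(k))^{(j)}=\sum_l\epsilon_{ijl}\,k^{(l)}$
(which I checked: with $\varepsilon_2=\hat k\wedge\varepsilon_1$ one has $\varepsilon_1\wedge\varepsilon_2=\hat k$ a.e., and the contraction $\epsilon_{ijl}\epsilon_{lmn}=\delta_{im}\delta_{jn}-\delta_{in}\delta_{jm}$ gives exactly your collapse), followed by a parity argument: the integrand is an absolutely integrable odd function of $k$ against the even radial weight, so the integral vanishes. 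Note that the absolute integrability near $k=0$ (the weight times the linear factor behaves like $|k|^{-1}$, integrable in $\R^3$) is what makes the odd-function cancellation legitimate, and you implicitly have it thanks to the cutoff $\zeta$; it would be worth one sentence. Your approach buys two things over the paper's terse check: it is independent of the particular choice of polarization basis (the individual $\varepsilon_\lambda$ are neither smooth nor of definite parity, but the summed expression is), and it explains the mechanism behind the vanishing, so the same polarization-sum-plus-parity argument transfers with no extra work to the companion orthogonality relations of Lemma~\ref{lem:appendix-3}. There is no gap.
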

\begin{proof}
Straightforward computations using \eqref{eq:explicit-Guab}.
\end{proof}

\begin{lem}\label{lem:appendix-6}
For $g\in L^2(\R^3)\otimes\C^2$, and $\mab\!\in\!\C^2$ such that $|a|^2 + |b|^2=1$,
we have, for some constant $c>0$
\begin{flalign}
 & \|\Guab\|_* = \|\Gu^{(1,0)}\|_* \, , & \label{eq:app-1} \\
 & \|\Gu(g)\|_* = \|\Gu^{(1,0)}\|_*\, \|g\|\, ,\quad
   \|\Gu(g)\| = \|\Gu^{(1,0)}\|\, \|g\|\, , & \label{eq:app-2} \\
 & \|\Aa \Gu(g) \| = \|\Aa\Guab\| \, \|g\|\, , & \label{eq:app-4} \\
 & \|\Gd(g)\|_* = \|\Gdab\|_*\, \|g\| + \mathcal{O}(\alpha^3) \, , \label{eq:app-3} \\
 &
 \|\Gd(g)\| \leq c \, \|g\|\, . \label{eq:app-3-bis} &
\end{flalign}

Moreover, we have
\begin{flalign}
 & \|\nabla \Gu(g) \| = \|\Gu^{(1,0)}\|\, \|\nabla g\|\, , \label{eq:app-5} & \\
 &  \|\nabla \Gd(g)\| \leq c \, \|\nabla g\| \label{eq:app-6} \, .&
\end{flalign}
\end{lem}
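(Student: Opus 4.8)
The plan is to exploit that all four objects $\Guab$, $\Gu(g)$, $\Gdab$ and $\Gd(g)$ are \emph{linear} in the spin vector. Writing $\Gu^{(0,1)}$, $\Gd^{(0,1)}$ for $\Guab$, $\Gdab$ evaluated at $(a,b)=(0,1)$ (and $\Gu^{(1,0)}$, $\Gd^{(1,0)}$ as in the statement), the definitions \eqref{def:Guab}, \eqref{def:Gamma-1-g}, \eqref{def:pi-2-phi-1}, \eqref{def:Gamma-deux-g} give, for $g=\mabx$,
\begin{equation}\nonumber
 \Guab = a\,\Gu^{(1,0)} + b\,\Gu^{(0,1)}, \qquad \Gu(g)(x) = a(x)\,\Gu^{(1,0)} + b(x)\,\Gu^{(0,1)},
\end{equation}
together with the identical relations for $\Gd$, because $\sigma.\Bc$, $\Ac.P_f$, $\Ac.\Ac$ and $(H_f+P_f^2)^{-1}$ act only on the photon and spin variables and commute with multiplication by functions of $x$. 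Hence every claimed identity reduces to a statement about the four fixed, $\alpha$-independent building blocks $\Gu^{(1,0)},\Gu^{(0,1)},\Gd^{(1,0)},\Gd^{(0,1)}$.

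For any seminorm that acts only on the photon and spin variables one has $\|\Gu(g)\|^2 = \int_{\R^3}\big(|a|^2\|\Gu^{(1,0)}\|^2 + |b|^2\|\Gu^{(0,1)}\|^2 + 2\Re\big(\bar a b\,\bra\Gu^{(1,0)},\Gu^{(0,1)}\ket\big)\big)\,\d x$, and likewise for $\|\cdot\|_*$ and, after differentiating only $a,b$, for $\|\nabla(\cdot)\|$. So the factorizations \eqref{eq:app-1}, \eqref{eq:app-2}, \eqref{eq:app-4} and \eqref{eq:app-5} all follow once I check that, in the relevant seminorm, $\Gu^{(1,0)}$ and $\Gu^{(0,1)}$ (resp. $\Aa\Gu^{(1,0)}$ and $\Aa\Gu^{(0,1)}$) have equal length and are orthogonal. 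First I would establish these two facts directly from \eqref{eq:explicit-Guab}: summing $|\Guab(k,\lambda)|^2$ over the two spin and two polarization labels gives $\frac{\zeta(|k|)^2|k|}{2\pi^2(|k|+|k|^2)^2}\,(|a|^2+|b|^2)$, an expression depending on $(a,b)$ only through $|a|^2+|b|^2$; the absence of any $\bar a b$ term is exactly the orthogonality, and the symmetry in $|a|^2,|b|^2$ is the equal-length property. Inserting the weights $1$ and $|k|+|k|^2$ and integrating then yields the stated equalities for $\|\cdot\|$ and $\|\cdot\|_*$.

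The two-photon statements follow the same scheme. The boundedness bounds \eqref{eq:app-3-bis} and \eqref{eq:app-6} are immediate from $\|\Gd(g)\| \le (\|\Gd^{(1,0)}\|+\|\Gd^{(0,1)}\|)\|g\|$ and $\|\nabla\Gd(g)\| \le (\|\Gd^{(1,0)}\|+\|\Gd^{(0,1)}\|)\|\nabla g\|$, once the building blocks are shown to have finite norm; here the compact support of $\zeta$ is what controls the factors of photon momentum carried by the $2\Ac.P_f\Gu$ and $\Ac.\Ac$ pieces. For \eqref{eq:app-3} I would expand $\|\Gd(g)\|_*^2$ into the diagonal and cross contributions of $\Gd^{(1,0)},\Gd^{(0,1)}$; the identity $\|\Gd(g)\|_* = \|\Gdab\|_*\|g\|$ is exact precisely when $\|\Gd^{(1,0)}\|_* = \|\Gd^{(0,1)}\|_*$ and $\bra\Gd^{(1,0)},\Gd^{(0,1)}\ket_* = 0$, and any residual discrepancy is what the $\mathcal{O}(\alpha^3)$ on the right-hand side accommodates.

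The main obstacle will be this two-photon computation. One must write out the density of $\Gdab$ starting from \eqref{eq:explicit-Guab} and the formula for $\sigma.\Bc\vac\zo$ recalled in Lemma~\ref{lem:appendix-2}, symmetrize the resulting two-photon wavefunction, and evaluate $\|\Gd^{(1,0)}\|_*$, $\|\Gd^{(0,1)}\|_*$ and $\bra\Gd^{(1,0)},\Gd^{(0,1)}\ket_*$ with the weight $|k_1|+|k_2|+|k_1+k_2|^2$. Verifying that these reproduce the equal-length and orthogonality relations up to the controlled remainder is the delicate part; the one-photon identities, by contrast, are the short explicit computation carried out above.
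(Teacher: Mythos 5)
Your treatment of the one-photon statements \eqref{eq:app-1}, \eqref{eq:app-2}, \eqref{eq:app-4} and \eqref{eq:app-5} coincides with the paper's own proof: the paper likewise computes from \eqref{eq:explicit-Guab} that $\sum_\lambda |\Guab(k,\lambda)|^2$ depends on $(a,b)$ only through $|a|^2+|b|^2$ (your density is correct --- the $\bar a b$ cross terms coming from the first and third components cancel), and your Cauchy--Schwarz bounds for \eqref{eq:app-3-bis} and \eqref{eq:app-6} are the paper's ``each component carries a prefactor $a(x)$ or $b(x)$ times an $x$-independent function'' argument in compact form, together with $\partial_{x_i}\Gd(g)=\Gd(\partial_{x_i}g)$. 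Up to this point the proposal is sound.

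The gap is in \eqref{eq:app-3}. Your closing claim that ``any residual discrepancy is what the $\mathcal{O}(\alpha^3)$ accommodates'' cannot work: $\Gd^{(1,0)}$ and $\Gd^{(0,1)}$ are defined without any $\alpha$, so the quantities $\|\Gd^{(0,1)}\|_*^2-\|\Gd^{(1,0)}\|_*^2$ and $\bra \Gd^{(1,0)},\, \Gd^{(0,1)}\ket_*$ are fixed constants, and if either were nonzero the discrepancy $\|\Gd(g)\|_*^2-\|\Gdab\|_*^2\|g\|^2$ would be of size $\|g\|^2$, not $\mathcal{O}(\alpha^3)$. Your scheme therefore requires proving the equal-length and orthogonality relations \emph{exactly}, i.e. actually carrying out the symmetrized two-photon computation you defer, with no remainder to hide anything in. (They do hold exactly, by the $SU(2)$-covariance of the construction, but that is precisely what must be established.) The paper avoids this computation altogether: it invokes the ground-state degeneracy of $T(0)$ (\cite[Theorem~2.1]{BV4}, resting on \cite{LMS,HiSp,Chen2008}). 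Since the self-energy expansion \eqref{eq:sigma-zero} holds for the degenerate ground state built over any normalized $\vac\mab$, and since \eqref{eq:app-1}, \eqref{eq:app-2}, \eqref{eq:app-4} show every other coefficient at order $\alpha^2$ is independent of $(a,b)$, one obtains $\|\Gdab\|_*^2=\|\Gd^{(1,0)}\|_*^2+\mathcal{O}(\alpha^3)$ \emph{uniformly} over normalized spinors --- the uniformity in the relative phase of $a$ and $b$ is what silently controls the cross term you would otherwise have to compute. The homogeneity $\Gd^{(\mu a,\mu b)}=\mu\,\Gdab$, applied pointwise in $x$ after normalizing $g(x)$, then yields \eqref{eq:app-3}. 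To complete your proof, either supply the explicit two-photon verification or substitute this degeneracy argument; as written, \eqref{eq:app-3} is not established.
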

\begin{proof}
For $g\in L^2(\R^3)\otimes\C^2$, let us denote
$$
 g(x) = : \mabx\, .
$$
A straightforward computation of $\|\Gu(g)\|_*^2$ using \eqref{eq:explicit-Guab} yields
 $$
  \|\Gu(g)\|_*^2 = \sum_{\lambda=1,2} \int |\Gu(k,\lambda)|^2 \mathrm{d} k  \int |a(x)|^2 + |b(x)|^2 \mathrm{d} x\, ,
 $$
which concludes the proof of \eqref{eq:app-1} and \eqref{eq:app-2}.

The proof of \eqref{eq:app-4} is similar.

The proof of  \eqref{eq:app-3} can be achieved by a straightforward but rather long computation. Using the degeneracy of the ground state for $T(0)$ \cite{LMS, HiSp, Chen2008} yields an alternative proof, which is as follows. According to \cite[Theorem~2.1]{BV4}, and using \eqref{eq:app-1}, \eqref{eq:app-2} and \eqref{eq:app-4} yields that for any $\mab\in\C^2$ with $|a|^2 + |b|^2 =1$
\begin{equation}\label{eq:app-*}
  \|\Gdab\|_{*,\C^2\otimes\gF}^2 = \|\Gd^{(1,0)}\|_{*,\C^2\otimes\gF}^2 + \mathcal{O}(\alpha^3)\, .
\end{equation}
where $\|.\|_{*,\C^2\otimes\gF}$ is the $*$-norm in $\C^2\otimes\gF$.
Now, due to \eqref{eq:explicit-Guab}, we have for $\mu\in\C$
 $$
  \Gu^{(\mu a, \mu b)} (k,\lambda) = \mu \Guab (k,\lambda)\, .
 $$
By definition of $\Gdab$ this obviously yields
\begin{equation}\label{eq:app-**}
 \Gd^{(\mu a,\mu b)} ( k,\lambda;\, \tilde k, \tilde\lambda)
 = \mu \Gdab ( k,\lambda;\, \tilde k, \tilde\lambda)\, .
\end{equation}
Thus, with \eqref{eq:app-*} and \eqref{eq:app-**}, we have for any $x\in\R^3$
\begin{equation}\nonumber
\begin{split}
 & \| \Gd(g(x)) \|_{*,\C^2\otimes\gF}^2 \\
 & = \| \sqrt{|a(x)|^2 + |b(x)|^2} \Gd^{(a(x)/\sqrt{|a(x)|^2 + |b(x)|^2} ,b(x)/\sqrt{|a(x)|^2 + |b(x)|^2})}\|_{*,\C^2\otimes\gF}^2  \\
 & =  (|a(x)|^2 + |b(x)|^2) \|\Gd^{(1,0)}\|_{*,\C^2\otimes\gF}^2 +
 (|a(x)|^2 + |b(x)|^2)\mathcal{O}(\alpha^3)\, .
\end{split}
\end{equation}
Therefore we obtain
$$
 \| \Gd(g) \|_*^2 = \| \Gd^{(1,0)} \|_{*,\gF}^2 \|g\|^2  + \mathcal{O}(\alpha^3)\, ,
$$
which concludes the proof of \eqref{eq:app-3}, using again \eqref{eq:app-*}.

To prove \eqref{eq:app-3-bis}, it suffices to remark that from \eqref{eq:explicit-Guab} and the definition of $\Gd(g)$, we have that each term occurring in the components of the vector $\Gd(g(x))$ has a prefactor $a(x)$ or $b(x)$ multiplied by a function independent of $x$. Thus, the computation of $\|\Gd(g(x))\|_{*,\gF}^2$ give a sum of terms of the form $|a(x)|^2 |f(k,\lambda;\tilde k,\tilde\lambda)|^2$ or $|b(x)|^2 |f(k,\lambda;\tilde k,\tilde\lambda)|^2$ or $\Re (a(x) \bar b(x)) |f(k,\lambda;\tilde k,\tilde\lambda)|^2$. Therefore, bounded above $|a(x)|^2$, $|b(x)|^2$ and $\Re (a(x)\bar b(x))$ by $|a(x)|^2 + |b(x)|^2$ yields that for some constant $c$ independent of $x$, we have the bound
$$
 \| \Gd((g(x))\|_{*,\gF}^2\leq c (|a(x)|^2 + |b(x)|^2) \, ,
$$
which proves \eqref{eq:app-3-bis}.

From \eqref{eq:explicit-Guab} we obtain for $i=1,2,3$
 $$
  \frac{\partial}{\partial x_i} \Gu(g) = \Gu(\frac{\partial}{\partial x_i} g)\, ,
 $$
 which in its turn imply \eqref{eq:app-5}.

To prove \eqref{eq:app-6} it suffices to remark that also holds
$$
 \frac{\partial}{\partial_{x_i}} \Gd(g) = \Gd(\frac{\partial}{\partial_{x_i}}g)\, .
$$
\end{proof}


\subsection*{Acknowledgements} J.-M.B. gratefully acknowledges financial
support from Agence Nationale de la Recherche (projects HAM-MARK ANR-09-BLAN-0098-01 and ANR-12-JS01-0008-01). J.-M.B. was also supported by BFHZ/CCUFB with the Franco-Bavarian cooperation project ``Mathematical models of radiative and relativistic effects in atoms and molecules''. S.V. was supported by the
DFG Project SFB TR 12-3. Part of this work was done during the stay of S.V. at the University of Toulon.


\end{document}